\newtheorem{Thm}{Theorem}[section]
\newtheorem{Def}[Thm]{Definition}
\newtheorem{Assum}[Thm]{Condition}
\newtheorem{Lemm}[Thm]{Lemma}
\newtheorem{Prop}[Thm]{Proposition}
\newtheorem{Coro}[Thm]{Corollary}
\newtheorem{Exa}[Thm]{Example}
\theoremstyle{definition}
\newtheorem{Rem}[Thm]{Remark}
\newcommand{\be}{\begin{equation}}
\newcommand{\ee}{\end{equation}}
\newcommand{\ba}{\begin{align}}
\newcommand{\ea}{\end{align}}
\newcommand{\ben}{\begin{equation*}}
\newcommand{\een}{\end{equation*}}
\def\i<#1>{\langle #1 \rangle}
\def\l<#1>{\left\langle #1 \right\rangle}
\def\b<#1>{\big\langle #1 \big\rangle}
\def\wrt{\ \text{w.r.t. }\ }
\newcommand{\la}{\langle}
\newcommand{\ra}{\rangle}
\newcommand{\bs}{\boldsymbol}
\newcommand{\D}{\mathrm{dom}}
\newcommand{\BbbR}{\mathbb{R}}
\newcommand{\BbbN}{\mathbb{N}}
\newcommand{\BbbZ}{\mathbb{Z}}
\newcommand{\BbbC}{\mathbb{C}}
\newcommand{\vepsilon}{\varepsilon}
\newcommand{\vphi}{\varphi}
\newcommand{\no}{\nonumber \\}
\newcommand{\ilim}[1][]{\mathop{\varinjlim}\limits_{#1}}
\newcommand{\im}{\mathrm{i}}
\newcommand{\up}{\uparrow}
\newcommand{\down}{\downarrow}
\newcommand{\A}{\mathfrak{A}}
\newcommand{\M}{\mathfrak{M}}
\newcommand{\N}{\mathfrak{N}}
\newcommand{\X}{\mathfrak{X}}
\newcommand{\F}{\mathfrak{F}}
\newcommand{\Cone}{\mathfrak{P}}
\newcommand{\h}{\mathfrak{H}}
\newcommand{\fra}{\mathfrak}
\newcommand{\nph}{\fra{n}_{\vphi}}
\newcommand{\BbbL}{\mathbb{L}}
\newcommand{\MLM}{H_{\Lambda}^{\rm MLM}}
\newcommand{\LRA}{\longrightarrow}
\numberwithin{equation}{section}
\title{
\bfseries 

An algebraic approach to revealing magnetic structures of ground states in many-electron systems
}
\date{}
\author[1]{Tadahiro Miyao}
\affil[1]{Department of Mathematics,  Hokkaido University

Sapporo 060-0810,  Japan\\

e-mail: \texttt{miyao@math.sci.hokudai.ac.jp}
}
\begin{document}

\maketitle
\begin{abstract}
Mathematical understanding of the origin of ferromagnetism is still incomplete and remains an important research topic in mathematical physics.
In this paper, we give a model-independent mathematical framework describing the magnetic features of the ground states in many-electron systems. Within this framework, we also present a new approach to understanding magnetic orders in  macroscopic systems. 
Based on these, we construct a general theory that explains the stability of magnetic orders in the ground states despite the interaction of electrons with the environment.
Methodologically, the theory presented in this paper is formulated using  von Neumann algebras and their associated standard forms.
A benefit of working in such an algebraic setting is that we can define operator inequalities that preserve the ordered structures that naturally follow from the standard forms; by exploiting these operator inequalities, we can develop new descriptions of the magnetic structures of the ground states.
As specific applications of the proposed theory, we first analyze the Marshall--Lieb--Mattis theorem, Lieb's theorem, and their stabilities under various perturbations. Next, the Nagaoka--Thouless theorem and its stability are addressed. 
In addition, we interpret various other examples from the new theory and give a unified perspective on existing results.

\end{abstract}
\tableofcontents

\section{Introduction}\label{Sect1}
\subsection{Overview}
Magnets have a rich history. For example, there are descriptions of magnetite in ancient Chinese and Greek literature.
As is well known, magnets have the property of attracting objects such as iron.
This mysterious property of magnets must have fascinated the ancient people.
Magnets have also been useful in practical terms, as seen in the compasses of medieval Europe.\footnote{
For the history of the magnet, see \cite{Mattis2006,Yamamoto2017} and the references therein.
}
The magnetism of matter is also indispensable for modern science and technology; we are surrounded by magnetic devices such as hard disks.
\medskip

As we can see, magnets are very familiar to us, but do we know their origin and mechanism?
 Although there is a vast amount of knowledge about magnets, modern mathematical physicists believe their origin is still not well understood.
As background, we would like to explain a little about this.
In 1925, it was proposed that electrons have spin, which means that electrons are tiny magnets \cite{uhlenbeck1925ersetzung, UHLENBECK1926}.
Spin is a mysterious concept unique to quantum theory that has  no counterpart in classical theory.
There are many electrons (of the order of $10^{23}$) moving around in the matter around us.
The spin of a single electron is too weak to attract macroscopic iron.
The modern qualitative understanding of  magnets is that the spins of many electrons in  materials are aligned and strengthen each other, thereby attracting distant iron.
In other words, the wonders of the invisible quantum world are amplified to affect the macroscopic world. 
To what extent is this fascinating story rigorously proven?
\medskip

The earliest attempts to explain magnets, or more precisely ferromagnetism, using quantum theory date back to the work of Heisenberg \cite{Heisenberg1928}. Since then, many researchers have contributed to the field.
The modern approach to magnetism was pioneered by Kanamori \cite{Kanamori1963}, Gutzwiller \cite{Gutzwiller1963}, and Hubbard \cite{Hubbard1963}.
They attempted to explain ferromagnetism using a highly simplified model, now called the Hubbard model. The Hamiltonian of the Hubbard model on a lattice $\Lambda$ is given by
\begin{align}
H_{\Lambda}^{\mathrm{H}}=\sum_{x,y\in \Lambda}\sum_{\sigma=\uparrow, \downarrow} t_{xy}
c_{x\sigma}^*c_{y\sigma}+\sum_{x, y\in \Lambda}\frac{U_{xy}}{2} (n_x-1)(n_y-1),\label{Hami}
\end{align}
where 
$c_{x\sigma}$
is the electron annihilation operator at vertex  $x$ with spin $\sigma$, and $n_x$ is the number operator of  the electron at vertex $x$.
$H_{\Lambda}^{\mathrm{H}}$ is a self-adjoint operator acting in a fermionic Fock space.
For a more detailed definition of this operator, see Section \ref{Sect5}.
The model is simple but contains three fundamental factors in many electron systems.
The first is the electron hopping in the first term on the right-hand side of \eqref{Hami}; the second is the Coulomb interaction between the electrons in the second term on the right-hand side; and the third is the Fermi statistics of the electrons, which is expressed by the fact that the Hilbert space in which the Hamiltonian acts is a fermionic Fock space.
It is believed that ferromagnetism is manifested by the exquisite intertwining of these three fundamental factors.
Even today, research continues in an attempt to express this interplay mathematically.
It is important to note that ferromagnetism cannot be explained by the electron hopping or Coulomb interaction alone.
 Therefore, the problem of magnetism in the Hubbard model is essentially non-perturbative.
 This fact makes a rigorous study of the origin of magnetism challenging.
\medskip

The number of papers on the Hubbard model is enormous.
However, most of them are within the framework of theoretical physics and contain numerical calculations or mathematically unjustified approximations and assumptions.
Therefore, in order to truly understand the properties of magnets, we need to provide a rigorous proof of the above stories.
Below, we will review two previous studies that are particularly important to this research topic.
\medskip

The first rigorous result on ferromagnetism in the Hubbard model was given independently by Nagaoka \cite{Nagaoka1965} and Thouless \cite{Thouless_1965} in 1965.
They considered a system with a substantial Coulomb interaction and only one hole and proved that ferromagnetism appears in the ground states of this system.
 Nowadays, this result is called the  Nagaoka--Thouless theorem.
\medskip

In 1989, Lieb considered the Hubbard model on a connected, bipartite lattice and proved that macroscopic magnetization appears in the ground states when the system is half-filling \cite{Lieb1989}.
Today, this result is called Lieb's theorem.
In Lieb's theorem, there is no longer any restriction on the magnitude of the Coulomb interaction, whereas in the Nagaoka--Thouless theorem, a very large Coulomb interaction is assumed.
\medskip

These two theorems have significantly influenced subsequent rigorous researches on metallic ferromagnetism and will play  essential roles in this paper.
In addition to the two theorems above, there are several other critical rigorous results, e.g.,  
\cite{
Freericks1995,Katsura2013,Kollar1996,Kubo1990,Mielke1993,Shen1994,Shen1996,Tian2004,TSU1997,Tsunetsugu1997,Ueda1992,Wei2015,Yanagisawa1995,Yoshida2021}.
For a comprehensive review of rigorous research on ferromagnetism, see \cite{Tasaki2020} and references therein. 
Some of these results will be analyzed in detail in later sections. 
 In this way, even if we restrict ourselves to the rigorous results on metallic ferromagnetism, the readers can now see the variety of previous studies.
  At this point, the following natural questions should arise:
  \begin{description}
 \item[Q1.] Is there any connection between these seemingly unrelated results?
 If there is a connection, how can the structure be expressed mathematically?
\item[Q2.] What new consequences does the newly recognized structure imply?

\end{description}
 The goal of this paper is to answer these questions. 
 Answering  {\bf Q1} is nothing less than constructing a theory of magnetism that can describe,  in a unified way, the results that were previously thought to be independent.
 Whether or not this newly recognized theoretical framework can lead to new insights into the origin of magnetism is an essential question in assessing the novelty of this theory.

 \medskip
 
 Let us sketch our answer to {\bf Q1}.
Several of the rigorous results have a common structure. 
Using the language of operator algebras, we construct a mathematical framework that can adequately describe this structure. The two central themes that will be addressed in this framework are as follows:
\begin{description}
\item[1.] How can we mathematically define magnetic orders in a system on a finite lattice?
\item[2.] How can we characterize macroscopic magnetic orders?
\end{description}
In the following, we will outline our approach to these themes.
A many-electron system on a finite lattice $\Lambda$ is described by
a von Neumann algebra $\M_{\Lambda}$ and a weight $\vphi_{\Lambda}$ on $\M_{\Lambda}$. 
From the pair $\{\M_{\Lambda}, \vphi_{\Lambda}\}$, a standard form 
$
F(\M_{\Lambda}, \vphi_{\Lambda})=\{\M_{\Lambda}, \h_{\Lambda}, \Cone_{\Lambda}, J_{\Lambda}\}
$
 can be defined naturally, where $\h_{\Lambda}$ is the Hilbert space on which $\M_{\Lambda}$ acts, $\Cone_{\Lambda}$ is the self-dual cone in $\h_{\Lambda}$, and $J_{\Lambda}$ is the modular conjugation. 
 The detailed construction of these will be given in Section \ref{Prel}.
In the theory we construct in the present paper, 
the following two elements describe the magnetic order in  a system on a finite lattice $\Lambda$:
\begin{description}
 \item[\it Positivity:]
 We say that an  element in $\h_{\Lambda}$
 is called positive w.r.t.  $\Cone_{\Lambda}$ if it belongs to $\Cone_{\Lambda}$. We refer this  positivity determined by  $\Cone_{\Lambda}$  simply as the positivity on $\Lambda$. 
\item[\it Symmetry:]
Since we are considering a many-electron system, it is natural to assume that the total spin operators, ${\bs S}_{\Lambda}=(S^{(1)}_{\Lambda}, S^{(2)}_{\Lambda}, S^{(3)}_{\Lambda})$, act on $\h_{\Lambda}$.
 We refer the conservation law concerning the total spin operators simply as the  symmetry on $\Lambda$.
\end{description}
In the subsequent sections, we will explore in detail how the combination of positivity and symmetry determines the magnetic properties of the ground state of the system on $\Lambda$.
\medskip

The characteristics of macroscopic systems are described by considering the infinite volume limit or the thermodynamic limit.
This is an idea that has been widely accepted. From this point of view,
it is natural to think that  macroscopic magnetic orders are derived from the magnetic orders of the systems on finite lattices. How can we express this concept mathematically? Let us state the problem in more detail.
Given an increasing sequence of finite lattices $\{\Lambda_n : n\in \BbbN\}$, let us consider the corresponding many-electron systems described by  $\{\M_{\Lambda_n}, \vphi_{\Lambda_n}\}\,  (n\in \BbbN)$, where,
for each $n\in\BbbN$, $\M_{\Lambda_n}$ is a von Neumann algebra and $\vphi_{\Lambda_n}$ is a weight on $\M_{\Lambda_n}$. The question we asked  can then be stated in more detail as follows.
How can we reconcile the rational perspective of the emergence of macroscopic magnetic order at the thermodynamic  limit with the positivity and the symmetry described earlier? Our answer to this question can be summarized as follows: macroscopic magnetic orders appear due to the consistent association of the magnetic orders in different finite lattices. Mathematically, this idea of consistency can be expressed using the conditional expectations $\mathscr{E}_{mn} : \M_{\Lambda_n} \to \M_{\Lambda_m}\, (m<n)$ defined as follows.
When $m<n$, $\h_{\Lambda_m}$ can be regarded as a closed subspace of $\h_{\Lambda_n}$ to consider the orthogonal  projection operator $P_{mn}$ from $\h_{\Lambda_n}$ to $\h_{\Lambda_m}$. In this case, $\mathscr{E}_{mn}$ is given by $\mathscr{E}_{mn}(x)=P_{mn} xP_{mn}\, (x\in \M_{\Lambda_n})$. The conditional expectations are required to satisfy the following equation:
\be
P_{mn} \Cone_{\Lambda_n}=\Cone_{\Lambda_n}\ (m<n). \label{PpropP}
\ee
Recall that the magnetic structures of systems in different lattices are characterized by the positivities and symmetries in each system. Thus, the conditional expectation $\mathscr{E}_{mn}$ links the positivity and symmetry in lattice  $\Lambda_m$ with those in lattice $\Lambda_n$ in a consistent manner.
Therefore, the macroscopic magnetic order can be understood as a network formed by the magnetic orders (i.e., the positivities and symmetries) of different finite lattices connected through  the conditional expectations. This picture will be represented using graph theory in Section \ref{Sect4}.
\medskip

Our response to {\bf Q2} is as follows.
The framework developed to answer {\bf Q1} has the great advantage of describing the magnetic properties in a model-independent way.
By utilizing this advantage, we present a new theory of the stability of magnetic orders. Let us explain why this stability theory is so critical.
Electrons in actual magnetic materials are constantly perturbed by the surrounding environment, such as lattice vibrations and light (quantized radiation fields).
On the other hand, despite the perturbations, the magnetic materials around us maintain their effectiveness as magnets. 
For example, a magnet stuck to the blackboard in a lecture room shows no sign of falling off during the lecture.
If the Nagaoka--Thouless theorem and Lieb's theorem describe the nature of magnetism, then they must be stable against perturbations caused by the environment. 
In this paper, we construct a coherent theory of magnetic stability to justify this consideration, using the framework. 
It is worth emphasizing that a structure similar to Eq. \eqref{PpropP} plays an essential role in constructing this stability theory.
\medskip

Next, let us discuss the features of the methodology. As we have already seen, the first feature is that we make full use of the {\it standard forms} of von Neumann algebras. The theory of standard forms was originated by Haagerup \cite{Haagerup1975} and plays a fundamental role in the current theory of operator algebras. The self-dual cone appearing in the definition of the  standard form corresponds to the positivity we mentioned earlier, which describes the magnetic properties of the system.
The second feature is the practical application of operator inequalities that are naturally defined from the self-dual cones.
 Those inequalities are entirely different from the usual operator inequalities in textbooks on functional analysis and operator theory.\footnote{
 Let $\X$ be a given Hilbert space, and let $A$ and $B$ be bounded linear operators on $\X$.
 The ordinary operator inequality is defined as follows: if $\la \eta|A\eta\ra \ge  0\, (\eta\in \h)$, then denote $A\ge 0$. On the other hand, the definition of the OPOI is as follows: Let $\Cone$ be a self-dual cone in $\X$. If $A$ satisfies $\la \eta|A\xi\ra \ge  0\, (\eta, \xi\in \Cone)$, then we denote $X \unrhd  0$. 
 One of the valuable features of the OPOIs is the following property: if $A\unrhd 0$ and $B\unrhd 0$, then $AB\unrhd 0$.
 In ordinary operator inequalities, such a property does not hold in general.
 }
  It should be noted that Miura did the pioneering work on these inequalities in \cite{Miura2003}. In this paper, we will call those  inequalities {\it order-preserving operator inequalities} (OPOIs) to distinguish them from ordinary operator inequalities.
The applications of OPOIs to mathematical physics have been systematically studied in \cite{Miyao2016(2),Miyao2019-2,Miyao2021}. In particular, the usefulness of OPOIs in many-electron systems has been discussed in detail in \cite{Miyao2012,Miyao2016,Miyao2017,Miyao2021-2}. In this paper, we synthesize some of the analytical methods in these papers, i.e., we give an OPOIs description independent of a particular model.
\medskip

Finally, we briefly explain the difference between Ref. \cite{Miyao2019} and this paper.
In  \cite{Miyao2019}, a mathematical characterization of  magnetic orders is also studied without relying on a specific model. 
Ref. \cite{Miyao2019} can be seen as a sketch of this paper, as the basic idea is the same as this paper but written without complicated mathematics.
As mentioned earlier, the description of this paper assumes knowledge of basic operator algebras. This assumption allows for a deeper mathematical description of the magnetic structures, which was not possible in \cite{Miyao2019}. The main new results are as follows:
\begin{itemize}
\item  We give a more in-depth characterization of the magnetic structures of the ground states for finite systems.
\item  The magnetic orders in macroscopic systems are described in more detail.
Note that not much discussion of magnetic orders in macroscopic systems has been done in  \cite{Miyao2019}.
\item  In various examples, one often considers a many-electron system on a periodic realization of an infinite lattice. In this paper, we extract the magnetic properties independent of the lattice realization.
\end{itemize}

\subsection{Outline}
This paper is organized as follows.
In Section \ref{Prel}, we  summarize the knowledge of operator algebras that will be needed throughout this paper. First, we will quickly explain the standard forms associated with von Neumann algebras. Next, we recall the definition of the conditional expectation between von Neumann algebras, which will play an essential role in the following sections. Finally, we briefly review the order structures on Hilbert spaces that naturally arise from the standard forms and the basic properties of the operators that preserve these order structures.  The order-preserving operator inequalities introduced here will be essential in the descriptions in the following sections.
\medskip

In Section \ref{Sect3}, we construct a model-independent framework for describing  magnetic properties of  ground states in a many-electron system on a finite lattice. We use this framework to argue that the magnetic properties are stable even when there are interactions between the electrons and the surrounding environment. The concept of stability classes introduced in this section is key to understanding the stability of magnetic properties in ground states.\medskip

In Section \ref{Sect4}, we analyze in detail the magnetic orders and their stability in macroscopic systems.
First, we give a mathematical definition of the  magnetic order in a macroscopic system. Its fundamental idea is as follows. Consider a monotonically increasing sequence of finite lattices. On each lattice, a many-electron system is given. The magnetic properties of the many-electron system on each finite lattice are described in the theory of Section \ref{Sect3}.
The macroscopic magnetic order is defined to be  a consistent relation between the magnetic properties of each many-electron system, which can be mathematically expressed as a sequence of conditional expectations. It should be noted that such a view of magnetic order has not been used before. This section further discusses the stability of the magnetic order under  interactions with the environment. We also show that the theory constructed in this section is independent of  realizations of the crystal lattice.\medskip

In Sections \ref{Sect5} and \ref{Sect6}, two crucial concrete examples are discussed in detail. First, in Section \ref{Sect5}, the Marshall--Lieb--Mattis  stability class is introduced, and its properties are clarified; and the readers will see that the Marshall--Lieb--Mattis theorem, Lieb's theorem, and its various extensions can be described in a unified manner.
As typical examples, the Holstein--Hubbard model describing the electron-phonon interaction, the Kondo lattice model, and a model describing the interaction of the Kondo lattice system with phonons will 
be examined in detail in terms of the Marshall--Lieb--Mattis stability class.
In Section \ref{Sect6}, we introduce the Nagaoka--Thouless stability class and study its properties in some depth. Then, using this stability class, we argue the Nagaoka--Thouless theorem and its various extensions in a unifying manner. As an example, the Holstein--Hubbard model describing the electron-phonon interaction will be addressed in detail using the Nagaoka--Thouless stability class.
\medskip

In Section \ref{Sect7}, we outline several stability classes that cannot be covered in Sections \ref{Sect5} and \ref{Sect6} to clarify that the theory presented in this paper contains a wealth of examples.
In addition, some open problems are discussed here.\medskip

In Appendix \ref{SectA}, we develop a general theory of the ergodicity of the semigroups generated by Hamiltonians. The general theory given here covers all the concrete examples treated in this paper. Then, in Appendix \ref{SectB}, we prove the ergodic properties of the semigroups generated by the eight Hamiltonians discussed in Sections \ref{Sect5} and \ref{Sect6}. Finally, in Appendix \ref{SectC}, we present some practical results on order-preserving operator inequalities.

\subsubsection*{Acknowledgements}
This work was supported by JSPS KAKENHI Grant Numbers 18K03315, 20KK0304.
I would like to thank Kazuhiro  Nishimata for his assistance in drawing some of the figures and  Keiko Miyao  for the helpful conversations and encouragement.

\section{Preliminaries}\label{Prel}
\subsection{Standard form of a von Neumann algebra}\label{PrelSta}
In this section, we will prepare basic terminologies related to the  theory of operator algebras necessary to construct a mathematical theory of stability of magnetism  that is the subject of this paper.
 First, we recall some basic definitions  and facts about von Neumann algebras and their associated standard forms. Next, we briefly  review some  fundamentals  of conditional expectations, which will play an essential role in this paper. Finally, we describe positivities on Hilbert spaces naturally induced by standard forms and basic properties of linear operators that preserve the positivities. Note that the definitions and  properties described in Subsections \ref{PrelSta} and \ref{PrelCond}  are  discussed in detail in \cite{Takesaki2003}.

Let $\M$ be a von Neumann algebra on a complex separable Hilbert space $\h$.
Let $\vphi$ be a faithful semi-finite normal weight on  $\M$.\footnote{For the definition of weights on von Neumann algebras, see \cite[Chapter VII, Definition 1.1]{Takesaki2003}.} 
We set $\frak{n}_{\vphi}=\{x \in \M : \vphi(x^*x)<\infty\}$.
Equipped with the sesquilinear functional $\la x|y\ra=\vphi(x^*y)\, (x, y\in \frak{n}_{\vphi})$, $\frak{n}_{\vphi}$ is a pre-Hilbert space.
We denote by $L^2(\M, \vphi)$ the  completion of $\frak{n}_{\vphi}$.

The action of $\M$ onto $L^2(\M, \vphi)$ is given by 
\be
\pi_{\vphi}(a) x=ax,\ \ a\in \M,\ x\in \frak{n}_{\vphi}.
\ee
Note that,  by using the  inequality $(ax)^*(ax) \le \|a\|^2 x^*x$, we can extend the action of $\M$
on $\frak{n}_{\vphi}$ to that on $L^2(\M, \vphi)$.
It is well-known that the representation $\{\pi_{\vphi}, L^2(\M, \vphi)\}$ is a non-degenerate faithful normal $*$-representation of $\M$. 
Since $\M$ is $\sigma$-finite, the von Neumann algebras $\{\pi_{\vphi}(\M), L^2(\M, \vphi)\}$  and $\{\M, \h\}$ are  spatially isomorphic.\footnote{
The brief outline of the proof is as follows:
Because $\M$ is $\sigma$-finite, there exists a cyclic and separating vector $\xi$
 for $\M$, see, e.g., \cite[Proposition 2.5.6]{Bratteli1987}. Now set $\vphi_{\xi}(x)=\la \xi|x\xi\ra\, (x\in \M)$. We readily confirm that $\vphi_{\xi}$ is  
 a faithful semi-finite normal weight on  $\M$. By using arguments similar to those in \cite[Chapter IX, Section 1]{Takesaki2003}, the two standard forms $\{\pi_{\vphi}(\M), L^2(\M, \vphi), L^2(\M, \vphi)_+, J_{\vphi}\}$
 and $\{\pi_{\vphi_{\xi}}(\M), L^2(\M, \vphi_{\xi}), L^2(\M, \vphi_{\xi})_+, J_{\vphi_{\xi}}\}$
 are unitarly equivalent to each other.  Because $\pi_{\vphi_{\xi}}(\M)=\M$ and $L^2(\M, \vphi_{\xi})=\h$, we conclude the desired claim.
}  Therefore, we often identify  $\pi_{\vphi}(\M)$ with $\M$ and 
$L^2(\M, \vphi)$ with $\h$.

Let 
\be
\A_{\vphi}=\nph\cap \nph^*\subseteq L^2(\M, \vphi).
\ee
Then, due to \cite[Chapter VII, Theorem 2.6]{Takesaki2003},  $\A_{\vphi}$ is a full left Hilbert algebra with the involution $x^{\sharp}:=x^*\, (x\in \A_{\vphi})$ such that 
 $\pi_{\vphi}(\M)=\mathcal{R}_{\ell}(\A_{\vphi})$, where
 \be
 \mathcal{R}_{\ell}(\A_{\vphi})=\pi_{\vphi}(\A_{\vphi})^{\prime\prime}.
 \ee
Here, for a given set, $S$, of bounded operators, $S^{\prime\prime}$ stands for the double commutant of $S$.

Consider the involution: $x \in \A_{\vphi}\mapsto x^*\in \A_{\vphi}$ and denote it by $S_0$. Then $S_0$ is closable. We denote by $S$ the closure of $S_0$. The modular operator, $\Delta_{\vphi}$,  and the modular conjugation, $J_{\vphi}$,  of $\A_{\vphi}$ is defined by the polar decomposition of $S$:
$
S=J_{\vphi}\Delta_{\vphi}^{1/2}.
$
It holds that 
\begin{align}
\Delta_{\vphi}^{\im t} \pi_{\vphi}(\M) \Delta_{\vphi}^{-\im t}=\pi_{\vphi}(\M)\ \ \forall t\in \BbbR, \ \ \ 
J_{\vphi}\pi_{\vphi}(\M)J_{\vphi}=\pi_{\vphi}(\M)^{\prime},
\end{align}
where, for a given set, $S$, of bounded operators, $S^{\prime} $ denotes the commutant of $S$.

We set 
\be
\A_{\vphi, 0}=\bigg\{
\xi\in \bigcap_{n\in \BbbZ}\D(\Delta_{\vphi}^n) : \Delta_{\vphi}^n \xi\in \A_{\vphi}, \ n\in \BbbZ
\bigg\}.
\ee
Then, by \cite[Theorem 2.2]{Takesaki2003}, $\A_{\vphi, 0}$ is a Tomita algebra associated with $\{\Delta_{\vphi}^{\im \alpha} : \alpha\in \BbbC\}$ such that 
\be
J_{\vphi}\A_{\vphi, 0}=\A_{\vphi, 0},\ \ \mathcal{R}_{\ell}(\A_{\vphi, 0})=\mathcal{R}_{\ell}(\A_{\vphi}).
\ee
Define
\be
L^2(\M, \vphi)_+=\overline{\{\xi J_{\vphi}\xi : \xi \in \A_{\vphi, 0}\}},
\ee
where the bar denotes the closure in the strong topology.
As is well-known, $L^2(\M, \vphi)_+$ is a self-dual cone, see \cite[Chapter IX, Theorem 1.2]{Takesaki2003} for detail. The quadruple 
\be
F(\M, \vphi)=\{\M, L^2(\M, \vphi), L^2(\M, \vphi)_+, J_{\vphi}\}
\ee
 is called a {\it standard form} of $\M$.

\subsection{Conditional expectations}\label{PrelCond}

In what follows, we denote by $\frak{W}_0(\M)$ the set of all faithful semi-finite normal weights on $\M$.
Take $\vphi\in \frak{W}_0(\M)$, arbitrarily.
Let $\N$ be a von Neumann subalgebra of $\M$ such that the restriction $\vphi\restriction \N$  of $\vphi$ to $\N$ is semi-finite.
Let $\mathscr{E}$ be the conditional expectation of $\M$ onto $\N$ with respect to $\vphi$, that is, 
$\mathscr{E}$ is a linear mapping  from $\M$ to $\N$ satisfying the following conditions:
\begin{itemize}
\item $\|\mathscr{E}(x)\| \le \|x\|,\ \ x\in \M$;
\item $\mathscr{E}(x)=x,\ \ x\in \N$;
\item $\vphi=\vphi \circ \mathscr{E}$.
\end{itemize}
By Takesaki\rq{}s theorem \cite[Chapter IX, Theorem 1.2]{Takesaki}, it holds that $\Delta_{\vphi}^{\im t}\N\Delta_{\vphi}^{-\im t}=\N$ for all $t\in \BbbR$.
For simplicity of notation, we set $\psi=\vphi\restriction \N$.
%Let $\frak{B}$ be the full left Hilbert algebra corresponding to $\{\N, \vphi \restriction \N\}$.
Let $P_{\vphi}$ be the orthogonal projection from $L^2(\M, \vphi)$ to $L^2(\N, \psi)$.
Then we have
$
P_{\vphi}x=\mathscr{E}(x)\ (x\in \frak{n}_{\vphi}).
$
In addition, $P_{\vphi}$ commutes with $\Delta_{\vphi}$ and $J_{\vphi}$.

\begin{Def}\label{DefCons1}\upshape
Suppose we are in the situation described above.
We say that the standard forms $ F (\M, \vphi)$ and $ F (\N, \psi)$ are {\it consistent}, if 
it holds that  
$L^2(\N, \psi)_+\subseteq L^2(\M, \vphi)_+$ and 
$
P_{\vphi}L^2(\M, \vphi)_+=L^2(\N, \psi)_+.
$
We express this as $ F (\M, \vphi) \xrightarrow[\ \mathscr{E}\ ]{}  F (\N, \psi)$. In the following,  the subscript $\mathscr{E}$ is often omitted if no confusion occurs.

\end{Def}

As we will see,  the concept of the consistency between standard forms plays an important role in the present paper.

\subsection{Order preserving operator inequalities}
In this section, we will briefly explain a theory of order-preserving operator inequalities (OPOIs) developed in \cite{Miyao2016(2),Miyao2019,Miyao2021}.
The inequalities treated in this subsection are different from the usual operator inequalities and allow for the visual representation of the abstract concepts discussed in this paper. The reader will see in the following sections that the practical application of the properties of these inequalities will allow for a more efficient analysis of complicated Hamiltonians. Technical topics that are useful, but may detract from the flow of the paper, are discussed in detail in Appendix \ref{SectC}.

Suppose we are given a von Neumann algebra $\M$ and a faithful semi-finite normal weight on $\M$.

\begin{Def} \label{DefHilC}\upshape
A vector $ u \in L^2(\M, \vphi)_+$ is said to be {\it  positive w.r.t.} $ L^2(\M, \vphi)_+$. We write this as $u \geq 0$ w.r.t. $ L^2(\M, \vphi)_+$. A vector $v \in L^2(\M, \vphi)$ is called {\it strictly positive w.r.t.} $
L^2(\M, \vphi)_+
$,  whenever $\i<v| u>>0$ for all $ u \in  L^2(\M, \vphi)_+\setminus\{0\}$. We write this as $v>0$ w.r.t. $ L^2(\M, \vphi)_+$.
\end{Def}

The following operator inequalities  will play a fundamental  role in the present paper.
\begin{Def}\upshape 
Let $A$ be a bounded operator on $L^2(\M, \vphi)$. 
\begin{itemize}
\item $A$ is {\it  positivity preserving} if $A L^2(\M, \vphi)_+\subseteq  L^2(\M, \vphi)_+$. We write this as $A\unrhd 0\wrt  L^2(\M, \vphi)_+$.\footnote{
The inequality symbol for positivity preservation is borrowed from \cite{Miura2003}.
}
\item $A$ is {\it positivity improving} if,  for all $u \in L^2(\M, \vphi)_+ \setminus \{0\}$, it holds that  $A u >0\wrt  L^2(\M, \vphi)_+$.  We write this as $A\rhd 0\wrt  L^2(\M, \vphi)_+$.
\end{itemize}

\end{Def}

The following proposition will often be helpful.

\begin{Prop}\label{AJAJP}
Let $J_{\vphi}$ be the modular conjugation associated with $\{\M, \vphi\}$.
For  each $a\in \M$, it holds that $aJ_{\vphi}aJ_{\vphi} \unrhd 0$ w.r.t. $L^2(\M, \vphi)_+$.
\end{Prop}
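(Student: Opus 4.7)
The plan is to reduce the claim to an algebraic identity inside the Tomita algebra $\A_{\vphi,0}$ and then to extend the result to general $a\in\M$ by a strong-continuity argument based on Kaplansky's density theorem. Since $aJ_{\vphi}aJ_{\vphi}$ is a bounded operator and $L^2(\M,\vphi)_+$ is defined as the strong closure of $\{\xi J_{\vphi}\xi:\xi\in\A_{\vphi,0}\}$, it suffices to show $aJ_{\vphi}aJ_{\vphi}(\xi J_{\vphi}\xi)\in L^2(\M,\vphi)_+$ for every fixed $\xi\in\A_{\vphi,0}$.

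First I would treat the special case $a=\pi_{\vphi}(\eta)$ with $\eta\in\A_{\vphi,0}$. Writing $\xi J_{\vphi}\xi=\pi_{\vphi}(\xi)(J_{\vphi}\xi)$ and exploiting the identity $J_{\vphi}\pi_{\vphi}(\M)J_{\vphi}=\pi_{\vphi}(\M)'$ recalled earlier, the operator $J_{\vphi}\pi_{\vphi}(\eta)J_{\vphi}$ commutes with $\pi_{\vphi}(\xi)$. Combined with $J_{\vphi}^2=I$ and the multiplicativity $\pi_{\vphi}(\eta)\pi_{\vphi}(\xi)=\pi_{\vphi}(\eta\xi)$ inside the Tomita algebra, a short rearrangement yields
\[
\pi_{\vphi}(\eta)J_{\vphi}\pi_{\vphi}(\eta)J_{\vphi}\,(\xi J_{\vphi}\xi)=(\eta\xi)\,J_{\vphi}(\eta\xi).
\]
Because $\A_{\vphi,0}$ is closed under multiplication, $\eta\xi\in\A_{\vphi,0}$, and the right-hand side lies in $L^2(\M,\vphi)_+$ by the very definition of the cone.

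For an arbitrary $a\in\M=\mathcal{R}_{\ell}(\A_{\vphi})$, the subalgebra $\pi_{\vphi}(\A_{\vphi,0})$ is strongly dense in $\M$, so Kaplansky's density theorem supplies a net $\{\eta_{\alpha}\}\subset\A_{\vphi,0}$ with $\|\pi_{\vphi}(\eta_{\alpha})\|\le\|a\|$ and $\pi_{\vphi}(\eta_{\alpha})\to a$ strongly. The anti-unitarity of $J_{\vphi}$ then yields $J_{\vphi}\pi_{\vphi}(\eta_{\alpha})J_{\vphi}\to J_{\vphi}aJ_{\vphi}$ strongly with the same norm bound, and multiplication of uniformly bounded strongly convergent nets gives $\pi_{\vphi}(\eta_{\alpha})J_{\vphi}\pi_{\vphi}(\eta_{\alpha})J_{\vphi}\to aJ_{\vphi}aJ_{\vphi}$ strongly. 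The previous step places each approximant $\pi_{\vphi}(\eta_{\alpha})J_{\vphi}\pi_{\vphi}(\eta_{\alpha})J_{\vphi}(\xi J_{\vphi}\xi)$ in $L^2(\M,\vphi)_+$, and the strong closedness of the cone delivers $aJ_{\vphi}aJ_{\vphi}(\xi J_{\vphi}\xi)\in L^2(\M,\vphi)_+$. The main obstacle is the bookkeeping in the second step: one must carefully distinguish the vector $\xi$ from the bounded operator $\pi_{\vphi}(\xi)$ and invoke the commutant relation $J_{\vphi}\pi_{\vphi}(\M)J_{\vphi}=\pi_{\vphi}(\M)'$ at the right moment; the Kaplansky approximation itself is routine.
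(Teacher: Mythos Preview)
Your argument is correct. The paper does not give its own proof but simply cites \cite[Chapter IX, Theorem 1.2 (iii)]{Takesaki2003}, so there is no detailed comparison to make; what you have written is essentially the standard proof that one finds in Takesaki, carried out cleanly: the algebraic identity $\pi_{\vphi}(\eta)J_{\vphi}\pi_{\vphi}(\eta)J_{\vphi}\,(\xi J_{\vphi}\xi)=(\eta\xi)J_{\vphi}(\eta\xi)$ for $\eta,\xi\in\A_{\vphi,0}$ follows exactly as you indicate from $J_{\vphi}\pi_{\vphi}(\M)J_{\vphi}=\pi_{\vphi}(\M)'$ and $J_{\vphi}^2=I$, and the passage to arbitrary $a\in\M$ via Kaplansky's density theorem (using $\mathcal{R}_{\ell}(\A_{\vphi,0})=\pi_{\vphi}(\M)$ from the paper's preliminaries) together with the closedness of the cone is routine. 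Your caution about distinguishing the vector $\xi$ from the operator $\pi_{\vphi}(\xi)$ is well placed, but you have handled it correctly.
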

\begin{proof}
See \cite[Chapter IX, Theorem 1.2 (iii)]{Takesaki2003}.
\end{proof}

\begin{Def}
\upshape
Let $A$ be a self-adjoint operator on $L^2(\M, \vphi)$,   bounded from below.
The semigroup $\{e^{-tA}\}_{t\ge 0}$ is said to be {\it ergodic} w.r.t. $ L^2(\M, \vphi)_+$, if the following (i) and (ii) are satisfied:
\begin{itemize}
\item[(i)] $e^{-tA} \unrhd 0$ w.r.t. $ L^2(\M, \vphi)_+$ for all $t\ge 0$;
\item[(ii)] for each $u, v\in  L^2(\M, \vphi)_+ \setminus \{0\}$, there is a $t\ge 0$
such that $\langle u| e^{-tA} v\rangle >0$. Note that $t$ could depend on $u$ and $v$.
\end{itemize}

\end{Def}

The following lemma   immediately follows  from the definitions:
\begin{Lemm}
Let $A$ be a self-adjoint operator on $L^2(\M, \vphi)$, bounded from below. If $e^{-tA}\rhd 0$ w.r.t. $ L^2(\M, \vphi)_+$ for all $t>0$, then $\{e^{-tA}\}_{t\ge 0}$ is ergodic w.r.t. $ L^2(\M, \vphi)_+$.
\end{Lemm}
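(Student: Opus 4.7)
The plan is to verify the two defining conditions of ergodicity directly, exploiting that strict positivity (in the sense of Definition \ref{DefHilC}) is by construction exactly what forces nonnegative overlaps with elements of the self-dual cone to be strictly positive.

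First, I would dispense with condition (i), that $e^{-tA}\unrhd 0$ w.r.t.\ $L^2(\M,\vphi)_+$ for every $t\ge 0$. For $t>0$ this is immediate from the hypothesis $e^{-tA}\rhd 0$, since positivity improving trivially implies positivity preserving: if $u\in L^2(\M,\vphi)_+$, then either $u=0$, in which case $e^{-tA}u=0\in L^2(\M,\vphi)_+$, or $u\ne 0$, in which case $e^{-tA}u$ is strictly positive and in particular lies in the self-dual cone (one sees this from the self-duality $L^2(\M,\vphi)_+=\{\xi\in L^2(\M,\vphi):\langle\xi|\eta\rangle\ge 0\ \forall\eta\in L^2(\M,\vphi)_+\}$, so that an element pairing nonnegatively with every element of the cone must itself be in the cone). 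For $t=0$, $e^{-0\cdot A}=I$ obviously preserves the cone.

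Second, I would verify (ii). Given $u,v\in L^2(\M,\vphi)_+\setminus\{0\}$, pick any fixed $t>0$. Since $v\ne 0$ and $e^{-tA}\rhd 0$, the vector $e^{-tA}v$ is strictly positive w.r.t.\ $L^2(\M,\vphi)_+$, so by the definition of strict positivity, $\langle u|e^{-tA}v\rangle>0$. This is precisely the required inequality (in fact, condition (ii) holds with the same $t$ for all pairs $u,v$, giving a slightly stronger statement).

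There is no real obstacle here; the statement is essentially an unwinding of the definitions. The only subtlety is the step that positivity improving implies positivity preserving, which hinges on the self-duality of $L^2(\M,\vphi)_+$. Once that is noted, both clauses of the ergodicity definition drop out immediately, so the proof will be short.
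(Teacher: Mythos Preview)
Your proposal is correct and matches the paper's approach: the paper gives no proof at all, remarking only that the lemma ``immediately follows from the definitions,'' and your argument is precisely that unwinding. The only nontrivial observation --- that a strictly positive vector lies in the self-dual cone, hence positivity improving implies positivity preserving --- is exactly the point that makes the statement immediate.
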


The basic result here is:
\begin{Thm}[Perron--Frobenius--Faris]\label{pff}
Let $A$ be a self-adjoint operator, bounded from below. Assume that $E(A)=\inf \mathrm{spec}(A)$ is an eigenvalue of $A$, where $\mathrm{spec}(A)$ indicates the spectrum of $A$. Let $\mathcal{V}$ be the  eigenspace corresponding to $E(A)$. If $\{e^{-tA}\}_{t\ge 0}$ is ergodic w.r.t. $ L^2(\M, \vphi)_+$, then $\dim \mathcal{V}=1$ and $\mathcal{V}$ is spanned by a   strictly positive vector w.r.t. $ L^2(\M, \vphi)_+$.
\end{Thm}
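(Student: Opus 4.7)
The plan is to follow the classical Perron--Frobenius--Faris strategy, transporting the usual Jordan decomposition argument from $L^2(X, \mu)_+$ to the self-dual cone $L^2(\M, \vphi)_+$.

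First I would reduce to \emph{real} eigenvectors. Since $e^{-tA} \unrhd 0$ and $J_{\vphi}$ fixes $L^2(\M, \vphi)_+$, it follows that $e^{-tA}$ preserves the real linear span of the cone, i.e.\ the real subspace $\h_r := \{\xi \in \h : J_{\vphi}\xi = \xi\}$; combined with the decomposition $\h = \h_r \oplus i\h_r$, this forces $J_{\vphi} e^{-tA} = e^{-tA} J_{\vphi}$, and hence $J_{\vphi} A = A J_{\vphi}$. Therefore $\mathcal{V}$ is stable under $J_{\vphi}$, and every $\psi \in \mathcal{V}$ splits as $\psi = \psi_r + i\psi_i$ with $\psi_r, \psi_i \in \mathcal{V} \cap \h_r$; it thus suffices to prove the claim on $\mathcal{V} \cap \h_r$.

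Next I would exploit the Jordan decomposition associated with the self-dual cone: every $\psi \in \h_r$ admits a unique representation $\psi = \psi_+ - \psi_-$ with $\psi_\pm \in L^2(\M, \vphi)_+$ and $\langle \psi_+, \psi_-\rangle = 0$. Set $|\psi| := \psi_+ + \psi_-$, so that $\||\psi|\|^2 = \|\psi_+\|^2 + \|\psi_-\|^2 = \|\psi\|^2$. The core calculation is that, for every $t \ge 0$,
\ben
\langle |\psi|, e^{-tA}|\psi|\rangle - \langle \psi, e^{-tA}\psi\rangle = 2\langle \psi_+, e^{-tA}\psi_-\rangle + 2\langle \psi_-, e^{-tA}\psi_+\rangle \ge 0,
\een
because $e^{-tA}\unrhd 0$ and $\psi_\pm \in L^2(\M, \vphi)_+$. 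If $\psi$ is an $E(A)$-eigenvector, then $\langle \psi, e^{-tA}\psi\rangle = e^{-tE(A)}\|\psi\|^2$, while $\|e^{-tA}\| = e^{-tE(A)}$ yields $\langle |\psi|, e^{-tA}|\psi|\rangle \le e^{-tE(A)}\||\psi|\|^2 = e^{-tE(A)}\|\psi\|^2$. Consequently equality holds throughout, and $\langle \psi_+, e^{-tA}\psi_-\rangle = 0$ for every $t \ge 0$. The ergodicity condition (ii) then forces $\psi_+ = 0$ or $\psi_- = 0$, so, after a sign choice, $\psi \in L^2(\M, \vphi)_+$.

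Finally I would upgrade positivity to strict positivity and deduce one-dimensionality. For strict positivity: given any $u \in L^2(\M, \vphi)_+ \setminus \{0\}$, ergodicity yields $t \ge 0$ with $\langle u, e^{-tA}\psi\rangle > 0$, which rewrites as $e^{-tE(A)}\langle u, \psi\rangle > 0$, so $\langle u, \psi\rangle > 0$. For uniqueness: suppose $\psi, \phi \in \mathcal{V}$ were linearly independent. By the previous step both can be taken strictly positive, and then $\phi - c\psi \in \mathcal{V}$ with $c = \langle \psi, \phi\rangle / \|\psi\|^2$ is a nonzero eigenvector orthogonal to $\psi$. Applying the previous step to $\phi - c\psi$ would make it, up to sign, strictly positive, contradicting its orthogonality to the strictly positive vector $\psi$. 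Hence $\dim \mathcal{V}=1$. The main obstacle is the equality-case step: one has to combine the Jordan decomposition identity with the operator-norm bound $\|e^{-tA}\| = e^{-tE(A)}$ in order to pinch $\langle \psi_+, e^{-tA}\psi_-\rangle$ between $0$ and $0$; once this is secured, ergodicity finishes the argument cleanly.
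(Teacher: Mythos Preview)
Your argument is correct and is precisely the classical Perron--Frobenius--Faris proof adapted to a general self-dual cone: Jordan decomposition in $\h_r$, the variational pinching $\langle |\psi|, e^{-tA}|\psi|\rangle \ge \langle \psi, e^{-tA}\psi\rangle$ combined with $\|e^{-tA}\| = e^{-tE(A)}$ to force $\langle \psi_+, e^{-tA}\psi_-\rangle = 0$, then ergodicity to kill one component, and finally strict positivity plus orthogonality to rule out a second independent eigenvector. All steps are sound; the only minor wording issue is that ``$J_\vphi A = AJ_\vphi$'' should, for unbounded $A$, be read as $J_\vphi$-invariance of $\mathcal V$, which you in fact only need and which follows directly from $J_\vphi e^{-tA} = e^{-tA} J_\vphi$.

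There is nothing to compare on the paper's side: the paper does not supply its own proof but simply cites Faris's original article \cite{Faris1972}. Your write-up is essentially a self-contained account of that reference, so in spirit it matches what the paper points to.
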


\begin{proof}
See \cite{Faris1972}.
\end{proof}

The following simple proposition is often useful.
\begin{Prop} \label{NonVOver}Let $A$ be a bounded operator on $L^2(\M, \vphi)$ with  $A\neq 0$. Assume that $u > 0$ w.r.t. $L^2(\M, \vphi)_+$. If  $A  \unrhd 0$ w.r.t. $L^2(\M, \vphi)_+$,
then $Au \neq  0$.
\end{Prop}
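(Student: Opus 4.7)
The plan is to argue by contradiction: assume $Au = 0$ and derive $A = 0$. The proof rests on two structural facts about self-dual cones that can be extracted from Section \ref{PrelSta}: first, that positivity preservation is preserved under taking adjoints, and second, that the real subspace $L^2(\M,\vphi)_+ - L^2(\M,\vphi)_+$ is $\mathbb{C}$-total in $L^2(\M,\vphi)$.

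First I would establish that $A \unrhd 0$ implies $A^* \unrhd 0$. For any $w_1, w_2 \in L^2(\M,\vphi)_+$, one has $\la A^* w_1 | w_2\ra = \la w_1 | A w_2\ra \ge 0$ because $A w_2$ lies in the cone and the cone is self-dual; thus by self-duality once more, $A^* w_1 \in L^2(\M,\vphi)_+$. Next, for an arbitrary $w \in L^2(\M,\vphi)_+$, compute
\begin{equation*}
\la A^* w \,|\, u\ra = \la w \,|\, A u\ra = 0,
\end{equation*}
using the assumption $A u = 0$. Since $A^* w \in L^2(\M,\vphi)_+$ and $u$ is strictly positive w.r.t.\ $L^2(\M,\vphi)_+$, the defining inequality $\la A^* w\,|\, u\ra > 0$ would have to hold if $A^* w \neq 0$; hence $A^* w = 0$.

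Finally I would deduce $A^* = 0$. By the Tomita--Takesaki structure recalled in Subsection \ref{PrelSta}, every $J_\vphi$-real vector admits a decomposition as a difference of two elements of $L^2(\M,\vphi)_+$, and $L^2(\M,\vphi) = L^2(\M,\vphi)^{J_\vphi} + \im L^2(\M,\vphi)^{J_\vphi}$. Since $A^*$ annihilates every element of $L^2(\M,\vphi)_+$ by the previous step, linearity forces $A^* = 0$, and hence $A = 0$, contradicting the hypothesis $A \neq 0$.

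The steps are each short, but the one place to be careful is the passage from \emph{$A^*$ vanishes on the cone} to \emph{$A^*$ vanishes on all of} $L^2(\M,\vphi)$; this is where the self-dual cone's span property must be invoked explicitly rather than tacitly. The rest is a direct manipulation of the defining inequalities for $\unrhd 0$ and strict positivity.
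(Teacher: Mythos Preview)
Your argument is correct. The paper itself does not give a proof but merely cites \cite[Proposition 2.25]{Miyao2019}, so there is no in-paper proof to compare against; your contradiction argument via $A^*$ is the standard route and would serve as a self-contained replacement for that citation.
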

\begin{proof}
See, \cite[Proposition 2.25]{Miyao2019}.
\end{proof}

In the remainder of this subsection, we will introduce the basic techniques for manipulating OPOIs. 
Given a Hilbert space $\h$, we denote by $\mathscr{L}(\h)$ the set of all bounded operators on $\h$.
We readily confirm the following lemma:
\begin{Lemm}
Let $A,B\in\mathscr L(L^2(\M, \vphi))$.
 Suppose that  $A,B\unrhd0\wrt   L^2(\M, \vphi)_+$. We have the following:
 \begin{itemize}
 \item[{\rm (i)}] If $a, b\ge 0$, then $aA+bB\unrhd0\wrt   L^2(\M, \vphi)_+$;
 \item[{\rm  (ii)}] $AB\unrhd0\wrt   L^2(\M, \vphi)_+$.
 \end{itemize}
\end{Lemm}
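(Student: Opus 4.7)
The plan is to reduce both claims directly to the defining inclusion $A \unrhd 0$ w.r.t.\ $L^2(\M,\vphi)_+$ $\Leftrightarrow$ $A\, L^2(\M,\vphi)_+\subseteq L^2(\M,\vphi)_+$, combined with the fact that the self-dual cone $L^2(\M,\vphi)_+$ is in particular a convex cone, i.e.\ closed under addition and multiplication by non-negative scalars. This second fact is part of the construction recalled in Subsection \ref{PrelSta} (it is immediate from the description $L^2(\M,\vphi)_+=\overline{\{\xi J_{\vphi}\xi:\xi\in\A_{\vphi,0}\}}$ being self-dual in the sense of Haagerup).

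For part (i), I would fix an arbitrary $u\in L^2(\M,\vphi)_+$. The assumptions on $A$ and $B$ give $Au,Bu\in L^2(\M,\vphi)_+$. Since $a,b\ge 0$, the cone property yields $aAu\in L^2(\M,\vphi)_+$, $bBu\in L^2(\M,\vphi)_+$, and finally $(aA+bB)u=aAu+bBu\in L^2(\M,\vphi)_+$. As $u$ was arbitrary, this is precisely $aA+bB\unrhd 0$ w.r.t.\ $L^2(\M,\vphi)_+$.

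For part (ii), I would again start with an arbitrary $u\in L^2(\M,\vphi)_+$. The hypothesis $B\unrhd 0$ places $Bu$ in $L^2(\M,\vphi)_+$; then the hypothesis $A\unrhd 0$, applied to the vector $Bu$, gives $(AB)u=A(Bu)\in L^2(\M,\vphi)_+$. Hence $AB\, L^2(\M,\vphi)_+\subseteq L^2(\M,\vphi)_+$, i.e.\ $AB\unrhd 0$ w.r.t.\ $L^2(\M,\vphi)_+$.

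There is essentially no obstacle: both parts amount to chaining the set-inclusion that defines the symbol $\unrhd$ with the elementary closure properties of the self-dual cone, which is why the authors label the statement as something one \emph{readily confirms}. The only conceptual point worth noting is that one must use the convex-cone structure of $L^2(\M,\vphi)_+$ rather than any finer property such as self-duality or invariance under $J_{\vphi}$; these latter properties play no role here and will only be needed when one passes to sharper statements about positivity improvement (such as those invoked via Proposition \ref{NonVOver} and Theorem \ref{pff} later in the paper).
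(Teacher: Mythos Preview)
Your proof is correct and is precisely the elementary verification the authors have in mind; the paper itself does not spell out a proof but merely cites \cite{Miura2003,Miyao2016}, and your argument from the definition of $\unrhd$ together with the convex-cone property of $L^2(\M,\vphi)_+$ is exactly what those references contain.
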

\begin{proof}
For proof, see, e.g., \cite{Miura2003, Miyao2016}. 
\end{proof}

Let $J_{\vphi}$ be the modular conjugation associated with $\{\M, \vphi\}$.
We say that a vector $\xi\in L^2(\M, \vphi)$ is $J_{\vphi}$-{\it real}, if it satisfies $J_{\vphi}\xi=\xi$.
We denote by $L^2(\M, \vphi)_J$ be the set of all $J_{\vphi}$-real vectors.
 If $A\in \mathscr{L}(  L^2(\M, \vphi))$ satisfies $A  L^2(\M, \vphi)_{J} \subseteq   L^2(\M, \vphi)_{J}$, then we say that $A$ {\it preserves the reality w.r.t. $  L^2(\M, \vphi)_+$}.

\begin{Def} \upshape
Let $A, B\in\mathscr L(  L^2(\M, \vphi))$ be reality preserving w.r.t. $  L^2(\M, \vphi)_+$.  
If   $A-B\unrhd 0 \wrt   L^2(\M, \vphi)_+$, then we write this as  
$A\unrhd B \wrt   L^2(\M, \vphi)_+$.

\end{Def}

The following properties do not hold for ordinary operator inequalities and exemplify why OPOIs can be very useful in applications.

\begin{Lemm}
Let $A,B,C,D\in\mathscr L(  L^2(\M, \vphi))$. Suppose $A\unrhd B\unrhd0\wrt  L^2(\M, \vphi)_+$ and $C\unrhd D\unrhd0\wrt  L^2(\M, \vphi)_+$. Then we have $AC\unrhd BD\unrhd0\wrt  L^2(\M, \vphi)_+$.
\end{Lemm}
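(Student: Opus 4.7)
The plan is to reduce the desired inequality to the two facts already proved: sums and products of positivity-preserving operators are positivity-preserving, and the second claim $BD\unrhd 0$ is then literally an instance of the product lemma. So the only new content is the order-preservation inequality $AC\unrhd BD$.

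First I would dispose of the reality-preservation requirement, which must be verified before the notation $AC\unrhd BD$ is even legal. By the self-dual cone structure, every $J_{\vphi}$-real vector can be written as $\xi_+-\xi_-$ with $\xi_\pm\in L^2(\M,\vphi)_+$ (this is part of the standard-form package recalled in Subsection \ref{PrelSta}). Since each of $A,B,C,D$ is positivity preserving, each maps $L^2(\M,\vphi)_+$ into itself and therefore maps $L^2(\M,\vphi)_J$ into itself. Compositions of reality-preserving operators are reality preserving, so $AC$ and $BD$ both preserve $L^2(\M,\vphi)_J$, and the difference $AC-BD$ is likewise reality preserving.

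The algebraic core of the argument is the telescoping identity
\begin{equation*}
AC-BD=(A-B)C+B(C-D).
\end{equation*}
By hypothesis $A-B\unrhd 0$ and $C\unrhd 0$ w.r.t. $L^2(\M,\vphi)_+$, so the preceding lemma on products gives $(A-B)C\unrhd 0$. Similarly $B\unrhd 0$ and $C-D\unrhd 0$ yield $B(C-D)\unrhd 0$. The preceding lemma on sums (with $a=b=1$) then gives $AC-BD\unrhd 0$, which combined with the reality-preservation observation above is exactly $AC\unrhd BD$ w.r.t. $L^2(\M,\vphi)_+$. Finally, $BD\unrhd 0$ is immediate from the product part of the preceding lemma applied to $B\unrhd 0$ and $D\unrhd 0$.

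I do not anticipate any genuine obstacle: the statement is essentially the monotonicity of composition for OPOIs, and the only subtlety is the reality-preservation check, which is settled by the decomposition of $J_{\vphi}$-real vectors as differences of cone elements. If I wanted to avoid even invoking that decomposition explicitly, I could alternatively split the two hypotheses as $A=B+(A-B)$, $C=D+(C-D)$, expand $AC$, and read off $AC=BD+[B(C-D)+(A-B)D+(A-B)(C-D)]$, each bracketed summand being a product of two positivity-preserving operators, hence positivity preserving; this bypasses any separate reality argument because $BD$ and the bracket are each built from operators already known to preserve the cone.
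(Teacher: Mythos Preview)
Your argument is correct: the telescoping identity $AC-BD=(A-B)C+B(C-D)$ together with the already-established closure of positivity-preserving operators under sums and products is exactly the standard route, and your handling of the reality-preservation prerequisite is sound. The paper does not spell out its own proof here but simply cites \cite{Miura2003,Miyao2016}, where precisely this telescoping argument appears; so your proposal matches the intended proof.
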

\begin{proof}
For proof, see, e.g., \cite{Miura2003,Miyao2016}. 
\end{proof}

\section{Stability of the magnetic properties of  ground states }\label{Sect3}
\subsection{Magnetic vectors}

In this section, we provide a fundamental framework describing a theory of   stability of magnetism in many-electron systems on {\it finite} lattices. The results of this section are essential for the discussion in the following sections.

Let $(\Lambda, E_{\Lambda})$ be a finite connected  graph.
In specific applications,  $\Lambda$ corresponds to  a lattice on which electrons move around. As we will see in Sections \ref{Sect5} and \ref{Sect6}, the edge set $E_{\Lambda}$   is determined consistently with the  Hamiltonian describing the many-electron system under consideration.

Consider a system in which there are  $ N$ electrons in $\Lambda$. The Hilbert space of such a system is given by 
\be
\F_{\Lambda, N}=\bigwedge^N \big(\ell^2(\Lambda) \oplus \ell^2(\Lambda)\big), \label{AntiSymm}
\ee
where $\bigwedge^N$ indicates the $N$-fold antisymmetric tensor product.
It is convenient to regard $\F_{\Lambda, N}$ as a subspace of the fermionic Fock space $\F_{\Lambda}=\bigoplus_{n=0}^{2|\Lambda|}\F_{\Lambda, N}$. One of the reasons for this is that we can introduce the creation and annihilation operators of electrons.
Let us denote the creation and annihilation operators of an electron with spin $\sigma$ at vertex $x$ as $c_{x\sigma}^*$ and $c_{x\sigma}$, respectively. It is well known that these operators satisfy the following anticommutation relations: 
\be
\{c_{x\sigma}, c_{x'\sigma'}^*\}=\delta_{xx'} \delta_{\sigma\sigma'},\ \ \{c_{x\sigma}, c_{x'\sigma'}\}=0,
\ee
where $\delta_{ab}$ stands for  Kronecker's delta. The Fock vacuum in $\F_{\Lambda}$ is denoted by $|\varnothing\ra_{\Lambda}$. The definitions related to the fermionic Fock space are explained in detail in \cite{Arai2016, Bratteli1997}.

 Various magnetic properties of this system are described by  using spin operators;
the  spin operators at vertex $x\in \Lambda$,  ${\bs S}_{x}=(S_x^{(1)}, S_x^{(2)}, S_x^{(3)})$,  are defined by
\be
S_x^{(j)}=\frac{1}{2} \sum_{\sigma, \sigma\rq{}=\up, \down}c_{x\sigma}^* \big(s^{(j)}\big)_{\sigma \sigma\rq{}} c_{x\sigma\rq{}},\ \ j=1, 2, 3, \label{DefSpinC}
\ee
where $s^{(j)}$ are the Pauli matrices: 
\be
s^{(1)}=\begin{pmatrix}
0 & 1 \\
1 & 0
\end{pmatrix}, \ \ 
s^{(2)}=\begin{pmatrix}
0 & -\im \\
\im & 0
\end{pmatrix},\ \ 
s^{(3)}=\begin{pmatrix}
1 & 0 \\
0 & -1
\end{pmatrix}.
\ee
These  operators satisfy the standard commutation
relations:
\be
[S^{(i)}_x , S^{(j)}_y] = \im \delta_{xy} \sum_{k=1 ,2, 3}
\vepsilon_{ijk} S^{(k)}_x, 
\ee
where $\vepsilon_{ijk}$ is the Levi--Civita symbol.

The total momentum operators, ${\bs S}_{\Lambda}=\big(S_{\Lambda}^{(1)}, S_{\Lambda}^{(2)}, S_{\Lambda}^{(3)} \big)$,
 are defined by 
 \be
 S_{\Lambda}^{(i)}=\sum_{x\in \Lambda} S_x^{(i)},\ \ i=1, 2, 3,
 \ee 
and,  the Casimir operator is denoted by ${\bs S}^2_{\Lambda}$:
\be
{\bs S}_{\Lambda}^2=\big(S_{\Lambda}^{(1)}\big)^2+\big(S_{\Lambda}^{(2)}\big)^2+\big(S_{\Lambda}^{(3)}\big)^2.
\ee
The following wording will be used frequently in this paper.
\begin{Def} \upshape
We say that a state $\psi$ in $\F_{\Lambda, N} $ has  total spin $S$, if 
${\bs S}_{\Lambda}^2\psi=S(S+1)\psi$ holds.
\end{Def}

Here, we introduce subspaces of $\F_{\Lambda, N}$ that are useful for the study of magnetism.
\begin{Def}\upshape
Let $\h_{\Lambda, N}$ be a subspace of $\F_{\Lambda, N}$. Let $P$ be the orthogonal projection 
from $\F_{\Lambda, N}$ to $\h_{\Lambda, N}$. We say that $\h_{\Lambda, N}$ is {\it reducible}, if $P$ commutes with $S_{\Lambda}^{(1)}, S_{\Lambda}^{(2)}$ and $S_{\Lambda}^{(3)}$.
We denote the restriction of $S_{\Lambda}^{(i)}$ to $\h_{\Lambda, N}$ by the same symbol, 
if no confusion occurs.
Let  $\mathscr{H}_{\Lambda, N}$  be the set of all reducible subspaces of $\F_{\Lambda, N}$.
\end{Def}
A typical example, which will be used in later sections, is given below.
\begin{Exa}\label{ElHilEx}\upshape
The fermionic Fock space $\F_{\Lambda, N=|\Lambda|}$ for $|\Lambda|$ electrons is of particular importance in the study of strongly correlated electron systems. Especially, the following subspace  of $\F_{\Lambda, |\Lambda|}$ will be used in the study of the Heisenberg model:
\be
\h_{\Lambda, |\Lambda|}=Q_{\Lambda} \F_{\Lambda, |\Lambda|},\ \  Q_{\Lambda}=\prod_{x\in \Lambda} (n_{x\up}-n_{x\down})^2.
\ee
 The subspace $\h_{\Lambda, |\Lambda|}$ describes states in which each site is occupied by a single electron. It is not hard to check that $\h_{\Lambda, |\Lambda|}$ is reducible. See Section \ref{Sect5} for detailed discussions.
\end{Exa}

In this paper, we will consider various systems in which electrons interact with environmental systems. As environmental systems, we have in mind  lattice vibrations (phonons) and quantized radiation fields (photons). To describe such systems, let us introduce some terms.
\begin{Def}\label{DefIEESP}\upshape
Fix $\h_{\Lambda, N} \in \mathscr{H}_{\Lambda, N}$, arbitrarily.
We say that a Hilbert space $\frak{X}$ is an {\it interacting  electron-environment  (IEE) space} associated with $\h_{\Lambda, N}$, if 
there exists an isometric linear mapping $\kappa$ from $\h_{\Lambda, N}$ into $\mathfrak{X}$.
We denote by $\mathscr{I}(\h_{\Lambda, N})$ the set of all IEE spaces associated with $\h_{\Lambda, N}$. Trivially, the $N$-electron space $\F_{\Lambda, N}$ belongs to $\mathscr{I}(\h_{\Lambda, N})$. 
\end{Def}
For each $M\in \mathrm{spec}(S_{\Lambda}^{(3)} \restriction \h_{\Lambda, N})$, the closed subspace
$\X[M]=\ker(S^{(3)}_{\Lambda}\restriction \h_{\Lambda, N}-M) \cap \X$ is called the {\it $M$-subspace}.\footnote{
The total spin operator $S_{\Lambda}^{(3)} \restriction \h_{\Lambda, N}$ can be naturally extended to an operator on $\X$. We will write this extension with the same symbol unless there is confusion.
}
The following decomposition will be fundamental for our study:
\be
\X=\bigoplus_{M\in \mathrm{spec}(S_{\Lambda}^{(3)} \restriction \h_{\Lambda, N})} \X[M]. \label{MDecomp}
\ee
 \begin{Exa}\upshape
 Let us discuss a typical example of IEE space here. Given a separable complex Hilbert space $\mathfrak{K}$,
  define $\X=\h_{\Lambda, N} \otimes \mathfrak{K}$. 
 Then $\X$ is an IEE space. To see this, fix a normalized vector $\mu_0\in \mathfrak{K}$, arbitrarily. The mapping 
 $\kappa: \xi\ni \h_{\Lambda, N}\mapsto \xi\otimes \mu_0\in \frak{K}$ gives rise to the isometric linear mapping from $\h_{\Lambda, N}$ into $\X$. The $M$-subspace of $\X$ is given by 
 $\X[M]=\h_{\Lambda, N}[M]\otimes \frak{K}$, where $\h_{\Lambda, N}[M]$ is the $M$-subspace of $\h_{\Lambda, N}$. This type of IEE space will be used frequently in this paper; see Sections \ref{Sect5} and \ref{Sect6}.
 
\end{Exa}

For  a given  $\frak{X}\in \mathscr{I}(\h_{\Lambda, N})$, 
we set 
\be
\A_{\Lambda}(\X)=\{A\in \mathscr{L}(\X) : [A, S^{(3)}_{\Lambda}]=0\},\label{DefAX}
\ee
where $\mathscr{L}(\X)$ indicates the set of all bounded operators  on $\X$.
\begin{Def}\upshape
Define
\be
\mathscr{Q}(\X)=\Big\{
\{\M, \vphi\} : \mbox{$\M$ is a von Neumann subalgebra of $\A_{\Lambda}(\X)$ and  $\vphi\in \mathfrak{W}_0(\M)$}
\Big\}.
\ee
Here, recall that $\mathfrak{W}_0(\M)$ is the set of semi-finite faithful normal weights on $\M$.
Each element in $\mathscr{Q}(\X)$ is called an {\it IEE system} associated with $\X$.
\end{Def}
Each $\{\M, \vphi\}\in \mathscr{Q}(\frak{X})$ gives rise to  the  standard form $ F (\M, \vphi)=\{\M, L^2(\M, \vphi), L^2(\M, \vphi)_+, J_{\vphi}\}$ as discussed in Section \ref{PrelSta}.  
We readily confirm  the following  decomposition:
\be
\M=\bigoplus_{M\in \mathrm{spec}(S_{\Lambda}^{(3)} \restriction \h_{\Lambda, N})} \M[M],\ \ 
\M[M]=P_M \M P_M,
\ee
where $P_M$ is the orthogonal projection from $\X$ to $\X[M]$. For simplicity of notation, we set
$\vphi_M=\vphi\restriction \M[M]$.
Because $\{\M[M], \vphi_M\}\in \mathscr{Q}(\X[M])$, we have the corresponding decomposition:
\be
F(\M, \vphi)= \bigoplus_{M\in \mathrm{spec}(S_{\Lambda}^{(3)} \restriction \h_{\Lambda, N})} F(\M[M], \vphi_M),
\ee
where, given standard forms $F(\M_1, \vphi_1)$ and $F(\M_2, \vphi_2)$,  $F(\M_1, \vphi_1) \oplus F(\M_2, \vphi_2)$ denotes the following  standard form:
\be
\{
\M_1\oplus \M_2, L^2(\M_1, \vphi_1) \oplus L^2(\M_2, \vphi_2), L^2(\M_1, \vphi_1)_+ \oplus L^2(\M_2, \vphi_2)_+,  J_1\oplus J_2
\}.
\ee

\begin{Def}\label{MagVecDef}\upshape
Suppose we are given an IEE system $\{\M, \vphi\}$. 
A vector $\psi\in L^2(\M, \vphi)(=\frak{X})$ is {\it magnetic associated with $\{\M, \vphi\}$}, if the following are satisfied:
\begin{itemize}
\item[\rm (i)]
There exists a non-negative number, $S_{\psi}$, such that ${\bs S}_{\Lambda}^2\psi=S_{\psi}(S_{\psi}+1)\psi$.
Namely, $\psi$ has  total spin $S_{\psi}$.
\item[\rm (ii)] 
For all  $M\in \mathrm{spec}(S_{\Lambda}^{(3)} \restriction \h_{\Lambda, N})$ with $|M| \le S_{\psi}$,   $P_M\psi$ is {\it strictly} positive w.r.t. $L^2(\M[M], \vphi_M)_+$.
\end{itemize}
We denote by $G(\M, \vphi)$ the set of all magnetic vectors associated with $\{\M, \vphi\}$.
\end{Def}
An important point in the above definition is the condition on the positivity of the vector in (ii).
Roughly speaking, this  strict positivity characterizes the magnetic structure of the vector; 
this will be made clear in the proof of Theorem \ref{BasicThm}.

In what follows, we always assume that $G(\M, \vphi)$ is nonempty. 
The following theorem is a foundation of the theory we will construct in this paper. 

\begin{Thm}\label{BasicThm}
The mapping $S_{\bullet}: \psi\in G(\M, \vphi) \mapsto S_{\psi}$ is constant.
\end{Thm}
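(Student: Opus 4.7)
The plan is to argue by contradiction. Suppose there exist magnetic vectors $\psi, \psi' \in G(\M, \vphi)$ with distinct total spins; without loss of generality assume $S_\psi < S_{\psi'}$. I would show that $\langle \psi | \psi' \rangle$ must simultaneously be zero and strictly positive. The former is immediate: since ${\bs S}_\Lambda^2$ is self-adjoint and $\psi, \psi'$ are eigenvectors for the distinct eigenvalues $S_\psi(S_\psi+1)$ and $S_{\psi'}(S_{\psi'}+1)$, they are orthogonal.

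To establish strict positivity of $\langle \psi | \psi' \rangle$, I would decompose along \eqref{MDecomp} applied to $\X = L^2(\M, \vphi)$, obtaining
\[
\langle \psi | \psi' \rangle = \sum_{M \in \mathrm{spec}(S^{(3)}_\Lambda \restriction \h_{\Lambda, N})} \langle P_M \psi | P_M \psi' \rangle.
\]
Because $\psi$ has total spin $S_\psi$, the component $P_M \psi$ vanishes whenever $|M| > S_\psi$, so only the range $|M| \leq S_\psi$ contributes. For each such $M$, condition (ii) of Definition \ref{MagVecDef} guarantees that both $P_M \psi$ and $P_M \psi'$ (the latter available since $|M| \leq S_\psi < S_{\psi'}$) are strictly positive with respect to $L^2(\M[M], \vphi_M)_+$. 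Invoking self-duality of this cone, any strictly positive vector lies in the cone, so $P_M \psi$ is a nonzero positive vector; applying the defining property of strict positivity of $P_M \psi'$ to $P_M \psi$ then yields $\langle P_M \psi' | P_M \psi \rangle > 0$. Since $\psi \neq 0$ has total spin $S_\psi$, at least one component $P_M \psi$ with $|M| \leq S_\psi$ is nonzero, and summing produces $\langle \psi | \psi' \rangle > 0$, contradicting the orthogonality above.

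The main subtlety, and the step I would verify most carefully, is the passage from strict positivity in the sense of Definition \ref{DefHilC} to actual membership in the self-dual cone, which is what permits pairing two strictly positive vectors to yield a strictly positive scalar. This follows from self-duality: a vector having nonnegative inner product with every element of the cone must itself lie in the cone, and a strictly positive vector certainly satisfies this. Once this is in hand, the argument is a clean incompatibility between orthogonality arising from distinct Casimir eigenvalues and strict positivity arising from the magnetic condition, and the theorem follows.
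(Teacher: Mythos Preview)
Your proof is correct and follows essentially the same approach as the paper's: both exploit that strictly positive vectors in a self-dual cone have strictly positive inner product, and then use the eigenvalue equation for ${\bs S}_\Lambda^2$ to force $S_\psi = S_{\psi'}$. The only cosmetic difference is that the paper works directly with a single component $P_{M=0}\psi$ (which still carries total spin $S_\psi$ since $P_M$ commutes with ${\bs S}_\Lambda^2$) and concludes equality without contradiction, whereas you sum over all $M$ and phrase the argument as an incompatibility between orthogonality and strict positivity.
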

\begin{proof}
Let $\psi_1, \psi_2\in G(\M, \vphi)$. We set $\psi_{1, 0}=P_{M=0} \psi_1$ and $\psi_{2, 0}=P_{M=0} \psi_2$,
Then $\psi_{1, 0}$ and $\psi_{2, 0}$ are strictly positive w.r.t. $L^2(\M[0], \vphi_0)_+$, which implies that 
$\la \psi_{1, 0}|\psi_{2, 0}\ra>0$. 
In addition, because $P_M$ commutes with ${\bs S}_{\Lambda}^2$, it holds that 
${\bs S}_{\Lambda}^2\psi_{1, 0}=S_{\psi_1}(S_{\psi_1}+1)\psi_{1, 0}$ and ${\bs S}_{\Lambda}^2\psi_{2, 0}=S_{\psi_2}(S_{\psi_2}+1)\psi_{2, 0}$.
Hence, we have
\begin{align}
S_{\psi_1}(S_{\psi_1}+1) \la \psi_{1, 0}|\psi_{2, 0}\ra=\la {\bs S}_{\Lambda}^2\psi_{1, 0}|\psi_{2, 0}\ra
=\la \psi_{1, 0}|{\bs S}^2_{\Lambda}\psi_{2, 0}\ra=S_{\psi_2}(S_{\psi_2}+1) \la \psi_{1, 0}|\psi_{2, 0}\ra.
\end{align}
Because $\la \psi_{1, 0}|\psi_{2, 0}\ra>0$, this implies that $S_{\psi_1}=S_{\psi_2}$. 
\end{proof}

From Theorem \ref{BasicThm}, the following definition makes sense.
\begin{Def} \upshape
The value of $S_{\bullet}$ on $G(\M, \vphi)$ is called the {\it total spin of the IEE  system  } $\{ \M, \vphi\}$ and is  denoted by  $S(\M, \vphi)$. 
\end{Def}

We will mention here the unitary equivalence between the two IEE systems for later convenience.

\begin{Def}\label{IEESEquiI}\upshape
Consider  two elements $\X$ and $\X'$ in $\mathscr{I}(\h_{\Lambda, N})$.
We say that $\{\M, \vphi\}\in \mathscr{Q}(\X)$ and $\{\M', \vphi'\}\in \mathscr{Q}(\X')$ are {\it equivalent} if there exists a unitary operator $U: \X\to \X'$ satisfying 
$\M'=U\M U^{-1}$ and $\vphi'(\bullet)=\vphi(U^{-1}\bullet U)$. This is denoted symbolically as $
\{\M, \vphi\} \simeq \{\M', \vphi'\}
$.
\end{Def}
The following proposition should be clear from the definition.

\begin{Prop}
Suppose that $\{\M, \vphi\}\in \mathscr{Q}(\X)$ and $\{\M', \vphi'\}\in \mathscr{Q}(\X')$ are equivalent.
Then $F(\M, \vphi)$ and $F(\M', \vphi')$ are equivalent in the following sense:
Let $U : \X\to \X'$ be the unitary operator that gives this equivalence. Then
\be
UL^2(\M, \vphi)_+=L^2(\M', \vphi')_+,\ J_{\vphi}=U^{-1}J_{\vphi'}U.
\ee
In addition, we have $S(\M, \vphi)=S(\M', \vphi')$.
\end{Prop}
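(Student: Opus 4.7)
The plan is to decompose the statement into two parts and handle each separately: (a) the unitary equivalence of the standard forms, and (b) the equality of the total spins.

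\emph{For (a),} my strategy is to invoke the uniqueness (up to spatial isomorphism) of the standard form attached to a $\sigma$-finite von Neumann algebra equipped with a faithful semi-finite normal weight. By hypothesis $\pi(a) := U a U^{-1}$ defines a $*$-isomorphism $\pi : \M \to \M'$ satisfying $\vphi' \circ \pi = \vphi$. Pulling $F(\M', \vphi')$ back along $U^{-1}$ produces
\[
\{\M,\ \X,\ U^{-1} L^2(\M', \vphi')_+,\ U^{-1} J_{\vphi'} U\},
\]
which is a standard form of $\M$ acting on $\X$ whose associated weight is precisely $\vphi$. By Tomita--Takesaki theory, the standard form of $(\M, \vphi)$ is uniquely determined by the pair (cf.~\cite[Chapter IX]{Takesaki2003}), so this pullback must coincide with $F(\M, \vphi)$. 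Rearranging yields $U L^2(\M, \vphi)_+ = L^2(\M', \vphi')_+$ and $J_\vphi = U^{-1} J_{\vphi'} U$.

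\emph{For (b),} I will show that $U$ bijectively maps $G(\M, \vphi)$ onto $G(\M', \vphi')$ while preserving total spin. Since $\X$ and $\X'$ are both IEE spaces associated with the common space $\h_{\Lambda, N}$, and the spin operators on each are natural extensions of those on $\h_{\Lambda, N}$, the equivalence $U$ should intertwine them: $U {\bs S}_\Lambda = {\bs S}'_\Lambda U$; in particular $U P_M = P'_M U$ for each $M \in \mathrm{spec}(S_\Lambda^{(3)} \restriction \h_{\Lambda, N})$. Part~(a) applied componentwise to the $M$-decompositions of the two standard forms gives $U L^2(\M[M], \vphi_M)_+ = L^2(\M'[M], \vphi'_M)_+$, and since $U$ is unitary it carries strictly positive vectors onto strictly positive vectors with respect to the corresponding self-dual cones. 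Hence if $\psi \in G(\M, \vphi)$ has total spin $S = S(\M, \vphi)$, then $U\psi \in G(\M', \vphi')$ has the same total spin $S$, and Theorem~\ref{BasicThm} yields $S(\M', \vphi') = S = S(\M, \vphi)$.

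\emph{Main obstacle.} The standard-form part is a clean, essentially formal consequence of the uniqueness theorem. The subtler point is the intertwining $U {\bs S}_\Lambda = {\bs S}'_\Lambda U$, which is not stated explicitly in Definition~\ref{IEESEquiI} but is the natural consistency condition one expects when the equivalence relates IEE systems built on a common electronic space $\h_{\Lambda, N}$; without it there is no way to transport the magnetic-vector structure from one side to the other, and the second assertion would fail.
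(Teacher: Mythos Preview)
The paper gives no proof of this proposition; it simply states that it ``should be clear from the definition.'' Your argument correctly supplies the details the paper leaves implicit, and your decomposition into (a) the standard-form equivalence via the uniqueness theorem and (b) the transport of magnetic vectors is the natural way to make the claim precise.

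Your ``main obstacle'' is well spotted and deserves emphasis: Definition~\ref{IEESEquiI} as written only requires $\M' = U\M U^{-1}$ and $\vphi' = \vphi(U^{-1}\,\cdot\,U)$; it does \emph{not} explicitly demand that $U$ intertwine the total spin operators. Without $U S_\Lambda^{(i)} U^{-1} = S_\Lambda^{\prime(i)}$, the conclusion $S(\M,\vphi) = S(\M',\vphi')$ simply does not follow, since the notion of ``total spin of a magnetic vector'' depends on the action of ${\bs S}_\Lambda^2$, not merely on the von Neumann algebra and weight. That this intertwining is intended is clear from how the proposition is used later (see the passage preceding Definition~\ref{MSEqui}, where the unitaries $U_\Lambda$ are explicitly chosen to satisfy $U_\Lambda S_x^{(i)} U_\Lambda^{-1} = S_{\phi(x)}^{\prime(i)}$), so this is a minor imprecision in the paper's definition rather than a flaw in your reasoning. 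You are right to flag it: a complete statement of Definition~\ref{IEESEquiI} should include the spin-intertwining condition, and once that is in place your proof of part~(b) goes through cleanly.
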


The proposition implies the following.
Suppose that two IEE systems  $\{\M, \vphi\}$ and $\{\M', \vphi'\}$
 are equivalent. The equivalence of $F(\M, \vphi)$ and $F(\M', \vphi')$, derived from this hypothesis, means that the magnetic properties of the two systems are equivalent. As one would naturally expect from this, the values of the total spin of the two systems coincide in this case.

\subsection{Magnetic properties of ground states}

 In this section, the terminology ``Hamiltonian on $\Lambda$'' indicates a self-adjoint operator on $\mathfrak{X}\in \mathscr{I}(\h_{\Lambda, N})$, bounded from below.
 A more detailed characterization of the term Hamiltonian will be given in Appendix \ref{SectA}.
 
 Given a Hamiltonian $H$ on $\X$, we set $E(H)=\inf \mathrm{spec}(H)$. We say that $H$ has a {\it ground state} if $E(H)$ is an eigenvalue of $H$. In this case, the eigenvectors of $E(H)$ are called the ground states of $H$, and the value $E(H)$ is called the {\it ground state energy} of $H$.
We consider  Hamiltonians that are consistent in the following sense with an IEE system, which characterizes the magnetic structure of the system:
\begin{Def}\label{DefAHamiC}\upshape
Suppose that we are given an IEE system $\{\M, \vphi\}$ associated with $\X$.
Let $H$ be a Hamiltonian acting in $\X$. 
We say that $H$ is {\it adapted to $\{\M, \vphi\}$}, if the following are satisfied:
\begin{itemize}
\item[\rm (i)] $e^{-\beta H}$ commutes with the total spin operators, $S_{\Lambda}^{(1)}, S_{\Lambda}^{(2)}$ and $S_{\Lambda}^{(3)}$ for all $\beta \ge 0$.
Note that if $H$ is a bounded operator, this condition implies that $H$ and $S_{\Lambda}^{(i)}\, (i=1,2,3)$ are commutative in the usual sense.

\item[\rm (ii)] $H$ has a ground state.
\item[\rm (iii)]  For each $M\in \mathrm{spec}(S_{\Lambda}^{(3)} \restriction \h_{\Lambda, N})$, the semigroup $\{e^{-\beta H_M}\}_{\beta \ge 0}$ is ergodic w.r.t. $L^2(\M[M], \vphi_M)_+$, where $H_M=H\restriction \X[M]$. 
Here, recall that $\X[M]$ stands for the $M$-subspace of $\X$, see \eqref{MDecomp}.
\end{itemize}
The set of all Hamiltonians adapted to $\{\M, \vphi\}$ is denoted by $A_{\Lambda, N}(\M, \vphi)$.

\end{Def}

The condition (iii) of  Definition \ref{DefAHamiC} implies that the Hamiltonian under consideration is consistent with the magnetic structure of the system. This point is essential in the following discussion.

\begin{Prop}\label{GSinG}
Suppose that we are given a Hamiltonian  $H$ in  $A_{\Lambda, N}(\M, \vphi)$. Then the ground states of $H$    are $(2S(\M, \vphi)+1)$-fold degenerate and have total spin $S(\M, \vphi)$.
In addition, we can choose a ground state of $H$ to belong to $G(\M, \vphi)$.
\end{Prop}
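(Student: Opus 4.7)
The strategy is to decompose $H$ into its $S_{\Lambda}^{(3)}$-sectors, apply the Perron--Frobenius--Faris theorem (Theorem \ref{pff}) within each sector to obtain strictly positive ground vectors $\Psi_M$, use the full $\mathrm{SU}(2)$ invariance to organize the true ground states into a single spin multiplet, and identify its spin with $S^{\star}:=S(\M,\vphi)$ by testing against an arbitrary reference vector in $G(\M,\vphi)$.

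By (i) of Definition \ref{DefAHamiC}, $H$ splits as $H=\bigoplus_{M}H_M$ over $M\in\mathrm{spec}(S_{\Lambda}^{(3)}\restriction\h_{\Lambda,N})$, with $E(H)=\min_{M}E(H_M)$. Hypothesis (iii) together with Theorem \ref{pff} yields, for each $M$, that $E(H_M)$ is a simple eigenvalue whose eigenvector $\Psi_M$ is strictly positive w.r.t.\ $L^2(\M[M],\vphi_M)_+$. Setting $\mathcal{S}:=\{M:E(H_M)=E(H)\}$, the ground state space of $H$ is $\mathcal{V}=\bigoplus_{M\in\mathcal{S}}\BbbC\Psi_M$. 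Since $H$ also commutes with $S_{\Lambda}^{(1)},S_{\Lambda}^{(2)}$, $\mathcal{V}$ is stable under the full $\mathrm{SU}(2)$ action and decomposes into spin multiplets; the uniqueness above gives $\dim(\mathcal{V}\cap\X[M])\le 1$ for every $M$. Any two distinct multiplets would share the sector $M=0$ (integer-spin case) or $M=\pm 1/2$ (half-integer case, the parity being dictated by $N$), contradicting this bound. Therefore $\mathcal{V}$ is a single spin-$S$ multiplet with $\dim\mathcal{V}=2S+1$, $\mathcal{S}=\{-S,\ldots,S\}$, and every ground state has total spin $S$.

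To show $S=S^{\star}$, fix $\psi\in G(\M,\vphi)$ and pick $M_0\in\mathrm{spec}(S_{\Lambda}^{(3)}\restriction\h_{\Lambda,N})$ of smallest absolute value. By parity, $|M_0|\le\min(S,S^{\star})$, so $M_0\in\mathcal{S}$ and $P_{M_0}\psi$ is strictly positive with total spin $S^{\star}$ by Definition \ref{MagVecDef}. Strict positivity of both $\Psi_{M_0}$ and $P_{M_0}\psi$ in $L^2(\M[M_0],\vphi_{M_0})_+$ yields $\la\Psi_{M_0},P_{M_0}\psi\ra>0$. Combining this with the standard spectral asymptotics
\begin{equation*}
\lim_{\beta\to\infty}e^{\beta E(H_{M_0})}e^{-\beta H_{M_0}}P_{M_0}\psi=\la\Psi_{M_0},P_{M_0}\psi\ra\Psi_{M_0},
\end{equation*}
and noting that $e^{-\beta H_{M_0}}$ commutes with ${\bs S}_{\Lambda}^2$ (so the left-hand side has total spin $S^{\star}$ for every $\beta$, and hence so does the limit), one concludes that $\Psi_{M_0}$ has total spin $S^{\star}$. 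Since $\Psi_{M_0}\in\mathcal{V}$ already has total spin $S$, we obtain $S=S^{\star}$.

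Finally, setting $\Psi:=\sum_{M\in\mathcal{S}}\Psi_M$ gives a ground state of total spin $S^{\star}$ with $P_M\Psi=\Psi_M$ strictly positive for every $|M|\le S^{\star}$ in the spectrum, so $\Psi\in G(\M,\vphi)$. The chief technical hurdle is the matching $S=S^{\star}$: one must convert the abstract strict-positivity defining $G(\M,\vphi)$ into a spin-identification for the Perron--Frobenius--Faris vectors, and this crucially exploits the compatibility of the semigroup ergodicity (iii) with the spin commutation (i) of Definition \ref{DefAHamiC}.
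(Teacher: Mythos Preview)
Your proof is correct and uses the same essential ingredients as the paper (Perron--Frobenius--Faris in each $M$-sector, $\mathrm{SU}(2)$ invariance, and an overlap argument), but the organization differs in two respects worth noting.

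First, the paper builds the multiplet by hand: it starts from one sector, applies the ladder operators $S_\Lambda^{(\pm)}$ to propagate the ground state to neighboring sectors, and must then correct signs because $S_\Lambda^{(\pm)}\psi_{H,M}$ is only known to be strictly positive \emph{or} strictly negative. Your route---apply Theorem \ref{pff} in every sector first, then invoke $\mathrm{SU}(2)$ representation theory and the bound $\dim(\mathcal{V}\cap\X[M])\le 1$ to force a single multiplet---sidesteps this sign bookkeeping entirely, since your $\Psi_M$ are strictly positive by construction.

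Second, to identify $S=S^\star$ the paper first exhibits $\tilde\psi_H\in G(\M,\vphi)$ and then invokes Theorem \ref{BasicThm}; you instead fix a reference $\psi\in G(\M,\vphi)$ and pull back the spin via the semigroup limit. Your spectral-asymptotics step is correct but heavier than needed: since $\Psi_{M_0}$ is the \emph{unique} ground state of $H_{M_0}$ and $H_{M_0}$ commutes with ${\bs S}_\Lambda^2$, $\Psi_{M_0}$ is already an ${\bs S}_\Lambda^2$-eigenvector, so the bare overlap identity
\[
S(S+1)\la\Psi_{M_0}|P_{M_0}\psi\ra=\la{\bs S}_\Lambda^2\Psi_{M_0}|P_{M_0}\psi\ra=\la\Psi_{M_0}|{\bs S}_\Lambda^2 P_{M_0}\psi\ra=S^\star(S^\star+1)\la\Psi_{M_0}|P_{M_0}\psi\ra
\]
(exactly the mechanism of Theorem \ref{BasicThm}) already gives $S=S^\star$ without passing through the limit $\beta\to\infty$.
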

\begin{proof}
We denote by $\psi_H$ a ground state of $H$. There exists an $M_0\in \mathrm{spec}(S^{(3)}_{\Lambda} \restriction \h_{\Lambda, N})$ such that $\psi_{H, M_0}:=P_{M_0}\psi_H\neq 0$.
By (iii) of Definition \ref{DefAHamiC}, the semigroup $\{e^{-\beta H_{M_0}}\}_{\beta \ge 0}$ is ergodic w.r.t. $ L^2(\M[M_0], \vphi_{M_0})_+$. Hence, by applying  Theorem \ref{pff}, we find that 
$\psi_{H, M_0}$ is the unique ground state of $H_{M_0}$ and can be chosen   to be  strictly positive w.r.t. $ L^2(\M[M_0], \vphi_{M_0})_+$. Because $H_{M_0}$ commutes with ${\bs S}^2_{\Lambda}$ ,
there exists a non-negative number $S$ such that ${\bs S}^2_{\Lambda} \psi_{H, M_0}=S(S+1)\psi_{H, M_0}$ holds.

Let $S^{\pm}_{\Lambda}=S_{\Lambda}^{(1)} \pm \im S_{\Lambda}^{(2)}$. We set $\psi_{H, M_0+1}=S_{\Lambda}^{(+)} \psi_{H, M_0}$. Because $H_{M_0+1} S_{\Lambda}^{(+)}=S_{\Lambda}^{(+)} H_{M_0}$, we obtain $H_{M_0+1} \psi_{H, M_0+1}=E(H)\psi_{H, M_0+1} $, where $E(H)$ is the ground state energy of $H$. Namely, $\psi_{H, M_0+1}$ is a ground state of $H_{M_0+1}$ as well. 
As $\{e^{-\beta H_{M_0+1}}\}_{\beta \ge 0}$ is ergodic w.r.t. $L^2(\M[M_0+1], \vphi_{M_0+1})_+$, Theorem \ref{pff} shows that this $\psi_{H, M_0+1}$  is the unique ground state of $H_{M_0+1}$ and can be either strictly positive or strictly negative w.r.t $L^2(\M[M_0+1], \vphi_{M_0+1})_+$. 
Moreover, since ${\bs S}_{\Lambda}^2$ commutes with $S_{\Lambda}^{(+)}$, $\psi_{H, M_0+1}$ also has total spin $S$.

Next, set $\psi_{H,  M_0-1}=S_{\Lambda}^{(-)} \psi_{H, M_0}$. By the similar arguments as above, we see that $\psi_{H,  M_0-1}$ is the unique ground state of $H_{M_0-1}$, can be strictly positive or negative w.r.t. $L^2(\M[M_0-1], \vphi_{M_0-1})_+$ and has   total spin $S$.

Repeating the above arguments, we can construct a sequence of vectors $\{\psi_{H, M} : |M| \le S\}$ that has the following properties: 
\begin{itemize}
\item $\psi_{H, M}$ is the unique ground state of $H_M$ and can be strictly positive or negative w.r.t. $ L^2(\M[M], \vphi_{M})_+$.
\item Each $\psi_{H, M}$ has   total spin $S$.
\end{itemize}
 When $|M|>S$, we define $\tilde{\psi}_{H, M}:=0$, and when $|M| \le S$, we define $\tilde{\psi}_{H, M} := \psi_{H, M}$ or $\tilde{\psi}_{H, M} := -\psi_{H, M}$, depending on whether $\psi_{H, M}$ is strictly positive or negative, respectively. Then, if we set  $\tilde{\psi}_H:=\bigoplus_M\tilde{\psi}_{H, M}$, then $\tilde{\psi}_H$ is a ground state of $H$ and belongs to $G(\M, \vphi)$. Using Theorem \ref{BasicThm}, we find that $S=S(\M, \vphi)$.
\end{proof}

The following notation is introduced to describe the results that follow concisely:
\begin{Def}\label{DefSH}\upshape
 Given a Hamiltonian $H\in A_{\Lambda, N}(\M, \vphi)$, 
  we denote by $S_H$ the total spin of the ground states of $H$.
\end{Def}

Proposition \ref{GSinG} can be rephrased as follows.

\begin{Coro}\label{StaGsFixH}
The mapping $S_{\bullet} : H\in A_{\Lambda, N}(\M, \vphi)\mapsto S_{H}$  is constant and satisfies $S_{H}=
S(\M, \vphi)$ for all $H\in A_{\Lambda, N}(\M, \vphi)$. 
\end{Coro}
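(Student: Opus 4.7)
The corollary is essentially a direct restatement of Proposition \ref{GSinG}, so the proof plan is quite short. The strategy is to fix an arbitrary Hamiltonian $H \in A_{\Lambda, N}(\M, \vphi)$ and observe that Proposition \ref{GSinG} already carries out all the necessary work: it guarantees that every ground state of $H$ has total spin equal to $S(\M, \vphi)$, independent of the particular choice of $H$ within the class $A_{\Lambda, N}(\M, \vphi)$.

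Concretely, the plan is the following. First, recall from Definition \ref{DefSH} that $S_H$ denotes the total spin of the ground states of $H$, a quantity that is well-defined precisely because adaptedness (in particular (iii) of Definition \ref{DefAHamiC}) ensures that the ground-state subspace carries a definite value of ${\bs S}_{\Lambda}^2$. Next, invoke Proposition \ref{GSinG}: it asserts that the ground states of any $H \in A_{\Lambda, N}(\M, \vphi)$ have total spin $S(\M, \vphi)$. Combining these two facts gives $S_H = S(\M, \vphi)$. Since the right-hand side depends only on the IEE system $\{\M, \vphi\}$ and not on $H$, the mapping $H \mapsto S_H$ is constant on $A_{\Lambda, N}(\M, \vphi)$ with value $S(\M, \vphi)$.

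There is really no obstacle here; the entire substance of the statement has been absorbed into the earlier proposition. If anything, one might briefly remark for the reader's convenience that the same $S(\M, \vphi)$ arises on both sides by Theorem \ref{BasicThm}, which guarantees that the total spin assigned to a magnetic vector is a system-level invariant and thus independent of the particular ground state $\tilde{\psi}_H \in G(\M, \vphi)$ extracted in the proof of Proposition \ref{GSinG}.
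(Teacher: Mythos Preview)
Your proposal is correct and matches the paper's approach exactly: the paper also presents this corollary as nothing more than a rephrasing of Proposition \ref{GSinG}. There is no substantive difference to comment on.
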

\begin{Rem}\upshape
\begin{itemize}
\item In strongly correlated electron systems,  the total spins of  ground states are one of the fundamental indicators describing the magnetic properties of the system under consideration.
Corollary \ref{StaGsFixH} asserts that all Hamiltonians belonging to $A_{\Lambda, N}(\M, \vphi)$ always have a constant value of the total spin in their ground state. Proposition \ref{GSinG} and Corollary \ref{StaGsFixH}
 clearly express the basic idea of this paper:  to extract magnetic properties independent of the details of individual Hamiltonians.
\item
The proof of Proposition \ref{GSinG} implies the following important fact:
recall that, immediately below Definition \ref{MagVecDef}, we explained that the strict positivity with respect to $ L^2(\M[M], \vphi_M)_+$ characterizes the magnetic properties of  states.
By (iii) of Definition \ref{DefAHamiC} and Theorem \ref{pff}, we know that all the ground states of  Hamiltonians belonging to $ A_{\Lambda, N}(\M, \vphi)$ are strictly positive with respect to $ L^2(\M[M], \vphi_M)_+$. 
In other words, the ground states all have  common magnetic properties. One consequence of this fact is that the ground states all have the same total spin,  as stated in Corollary \ref{StaGsFixH}.

\item 
We have not used the specific structure of the graph $(\Lambda, E_{\Lambda})$  in this section.
However, in actual applications, when we prove that a given Hamiltonian exhibits the condition (iii) of Definition \ref{DefAHamiC}, we will need detailed information about the graph; 
see, Sections \ref{Sect5} and \ref{Sect6}.

\end{itemize}
\end{Rem}

\subsection{Stability classes }\label{SectStCl}

Let  $\frak{X}, \frak{Y}\in \mathscr{I}(\h_{\Lambda, N})$. Suppose that there exists an isometric linear mapping  from $\frak{Y} $ into $\frak{X}$.
Hence, we can naturally  identify $\frak{Y}$ as a closed subspace of $\frak{X}$.
We denote by $P_{\frak{Y, X}}$ the orthogonal projection from $\frak{X}$ to $\frak{Y}$.
Needless to say, $P_{\frak{Y}, \frak{X}}$ commutes with $S_{\Lambda}^{(i)}\ (i=1, 2, 3)$.
Let $\M$ be   a von Nenmann algebra on $\frak{X}$ satisfying 
$\M\subseteq \A_{\Lambda}(\frak{X})$. Then $\N:=P_{\frak{Y, X}} \M P_{\frak{Y, X}}$ is a von Neumann algebra on $\frak{Y}$.
 Let $\vphi\in \mathfrak{W}_0(\M)$. 
 We assume that $\psi: =\vphi\restriction \N$ is semi-finite. 
 Then we have the associated standard forms $ F (\M, \vphi)$
 and $ F (\N, \psi)$.  Recall that,  since $\mathfrak{X}$ is separable, 
 we have the identifications $\X=L^2(\M, \vphi)$ and $\frak{Y}=L^2(\N, \psi)$.

\begin{Def}\label{DefConForm}\upshape
We say that  the standard forms $ F (\M, \vphi)$
 and $ F (\N, \psi)$
are  {\it consistent}, if the following (i), (ii) and (iii) are satisfied:
\begin{itemize}
\item[(i)] $\vphi \circ\mathscr{E}_{\frak{N, M}}=\vphi$, where
$\mathscr{E}_{\frak{N, M}}(x)=P_{\frak{Y, X}} x P_{\frak{Y, X}}\ (x\in \M)$.
\item[(ii)]  $L^2(\N, \psi)_+ \subseteq L^2(\M, \vphi)_+$.
\item[(iii)] $P_{\frak{Y, X}}L^2(\M, \vphi)_+=L^2(\N, \psi)_+$.
\end{itemize}
We write this as $ F (\M, \vphi) \LRA  F (\N, \psi)$.
\end{Def}

\begin{Rem}\label{RemCons} \upshape
\begin{itemize}
\item[1.] Because $\mathscr{E}_{\N, \M}$ is  a  conditional expectation of $\M$ onto $\N$ with respect to $\vphi$, Definition \ref{DefConForm} is consistent with Definition \ref{DefCons1}.
\item[2.] From  (ii) and (iii), it follows that $P_{\frak{Y, X}} \unrhd 0$ w.r.t. $L^2(\M, \vphi)_+$. This property  will often be  useful in the following arguments.
\item[3.] Let $\{\M_1, \vphi_1\}, \{\M_2, \vphi_2\}$ and $\{\M_3, \vphi_3\}$ be pairs of von Neumann algebra and faithful  semi-finite normal weight, respectively.
If $F(\M_1, \vphi_1) \LRA F(\M_2, \vphi_2)$ and $F(\M_2, \vphi_2)\LRA F(\M_3, \vphi_3)$, then 
$F(\M_1, \vphi_1) \LRA F(\M_3, \vphi_3)$ holds.
\end{itemize}
\end{Rem}

\begin{Thm}\label{StabThmEx}
If $ F (\M, \vphi) \LRA  F (\N, \psi)$, then we have
$
S(\M, \vphi)=S(\N, \psi)
$.
\end{Thm}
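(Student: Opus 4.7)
My plan is to mimic the proof of Theorem \ref{BasicThm}, but instead of pairing two magnetic vectors within a single IEE system, I will pair a magnetic vector of $\{\N,\psi\}$ with one of $\{\M,\vphi\}$ and exploit the common positivity structure supplied by the consistency hypothesis to extract a nonzero inner product.

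Fix $\xi \in G(\N,\psi)$ and $\eta \in G(\M,\vphi)$. Let $M_0$ be the element of $\mathrm{spec}(S_{\Lambda}^{(3)}\restriction\h_{\Lambda,N})$ of smallest absolute value, and set $\xi_{M_0} = P_{M_0}\xi$ and $\eta_{M_0} = P_{M_0}\eta$. Since $\bs S_{\Lambda}^2 \xi = S(\N,\psi)(S(\N,\psi)+1)\xi$ and the $M$-components of a total-spin-$S$ vector can be nonzero only for $|M|\le S$, the minimality of $|M_0|$ together with $\xi\neq 0$ forces $|M_0|\le S(\N,\psi)$; the analogous inequality for $S(\M,\vphi)$ follows by the same reasoning applied to $\eta$. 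By Definition \ref{MagVecDef}, $\xi_{M_0}$ is then strictly positive with respect to $L^2(\N[M_0],\psi_{M_0})_+$ while $\eta_{M_0}$ is strictly positive with respect to $L^2(\M[M_0],\vphi_{M_0})_+$; in particular, both are nonzero.

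Next, condition (ii) of Definition \ref{DefConForm} combined with the standard-form decomposition $F(\M,\vphi)=\bigoplus_M F(\M[M],\vphi_M)$ yields the $M_0$-component inclusion
\begin{equation*}
L^2(\N[M_0],\psi_{M_0})_+ \subseteq L^2(\M[M_0],\vphi_{M_0})_+,
\end{equation*}
so $\xi_{M_0}$ is a nonzero element of the larger cone. Strict positivity of $\eta_{M_0}$ with respect to this larger cone then gives $\la \eta_{M_0}, \xi_{M_0}\ra > 0$. Because $P_{\mathfrak{Y},\mathfrak{X}}$ commutes with each $S_{\Lambda}^{(i)}$, the spectral projection $P_{M_0}$ also commutes with $\bs S_{\Lambda}^2$, so $\xi_{M_0}$ and $\eta_{M_0}$ inherit the respective total-spin eigenvalue equations. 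Computing $\la \eta_{M_0}, \bs S_{\Lambda}^2 \xi_{M_0}\ra$ in two ways via the self-adjointness of $\bs S_{\Lambda}^2$ gives
\begin{equation*}
S(\M,\vphi)(S(\M,\vphi)+1)\la \eta_{M_0},\xi_{M_0}\ra = S(\N,\psi)(S(\N,\psi)+1)\la \eta_{M_0},\xi_{M_0}\ra,
\end{equation*}
and dividing by the positive scalar $\la \eta_{M_0},\xi_{M_0}\ra$ yields $S(\M,\vphi)=S(\N,\psi)$.

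The delicate point is the positivity step: \emph{a priori}, $\xi_{M_0}$ is only known to be strictly positive with respect to the smaller cone associated with $\N$, whereas $\eta_{M_0}$ is only known to be strictly positive with respect to the larger cone associated with $\M$, and so their inner product cannot be bounded from below directly. The cone inclusion furnished by condition (ii) of Definition \ref{DefConForm} is precisely the bridge needed to reinterpret $\xi_{M_0}$ as a nonzero positive vector of the larger cone, thereby activating the strict positivity of $\eta_{M_0}$. Conditions (i) and (iii) of that definition play no explicit role in this argument, though they are needed elsewhere in the paper for different aspects of the theory.
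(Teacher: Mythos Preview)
Your proof is correct. The core overlap argument is the same as the paper's, but you route the positivity comparison differently: you use condition (ii) of Definition \ref{DefConForm} (the cone inclusion $L^2(\N,\psi)_+\subseteq L^2(\M,\vphi)_+$) to regard the magnetic vector of $\{\N,\psi\}$ as a nonzero element of the larger cone and then invoke strict positivity on the $\M$-side, whereas the paper takes the dual route, projecting the magnetic vector of $\{\M,\vphi\}$ down via $P_{\frak{Y},\frak{X}}$, using condition (iii) together with Proposition \ref{NonVOver} to ensure the projection is nonzero in the smaller cone, and then invoking strict positivity on the $\N$-side. Your version is marginally more economical since it avoids Proposition \ref{NonVOver}; the paper's version has the advantage of illustrating how condition (iii) and the positivity-preserving property of $P_{\frak{Y},\frak{X}}$ (Remark \ref{RemCons}) are used in practice. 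One small remark: the commutation $[P_{M_0},\bs S_\Lambda^2]=0$ follows simply from $[S_\Lambda^{(3)},\bs S_\Lambda^2]=0$ and does not require anything about $P_{\frak{Y},\frak{X}}$.
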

\begin{proof}
For simplicity, we assume that $0\in \mathrm{spec}(S_{\Lambda}^{(3)} \restriction \h_{\Lambda, N})$.
Take $\xi\in G(\M, \vphi)$ and $\eta\in G(\N, \psi)$, arbitrarily. 
Set $\xi_0=P^{\X}_{M=0} \xi$ and $\eta_0=P^{\frak Y}_{M=0} \eta$, where $P^{\X}_M$ (resp.  $P_M^{\frak Y}$) is the orthogonal projection  from $\X$ (resp. $\frak Y$) to the $M$-subspace  $\X[M]$ (resp. $\frak{Y}[M]$).
Note that $P^{\frak{Y}}_M=P_{\frak{Y, X}} P^{\X}_M$ holds.
Because $\xi_0$ is strictly positive w.r.t. $L^2(\M[0], \psi_0)_+$ and $P_{\frak{Y, X}} \unrhd 0$ w.r.t. $L^2(\M, \vphi)_+$, it holds that $\la P_{\frak{Y, X}}\xi_0|\eta_0\ra>0$ due to Proposition \ref{NonVOver}.
Set $S_{\xi}=S(\M, \vphi)$ and $S_{\eta}=S(\N, \psi)$. Because $P_{\frak{Y, X}}$ commutes with ${\bs S}^2_{\Lambda}$, we have
${\bs S}^2_{\Lambda} \xi_0=S_{\xi}(S_{\xi}+1) \xi_0$ and ${\bs S}^2_{\Lambda} \eta_0=S_{\eta}(S_{\eta}+1) \eta_0$. Hence, 
\begin{align}
S_{\xi}(S_{\xi}+1) \la P_{\frak{Y, X}}\xi_0|\eta_0\ra=\la P_{\frak{Y, X}} {\bs S}^2_{\Lambda}\xi_0|\eta_0\ra
=\la P_{\frak{Y, X}}\xi_0|{\bs S}_{\Lambda}^2\eta_0\ra=S_{\eta}(S_{\eta}+1) \la P_{\frak{Y, X}}\xi_0|\eta_0\ra,
\end{align}
which implies that $S_{\xi}=S_{\eta}$.
\end{proof}
A point of this theorem is that it is a claim about the total spins of magnetic vectors in {\it different} Hilbert spaces. Let us recall that Theorem \ref{BasicThm} was about magnetic vectors in the same Hilbert space.
This difference is essential for the construction of a theory of magnetic stability.

\begin{Def}\label{DefStaCl}\upshape
Suppose that $\frak{X}\in \mathscr{I}(\h_{\Lambda, N})$ is given.
Fix $\{\M_0, \vphi_0\} \in \mathscr{Q}(\frak{X})$,  arbitrarily.
The {\it stability class associated with} $\{\M_0, \vphi_0\} $ is defined by 
\be
\mathscr{C}_{\Lambda, N}(\M_0, \vphi_0)
=\bigcup_{{\h\in \mathscr{I}(\h_{\Lambda, N})}\atop{ \frak{X} \subseteq \h}}\big\{\{\M, \vphi\} \in \mathscr{Q} (\h) :  F (\M, \vphi) \LRA  F (\M_0, \vphi_0)
\big\}.
\ee
\end{Def}

In other words, the stability class associated with $\{\M_0, \vphi_0\} $  is defined as the collection of all  IEE systems that are consistent with 
 $\{\M_0, \vphi_0\} $. 
 \begin{figure}[h]
 \begin{center}
 \includegraphics{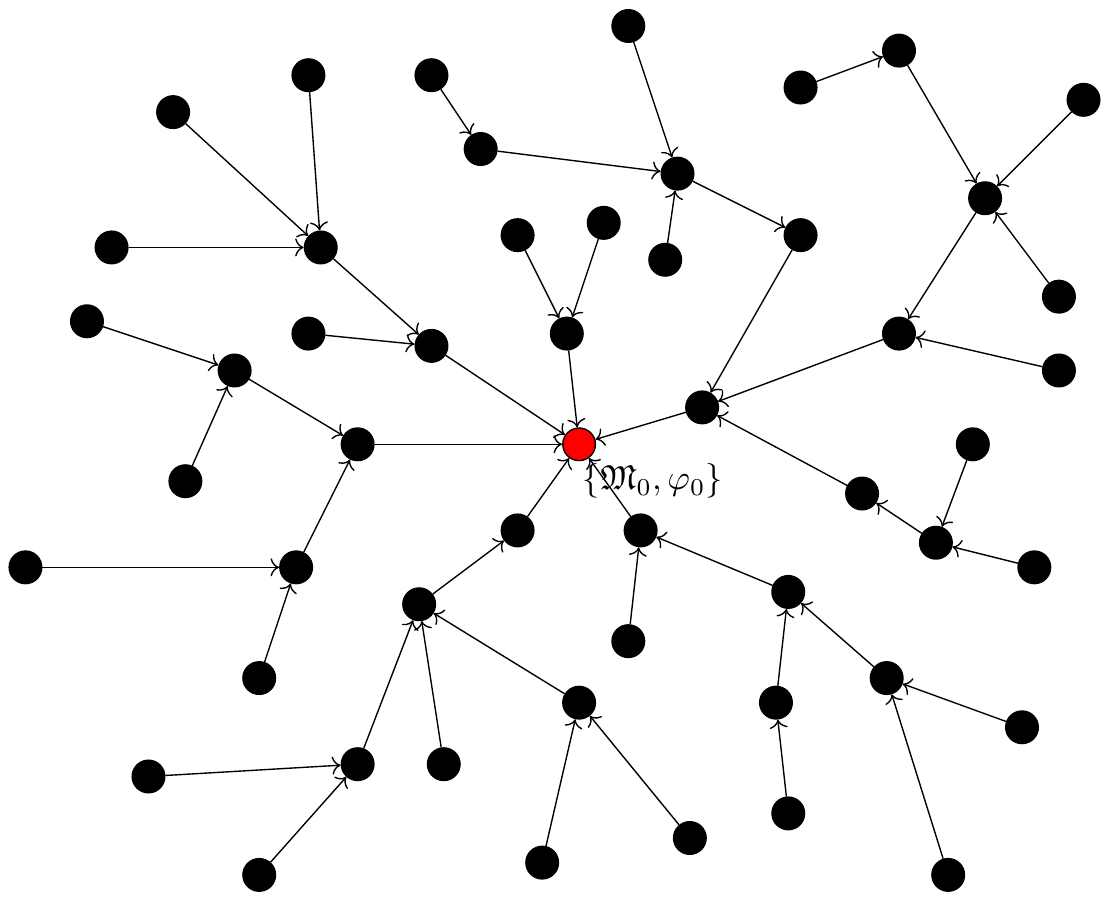}
%\includegraphics[scale=0.5]{8.pdf}
 %\ctikzfig{Fig1}
  \caption{
        $\mathscr{C}_{\Lambda, N}(\M_0, \vphi_0)$ corresponds to a  directed tree with the root $\{\M_0, \vphi_0\}$.
        }\label{Fig1}
        \end{center}
 \end{figure}
 Figure \ref{Fig1} shows a visual image of the definition. More precisely, 
 $\mathscr{C}_{\Lambda, N}(\M_0, \vphi_0)$ can be regarded as a {\it directed tree with the root }$\{\M_0, \vphi_0\}$, more specifically,  an  arborescence,  in the following fashion:\footnote{From the standpoint of category theory, a functor describes this correspondence. In this paper, we will not enter into the category theory aspects of stability classes.}
 \begin{itemize}
 \item Each IEE system corresponds to a vertex.
 \item  Each pair of IEE systems $(\{\M, \vphi\}, \{\M', \vphi'\})$ connected by the relation $F(\M, \vphi) \to F(\M', \vphi')$ corresponds to an edge with the origin $\{\M, \vphi\}$ and terminus $\{\M', \vphi'\}$.
 \end{itemize}
In this way, a stability class $\mathscr{C}_{\Lambda, N}(\M_0, \vphi_0)$ can be thought of as a network propagating the positivity emanating from $\{\M_0, \vphi_0\}$.
 Recalling that the magnetic properties of the system under consideration are characterized by positivity, we see that all IEE systems belonging to  the stability class $\mathscr{C}_{\Lambda, N}(\M_0, \vphi_0)$ have the  magnetic properties that originate in $\{\M_0, \vphi_0\}$.

  From this definition and Theorem \ref{StabThmEx}, we immediately obtain the following corollary.

\begin{Coro}[Stability of the total spin]\label{StaGsC}
The mapping $S(\bullet): \{\M, \vphi\} \ni \mathscr{C}_{\Lambda, N}(\M_0, \vphi_0) \mapsto S(\M, \vphi)$
 is constant and satisfies $S(\M, \vphi)=S(\M_0, \vphi_0)$ for all $\{\M, \vphi\}\in \mathscr{C}_{\Lambda, N}(\M_0, \vphi_0)$.
\end{Coro}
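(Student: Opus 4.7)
The plan is to reduce the corollary directly to Theorem \ref{StabThmEx}; essentially no new argument is required, since the stability class has been built precisely so that the hypothesis of that theorem holds for every one of its members. Fix an arbitrary $\{\M, \vphi\} \in \mathscr{C}_{\Lambda, N}(\M_0, \vphi_0)$. By Definition \ref{DefStaCl}, there exists an ambient IEE space $\h \in \mathscr{I}(\h_{\Lambda, N})$ with $\frak{X} \subseteq \h$ and $\{\M, \vphi\} \in \mathscr{Q}(\h)$ such that $F(\M, \vphi) \LRA F(\M_0, \vphi_0)$. Theorem \ref{StabThmEx} is now applied to this consistent pair of standard forms to deliver $S(\M, \vphi) = S(\M_0, \vphi_0)$. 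Since $\{\M, \vphi\}$ was arbitrary, the mapping $S(\bullet)$ is constant on $\mathscr{C}_{\Lambda, N}(\M_0, \vphi_0)$ with value $S(\M_0, \vphi_0)$.

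The only conceptual point worth verifying explicitly is that $\{\M_0, \vphi_0\}$ itself lies in its own stability class, so that the equality $S(\M, \vphi) = S(\M_0, \vphi_0)$ is a genuine statement about the common value rather than merely a pairwise equality. This is immediate by taking $\h = \frak{X}$ and observing that $F(\M_0, \vphi_0) \LRA F(\M_0, \vphi_0)$: the three conditions of Definition \ref{DefConForm} hold trivially with $P_{\frak{Y},\frak{X}} = \one$ and $\mathscr{E}_{\N,\M}$ the identity map, so reflexivity of the relation $\LRA$ holds.

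There is no serious obstacle at this stage; the substantive work has already been done in Theorem \ref{StabThmEx}, whose proof exploits the positivity-preserving character of the projection $P_{\frak{Y, X}}$ (Remark \ref{RemCons}, item 2) together with the conservation of ${\bs S}^2_{\Lambda}$ in order to force the total spins of two magnetic vectors living in different Hilbert spaces to coincide. The corollary is then nothing more than the verbalization of what it means for $\LRA$ to propagate through the network pictured in Figure \ref{Fig1}: total spin is a label attached to the root $\{\M_0, \vphi_0\}$ and is inherited by every vertex reachable from it.
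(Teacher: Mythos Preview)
Your proof is correct and follows exactly the approach the paper takes: the corollary is stated as an immediate consequence of Definition \ref{DefStaCl} and Theorem \ref{StabThmEx}, and you have spelled out precisely that deduction. The additional observation about reflexivity of $\LRA$ is a nice touch but not strictly needed, since the conclusion $S(\M,\vphi)=S(\M_0,\vphi_0)$ for every member already forces constancy of the map.
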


Corollary \ref{StaGsC} asserts that all IEE systems belonging to the stability class $\mathscr{C}_{\Lambda, N}(\M_0, \vphi_0)$ have  the same total spin:  $S(\M_0, \vphi_0)$. This corollary is simple but  essential for applications.

The following proposition clarifies the basic correspondence between IEE systems and their corresponding stability  classes:
\begin{Prop}
Suppose that we are in the setting described in Definition \ref{DefConForm}.
If $ F (\M, \vphi) \LRA  F (\N, \psi)$, then $\mathscr{C}_{\Lambda, N}(\M, \vphi)\subseteq \mathscr{C}_{\Lambda, N}(\N, \psi)$. 
\end{Prop}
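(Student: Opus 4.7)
The plan is to reduce the inclusion to the transitivity of the relation $\LRA$ stated in Remark~\ref{RemCons}(3). Pick an arbitrary element $\{\M', \vphi'\} \in \mathscr{C}_{\Lambda, N}(\M, \vphi)$. By Definition~\ref{DefStaCl}, there exists an IEE space $\h \in \mathscr{I}(\h_{\Lambda, N})$ with $\frak{X} \subseteq \h$ (where $\frak{X}$ is the ambient Hilbert space of $\M$) such that $\{\M', \vphi'\} \in \mathscr{Q}(\h)$ and $F(\M', \vphi') \LRA F(\M, \vphi)$. Chaining this with the hypothesis $F(\M, \vphi) \LRA F(\N, \psi)$ via Remark~\ref{RemCons}(3) yields $F(\M', \vphi') \LRA F(\N, \psi)$. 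Since the Hilbert space $\frak{Y}$ of $\N$ embeds isometrically into $\frak{X}$, which embeds into $\h$, the pair $\{\M', \vphi'\}$ meets the ambient-space requirement in Definition~\ref{DefStaCl} for membership in $\mathscr{C}_{\Lambda, N}(\N, \psi)$, and hence $\{\M', \vphi'\} \in \mathscr{C}_{\Lambda, N}(\N, \psi)$. This establishes the claimed inclusion.

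The one ingredient that deserves a brief justification is the transitivity of $\LRA$ invoked above. Given $F(\M_1, \vphi_1) \LRA F(\M_2, \vphi_2)$ and $F(\M_2, \vphi_2) \LRA F(\M_3, \vphi_3)$, let $P_{ji}$ denote the orthogonal projection from the Hilbert space of $\M_i$ to the Hilbert space of $\M_j$ for $i > j$. The factorization $P_{31} = P_{32} P_{21}$ gives $\mathscr{E}_{\M_3, \M_1} = \mathscr{E}_{\M_3, \M_2} \circ \mathscr{E}_{\M_2, \M_1}$, so condition~(i) of Definition~\ref{DefConForm} for the composite pair follows from applying (i) to each of the two given pairs in succession. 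Condition~(ii) is immediate from the chain $L^2(\M_3, \vphi_3)_+ \subseteq L^2(\M_2, \vphi_2)_+ \subseteq L^2(\M_1, \vphi_1)_+$, and (iii) follows from $P_{31} L^2(\M_1, \vphi_1)_+ = P_{32}(P_{21} L^2(\M_1, \vphi_1)_+) = P_{32} L^2(\M_2, \vphi_2)_+ = L^2(\M_3, \vphi_3)_+$.

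There is no genuine obstacle in this proposition; the result is essentially a categorical bookkeeping statement expressing that stability classes behave monotonically along the directed-tree structure depicted in Figure~\ref{Fig1}. The only point requiring care is to ensure that the ambient-space condition $\frak{X} \subseteq \h$ present in Definition~\ref{DefStaCl} for $\mathscr{C}_{\Lambda, N}(\N, \psi)$ is preserved; this is handled by the transitivity of Hilbert-space inclusions noted above.
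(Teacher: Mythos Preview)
Your proof is correct and follows essentially the same approach as the paper's: both take an arbitrary $\{\M',\vphi'\}\in\mathscr{C}_{\Lambda,N}(\M,\vphi)$ and invoke the transitivity of $\LRA$ from Remark~\ref{RemCons}(3) to conclude $F(\M',\vphi')\LRA F(\N,\psi)$. You add two details the paper omits---an explicit verification of transitivity and the check that the ambient-space condition $\frak{Y}\subseteq\h$ is inherited---both of which are straightforward but worth noting.
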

\begin{proof}
Take $\{\M\rq{}, \vphi\rq{}\} \in \mathscr{C}_{\Lambda, N}(\M, \vphi)$, arbitrarily.
Hence, we have 
\be
F(\M\rq{} \vphi\rq{}) \LRA F(\M, \vphi) \LRA F(\N, \psi),
\ee
which implies that $F(\M\rq{}, \vphi\rq{})\LRA F(\N, \psi)$ by 3. of Remark \ref{RemCons}. Thus, we conclude that
$\mathscr{C}_{\Lambda, N}(\M, \vphi)\subseteq \mathscr{C}_{\Lambda, N}(\N, \psi)$.
\end{proof}

Next, let us focus on the stability of magnetic properties of ground states of many-electron systems interacting with their environments.
\begin{Def}\upshape
Set
\be
\mathscr{A}_{\Lambda, N}(\M_0, \vphi_0)=\bigcup_{\{\M, \vphi\}\in \mathscr{C}_{\Lambda, N}(\M_0, \vphi_0)}
A_{\Lambda, N}(\M, \vphi), \label{DefALN}
\ee
where 
 $A_{\Lambda, N}(\M, \vphi)$ is defined in  Definition \ref{DefAHamiC}.
 
 The mapping $S_{\bullet}$ introduced in Definition \ref{DefSH} can be naturally extended to $\mathscr{A}_{\Lambda, N}(\M_0, \vphi_0)$. We denote this extended mapping by the same symbol.
\end{Def}

For each Hamiltonian $H\in \mathscr{A}_{\Lambda, N}(\M_0, \vphi_0)$,  the ground states of  $H$  have  magnetic properties (i.e.,  positivity) that originate in $\{\M_0, \vphi_0\}$. 
The following corollary follows from this fact:

\begin{Coro}\label{StaGsBigA}
The mapping $S_{\bullet} : H\in \mathscr{A}_{\Lambda, N}(\M_0, \vphi_0) \mapsto S_{H}$ is constant  and satisfies $S_{H}=S(\M_0, \vphi_0)$ for all $H\in \mathscr{A}_{\Lambda, N}(\M_0, \vphi_0)$.
\end{Coro}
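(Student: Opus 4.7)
The plan is to observe that this corollary is essentially a clean composition of the two preceding results, Corollary \ref{StaGsFixH} and Corollary \ref{StaGsC}, which have already done the substantive work. First I would unfold the definition \eqref{DefALN}: every Hamiltonian $H \in \mathscr{A}_{\Lambda, N}(\M_0, \vphi_0)$ comes equipped with at least one witness IEE system $\{\M, \vphi\} \in \mathscr{C}_{\Lambda, N}(\M_0, \vphi_0)$ for which $H \in A_{\Lambda, N}(\M, \vphi)$.

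The main chain of implications is then immediate. Applying Corollary \ref{StaGsFixH} to the witness yields $S_H = S(\M, \vphi)$. Applying Corollary \ref{StaGsC} to the stability class yields $S(\M, \vphi) = S(\M_0, \vphi_0)$. Composing gives $S_H = S(\M_0, \vphi_0)$, which is the desired statement and, by the arbitrariness of $H$, also establishes that the mapping $S_\bullet$ is constant on $\mathscr{A}_{\Lambda, N}(\M_0, \vphi_0)$.

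There is essentially no obstacle to overcome, but one should briefly verify that the extension of $S_\bullet$ from $A_{\Lambda, N}(\M, \vphi)$ to the larger union $\mathscr{A}_{\Lambda, N}(\M_0, \vphi_0)$ is well defined, i.e.\ independent of the choice of witness IEE system. This is automatic because $S_H$ in Definition \ref{DefSH} is intrinsically attached to $H$ as the total spin of its ground states, which are determined by $H$ alone. In case the same $H$ belongs to $A_{\Lambda, N}(\M_1, \vphi_1) \cap A_{\Lambda, N}(\M_2, \vphi_2)$ for two witnesses in the same stability class, Corollary \ref{StaGsFixH} forces $S(\M_1, \vphi_1) = S_H = S(\M_2, \vphi_2)$, which is compatible with Corollary \ref{StaGsC}. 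In short, all the technical content has been absorbed into Proposition \ref{GSinG} (which ties ground-state total spins to a fixed IEE system) and Theorem \ref{StabThmEx} (which propagates the total spin along the consistency relation $\LRA$); the present corollary merely records their joint consequence.
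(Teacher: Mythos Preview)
Your proof is correct and follows exactly the same approach as the paper's own proof, which simply says ``Apply Corollaries \ref{StaGsFixH} and \ref{StaGsC}.'' Your unfolding of the definition \eqref{DefALN} and the additional remark on well-definedness of the extension of $S_\bullet$ are perfectly fine elaborations of what the paper leaves implicit.
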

\begin{proof}
Apply Corollaries \ref{StaGsFixH} and \ref{StaGsC}.
\end{proof}
Let us briefly explain why Corollary \ref{StaGsBigA} is helpful in practical applications.
Suppose we are given a Hamiltonian $H_0$ adapted to $\{\M_0, \vphi_0\}$. 
In many cases, the structure of the reference Hamiltonian $H_0$ is simple.
Its ground states have total spin $S(\M_0, \vphi_0)$. Let us  consider a Hamiltonian $H$ adapted to a particular system that interacts with the environment in a complex way.
  In general, the structure of $H$ is more complex than that of $H_0$. However, once we know that $H$ is in $\mathscr{A}_{\Lambda, N}(\M_0, \vphi_0) $, we know from Corollary \ref{StaGsBigA} that 
the total spin of the ground states of $H$ is identical to that of $H_0$, which is much easier to analyze.
 See Sections \ref{Sect5}, \ref{Sect6},  and \ref{Sect7} for specific examples.

\begin{Rem}\upshape
We explain the difference in terminology from previous studies to prevent readers from misunderstanding.
In the previous work \cite{Miyao2019}, the set of Hamiltonians $\mathscr{A}_{\Lambda, N}(\M_0, \vphi_0)$ was referred to as the stability class. For a more mathematically detailed discussion in this paper, we call  $\mathscr{C}_{\Lambda, N}(\M_0, \vphi_0)$ the stability class and distinguish it from $\mathscr{A}_{\Lambda, N}(\M_0, \vphi_0)$.
\end{Rem}

\subsection{Graph isomorphisms and stability classes}
Suppose that $G=(\Lambda, E)$ and $G\rq{}=(\Lambda', E')$ are isomorphic: $G\simeq G\rq{}$. 
In this case, what is the relationship between $\mathscr{C}^G_{\Lambda, N}(\M_0, \vphi_0)$ and $\mathscr{C}^{G'}_{\Lambda, N}(\M_0, \vphi_0)$? 
Here, to make the $G$-dependence clear, we denote the stability class on $G$ as $\mathscr{C}^G_{\Lambda, N}(\M_0, \vphi_0)$.

The result  in this subsection can be summarized  as follows.
\begin{Thm}\label{Universal}
If the two graphs are isomorphic, the corresponding stability classes are identical.
\end{Thm}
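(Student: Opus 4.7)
The plan is to lift the graph isomorphism $\pi : \Lambda \to \Lambda'$ to a unitary on the fermionic Fock space and verify that this unitary intertwines every piece of structure entering the definition of a stability class. First, I would construct $U_\pi : \F_\Lambda \to \F_{\Lambda'}$ by declaring $U_\pi |\varnothing\rangle_\Lambda = |\varnothing\rangle_{\Lambda'}$ and $U_\pi c_{x\sigma}^* U_\pi^{-1} = c_{\pi(x)\sigma}^*$ for every $x \in \Lambda$ and $\sigma \in \{\uparrow, \downarrow\}$; because $\pi$ is a bijection, this extends to a well-defined unitary that restricts to a unitary $\F_{\Lambda, N} \to \F_{\Lambda', N}$ for every $N$. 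A direct computation using \eqref{DefSpinC} gives $U_\pi S^{(j)}_\Lambda U_\pi^{-1} = S^{(j)}_{\Lambda'}$ for $j = 1, 2, 3$, so $U_\pi$ respects the total-spin decomposition and carries every reducible subspace of $\F_{\Lambda, N}$ onto a reducible subspace of $\F_{\Lambda', N}$.

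Next I would use $U_\pi$ to transport IEE data. For each $\mathfrak{X} \in \mathscr{I}(\h_{\Lambda, N})$, the operator $U_\pi$ extends (by tensoring with the identity on any environmental factor) to a unitary onto some $\mathfrak{X}' \in \mathscr{I}(U_\pi \h_{\Lambda, N})$, and for every $\{\M, \vphi\} \in \mathscr{Q}(\mathfrak{X})$ the pair $\{\M', \vphi'\} := \{U_\pi \M U_\pi^{-1},\, \vphi \circ \mathrm{Ad}(U_\pi^{-1})\}$ belongs to $\mathscr{Q}(\mathfrak{X}')$. By Definition \ref{IEESEquiI} and the proposition following it, the associated standard forms are unitarily equivalent via $U_\pi$; in particular $U_\pi L^2(\M, \vphi)_+ = L^2(\M', \vphi')_+$, $J_{\vphi'} = U_\pi J_\vphi U_\pi^{-1}$, and $S(\M, \vphi) = S(\M', \vphi')$. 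Because the three conditions in Definition \ref{DefConForm} involve only an orthogonal projection and self-dual cones, all of which intertwine with $U_\pi$ by the previous step, the consistency relation $\LRA$ is also preserved under this transport. Hence $U_\pi$ induces a bijection between $\mathscr{C}^G_{\Lambda, N}(\M_0, \vphi_0)$ and $\mathscr{C}^{G'}_{\Lambda, N}(U_\pi \M_0 U_\pi^{-1},\, \vphi_0 \circ \mathrm{Ad}(U_\pi^{-1}))$, which, under the canonical identification of equivalent IEE systems provided by Definition \ref{IEESEquiI}, is precisely $\mathscr{C}^{G'}_{\Lambda, N}(\M_0, \vphi_0)$.

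The hard part, I anticipate, is not conceptual but rather one of consistent bookkeeping: one must choose the extension of $U_\pi$ to each IEE space compatibly with the canonical isometric embeddings $\kappa$ of Definition \ref{DefIEESP}, and then track the resulting identifications carefully so that the common root $\{\M_0, \vphi_0\}$ is indeed sent to (an equivalent copy of) itself. The remark at the end of Subsection \ref{SectStCl}, which emphasizes that the stability-class construction uses no specific property of $E_\Lambda$, suggests that once these identifications are in place the argument goes through cleanly, with the interior edges of the tree being preserved automatically step by step.
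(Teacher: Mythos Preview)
Your overall strategy---lift the graph isomorphism to the unitary $U_\pi$ on Fock space and check it intertwines all the relevant structure---matches the paper's starting point exactly (the paper calls this unitary $U_N$). Where you diverge is in the second paragraph: you propose to \emph{transport} each IEE space $\X$ to a new space $\X'$ by ``tensoring $U_\pi$ with the identity on any environmental factor.'' This step is not justified as stated, because a general $\X\in\mathscr I(\h_{\Lambda,N})$ is only required to admit an isometric embedding $\kappa:\h_{\Lambda,N}\hookrightarrow\X$ (Definition~\ref{DefIEESP}); it need not be of the form $\h_{\Lambda,N}\otimes\mathfrak K$, so there is no environmental tensor factor on which to act by the identity.

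The paper sidesteps this entirely via Lemma~\ref{I=I}: rather than transporting $\X$, one observes that the \emph{same} Hilbert space $\X$ is already an IEE space for $\h_{\Lambda',N}$, simply by replacing the embedding $\kappa$ with $\kappa\circ U_\pi^{-1}$. Hence $\mathscr I(\h_{\Lambda,N})=\mathscr I(\h_{\Lambda',N})$ as \emph{sets}, not merely up to bijection. Since the stability class (Definition~\ref{DefStaCl}) is built from this set together with the consistency relation $\LRA$---which, as you correctly note, involves no graph data beyond the spin operators---the two stability classes are literally identical, with the same root $\{\M_0,\vphi_0\}$ sitting in both. The bookkeeping you flag as the hard part is precisely what Lemma~\ref{I=I} makes unnecessary: no extension of $U_\pi$ to $\X$ is ever required.
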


This seemingly obvious theorem will be needed when  discussing  realizations of  crystal lattices in Section \ref{Sect4}.

Let $\iota: \Lambda\to \Lambda\rq{}$ be the  isomorphism  between $G$ and  $G'$. 
We define the natural unitary operator $u$ from $\ell^2(\Lambda)$ onto $\ell^2(\Lambda\rq{})$ as follows:
\be
(u f)(x\rq{})=f(\iota^{-1}(x\rq{})),\ \ f\in \ell^2(\Lambda),\ x\rq{}\in \Lambda\rq{}. 
\ee
In this case, $U_{N}=\otimes^N u$ defines the  natural unitary operator from $\F_{\Lambda, N}$ onto $\F_{\Lambda\rq{}, N}$. We readily confirm the following:
\begin{align}
U_NS_x^{(i)}U_N^{-1}=S^{\prime(i)}_{\iota(x)},\   \ 
U_Nc_{x\sigma}U_N^{-1}=c^{\prime}_{\iota(x)\sigma}\ \ (x\in \Lambda), \label{SScc}
\end{align}
where $S_{x\rq{}}^{\prime(i)}$ and $c_{x\rq{}\sigma}^{\prime}$ stand for the spin and annihilation operators on $\F_{\Lambda', N}$, respectively.
In this sense, the structures of the spin operators in the Hilbert spaces $\F_{\Lambda, N}$ and $\F_{\Lambda', N}$ are almost the same.

Before we proceed, let us introduce one term.
\begin{Def}\label{HilIso}\upshape
Let $\h\in \mathscr{H}_{\Lambda, N}$ and $\h'\in \mathscr{H}_{\Lambda', N}$.
We call $\{\h, \h'\}$  an {\it isomorphic pair} if there exists a unitary operator $U : \h\to \h\rq{}$ such that 
\be
U_NS_x^{(i)}U_N^{-1}=S^{\prime(i)}_{\iota(x)}\ \ (i=1, 2, 3,\ x\in \Lambda), 
\ee
where $S^{\prime (i)}_{x\rq{}}$ stand for the spin operators on $\h\rq{}$.
\end{Def}
By \eqref{SScc}, we readily  see that $\{\F_{\Lambda, N}, \F_{\Lambda\rq{}, N}\}$ is an isomorphic pair.

By the above arguments, we obtain:
\begin{Lemm}\label{Lemm1}
The entire reducible subspaces on $\Lambda$, $\mathscr{H}_{\Lambda, N}$, and the entire reducible spaces on $\Lambda\rq{}$, $\mathscr{H}_{\Lambda\rq{}, N}$, are isomorphic in the following sense:
there exists a bijection $\kappa: \mathscr{H}_{\Lambda, N} \to \mathscr{H}_{\Lambda\rq{}, N}$ such that 
$\{\h, \kappa(\h)\}$ is an isomorphic pair for each $\h\in \mathscr{H}_{\Lambda, N}$.
\end{Lemm}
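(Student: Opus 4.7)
The plan is to use the unitary $U_N = \bigotimes^N u$ constructed just before the lemma and set $\kappa(\h) := U_N \h$ for every $\h \in \mathscr{H}_{\Lambda, N}$. The whole lemma will then be a straightforward verification, since the intertwining relations \eqref{SScc} transport all of the relevant structure from $\F_{\Lambda, N}$ to $\F_{\Lambda', N}$.

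The first step is to check $\kappa(\h) \in \mathscr{H}_{\Lambda', N}$. Writing $P$ for the orthogonal projection from $\F_{\Lambda, N}$ onto $\h$, the orthogonal projection from $\F_{\Lambda', N}$ onto $U_N \h$ is $P' = U_N P U_N^{-1}$. Using \eqref{SScc} together with the bijectivity of $\iota$, I would observe
\begin{align}
S'^{(i)}_{\Lambda'} = \sum_{x' \in \Lambda'} S'^{(i)}_{x'} = \sum_{x \in \Lambda} S'^{(i)}_{\iota(x)} = \sum_{x \in \Lambda} U_N S^{(i)}_x U_N^{-1} = U_N S^{(i)}_{\Lambda} U_N^{-1},
\end{align}
so that $[P', S'^{(i)}_{\Lambda'}] = U_N [P, S^{(i)}_{\Lambda}] U_N^{-1} = 0$ because $\h$ is reducible. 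Hence $\kappa$ does land in $\mathscr{H}_{\Lambda', N}$. Bijectivity is then automatic: the same construction applied to the inverse graph isomorphism $\iota^{-1}: \Lambda' \to \Lambda$ produces a unitary which is precisely $U_N^{-1}$, and $\h' \mapsto U_N^{-1} \h'$ is an inverse map.

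Finally, I would verify that $\{\h, \kappa(\h)\}$ is an isomorphic pair in the sense of Definition \ref{HilIso}. The candidate unitary is simply the restriction $U := U_N \restriction \h : \h \to \kappa(\h)$, which is well-defined and unitary onto its image by construction. Since \eqref{SScc} holds on all of $\F_{\Lambda, N}$, restricting both sides to $\h$ gives the desired intertwining of the spin operators between $\h$ and $\kappa(\h)$, completing the verification.

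There is no real obstacle here; the lemma is essentially a bookkeeping statement asserting that the categorical structure carried by reducible subspaces depends only on the isomorphism class of the graph. The only point worth highlighting in the write-up is the one computation above, which makes explicit that the equality $S'^{(i)}_{\Lambda'} = U_N S^{(i)}_{\Lambda} U_N^{-1}$ — and hence the preservation of reducibility under $\kappa$ — relies precisely on $\iota$ being a bijection of vertex sets together with \eqref{SScc}.
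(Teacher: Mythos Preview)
Your proposal is correct and follows exactly the approach the paper intends: the paper writes only ``By the above arguments, we obtain'' and states the lemma, leaving the verification implicit, and your argument---defining $\kappa(\h)=U_N\h$, checking reducibility via the conjugated projection and the identity $S'^{(i)}_{\Lambda'}=U_N S^{(i)}_{\Lambda}U_N^{-1}$, and taking $U=U_N\restriction\h$ for the isomorphic-pair condition---is precisely the intended unpacking of those ``above arguments.''
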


In order to prove Theorem \ref{Universal}, we prepare one more lemma:

\begin{Lemm}\label{I=I}
Let $\h_{\Lambda, N}\in \mathscr{H}_{\Lambda, N}$ and $\h_{\Lambda', N}\rq{} \in \mathscr{H}_{\Lambda', N}$. Suppose that $\{\h_{\Lambda, N},  \h_{\Lambda', N}\rq{}\}$ is an isometric pair. Then we have
$\mathscr{I}(\h_{\Lambda, N})=\mathscr{I}(\h_{\Lambda', N}\rq{})$.
\end{Lemm}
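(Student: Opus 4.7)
The observation driving the proof is that Definition~\ref{DefIEESP} characterizes membership in $\mathscr{I}(\cdot)$ purely through the existence of an isometric linear embedding of the argument into $\frak{X}$, with no further conditions referring to the spin operators. Consequently the statement should reduce to transporting such embeddings across the unitary supplied by the isomorphic-pair hypothesis. The full strength of that hypothesis, namely the spin-operator intertwining relation in Definition~\ref{HilIso}, is not actually needed here; mere unitary equivalence of $\h_{\Lambda, N}$ and $\h_{\Lambda', N}'$ already suffices.

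Concretely, the plan is to first unpack the hypothesis by extracting the unitary $U:\h_{\Lambda, N}\to \h_{\Lambda', N}'$ provided by Definition~\ref{HilIso}. To establish the inclusion $\mathscr{I}(\h_{\Lambda, N})\subseteq \mathscr{I}(\h_{\Lambda', N}')$, I would take an arbitrary $\frak{X}\in \mathscr{I}(\h_{\Lambda, N})$ together with a witnessing isometric embedding $\kappa:\h_{\Lambda, N}\to \frak{X}$, and then check that the composition
\[
\kappa':=\kappa\circ U^{-1}:\ \h_{\Lambda', N}'\LRA \frak{X}
\]
is again an isometric linear map, being the composition of a unitary operator and an isometry. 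This exhibits $\frak{X}$ as an IEE space associated with $\h_{\Lambda', N}'$, so $\frak{X}\in \mathscr{I}(\h_{\Lambda', N}')$. The reverse inclusion $\mathscr{I}(\h_{\Lambda', N}')\subseteq \mathscr{I}(\h_{\Lambda, N})$ is entirely symmetric, obtained by composing with $U$ in place of $U^{-1}$.

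There is no substantive obstacle in this lemma: it is in essence the observation that $\mathscr{I}(\cdot)$ is invariant under unitary isomorphism of its argument, and this is immediate from the definition. The only point requiring any care is the bookkeeping of directions, so as to ensure that each composition lands in the intended Hilbert space. The lemma plays a preparatory role, guaranteeing that the subsequent construction of stability classes transports cleanly along graph isomorphisms, and is a small but necessary ingredient towards the proof of Theorem~\ref{Universal}.
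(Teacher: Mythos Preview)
Your proof is correct and follows exactly the same approach as the paper: extract the unitary $U$ from the isomorphic-pair hypothesis, compose any witnessing isometry $\kappa$ with $U^{-1}$ to transport membership in $\mathscr{I}(\cdot)$, and conclude the reverse inclusion by symmetry. Your observation that only the unitary (not the spin-intertwining) is needed is also accurate.
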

\begin{proof}
Because $\{\h_{\Lambda, N},  \h_{\Lambda', N}\rq{}\}$ is  an isometric pair, there is a unitary operator $U: \h_{\Lambda, N} \to \h_{\Lambda', N}\rq{}$.
Let $\X\in \mathscr{I}(\h_{\Lambda, N})$. Then there exists an isometric linear mapping $\kappa : \h_{\Lambda, N} \to \X$. We easily see that $\kappa \circ U^{-1}$ defines an isometric linear mapping from $\h_{\Lambda', N}\rq{}$
to $\X$. Hence, $\mathscr{I}(\h_{\Lambda, N}) \subseteq \mathscr{I}(\h_{\Lambda', N}\rq{})$. 
By exchanging the roles of $\mathscr{I}(\h_{\Lambda, N})$ and $\mathscr{I}(\h_{\Lambda', N}\rq{})$ in the above discussion, we get $\mathscr{I}(\h_{\Lambda', N}\rq{}) \subseteq \mathscr{I}(\h_{\Lambda, N})$. 
\end{proof}

\subsubsection*{\it Proof of Theorem \ref{Universal}}
 
By using  Lemmas \ref{Lemm1} and  \ref{I=I} and recalling Definition \ref{DefStaCl}, we obtain the desired result. \qed

\subsection{Comments on  the stability classes on finite lattices}

In this section, the stability classes on  finite graphs have been discussed in detail. Due to the structure of the paper, the reader may think that this section is just a preparation for the theory of infinite systems discussed in the next section.
However, recent advances in experimental techniques have led to the realization of electronic systems on small finite lattices in the laboratory, and our knowledge of the magnetic properties of  ground states of such systems is increasing; see \cite{Dehollain2020,Slot2017} and references therein. Therefore, the results of this section are expected to be  helpful in understanding the magnetic properties of such systems and their stability.

\section{Stability of magnetic orders}\label{Sect4}

\subsection{Magnetic orders}

So far, we have kept $\Lambda$ fixed and then discussed the magnetic structure of  each system.  In this section, $\Lambda$ is treated  as a  varying parameter.
By clarifying the relationship between magnetic systems for different $\Lambda$,  the concept of magnetic order can be defined mathematically; see Definition \ref{DefOrderM}.
Specific applications of the general theory constructed in this section will be discussed in Sections \ref{Sect5} and \ref{Sect6}.

Let  $G=(\mathbb{L}, E)$ be a countably infinite connected graph. 
We denote by $ P(\mathbb{L}) $ the   set of  all finite subsets of $\mathbb{L}$.
Given a $\varrho\in (0, 1)$, we consider a net of reducible Hilbert spaces: $
\{\h_{\Lambda, [\varrho|\Lambda|]} \in \mathscr{H}_{\Lambda, [\varrho |\Lambda|]} : \Lambda\in  P(\mathbb{L}) \}, 
$
where $[\cdot]$ stands for the Gauss symbol. The number $\varrho$ is called the {\it electron density}.
Let ${\sf H}_{\varrho}=\{\X_{\Lambda} \in \mathscr{I}(\h_{\Lambda, [\varrho |\Lambda|]}) : \Lambda\in  P(\mathbb{L}) \}$ be a net of IEE spaces, where  $\mathscr{I}(\h_{\Lambda, N})$ is defined in Definition \ref{DefIEESP}.
For any $\Lambda\in P(\mathbb{L})$, suppose that we are given an IEE system $\{\M_{\Lambda}, \vphi_{\Lambda}\}\in \mathscr{Q}(\X_{\Lambda})$.
We set 
\be
F_{\mathbb{L}}=\{\Lambda\subset \mathbb{L} : \mbox{$(\Lambda, E_{\Lambda})$ is an induced  subgraph of $G$, which is finite and connected}\}.\footnote{
Let us consider  two graphs $G=(V, E)$ and $G\rq{}=(V\rq{}, E\rq{})$. If $G\rq{}$ is a subgraph of $G$ and $G\rq{}$ contains all the edges $\{x, y\}\in E$ with $x, y\in V\rq{}$, then $G\rq{}$ is  an {\it induced subgraph of } $G$. See \cite{Diestel2017} for detail.
}\label{DefF(L)}
\ee
In this section, we will examine 
a net  of IEE systems $O_{\varrho}=\{\{\M_{\Lambda}, \vphi_{\Lambda}\} \in \mathscr{Q}(\X_{\Lambda}) : \X_{\Lambda}\in {\sf H}_{\varrho}, \  \Lambda\in  F_{\mathbb{L}} \}$ and  the corresponding net of the standard forms:
$ F (O_{\varrho}):=\{ F (\M_{\Lambda}, \vphi_{\Lambda}) : \Lambda\in  F_{\mathbb{L}} \}$.

We have seen in the previous section that when $\Lambda$ is fixed, the IEE  system $\{\M_{\Lambda}, \vphi_{\Lambda}\}$ naturally determines the corresponding magnetic structure. In this section, we consider the following question: Given a net of IEE systems, how can we define the magnetic order on the macroscopic scale, based on the discussion in the previous section? 
There must be some relation between  $\{\M_{\Lambda}, \vphi_{\Lambda}\}$  for different $\Lambda$ that consistently connects the magnetic properties. A macroscopic magnetic order must emerge from the consistent interconnection of the magnetic structures of the systems  $\{\M_{\Lambda}, \vphi_{\Lambda}\}$  corresponding to the various  $\Lambda$.
From this consideration, we adopt the following definition.

\begin{Def}\label{DefOrderM}\upshape
\begin{itemize}
\item[(i)] We say that $ O_{\varrho}$ is a {\it magnetic system} on $G$, if 
$ F (\M_{\Lambda\rq{}}, \vphi_{\Lambda\rq{}}) \LRA  F (\M_{\Lambda}, \vphi_{\Lambda})$  holds for all  $\Lambda, \Lambda\rq{}\in F_{\mathbb{L}}$ with $\Lambda\subset \Lambda\rq{}$.

\item[(ii)] The magnetic system $ O_{\varrho}$ induces a {\it magnetic order}, if  there exist a non-negative number $s$  and an increasing sequence  of  sets $\{\Lambda_n : n\in \BbbN\}\subset F_{\mathbb{L}}$ such that 
$\bigcup_{n=1}^{\infty} \Lambda_n=\mathbb{L}$ and  
\be
 S(\M_{\Lambda_n}, \vphi_{\Lambda_n})=s|\Lambda_n|+o(|\Lambda_n|)\ \ \mbox{as $n\to \infty$}. \label{LimitForm}
\ee

If $s$ is non-zero, then we say that the magnetic order  is {\it strict}.
\end{itemize}
\end{Def}

It is essential to the above definition that the positivities between the IEE systems for different  $\Lambda$ are consistently connected as stated in (i) of Definition \ref{DefOrderM}. Recalling that the positivity for each $\Lambda$, i.e., $L^2(\M_{\Lambda}, \vphi_{\Lambda})_+$,   describes the magnetic properties of the corresponding system, this definition implies that the consistent connection of the magnetic properties between different regions determines the magnetic order in the infinite (or macroscopic) system. (ii) of Definition \ref{DefOrderM} characterizes the magnetic order numerically using the spin density.

\begin{Rem}\upshape
 In the definition of the magnetic order, i.e., Definition \ref{DefOrderM}, we presuppose that an increasing sequence of sets $\{\Lambda_n : n\in \BbbN\}\subset F_{\mathbb{L}}$ with $\bigcup_n \Lambda_n=\mathbb{}L$ is given. 
Here, we discuss how we construct such a sequence. 
We assume that $G=(\mathbb{L}, E)$ is a directed graph for the sake of exposition. However, the same consideration holds even when $G$ is not a directed graph.
 For an oriented edge $e \in E$, the origin and the terminus of $e$ are denoted by $o(e)$ and $t(e)$, respectively.  Let $E_x=\{e\in E : o(e)=x\}$ be the set of edges with $o(e) =x \in \mathbb{L}$. A path $c$ of $G$ of length $n$ is a sequence $c =(e_1,\dots,  e_n)$ of oriented edges $e_i$ with $o(e_{i+1}) =t(e_i)\, (i =1, \dots, n-1)$. We denote by $\Omega_{x,n}(G)\ (x \in \mathbb{L}, n \in \BbbN)$ the set of all paths of length $n$ for which origin $o(c) =x$. 
Let us construct a graph  naturally  from $\Omega_{x,n}(G)$. To see this, we need to do a little preparation. For each   $c=(e_1, \dots, e_n)\in \Omega_{x, n}(G)$, set $v(c)=\{o(e_1), \dots, o(e_n), t(e_1), \dots, t(e_n)\} \subset \mathbb{L}$ and 
$e(c)=\{e_i : i=1, \dots n\}$.
$v(c)$ (resp. $e(c)$) is all the vertices (resp. edges) that the path $c$ contains. Define $\Lambda_n=\bigcup_{ c\in \Omega_{x, n}(G)} v(c) \subset \mathbb{L}$ and $E_{\Lambda_n}=\bigcup_{ c\in \Omega_{x, n}(G)}e(c)$. 
We readily confirm that the graph $(\Lambda_n, E_{\Lambda_n})$ is an induced graph of $G$, which is finite and connected. In addition, $\Lambda_n \subset \Lambda_{n+1}$ holds. Hence, we get a desired sequence $\{\Lambda_n : n\in \BbbN\}$.
\end{Rem}

 Next, we discuss the stabilities of the magnetic orders.
 \begin{Def}\label{ConsisOO}\upshape
Let ${\sf H}_{\varrho}^{\prime}=\{\X_{\Lambda}^{\prime} \in \mathscr{H}_{\Lambda, [\varrho |\Lambda|]} : \Lambda\in  P(\mathbb{L}) \}$ be another net of IEE spaces.
Suppose that we are given  a magnetic system  $O_{\varrho}^{\prime}=\{\{\M_{\Lambda}^{\prime}, \vphi_{\Lambda}^{\prime}\}\in \mathscr{Q}(\X_{\Lambda}^{\prime}) : 
\X_{\Lambda}^{\prime} \in {\sf H}_{\varrho}^{\prime},\ 
\Lambda\in  F_{\mathbb{L}} \}$.  Then we have the corresponding  net of the
standard forms:  $ F (O_{\varrho}^{\prime})=\{ F (\M_{\Lambda}^{\prime}, \vphi_{\Lambda}^{\prime}) : \Lambda \in  F_{\mathbb{L}} \}$.
We say that the magnetic systems $ O_{\varrho}^{\prime}$ and $ O_{\varrho}$ are {\it consistent}, if 
\be
   F (\M^{\prime}_{\Lambda}, \vphi^{\prime}_{\Lambda})\LRA F (\M_{\Lambda}, \vphi_{\Lambda}) \ \ \forall \Lambda \in  F_{\mathbb{L}} .
\ee
Hence, for $\Lambda, \Lambda^{\prime}\in  F_{\mathbb{L}} $ with $\Lambda\subset \Lambda^{\prime}$,  the following diagram is commutative:
\be
\begin{tikzcd}
  F(\M^{\prime}_{\Lambda^{\prime}}, \vphi^{\prime}_{\Lambda^{\prime}}) \ar[r] \ar[d] & F(\M^{\prime}_{\Lambda}, \vphi_{\Lambda}^{\prime}) \ar[d]\\
   F(\M_{\Lambda^{\prime}}, \vphi_{\Lambda^{\prime}}) \ar[r] &  F(\M_{\Lambda}, \vphi_{\Lambda})
\end{tikzcd}\label{CommD}
\ee
We write this as $ O_{\varrho}^{\prime} \LRA  O_{\varrho}$.
Trivially, if $ O_{\varrho}^{\prime\prime} \LRA  O_{\varrho}^{\prime}$ and $ O_{\varrho}^{\prime} \LRA O_{\varrho}$, then
$ O_{\varrho}^{\prime\prime} \LRA  O_{\varrho}$ holds.
\end{Def}

\begin{Thm}[Stability of   magnetic orders]\label{StaMO}
Suppose that magnetic systems $O_{\varrho}$ and $O_{\varrho}\rq{}$ are given.
Assume that the magnetic system $O_{\varrho}$ induces a magnetic order with a spin density  of $s$.
If $ O_{\varrho}^{\prime} \LRA  O_{\varrho}$, then the magnetic system $O_{\varrho}^{\prime}$ induces a magnetic order with a spin density of  $s$ as well.
\end{Thm}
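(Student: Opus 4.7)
The plan is to reduce the statement to a pointwise application of Theorem \ref{StabThmEx} along the witnessing sequence supplied by the magnetic order of $O_{\varrho}$. Since $O_{\varrho}$ induces a magnetic order with spin density $s$, Definition \ref{DefOrderM} (ii) provides an increasing sequence $\{\Lambda_n : n\in \BbbN\}\subset F_{\mathbb{L}}$ with $\bigcup_n\Lambda_n=\mathbb{L}$ and
\ben
S(\M_{\Lambda_n}, \vphi_{\Lambda_n})=s|\Lambda_n|+o(|\Lambda_n|)\quad\mbox{as}\ n\to\infty.
\een
The strategy is to re-use this very same sequence to witness a magnetic order for $O_{\varrho}^{\prime}$; this is legitimate because $\bigcup_n\Lambda_n=\mathbb{L}$ does not depend on the choice of net, and $O_{\varrho}^{\prime}$ is already known to be a magnetic system in the sense of Definition \ref{DefOrderM} (i).

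Next, I would invoke the consistency hypothesis $O_{\varrho}^{\prime}\LRA O_{\varrho}$ at each $\Lambda_n$: by Definition \ref{ConsisOO} we have
\ben
F(\M^{\prime}_{\Lambda_n}, \vphi^{\prime}_{\Lambda_n}) \LRA F(\M_{\Lambda_n}, \vphi_{\Lambda_n})\quad\forall n\in\BbbN.
\een
Theorem \ref{StabThmEx} then applies to each $n$ and yields
\ben
S(\M^{\prime}_{\Lambda_n}, \vphi^{\prime}_{\Lambda_n})=S(\M_{\Lambda_n}, \vphi_{\Lambda_n}).
\een
Substituting into the asymptotic identity gives $S(\M^{\prime}_{\Lambda_n}, \vphi^{\prime}_{\Lambda_n})=s|\Lambda_n|+o(|\Lambda_n|)$, which is exactly the requirement of Definition \ref{DefOrderM} (ii) for $O_{\varrho}^{\prime}$ with the same spin density $s$.

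In short, there is no genuine obstacle: the whole content of the theorem is that the equality of total spins granted by Theorem \ref{StabThmEx} on each finite slice propagates to the asymptotic spin density in the thermodynamic limit. The only thing to verify carefully is that the witnessing sequence for $O_{\varrho}$ can be transported verbatim to $O_{\varrho}^{\prime}$, which is immediate once one notices that (i) the sets $\Lambda_n$ live in $F_{\mathbb{L}}$ independently of the nets ${\sf H}_{\varrho}$ and ${\sf H}_{\varrho}^{\prime}$, and (ii) the commutative square \eqref{CommD} ensures that the relation $\LRA$ is compatible with the restriction from $\Lambda^{\prime}$ to $\Lambda$ on both sides, so no inconsistency can arise when passing to the limit. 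The argument also shows, as a free by-product, that strictness of the magnetic order is preserved, since $s\neq 0$ for $O_{\varrho}$ forces $s\neq 0$ for $O_{\varrho}^{\prime}$.
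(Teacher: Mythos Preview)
Your proposal is correct and follows exactly the same approach as the paper: the paper's proof is the single line ``Apply Theorem \ref{StabThmEx}'', and you have simply unpacked what that application means along the witnessing sequence $\{\Lambda_n\}$ furnished by Definition \ref{DefOrderM} (ii).
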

\begin{proof}
Apply Theorem \ref{StabThmEx}.
\end{proof}

It turns out that if two different magnetic systems are consistent in the sense of Definition \ref{ConsisOO}, then the two systems have similar magnetic orders. The essence of this phenomenon is the consistent propagation of positivity in the network formed by the IEE systems, which is  represented by the commutative diagram \eqref{CommD}.

\begin{Def}\upshape
Fix $\varrho\in (0, 1)$. Suppose that  a magnetic system $O_{\varrho}$ is given. The {\it stability class associated  with $O_{\varrho}$}  is defined by 
\be
\mathscr{C}(O_{\varrho}):=\{O_{\varrho} ^{\prime }:  \mbox{ a magnetic system satisfying $O_{\varrho}^{\prime} \LRA  O_{\varrho}$}\}.
\ee
\end{Def}

Figure \ref{Fig2} shows a visualization of the concept of  the stability class;
$\mathscr{C}(O_{\varrho})$ can be regarded as a {\it directed tree with the root }$O_{\varrho}$, more specifically,  an  arborescence,  in the following fashion:
 \begin{itemize}
 \item Each magnetic system corresponds to a vertex.
 \item  Each pair of magnetic systems $(O^{\prime}_{\varrho}, O^{\prime\prime}_{\varrho})$ connected by the relation $O^{\prime}_{\varrho} \LRA O^{\prime\prime}_{\varrho}$ corresponds to an edge with the origin $O^{\prime}_{\varrho}$ and terminus $O^{\prime\prime}_{\varrho}$.
 \end{itemize}
 \begin{figure}[ht]
 \begin{center}
 %\ctikzfig{Fig2}
 \includegraphics[scale=1.0]{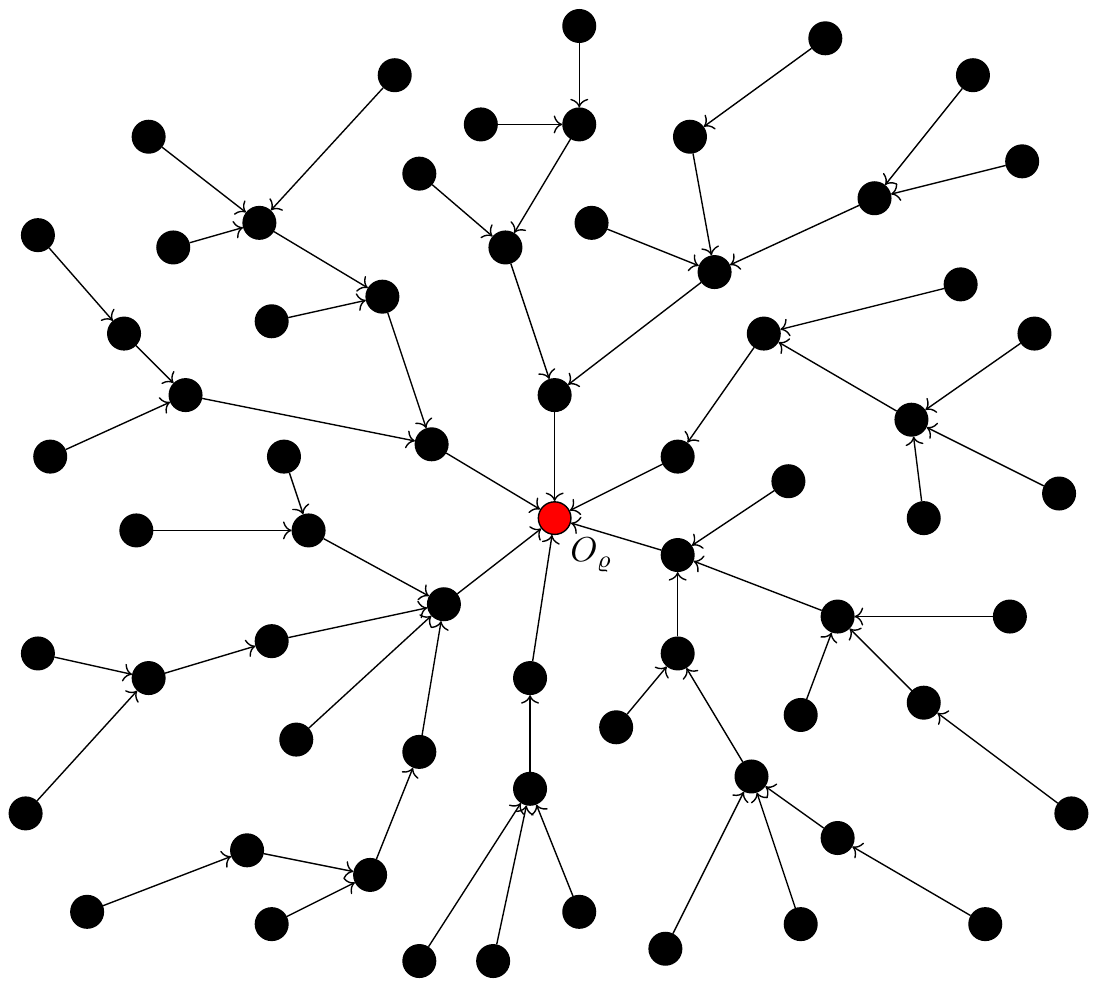}
  \caption{
        $\mathscr{C}(O_{\varrho})$ corresponds to a  directed tree with the root $O_{\varrho}$.
        }\label{Fig2}
        \end{center}
 \end{figure}
Note that the discussion here is almost the same as in Subsection \ref{SectStCl}; see Figure \ref{Fig1}.

We can restate Theorem \ref{StaMO} in terms of the stability class as follows.

\begin{Coro}\label{MOrderC}
If the magnetic system $O_{\varrho}$ induces a magnetic order with a spin density  of $s$, then all magnetic system in $\mathscr{ C}(O_{\varrho})$ induces a magnetic order with a spin density of  $s$ as well.
\end{Coro}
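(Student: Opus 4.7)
The plan is to deduce this corollary directly from Theorem \ref{StaMO} by unpacking the definition of the stability class. By construction, any magnetic system $O'_\varrho \in \mathscr{C}(O_\varrho)$ satisfies $O'_\varrho \longrightarrow O_\varrho$, so Theorem \ref{StaMO} applies verbatim and yields the conclusion.

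For transparency I would also trace the underlying mechanism, since Theorem \ref{StaMO} itself rests on Theorem \ref{StabThmEx}. Starting from a witnessing sequence $\{\Lambda_n\}_{n \in \mathbb{N}} \subset F_{\mathbb{L}}$ with $\bigcup_n \Lambda_n = \mathbb{L}$ for the magnetic order of $O_\varrho$, i.e.,
\[
S(\M_{\Lambda_n}, \vphi_{\Lambda_n}) = s |\Lambda_n| + o(|\Lambda_n|) \quad (n \to \infty),
\]
the hypothesis $O'_\varrho \longrightarrow O_\varrho$ specializes at each $\Lambda_n$ to $F(\M'_{\Lambda_n}, \vphi'_{\Lambda_n}) \longrightarrow F(\M_{\Lambda_n}, \vphi_{\Lambda_n})$. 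Theorem \ref{StabThmEx} then forces $S(\M'_{\Lambda_n}, \vphi'_{\Lambda_n}) = S(\M_{\Lambda_n}, \vphi_{\Lambda_n})$ for every $n$, so substituting this equality into the asymptotic above shows that the very same sequence $\{\Lambda_n\}$ witnesses a magnetic order of $O'_\varrho$ with identical spin density $s$.

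There is no genuine obstacle: the statement is essentially a reformulation of Theorem \ref{StaMO} in the stability-class vocabulary, and all analytic content has already been absorbed into Theorem \ref{StabThmEx}. The only point worth noting is that one need not modify the witnessing sequence when passing from $O_\varrho$ to $O'_\varrho$, since pointwise equality of total spins at every $\Lambda_n$ transports the entire asymptotic formula without loss. Accordingly, I would present the proof as a single line invoking Theorem \ref{StaMO}, optionally accompanied by the explicit diagram-chase above for the reader's convenience.
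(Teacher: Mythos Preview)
Your proposal is correct and matches the paper's approach exactly: the paper presents this corollary as a restatement of Theorem \ref{StaMO} in stability-class language, with no separate proof given, and your one-line invocation of Theorem \ref{StaMO} via the definition of $\mathscr{C}(O_\varrho)$ is precisely that. The additional trace through Theorem \ref{StabThmEx} is accurate and helpful exposition, though the paper itself does not spell it out here.
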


Next, let us mention the equivalence of magnetic systems.
Given  isomorphic graphs $G=(\mathbb{L}, E)$ and $G'=(\mathbb{L}', E')$, consider
nets of IEE spaces
${\sf H}^G_{\varrho}=\{\X_{\Lambda} \in \mathscr{I}(\h_{\Lambda, [\varrho |\Lambda|]}) : \Lambda\in  F_{\mathbb{L}} \}$ and ${\sf H}^{G'}_{\varrho}=\{\X'_{\Lambda'} \in \mathscr{I}(\h'_{\Lambda', [\varrho |\Lambda|]}) : \Lambda'\in  F_{\mathbb{L}'} \}$. Let $\phi$ be the isomorphism  between $G$ and $G'$. Note that  $\phi$ gives rise to the bijection between $F_{\mathbb{L}}$ and $F_{\mathbb{L}'}$.
Suppose that $\{\h_{\Lambda, [\varrho|\Lambda|]}, \h'_{\phi(\Lambda), [\varrho |\Lambda|]}\}$ is  an isomorphic pair\footnote{The term ``isomorphic pair" is defined  in Definition \ref{HilIso}. } for all $\Lambda \in F_{\mathbb{L}}$.
By Lemma \ref{I=I}, we can choose ${\sf H}^G_{\varrho}$ and ${\sf H}^{G\rq{}}_{\varrho}$ to satisfy the following: for each $\Lambda\in F_{\mathbb{L}}$, there exists a unitary operator $U_{\Lambda}: \X_{\Lambda} \to \X_{\phi(\Lambda)}\rq{}$ satisfying 
$
U_{\Lambda} S_x^{(i)} U_{\Lambda}^{-1}=S_{\phi(x)}^{\prime (i)}\ \  (i=1, 2, 3,\  x\in \Lambda),
$
where $S^{\prime (i)}_{x\rq{}}$ stand for the spin  operators  on $\X_{\phi(\Lambda)}\rq{}$.

\begin{Def}\label{MSEqui}\upshape
Suppose that we are in the above setting. Suppose that we are given magnetic systems $O_{\varrho}^G=\{\{\M_{\Lambda}, \vphi_{\Lambda}\} \in \mathscr{Q}(\X_{\Lambda}) : \X_{\Lambda}\in {\sf H}_{\varrho}^G, \  \Lambda\in  F_{\mathbb{L}} \}$ and $O_{\varrho}^{G\rq{}}=\{\{\M_{\Lambda\rq{}}\rq{}, \vphi_{\Lambda\rq{}}\rq{}\} \in \mathscr{Q}(\X_{\Lambda\rq{}}\rq{}) : \X_{\Lambda\rq{}}\rq{}\in {\sf H}_{\varrho}^{G\rq{}}, \  \Lambda\rq{}\in  F_{\mathbb{L}\rq{}} \}$ associated with ${\sf H}^G_{\varrho}$ and ${\sf H}^{G\rq{}}_{\varrho}$, respectively.
If
$\{\M_{\Lambda}, \vphi_{\Lambda}\}\in O_{\varrho}^G$ and $\{\M'_{\phi(\Lambda)}, \vphi'_{\phi(\Lambda)}\}\in O_{\varrho}^{G'}$
 are equivalent in the sense of Definition \ref{IEESEquiI} for all $\Lambda\in F_{\mathbb{L}}$, then $O_{\varrho}^G$ and $O_{\varrho}^{G'}$ are said to be {\it equivalent}.
\end{Def}

\begin{Rem}\upshape
Here, a note on Definition \ref{MSEqui} is given.
Considering Lemma \ref{I=I}, for each $\Lambda\in F_{\mathbb{L}}$, both $\X_{\Lambda}$ and $\X_{\phi(\Lambda)}'$ 
can be regarded as elements of $\mathscr{I}(\h_{\Lambda, [\varrho N]})$. Therefore, the concept of equivalence introduced in Definition \ref{IEESEquiI} can be applied to $\{\M_{\Lambda}, \vphi_{\Lambda}\}$ and $\{\M'_{\phi(\Lambda)}, \vphi'_{\phi(\Lambda)}\}$. Note that in Definition \ref{IEESEquiI}, $\X$ and $\X'$ both belong to $\mathscr{I}(\h_{\Lambda, N})$.

\end{Rem}

From the definition, the following proposition follows immediately.

\begin{Prop}\label{IsoStaCl}
We are in the setting described above.
Suppose that $O_{\varrho}^G$ and $O_{\varrho}^{G'}$ are  equivalent. Then we have the following:
\begin{itemize}
\item[\rm (i)]
$O^G_{\varrho}$ induces a magnetic order with a spin density of  $s$, if and only if,  $O_{\varrho}^{G^{\prime}}$ induces a magnetic order with a spin density of  $s$.
\item[\rm (ii)]

$\mathscr{C}(O^G_{\varrho})$ and $\mathscr{C}(O_{\varrho}^{G'})$ are isomorphic in the following sense:
there exists a bijection, $\upsilon$,  between 
$\mathscr{C}(O^G_{\varrho})$ and $\mathscr{C}(O_{\varrho}^{G'})$ such that 
$O_{\varrho}$ and $\upsilon(O_{\varrho})$ are equivalent for each $O_{\varrho} \in \mathscr{C}(O^G_{\varrho})$.

\end{itemize}
\end{Prop}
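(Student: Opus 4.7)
The overall strategy is to exploit the pointwise unitary equivalence $\{U_{\Lambda} : \X_{\Lambda} \to \X'_{\phi(\Lambda)}\}_{\Lambda \in F_{\mathbb{L}}}$ intertwining the spin operators, together with $|\phi(\Lambda)|=|\Lambda|$, to transport the entire network structure of $\mathscr{C}(O_{\varrho}^G)$ across the graph isomorphism $\phi$ to that of $\mathscr{C}(O_{\varrho}^{G'})$.

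For part (i), I would invoke the proposition stated immediately after Definition \ref{IEESEquiI}, which gives $S(\M_{\Lambda}, \vphi_{\Lambda})=S(\M'_{\phi(\Lambda)}, \vphi'_{\phi(\Lambda)})$ for every $\Lambda \in F_{\mathbb{L}}$. Since $\phi$ is a bijection between $F_{\mathbb{L}}$ and $F_{\mathbb{L}'}$ preserving cardinalities, sending any increasing exhaustion $\{\Lambda_n\}$ of $\mathbb{L}$ to an increasing exhaustion $\{\phi(\Lambda_n)\}$ of $\mathbb{L}'$, the asymptotic identity
\begin{equation*}
S(\M_{\Lambda_n}, \vphi_{\Lambda_n})=s|\Lambda_n|+o(|\Lambda_n|)
\end{equation*}
transfers verbatim to the primed side, and conversely by reversing the roles via $\phi^{-1}$.

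For part (ii), I would construct $\upsilon$ lattice-wise. Fix $\widetilde{O}_{\varrho}=\{\{\widetilde{\M}_{\Lambda}, \widetilde{\vphi}_{\Lambda}\}\}_{\Lambda} \in \mathscr{C}(O_{\varrho}^G)$, with $\widetilde{\M}_{\Lambda}$ acting on some $\widetilde{\X}_{\Lambda}$ containing $\X_{\Lambda}$ as a closed subspace. For each $\Lambda \in F_{\mathbb{L}}$, I take $\widetilde{\X}'_{\phi(\Lambda)}$ to be $\widetilde{\X}_{\Lambda}$ as an underlying Hilbert space, but regarded as an element of $\mathscr{I}(\h'_{\phi(\Lambda), [\varrho|\Lambda|]})$ via the isomorphic-pair unitary $V_{\Lambda}: \h_{\Lambda, [\varrho|\Lambda|]} \to \h'_{\phi(\Lambda), [\varrho|\Lambda|]}$ furnished by Lemma \ref{I=I}. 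The canonical unitary $\widetilde{U}_{\Lambda}: \widetilde{\X}_{\Lambda} \to \widetilde{\X}'_{\phi(\Lambda)}$ is then the identity on Hilbert spaces but intertwines spin operators in the required manner (Definition \ref{HilIso}) and restricts to $U_{\Lambda}$ on $\X_{\Lambda}$. I then set $\upsilon(\widetilde{O}_{\varrho})$ to be the family $\{\{\widetilde{U}_{\Lambda}\widetilde{\M}_{\Lambda}\widetilde{U}_{\Lambda}^{-1}, \widetilde{\vphi}_{\Lambda}\circ\mathrm{Ad}\,\widetilde{U}_{\Lambda}^{-1}\}\}_{\phi(\Lambda) \in F_{\mathbb{L}'}}$. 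Since unitary conjugation transports standard forms, self-dual cones, and conditional expectations, the relations $\LRA$ defining $\widetilde{O}_{\varrho}$ as a magnetic system on $G$ carry over to $\upsilon(\widetilde{O}_{\varrho})$ on $G'$, and $\upsilon(\widetilde{O}_{\varrho}) \LRA O_{\varrho}^{G'}$ follows by concatenating $\widetilde{O}_{\varrho} \LRA O_{\varrho}^G$ with the vertex-wise equivalence $O_{\varrho}^G \simeq O_{\varrho}^{G'}$. An inverse $\upsilon^{-1}$ is built symmetrically using $\phi^{-1}$ and $U_{\Lambda}^{-1}$, yielding the claimed bijection; by construction $\widetilde{O}_{\varrho}$ and $\upsilon(\widetilde{O}_{\varrho})$ are equivalent in the sense of Definition \ref{MSEqui}.

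The main obstacle I anticipate is purely bookkeeping: verifying that the extensions $\widetilde{U}_{\Lambda}$ are mutually compatible with the inclusions $\widetilde{\X}_{\Lambda}\hookrightarrow\widetilde{\X}_{\Lambda'}$ for $\Lambda \subset \Lambda'$, so that the commutative square \eqref{CommD} is preserved under $\upsilon$. Because the underlying Hilbert spaces are kept unchanged and only the IEE-space structure is relabelled through the family $\{V_{\Lambda}\}$ (which is globally determined by $\phi$ and compatible with inclusions by functoriality of the construction in Lemma \ref{Lemm1}), this compatibility reduces to the naturality of $V$ with respect to inclusions, after which the remaining verifications are routine.
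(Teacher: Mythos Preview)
Your proposal is correct and is essentially what the paper has in mind: the paper gives no explicit proof, stating only that the proposition ``follows immediately from the definition,'' and your argument is precisely the routine unpacking of that claim via the proposition after Definition~\ref{IEESEquiI} for part~(i) and pointwise unitary transport along $\phi$ for part~(ii). The compatibility-with-inclusions issue you flag is real bookkeeping but is exactly the sort of verification the paper implicitly regards as routine, and your resolution through Lemma~\ref{I=I} is the intended one.
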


\subsection{Magnetic orders in ground states}
In the previous discussion,   Hamiltonians and their ground states did not appear. In this section, we will discuss when  Hamiltonians and their ground states form a magnetic order.
Let ${\sf H}_{\varrho}=\{\X_{\Lambda} \in \mathscr{I}(\h_{\Lambda, [\varrho |\Lambda|]}) : \Lambda\in  P(\mathbb{L}) \}$ be a net of IEE spaces. We are given a magnetic system $O_{\varrho}$  associated with ${\sf H}_{\varrho}$.

Consider a net  formed by  Hamiltonians indexed  by $F_{\mathbb{L}}$: \be
{\bs H}=\{H_{\Lambda} : \mbox{$H_{\Lambda}$ acts in $\mathfrak{X}_{\Lambda}$ for all $\Lambda\in F_{\mathbb{L}}$} \}.\label{HamiNet}
\ee
In general, the graph $G$ and hence $F_{\mathbb{L}}$ are consistently determined by the structure of the net of Hamiltonians, $\bs H$,  under consideration; see Sections \ref{Sect5} and \ref{Sect6} for detail.

We impose the following condition:
\begin{Def}\label{ConsUni}\upshape
Let $\bs H$ be the net of Hamiltonians given by \eqref{HamiNet}.
 We say that  $\bs H$ is {\it adapted to $O_{\varrho}$}, if $H_{\Lambda}\in A_{\Lambda, [\varrho|\Lambda|]}(\M_{\Lambda}, \vphi_{\Lambda})$ with $\{\M_{\Lambda}, \vphi_{\Lambda}\}\in O_{\varrho}$ for all $\Lambda\in F_{\mathbb{L}}$. Here, recall that $ A_{\Lambda, [\varrho|\Lambda|]}(\M_{\Lambda}, \vphi_{\Lambda})$ is given in Definition \ref{DefAHamiC}.

\end{Def}

This definition implies that each Hamiltonian $H_{\Lambda}\in {\bs H}$ is consistent with the magnetic property, or positivity, of the  IEE system $\{\M_{\Lambda}, \vphi_{\Lambda}\}$. 
How such  nets of Hamiltonians  are constructed will be discussed in Appendix \ref{SectA}.
 %%\Add{Let $E_{\Lambda}=\min \mathrm{spec}(H_{\Lambda})$.  By applying Theorem \ref{pff},  $E_{\Lambda}$ is a simple eigenvalue, and the corresponding eigenvector can be chosen to be strictly positive with respect to $L^2(\M_{\Lambda}, \vphi_{\Lambda})_+$.}%%
 
 To describe the next result, we introduce a term.
 \begin{Def}\label{DefMagO}\upshape
 Suppose that we are in the setting of Definition \ref{ConsUni}.
 For a given $\Lambda\in F_{\mathbb{L}}$, by Proposition \ref{GSinG}, we can choose a normalized ground state $\psi_{\Lambda}$ of $H_{\Lambda}$  to satisfy $\psi_{\Lambda} \in G(\M_{\Lambda}, \vphi_{\Lambda})$.
 We refer to the net ${\bs \psi}=\{\psi_{\Lambda} : \Lambda\in F_{\mathbb{L}}\}$ that consists of these chosen ground states as a {\it net of magnetic ground states} (NMGS) associated with ${\bs H}$.

For a given number $s\in [0, 1/2]$,  we say that an NMGS ${\bs \psi}$ exhibits a {\it magnetic order} with a  spin density of $s$,   if there exists an increasing sequence  of  sets $\{\Lambda_n : n\in \BbbN\}\subset F_{\mathbb{L}}$ such that 
$\bigcup_{n=1}^{\infty} \Lambda_n=\mathbb{L}$ and  
\be
 S_{H_{\Lambda_n}}=s|\Lambda_n|+o(|\Lambda_n|)\ \ \mbox{as $n\to \infty$},
\ee
where $S_{H_{\Lambda}}$ indicates the total spin of $\psi_{\Lambda}$; see Definition \ref{DefSH}.
The number $s$ is called the {\it spin density} associated with  $ \bs \psi$.
If $s$ is non-zero, then we say that the magnetic order  is {\it strict}.
 \end{Def}
 
 \begin{Prop}
 Suppose that we are give a magnetic system $O_{\varrho}$.
 Suppose that a net of Hamiltonians  $\bs H$ is adapted to the magnetic system $O_{\varrho}$.
  If $O_{\varrho}$ induces a magnetic order with a  spin density of  $s$, then each NMGS associated with $\bs H$ exhibits a magnetic order with a  spin density  of $s$.

 \end{Prop}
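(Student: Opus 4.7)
The plan is to reduce the claim directly to the equality $S_{H_\Lambda}=S(\M_\Lambda,\vphi_\Lambda)$ that has already been established for adapted Hamiltonians, and then transport the asymptotic density statement from the magnetic system $O_\varrho$ to the NMGS. There is essentially nothing to do beyond unfolding definitions and invoking Corollary \ref{StaGsFixH}; the proposition is the ``ground-state version'' of Definition \ref{DefOrderM}(ii).

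First I would fix an NMGS $\bs\psi=\{\psi_\Lambda:\Lambda\in F_\BbbL\}$ associated with $\bs H$. By Definition \ref{ConsUni}, for every $\Lambda\in F_\BbbL$ there is an IEE system $\{\M_\Lambda,\vphi_\Lambda\}\in O_\varrho$ such that $H_\Lambda\in A_{\Lambda,[\varrho|\Lambda|]}(\M_\Lambda,\vphi_\Lambda)$. Corollary \ref{StaGsFixH} then gives the key pointwise identity
\be
S_{H_\Lambda}=S(\M_\Lambda,\vphi_\Lambda)\qquad \text{for all } \Lambda\in F_\BbbL. \nonumber
\ee
In particular $\psi_\Lambda$, being a ground state of $H_\Lambda$ chosen from $G(\M_\Lambda,\vphi_\Lambda)$, has total spin equal to $S(\M_\Lambda,\vphi_\Lambda)$.

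Next I would use the hypothesis that $O_\varrho$ induces a magnetic order with spin density $s$. By Definition \ref{DefOrderM}(ii) this provides an increasing sequence $\{\Lambda_n:n\in\BbbN\}\subset F_\BbbL$ with $\bigcup_n\Lambda_n=\BbbL$ and
\be
S(\M_{\Lambda_n},\vphi_{\Lambda_n})=s|\Lambda_n|+o(|\Lambda_n|)\quad (n\to\infty). \nonumber
\ee
Substituting the pointwise identity above, the same sequence $\{\Lambda_n\}$ witnesses $S_{H_{\Lambda_n}}=s|\Lambda_n|+o(|\Lambda_n|)$, which is precisely the condition in Definition \ref{DefMagO} for $\bs\psi$ to exhibit a magnetic order with spin density $s$.

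I do not foresee a genuine obstacle: once the adaptedness of $\bs H$ is spelled out, the argument is a one-line application of Corollary \ref{StaGsFixH} followed by substitution into the defining asymptotic. The only subtlety worth mentioning in the write-up is that the statement is independent of the particular choice of NMGS $\bs\psi$, because the total spin $S_{H_\Lambda}$ is a property of the Hamiltonian $H_\Lambda$ itself (the degeneracy of ground states established in Proposition \ref{GSinG} is within a single total-spin multiplet), so any admissible selection $\{\psi_\Lambda\}$ yields the same spin density.
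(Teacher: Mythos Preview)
Your proof is correct and follows essentially the same route as the paper: invoke Corollary \ref{StaGsFixH} to get $S_{H_\Lambda}=S(\M_\Lambda,\vphi_\Lambda)$ for each $\Lambda$, then substitute into the asymptotic sequence furnished by Definition \ref{DefOrderM}(ii). The extra remark about independence from the choice of NMGS is accurate but not needed for the argument.
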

 \begin{proof}
 Let ${\bs \psi}=\{\psi_{\Lambda} : \Lambda\in F_{\mathbb{L}}\}$ be an NMGS associated with $\bs H$.
 Because $\bs H$ is adapted to $O_{\varrho}$, $H_{\Lambda} \in A_{\Lambda, [\varrho|\Lambda|]}(\M_{\Lambda}. \vphi_{\Lambda})$ holds for all $\Lambda\in F_{\mathbb{L}}$.
 Hence, by  Corollary \ref{StaGsFixH}, $\psi_{\Lambda}$ has total spin $S(\M_{\Lambda}, \vphi_{\Lambda})$.
 Recalling Definition \ref{DefOrderM},  we see that there exists  an increasing sequence  of  sets $\{\Lambda_n : n\in \BbbN\}\subset F_{\mathbb{L}}$ such that 
$\bigcup_{n=1}^{\infty} \Lambda_n=\mathbb{L}$ and  \eqref{LimitForm} holds.
Putting the above together, we find that $\bs \psi$ exhibits a magnetic order with a  spin density of  $s$.
 \end{proof}
 
 Suppose that we are given another net of IEE spaces   ${\sf H}_{\varrho}^{\prime}=\{\X_{\Lambda}^{\prime} \in \mathscr{I}(\h_{\Lambda, [\varrho |\Lambda|]}) : \Lambda\in  P(\mathbb{L}) \}$ and magnetic system $O_{\varrho}^{\prime}$ associated with  ${\sf H}^{\prime}_{\varrho}$. In addition, suppose that another net of Hamiltonians ${\bs H}'=\{H'_{\Lambda} : \mbox{$H'_{\Lambda}$ acts in $\mathfrak{X}_{\Lambda}$ for all $\Lambda\in F_{\mathbb{L}}$} \}$ is given.

 %In the remainder of this section, we denote by $\psi_{\Lambda}$ (resp. $\psi'_{\Lambda}$) the normalized  ground state of $H_{\Lambda}$ (resp. $H'_{\Lambda}$) for each $\Lambda\in F_{\mathbb{L}}$.%

\begin{Thm}\label{OtoO}
Suppose that magnetic systems $O_{\varrho}$ and $O_{\varrho}\rq{}$ are given.
 Suppose that nets of Hamiltonians ${\bs H}$ and ${\bs H}\rq{}$ are adapted to $O_{\varrho} $ and $O_{\varrho}\rq{}$, respectively.
 In addition, suppose that 
 \be
 O_{\varrho}\rq{} \LRA   O_{\varrho}.
 \ee
 Then we have the following:
\begin{itemize}
\item[\rm (i)] $S_{H_{\Lambda}}=S_{H^{\prime}_{\Lambda}}$ for all $\Lambda\in F_{\mathbb{L}}$, where $H_{\Lambda} \in {\bs H}$ and $H^{\prime}_{\Lambda}\in {\bs H}\rq{}$.

\item[\rm (ii)] If  the magnetic system $ O_{\varrho}$ induces a magnetic order with a  spin density of  $s$, then every NMGS  associated with either $\bs H$ or $\bs H'$ exhibits a magnetic order with a   spin density  of $s$ as well.
\end{itemize}
\end{Thm}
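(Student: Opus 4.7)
The plan is to derive both assertions as immediate consequences of the machinery already established, chaining together Corollary \ref{StaGsFixH}, Theorem \ref{StabThmEx}, and the unlabeled proposition immediately preceding Theorem \ref{OtoO}. The key observation is that the adaptedness of $\bs H$ and $\bs H'$ to their respective magnetic systems reduces every claim about ground-state total spin to a claim about the invariants $S(\M_{\Lambda}, \vphi_{\Lambda})$ attached to the IEE systems, and these invariants are controlled by the consistency relation $O_{\varrho}' \LRA O_{\varrho}$.

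For part (i), I would fix $\Lambda \in F_{\mathbb{L}}$ and unpack the two adaptedness hypotheses of Definition \ref{ConsUni}: there exist $\{\M_{\Lambda}, \vphi_{\Lambda}\} \in O_{\varrho}$ and $\{\M'_{\Lambda}, \vphi'_{\Lambda}\} \in O_{\varrho}'$ with $H_{\Lambda} \in A_{\Lambda, [\varrho|\Lambda|]}(\M_{\Lambda}, \vphi_{\Lambda})$ and $H'_{\Lambda} \in A_{\Lambda, [\varrho|\Lambda|]}(\M'_{\Lambda}, \vphi'_{\Lambda})$. Corollary \ref{StaGsFixH} then identifies $S_{H_{\Lambda}} = S(\M_{\Lambda}, \vphi_{\Lambda})$ and $S_{H'_{\Lambda}} = S(\M'_{\Lambda}, \vphi'_{\Lambda})$. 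The consistency $O_{\varrho}' \LRA O_{\varrho}$ supplies $F(\M'_{\Lambda}, \vphi'_{\Lambda}) \LRA F(\M_{\Lambda}, \vphi_{\Lambda})$ at each $\Lambda$, so Theorem \ref{StabThmEx} equates the two spin invariants, giving $S_{H_{\Lambda}} = S_{H'_{\Lambda}}$.

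For part (ii), the hypothesis that $O_{\varrho}$ induces a magnetic order with spin density $s$, combined with $O_{\varrho}' \LRA O_{\varrho}$, places $O_{\varrho}'$ in the stability class $\mathscr{C}(O_{\varrho})$; Corollary \ref{MOrderC} then forces $O_{\varrho}'$ to induce a magnetic order with the same spin density $s$. Applying the preceding proposition (which transfers the magnetic order of an IEE system to any adapted net of Hamiltonians) to $\bs H$ paired with $O_{\varrho}$, and to $\bs H'$ paired with $O_{\varrho}'$, yields the desired conclusion for every NMGS associated with either net.

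I do not anticipate a substantive obstacle: every ingredient has been proved earlier, and the task is purely one of routing the hypotheses in the correct order. The only point requiring minor care is that the adaptedness condition must be invoked pointwise in $\Lambda$ so that the per-lattice identification of $S_{H_{\Lambda}}$ with $S(\M_{\Lambda}, \vphi_{\Lambda})$ is legitimate, but this is immediate from Definition \ref{ConsUni}, which demands adaptedness at every $\Lambda \in F_{\mathbb{L}}$ simultaneously.
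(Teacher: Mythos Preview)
Your proposal is correct and follows essentially the same route as the paper. The paper invokes Corollary \ref{StaGsBigA} for (i) and then derives (ii) from (i) and Corollary \ref{MOrderC}; you simply unpack Corollary \ref{StaGsBigA} into its ingredients (Corollary \ref{StaGsFixH} and Theorem \ref{StabThmEx}) and, for (ii), route through the preceding proposition rather than through (i), which amounts to the same argument at a slightly lower level of packaging.
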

\begin{proof}
(i) Use Corollary \ref{StaGsBigA}. (ii) follows from (i) and Corollary \ref{MOrderC}.
\end{proof}

This theorem implies the stability of magnetic orders: let ${\bs H}$ be a given net of  Hamiltonians, and let 
${\bs H}\rq{}$ be another net of  Hamiltonians defined by adding  interaction with the environmental system to ${\bs H}$. If an NMGS associated with $\bs H$ exhibits a  magnetic order, then every NMGS associated with $\bs H\rq{}$ exhibits the similar magnetic order; see Sections \ref{Sect5} and \ref{Sect6} for examples.

By reconsidering Theorem \ref{OtoO} in terms of the stability class, we obtain the following theorem.

\begin{Coro}[Stability of magnetic orders ]\label{StaMagOrder}

Suppose that magnetic systems $O_{\varrho}$ and $O_{\varrho}\rq{}$ are given.
 Suppose that nets of Hamiltonians ${\bs H}$ and ${\bs H}\rq{}$ are adapted to $O_{\varrho} $ and $O_{\varrho}\rq{}$, respectively. 
Assume that $ O^{\prime}_{\varrho}\in \mathscr{C}( O_{\varrho})$. 
 If  the magnetic system $ O_{\varrho}$ induces a magnetic order with a  spin density of  $s$, then
every NMGS associated with either $\bs H$ or $\bs H'$ exhibits a magnetic order with a  spin density  of $s$ as well.
\end{Coro}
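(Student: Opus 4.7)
The plan is to observe that this corollary is essentially a restatement of Theorem \ref{OtoO} using the language of stability classes, so the proof should reduce to unfolding a definition and then invoking the preceding results. First I would unpack the hypothesis $O'_{\varrho}\in \mathscr{C}(O_{\varrho})$: by the very definition of the stability class, this is equivalent to asserting the consistency relation $O'_{\varrho}\LRA O_{\varrho}$ in the sense of Definition \ref{ConsisOO}. Together with the assumption that $\bs H$ and $\bs H'$ are adapted to $O_{\varrho}$ and $O'_{\varrho}$ respectively, this puts us exactly in the setting of Theorem \ref{OtoO}.

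Next I would handle the NMGS associated with $\bs H$ directly: since $\bs H$ is adapted to $O_{\varrho}$, the proposition stated immediately before Theorem \ref{OtoO} shows that every NMGS associated with $\bs H$ exhibits a magnetic order whose spin density coincides with that of $O_{\varrho}$, namely $s$. For $\bs H'$, I would apply part (ii) of Theorem \ref{OtoO}, which, under the very hypotheses we have just verified, yields that every NMGS associated with $\bs H'$ exhibits a magnetic order with spin density $s$ as well. Combining these two statements gives the desired conclusion.

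Since each step amounts to citing something already proved, there is no real obstacle here; the only thing to be mindful of is making sure the quantifiers and choices of sequences $\{\Lambda_n\}$ in Definition \ref{DefMagO} are used consistently. In particular, the sequence along which the magnetic order manifests is determined by $O_{\varrho}$ in Definition \ref{DefOrderM}, and this same sequence propagates through the equality $S_{H_{\Lambda}}=S(\M_{\Lambda},\vphi_{\Lambda})=S(\M'_{\Lambda},\vphi'_{\Lambda})=S_{H'_{\Lambda}}$ guaranteed by Corollary \ref{StaGsBigA} and Theorem \ref{StabThmEx}, so no new sequence needs to be constructed for $\bs H'$.

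\begin{proof}[Proof sketch]
By definition of $\mathscr{C}(O_{\varrho})$, the assumption $O'_{\varrho}\in \mathscr{C}(O_{\varrho})$ means $O'_{\varrho}\LRA O_{\varrho}$. The proposition preceding Theorem \ref{OtoO} applied to $\bs H$ and $O_{\varrho}$ shows that every NMGS associated with $\bs H$ exhibits a magnetic order with spin density $s$. Theorem \ref{OtoO}(ii) then yields the same conclusion for every NMGS associated with $\bs H'$.
\end{proof}
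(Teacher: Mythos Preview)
Your proposal is correct and follows the paper's own reasoning: the paper simply remarks that the corollary is obtained ``by reconsidering Theorem \ref{OtoO} in terms of the stability class,'' i.e., by unfolding the definition of $\mathscr{C}(O_{\varrho})$ to obtain $O'_{\varrho}\LRA O_{\varrho}$ and then invoking Theorem \ref{OtoO}(ii). Note that Theorem \ref{OtoO}(ii) already covers both $\bs H$ and $\bs H'$, so your separate citation of the proposition preceding it for $\bs H$ is slightly redundant but harmless.
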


Corollary \ref{StaMagOrder} can be interpreted as follows:
suppose that we are given a stability class $\mathscr{ C}(O_{\varrho})$, represented by $O_{\varrho}$.  Consider any magnetic system $O'_{\varrho}$ that belongs to  $\mathscr{ C}(O_{\varrho})$.
The magnetic order produced by a net of   Hamiltonians ${\bs H}'$  adapted  to  $O_{\varrho}'$ has the same properties as the magnetic order produced by $\bs H$  adapted  to the representative system $O_{\varrho}$.
As we will see in the specific applications,  $\bs H$ has  a straightforward structure and is often easy to analyze. Suppose ${\bs H}'$ is a net  of Hamiltonians describing a complex interaction between a many-electron system and its environment. Even though it is difficult to analyze ${\bs H}'$,
the facts described here make it easy to examine the structure of the magnetic order produced by ${\bs H}'$.
 
\subsection{Magnetic structures in  crystal lattices }

We will deal with various models on crystal lattices in specific applications. In this subsection, we will introduce some terms related to crystal lattices for later use and
briefly summarize how the previous theory is applied to models on crystal lattices. 
For a mathematical description of the crystal lattice, see, for example, \cite{Kotani2000}.

Let $G=(\mathbb{L}, E)$ be an oriented, locally finite connected graph.
We assume that $G$ has no multiple edges and loops for simplicity.  
A set of spatial vectors $\Gamma \subset \BbbR^d$ that satisfies the following properties is called a {\it lattice group}:
there exist linearly independent vectors  ${\bs a}_1, {\bs a_2},\dots,  {\bs a}_d$ in  $\Gamma$ satisfying 
\be
\Gamma=\{k_1 {\bs a}_1+k_2{\bs a}_2+\dots +k_d {\bs a}_d : k_1, k_2,\dots,  k_d\in \BbbZ\}.
\ee
Each  vector belonging to $\Gamma$ is called a  {\it lattice vector}  and ${\bs a}_1, {\bs a}_2,\dots,  {\bs a}_d$ are called {\it $\BbbZ$-basis}.
In the present paper, $\Gamma$ always denotes a  lattice group.
  $G$ is called a ($\Gamma$-){\it crystal lattice} if there exists a lattice group $\Gamma$ acting on $G$ freely and its quotient $G_0=(\mathbb{L}_0, E_0) =\Gamma \setminus G $ is
 a finite graph. In other words, $G$ is an abelian covering graph of a finite graph $G_0$ for which the  covering transformation group is $\Gamma$.

A {\it realization} of $G$ into the  space $\BbbR^d$ is a map $  \pi : \mathbb{L}\to \BbbR^d$ satisfying $   \pi({\bs a} x)=\pi(x)+{\bs a}$ and $  \pi(\mathbb{L})+{\bs a}=  \pi(\mathbb{L})\   (x\in \mathbb{L}, \bs a\in \Gamma)$. 
 For $x\in \mathbb{L}$, $  \pi(x)$ is called the realized vertex, and for $e\in E$, the corresponding edge is determined  by the realized vertexes $  \pi(o(e)) $ and $  \pi(t(e))$, where, for a  given oriented edge $e \in E$, the origin and the terminus of $e$ are denoted by $o(e)$ and $t(e)$, respectively.
 Often, $  \pi$ is also referred to as a   periodic realization of $G$ with  periodic lattice group $\Gamma$.
 Two typical examples are depicted in Figure \ref{Crystal lattices}.
 \begin{figure}[h]
\begin{center}
\includegraphics[scale=0.8]{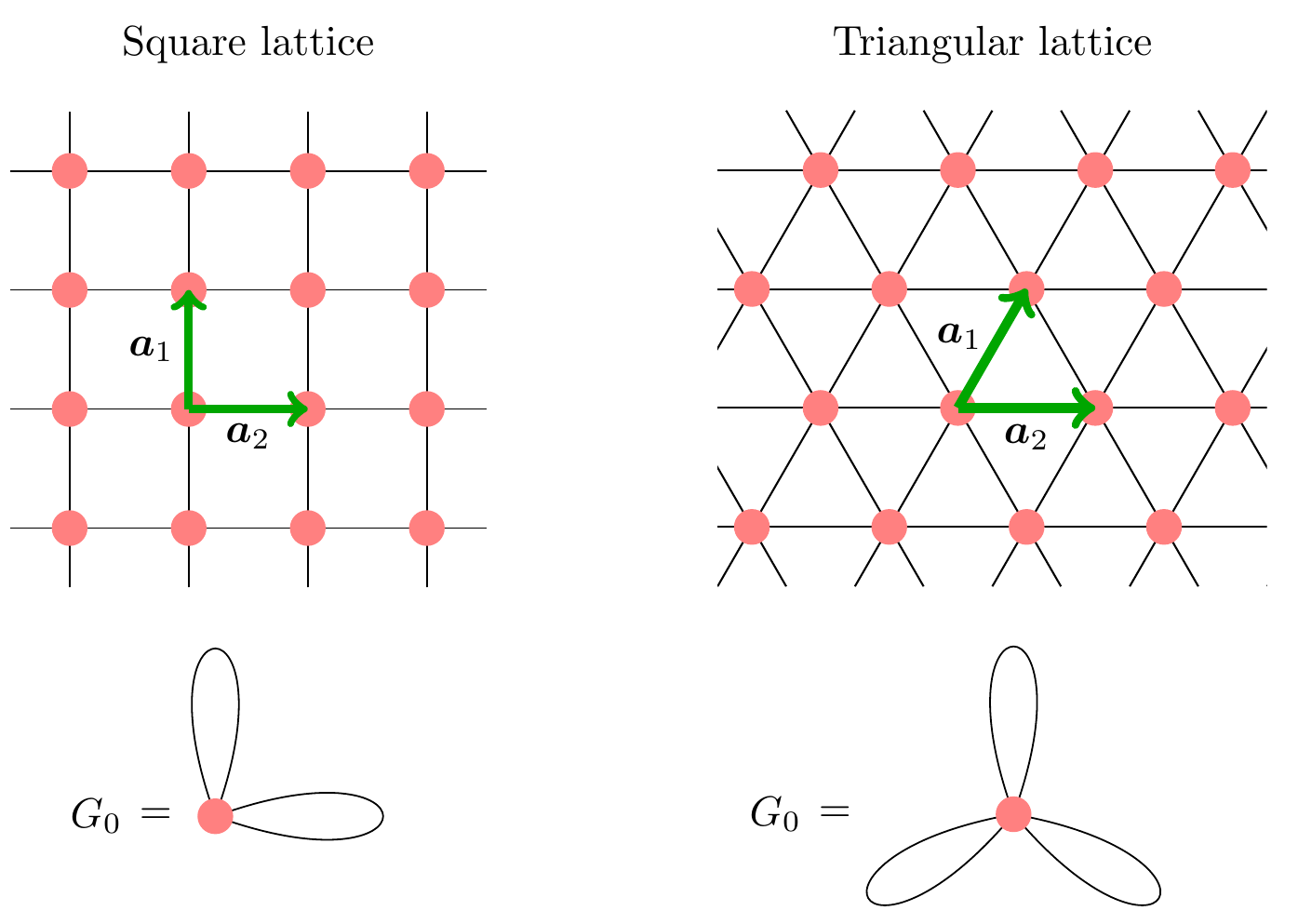}
\caption{Crystal lattices} 
    \label{Crystal lattices}
\end{center}
\end{figure}
In this paper, we always assume that 
 $  \pi$ is {\it non-degenerate}.\footnote{
Namely, $  \pi$ is injective,  and all directed edges belonging to $E_x$ are realized in such a way that they do not overlap, where $E_x=\{e\in E : o(e)=x\}$ with $o(e)$, the origin of $e$. 
}
Given a realization $  \pi$, 
we set $G^{  \pi}=(  \pi(\mathbb{L}), E^{  \pi})$. 
Often, $G^{\pi}$ is also called a realization of $G$ into $\BbbR^d$.

Consider the net of  sublattices $\{(\Lambda, E_{\Lambda}) : \Lambda\in F_{\mathbb{L}}\}$.
Given a realization, $\pi$,  of the crystal lattice $G$ in $\BbbR^d$,
we obtain the corresponding  net of the sublattices realized in $\BbbR^d$: $\{(\pi(\Lambda), E^{\pi}_{\Lambda}) : \Lambda\in F_{\mathbb{L}}\}$.

Given a $\varrho\in (0, 1)$, let ${\sf H}^G_{\varrho}=\{\X_{\Lambda} \in \mathscr{I}(\h_{\Lambda, [\varrho |\Lambda|]}) : \Lambda\in  F_{\mathbb{L}} \}$ be a net of IEE spaces on $G$.
Suppose that we are given a magnetic system $O^G_{\varrho}$ associated with ${\sf H}_{\varrho}^G$. 
By  Lemma \ref{Lemm1}, there exists a Hilbert space $\mathfrak{K}_{\pi(\Lambda), [\varrho|\Lambda|]}\in \mathscr{H}_{\pi(\Lambda), [\varrho|\Lambda|]}$ such that 
$\{ \h_{\Lambda, [\varrho|\Lambda|]}, \mathfrak{K}_{\pi(\Lambda), [\varrho|\Lambda|]}\}$ is an isomorphic pair.
Choose ${\sf H}_{\varrho}^{G^{\pi}}=\{\mathfrak{Y}_{\pi(\Lambda)} \in \mathscr{I}(\frak{K}_{\pi(\Lambda), [\varrho |\Lambda|]}) : \Lambda\in  F_{\mathbb{L}} \}$ to 
satisfy the following: for each $\Lambda\in F_{\mathbb{L}}$, there exists a unitary operator $U_{\Lambda}: \X_{\Lambda} \to \mathfrak{Y}_{\pi(\Lambda)}$ satisfying 
$
U_{\Lambda} S_x^{(i)} U_{\Lambda}^{-1}=S_{\phi(x)}^{\pi,  (i)}\ \  (i=1, 2, 3,\  x\in \Lambda),
$
where $S^{\pi,  (i)}_{y}$ stand for the spin  operators  on $\mathfrak{Y}_{\pi(\Lambda)}$. 
Note that such a choice is always possible from Lemma \ref{I=I}.

\begin{Def}\upshape
Suppose that  we are in the above setting.
Let $O_{\varrho}^{G^{\pi}}$ be a magnetic system associate with ${\sf H}_{\varrho}^{G^{\pi}}$. If $O_{\varrho}^{G^{\pi}}$ is equivalent to $O_{\varrho}^G$ in the sense of Definition \ref{MSEqui}, then $O_{\varrho}^{G^{\pi}}$
 is called a {\it realization of $O_{\varrho}^G$} on the realized crystal lattice $G^{\pi}$.
\end{Def}
It can be shown that there is at least one realization of $O_{\varrho}^G$ on the realized crystal lattice $G^{\pi}$.

 In much of the physics literature,
 we usually consider a magnetic system $O^{G^{\pi}}_{\varrho}$ on a concretely realized crystal lattice $G^{\pi}$ and a net of  Hamiltonians $\bs H^{\pi}$ adapted to  it. Needles to say, Theorem \ref{OtoO} and Corollary \ref{StaMagOrder} hold for $O^{G^{\pi}}_{\varrho}$ and $\bs H^{\pi}$.

By applying Proposition \ref{IsoStaCl}, we immediately obtain the following.
 \begin{Prop}
 For any realization, $\pi$, of the crystal lattie $G$ into $\BbbR^d$, 
 $\mathscr{C}(O^G_{\varrho})$ and $\mathscr{C}(O^{G^{\pi}}_{\varrho})$ are isomorphic  in the sense of Proposition \ref{IsoStaCl}.
 \end{Prop}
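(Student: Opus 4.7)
My plan is to reduce the statement directly to Proposition \ref{IsoStaCl}(ii) by verifying that the two magnetic systems $O_{\varrho}^G$ and $O_{\varrho}^{G^{\pi}}$ satisfy the equivalence hypothesis of that proposition. First, I would observe that the non-degeneracy assumption on $\pi$ is exactly what makes $\pi$ a graph isomorphism between $G=(\mathbb{L}, E)$ and $G^{\pi}=(\pi(\mathbb{L}), E^{\pi})$: injectivity of $\pi$ on vertices together with the non-overlapping condition on edges emanating from each vertex means that $\pi$ is a bijection on vertex sets intertwining the adjacency relations. In particular, $\pi$ restricts to a bijection $F_{\mathbb{L}} \to F_{\pi(\mathbb{L})}$ sending $\Lambda$ to $\pi(\Lambda)$, and for each such $\Lambda$ the induced subgraph structures match.

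Next, I would invoke the definition of a realization of $O_{\varrho}^G$: by construction, ${\sf H}^{G^{\pi}}_{\varrho}$ has been chosen so that $\{\h_{\Lambda,[\varrho|\Lambda|]}, \mathfrak{K}_{\pi(\Lambda),[\varrho|\Lambda|]}\}$ is an isomorphic pair for every $\Lambda\in F_{\mathbb{L}}$, with a unitary $U_{\Lambda}$ intertwining the local spin operators $S_x^{(i)}$ and $S_{\phi(x)}^{\pi,(i)}$. Since $O_{\varrho}^{G^{\pi}}$ is by hypothesis a realization of $O_{\varrho}^G$, the corresponding IEE systems are pairwise equivalent in the sense of Definition \ref{IEESEquiI}, and therefore $O_{\varrho}^G$ and $O_{\varrho}^{G^{\pi}}$ are equivalent magnetic systems in the sense of Definition \ref{MSEqui}.

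Finally, I would apply Proposition \ref{IsoStaCl}(ii) with $G'=G^{\pi}$ to produce the desired bijection $\upsilon : \mathscr{C}(O^G_{\varrho}) \to \mathscr{C}(O^{G^{\pi}}_{\varrho})$ such that $O_{\varrho}$ and $\upsilon(O_{\varrho})$ are equivalent for every $O_{\varrho}\in \mathscr{C}(O^G_{\varrho})$. This is precisely the notion of isomorphism of stability classes required by the statement.

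There is essentially no obstacle here beyond a careful unwinding of definitions; the only mildly delicate point is to make sure that the graph isomorphism produced by $\pi$ and the vertex map $\phi$ used in Definition \ref{MSEqui} are compatible, but this is immediate from the non-degeneracy of $\pi$ together with the fact that the periodic realization commutes with the $\Gamma$-action. Hence the proof reduces to citation of Proposition \ref{IsoStaCl}.
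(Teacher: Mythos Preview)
Your proposal is correct and follows essentially the same approach as the paper: the paper also derives this proposition as an immediate consequence of Proposition~\ref{IsoStaCl}, relying on the fact that a realization $O_{\varrho}^{G^{\pi}}$ is by definition equivalent to $O_{\varrho}^G$ in the sense of Definition~\ref{MSEqui}. You have simply made explicit the verification that $\pi$ furnishes the required graph isomorphism and that the equivalence hypothesis of Proposition~\ref{IsoStaCl} is met, which the paper leaves implicit.
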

 This proposition asserts that the magnetic properties we are investigating do not depend on any concrete realization of the crystal lattice.
 For this reason, we consider magnetic systems on abstract crystal lattices rather than individual realizations.

  Finally, we note that it is necessary to fix a specific realization of the crystal lattice in the analysis of long-range order.

\subsection{On the description of macroscopic systems}
We will conclude this section  with a discussion of macroscopic magnetic systems.
We note at the outset that there is still much to be learned about macroscopic systems and that the descriptions in this section  are only partial results of what is currently known.

Let ${\sf H}_{\varrho}=\{\X_{\Lambda} \in \mathscr{I}(\h_{\Lambda, [\varrho |\Lambda|]}) : \Lambda\in  P(\mathbb{L}) \}$ be a net of IEE spaces. We are given a magnetic system 
$O_{\varrho}=\{\{\M_{\Lambda}, \vphi_{\Lambda}\} \in \mathscr{Q}(\X_{\Lambda}) : \X_{\Lambda}\in {\sf H}_{\varrho}, \  \Lambda\in  F_{\mathbb{L}} \}$
  associated with ${\sf H}_{\varrho}$. Suppose that a net of Hamiltonians ${\bs H}=\{H_{\Lambda} :\mbox{$\Lambda\in F_{\mathbb{L}}$} \}$ is adapted to $O_{\varrho}$.

Take $\Lambda, \Lambda', \Lambda''\in F_{\mathbb{L}}$ with $\Lambda\subset \Lambda'\subset \Lambda''$. 
Then the following chain holds:
\be
F(\M_{\Lambda}, \vphi_{\Lambda})\longleftarrow F(\M_{\Lambda'}, \vphi_{\Lambda'})\longleftarrow F(\M_{\Lambda''}, \vphi_{\Lambda''}).\label{ChainX}
\ee
We denote by $\vartheta_{\Lambda' \Lambda}$ the isometric linear mapping from $\X_{\Lambda}$ into $\X_{\Lambda'}$.
Considering the property \eqref{ChainX}, without loss of generality, we may assume the following:
\be
\vartheta_{\Lambda'' \Lambda'} \vartheta_{\Lambda' \Lambda}=\vartheta_{\Lambda'' \Lambda}. \label{ConsisTheta}
\ee

In the remainder of this subsection, we assume that $O_{\varrho}$ induces a magnetic order with a spin density of  $s$. 
  Hence, there exists  an increasing sequence  of  sets $\{\Lambda_n : n\in \BbbN\}\subset F_{\mathbb{L}}$ such that 
$\bigcup_{n=1}^{\infty} \Lambda_n=\mathbb{L}$ and  \eqref{LimitForm} holds.

By using the property \eqref{ConsisTheta}, we can define an inductive limit \cite{KaRi}: 
$
\ilim \X_{\Lambda_n}.$ 
For simplicity of symbol, we set 
\be
\X_{\mathbb{L}}=\ilim \X_{\Lambda_n}.
\ee
It is natural to think that this Hilbert space is the foundation for describing macroscopic systems.
By \cite{Bratteli1997}, we have the following:
\begin{Lemm}\label{PropIndLim}
For each $n\in \mathbb{N}$, there exists an isometric linear mapping $\vartheta_{\Lambda_n}$ from $\X_{\Lambda_n}$ into 
$\X_{\mathbb{L}}$ satisfying the following:
\begin{itemize}
\item[\rm (i)] If $n \le n\rq{}$, then $\vartheta_{\Lambda_{n'}}\vartheta_{\Lambda_{n\rq{}}\Lambda_n}=\vartheta_{\Lambda_n}$.
\item[\rm (ii)] $\bigcup_{n \in \mathbb{N}} \vartheta_{\Lambda_n}\X_{\Lambda_n}$ is dense in $\X_{\mathbb{L}}$.
\end{itemize}
Furthermore, the Hilbert space $\X_{\mathbb{L}}$ and the net of the isometric linear mappings $ \{\vartheta_{\Lambda_n} : \Lambda\in \mathbb{N}\}$    are uniquely determined,   up to unitary equivalence. 
\end{Lemm}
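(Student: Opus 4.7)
The plan is to build $\X_{\mathbb{L}}$ as the usual Hilbert-space direct limit of the compatible family $\{\X_{\Lambda_n}, \vartheta_{\Lambda_{n'}\Lambda_n}\}$, and then check (i), (ii) and uniqueness from the construction. Concretely, I would work on the disjoint union $\bigsqcup_n \X_{\Lambda_n}$, declaring $\xi\in\X_{\Lambda_n}$ and $\eta\in\X_{\Lambda_m}$ equivalent when there exists $k\ge \max(n,m)$ with $\vartheta_{\Lambda_k\Lambda_n}\xi=\vartheta_{\Lambda_k\Lambda_m}\eta$. The consistency relation \eqref{ConsisTheta} guarantees that this is transitive and, moreover, that once equality holds at index $k$ it holds at every larger index. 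The quotient $\mathcal{X}_0$ inherits a natural vector space structure: for any finitely many equivalence classes, pick representatives and push them all up by the $\vartheta_{\Lambda_{k}\Lambda_\bullet}$ to a common $\X_{\Lambda_k}$, where addition and scalar multiplication can be carried out.

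Next I would equip $\mathcal{X}_0$ with the inner product
\be
\big\langle [\xi],[\eta]\big\rangle:=\big\langle \vartheta_{\Lambda_k\Lambda_n}\xi,\vartheta_{\Lambda_k\Lambda_m}\eta\big\rangle_{\X_{\Lambda_k}},\qquad k\ge \max(n,m).
\ee
Well-definedness in $k$ follows because each $\vartheta_{\Lambda_{k'}\Lambda_k}$ is isometric, so pushing both arguments further up preserves the inner product; well-definedness in the choice of representative is built into the equivalence relation. Positive definiteness follows from isometry (hence injectivity) of each $\vartheta_{\Lambda_k\Lambda_n}$. I then define $\X_{\mathbb{L}}$ to be the Hilbert space completion of $\mathcal{X}_0$, and set $\vartheta_{\Lambda_n}(\xi):=[\xi]\in \mathcal{X}_0 \subseteq \X_{\mathbb{L}}$. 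Isometry of $\vartheta_{\Lambda_n}$ is immediate from the definition of the inner product (take $k=n$); property (i) is tautological since $\vartheta_{\Lambda_{n'}\Lambda_n}\xi$ and $\xi$ represent the same equivalence class; property (ii) holds because the image of $\bigcup_n \vartheta_{\Lambda_n}\X_{\Lambda_n}$ coincides with $\mathcal{X}_0$, which is dense in its completion by construction.

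For uniqueness up to unitary equivalence, I would invoke the universal property. Suppose $(\widetilde{\X},\{\widetilde{\vartheta}_{\Lambda_n}\})$ is another pair satisfying (i) and (ii). Define $U:\mathcal{X}_0\to \widetilde{\X}$ by $U([\xi])=\widetilde{\vartheta}_{\Lambda_n}(\xi)$ for $\xi\in\X_{\Lambda_n}$. The compatibility relation (i) for $\widetilde{\vartheta}$ shows $U$ is well-defined on equivalence classes, and isometry of each $\widetilde{\vartheta}_{\Lambda_n}$ together with the definition of the inner product on $\mathcal{X}_0$ shows $U$ is isometric; density of $\bigcup_n \widetilde{\vartheta}_{\Lambda_n}\X_{\Lambda_n}$ in $\widetilde\X$ then forces the unique continuous extension of $U$ to be a unitary $\X_{\mathbb{L}}\to \widetilde\X$ intertwining $\vartheta_{\Lambda_n}$ and $\widetilde{\vartheta}_{\Lambda_n}$.

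There is no genuine analytical obstacle here; the only thing to be careful about is the bookkeeping in verifying that the equivalence relation, the vector space operations, and the inner product are all independent of the choice of representative and of the common index $k$, and that this is exactly where the hypothesis \eqref{ConsisTheta} gets used. This is the standard direct-limit construction for Hilbert spaces (cf.\ the reference to \cite{Bratteli1997}), so the proof amounts to unfolding the definitions.
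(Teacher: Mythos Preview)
Your proposal is correct; it is the standard direct-limit construction for Hilbert spaces, and the paper itself does not give a proof but simply cites \cite{Bratteli1997} for this result. So your argument is exactly the kind of unfolding the paper defers to the reference, and there is nothing to compare.
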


Let us recall that the magnetic properties of finite systems are expressed using the standard forms. In the following, we will construct  standard forms to describe the magnetic properties of macroscopic systems.
For this purpose, we first define the von Neumann algebra on $\X_{\mathbb{L}}$ by
\be
\M_{\mathbb{L}}=\Bigg(\bigcup_{n \in \BbbN} \M_{\Lambda_n}\Bigg)^{\prime\prime}. \label{MLDef}
\ee
We also define the unitary involution $J_{\mathbb{L}}$ on $\X_{\mathbb{L}}$ by
\be
J_{\mathbb{L}} \vartheta_{\Lambda_n}=\vartheta_{\Lambda_n} J_{\Lambda_n}. \label{JMacFi}
\ee
By Lemma \ref{PropIndLim}, we see that $J_{\mathbb{L}}$ is well defined.
Finally, a  convex cone $\Cone_{\mathbb{L}}$ in $\X_{\mathbb{L}}$ is given by 
\be
\Cone_{\mathbb{L}}=\overline{\bigcup_{n \in \BbbN} L^2(\M_{\Lambda_n}, \vphi_{\Lambda_n})_+},\label{PLDef}
\ee
where the bar stands for the closure in the strong topology.

To describe the results, we introduce the following definition:
\begin{Def}\upshape
Let $\X_1$ and $\X_2$ be complex separable Hilbert spaces with $\X_1\subseteq \X_2$.
Let $\M_1$ and $\M_2$ be von Neumann algebras on $\X_1$ and $\X_2$, respectively.
Suppose that two standard forms $F_1=\{\M_1,  \X_1, \Cone_1, J_1\}$ and $F_2=\{\M_2,  \X_2, \Cone_2, J_2\}$
are given. We say that $F_1$ and $F_2$ are {\it consistent} if the following (i), (ii) and (iii)
 are satisfied:
 \begin{itemize}
 \item[\rm (i)] $\M_1=P\M_2 P$, where $P$ is the orthogonal projection from $\X_2$ to $\X_1$.
 \item[\rm (ii)] $\Cone_1\subseteq \Cone_2$.
 \item[\rm (iii)] $\Cone_1=P\Cone_2$.
 \end{itemize}
 We write this as $F_2 \dasharrow F_1$.
 It should be noted that this definition of consistency is a generalization of Definition \ref{DefConForm}.
 \end{Def}

Based on the above definition, the relationship between the magnetic properties of the macroscopic system and the magnetic properties of the finite systems can be expressed as follows.

\begin{Thm}\label{MacroFinite}
The quadruple $\{\M_{\mathbb{L}}, \X_{\mathbb{L}},  \Cone_{\mathbb{L}}, J_{\mathbb{L}}\}$ is a standard form of $\M_{\mathbb{L}}$. We denote by $\ilim F(\M_{\Lambda_n}, \vphi_{\Lambda_n})$ this standard form.
For each $n, n\rq{}\in \BbbN$ with $n\le n\rq{}$, the following chain holds:
\be
F(\M_{\Lambda_n}, \vphi_{\Lambda_n}) \longleftarrow F(\M_{\Lambda_{n\rq{}}}, \vphi_{\Lambda_{n\rq{}}})
\dashleftarrow  \ilim F(\M_{\Lambda_n}, \vphi_{\Lambda_n}).
\ee
\end{Thm}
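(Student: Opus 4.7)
The plan is to exploit the inductive-system structure of the net $\{F(\M_{\Lambda_n},\vphi_{\Lambda_n})\}_n$ together with Lemma \ref{PropIndLim}, and to recognize $\{\M_{\mathbb{L}},\X_{\mathbb{L}},\Cone_{\mathbb{L}},J_{\mathbb{L}}\}$ as the inductive limit of the finite-level standard forms, verifying the Haagerup axioms by a density/strong-limit argument. Throughout I identify $\X_{\Lambda_n}$ with $\vartheta_{\Lambda_n}(\X_{\Lambda_n})\subset\X_{\mathbb{L}}$, write $P_{mn}$ for the projection $\X_{\Lambda_n}\to\X_{\Lambda_m}$ (for $m\le n$), and $Q_n$ for the projection $\X_{\mathbb{L}}\to\X_{\Lambda_n}$.

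First, I would collect the compatibility data implied by the assumption that $O_\varrho$ is a magnetic system. Definitions \ref{DefOrderM}(i) and \ref{DefConForm}, together with the remark in Subsection \ref{PrelCond} that $P_{mn}$ commutes with $J_{\Lambda_n}$, yield for all $m\le n$: $\M_{\Lambda_m}\subseteq\M_{\Lambda_n}$ via the conditional expectation; $P_{mn}$ commutes with every element of $\M_{\Lambda_m}$; $J_{\Lambda_n}\restriction \X_{\Lambda_m}=J_{\Lambda_m}$; and $L^2(\M_{\Lambda_m},\vphi_{\Lambda_m})_+\subseteq L^2(\M_{\Lambda_n},\vphi_{\Lambda_n})_+$ with $P_{mn}L^2(\M_{\Lambda_n},\vphi_{\Lambda_n})_+=L^2(\M_{\Lambda_m},\vphi_{\Lambda_m})_+$. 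Compatibility of the $J_{\Lambda_n}$'s ensures that \eqref{JMacFi} consistently defines a conjugate-linear isometry on the dense subspace $\bigcup_n\X_{\Lambda_n}$, extending uniquely to a conjugate-linear involution $J_{\mathbb{L}}$ on $\X_{\mathbb{L}}$; compatibility of the positive cones makes $\Cone_{\mathbb{L}}$ the closure of a filtered union of closed convex cones, hence a closed convex cone.

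Next, I would verify the standard-form axioms of \cite[Chapter IX]{Takesaki2003}. The identities $J_{\mathbb{L}}\M_{\mathbb{L}}J_{\mathbb{L}}=\M_{\mathbb{L}}'$ and $J_{\mathbb{L}}zJ_{\mathbb{L}}=z^*$ for $z\in Z(\M_{\mathbb{L}})$ are first checked on the dense $*$-algebra $\bigcup_n\M_{\Lambda_n}$ (each element sits at a finite level where the standard-form property is already known) and then extended by strong continuity. The invariance $J_{\mathbb{L}}\xi=\xi$ on $\Cone_{\mathbb{L}}$ and the relation $aJ_{\mathbb{L}}aJ_{\mathbb{L}}\unrhd 0$ w.r.t.\ $\Cone_{\mathbb{L}}$ for $a\in\M_{\mathbb{L}}$ transfer from each finite level by continuity, invoking Proposition \ref{AJAJP}. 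The delicate point, and the main obstacle of the proof, is the self-duality of $\Cone_{\mathbb{L}}$: self-duality does not in general survive the closure of an inductive union. Given $\eta\in\X_{\mathbb{L}}$ with $\la\eta|\xi\ra\ge 0$ for all $\xi\in\Cone_{\mathbb{L}}$, I fix $n$ and test against arbitrary $\xi\in L^2(\M_{\Lambda_n},\vphi_{\Lambda_n})_+\subseteq\Cone_{\mathbb{L}}$, obtaining $\la Q_n\eta|\xi\ra\ge 0$; self-duality at the finite level forces $Q_n\eta\in L^2(\M_{\Lambda_n},\vphi_{\Lambda_n})_+\subseteq\Cone_{\mathbb{L}}$; since $Q_n\eta\to\eta$ in norm and $\Cone_{\mathbb{L}}$ is closed, $\eta\in\Cone_{\mathbb{L}}$. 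The key input making this argument run is precisely the compatibility identity $P_{mn}L^2(\M_{\Lambda_n},\vphi_{\Lambda_n})_+=L^2(\M_{\Lambda_m},\vphi_{\Lambda_m})_+$ from the first step.

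Finally, the chain splits into two parts. The arrow $F(\M_{\Lambda_n},\vphi_{\Lambda_n})\longleftarrow F(\M_{\Lambda_{n'}},\vphi_{\Lambda_{n'}})$ is immediate from Definition \ref{DefOrderM}(i). For the dashed arrow $F(\M_{\Lambda_{n'}},\vphi_{\Lambda_{n'}})\dashleftarrow \ilim F(\M_{\Lambda_n},\vphi_{\Lambda_n})$, I would check the three defining conditions: $Q_{n'}\M_{\mathbb{L}}Q_{n'}=\M_{\Lambda_{n'}}$ is verified on $\bigcup_k\M_{\Lambda_k}$ (for $k\le n'$, $Q_{n'}$ commutes with $\M_{\Lambda_k}\subseteq\M_{\Lambda_{n'}}$; for $k\ge n'$, $Q_{n'}\M_{\Lambda_k}Q_{n'}\subseteq\M_{\Lambda_{n'}}$ by the conditional expectation relation) and then extended to the strong closure; the inclusion $L^2(\M_{\Lambda_{n'}},\vphi_{\Lambda_{n'}})_+\subseteq\Cone_{\mathbb{L}}$ is immediate from \eqref{PLDef}; the identity $Q_{n'}\Cone_{\mathbb{L}}=L^2(\M_{\Lambda_{n'}},\vphi_{\Lambda_{n'}})_+$ follows by the same projection-and-closure scheme used for self-duality, applying $P_{n',k}L^2(\M_{\Lambda_k},\vphi_{\Lambda_k})_+=L^2(\M_{\Lambda_{n'}},\vphi_{\Lambda_{n'}})_+$ for $k\ge n'$ and the continuity of $Q_{n'}$.
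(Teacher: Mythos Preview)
Your proposal is correct and covers the same ground as the paper's proof, but your argument for the self-duality of $\Cone_{\mathbb{L}}$ is genuinely different and, in fact, more direct. The paper proceeds by picking an approximating sequence $\eta_\ell\to\eta$ with each $\eta_\ell$ in some finite level, decomposing each $\eta_\ell=\eta^{\rm R}_{\ell,+}-\eta^{\rm R}_{\ell,-}+\im(\eta^{\rm I}_{\ell,+}-\eta^{\rm I}_{\ell,-})$ via the canonical decomposition in the finite self-dual cone $\Cone_{\Lambda_{n(\ell)}}$, arguing (via the nesting $\Cone_{\Lambda_m}\subseteq\Cone_{\Lambda_n}$ and the $1$-Lipschitz property of the metric projection) that the four pieces converge, and then deriving a contradiction from $\eta\in\Cone_{\mathbb{L}}^\dagger$ unless three of the four limits vanish. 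Your route bypasses this decomposition entirely: you observe that for $\eta\in\Cone_{\mathbb{L}}^\dagger$ and any $\xi$ in the level-$n$ cone one has $\la Q_n\eta|\xi\ra=\la\eta|\xi\ra\ge 0$, invoke finite self-duality to place $Q_n\eta$ in the level-$n$ cone, and conclude by $Q_n\uparrow 1$ and closedness of $\Cone_{\mathbb{L}}$. This is shorter and uses only the inclusion of the finite cones into $\Cone_{\mathbb{L}}$ together with $Q_n\to 1$; the compatibility identity $P_{mn}\Cone_{\Lambda_n}=\Cone_{\Lambda_m}$ you single out is not actually needed at this step (it reenters for the dashed arrow).

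For the remaining standard-form axioms the paper simply asserts that one ``readily confirms'' them once self-duality is in hand, while you sketch the density/continuity verification explicitly; both treatments are at the same level of detail, and in particular neither gives a fully self-contained proof of the reverse inclusion $\M_{\mathbb{L}}'\subseteq J_{\mathbb{L}}\M_{\mathbb{L}}J_{\mathbb{L}}$ (which ultimately rests on the general theory of inductive limits of standard forms, or equivalently on the existence of a cyclic separating vector at each finite level together with Tomita--Takesaki). Your treatment of the chain $F(\M_{\Lambda_{n'}},\vphi_{\Lambda_{n'}})\dashleftarrow \ilim F(\M_{\Lambda_n},\vphi_{\Lambda_n})$ matches the paper's.
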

\begin{proof}
First, we will show that $\Cone_{\mathbb{L}}$ is a self-dual cone.
Let $\Cone^{\dagger}_{\mathbb{L}}$ be the dual cone of $\Cone_{\mathbb{L}}$: $\Cone_{\mathbb{L}}^{\dagger}=\{\eta\in \X_{\mathbb{L}} : \la \xi|\eta\ra\ge 0\, \forall \xi\in \Cone_{\mathbb{L}}\}$.
We readily confirm that $\Cone_{\mathbb{L}} \subseteq \Cone_{\mathbb{L}}^{\dagger}$. 
Let us prove the inverse inclusion relation. 
In what follows, we identify $\X_{\Lambda_n}$ with $\theta_{\Lambda_n}\X_{\Lambda_n}$.
Hence, $\X_{\Lambda_n}$ can be regarded as a closed subspace of $\X_{\mathbb{L}}$. In addition,
by (ii) of Lemma \ref{PropIndLim}, $\bigcup_n \X_{\Lambda_n}$ is dense in $\X_{\mathbb{L}}$.
For simplicity of presentation, let us set $\Cone_{\Lambda}=L^2(\M_{\Lambda}, \vphi_{\Lambda})_+$.
 Take $\eta\in \Cone_{\mathbb{L}}^{\dagger}$ arbitrarily. There exists a sequence $\{\eta_\ell\in \bigcup_n \X_{\Lambda_n} : \ell \in \BbbN\}$ such that $\eta_{\ell} \to \eta$ as $\ell\to \infty$.
For each $\ell\in \BbbN$, we can choose $n(\ell) \in\BbbN$ such that $\eta_{\ell}\in \X_{\Lambda_{n(\ell)}}$. Because $\Cone_{\Lambda_{n(\ell)}}$ is a self-dual cone in $\X_{\Lambda_{n(\ell)}}$, we can decompose 
$\eta_{\ell}$ as $\eta_{\ell}=\eta^{\rm R}_{\ell, +}-\eta^{\rm  R}_{\ell, -}+\im (\eta^{\rm I}_{\ell, +}-\eta^{\rm I}_{\ell, -})$, where $\eta^{\rm R}_{\ell, \pm}$ and $\eta^{\rm I}_{\ell, \pm}$ satisfy
$\eta^{\rm R}_{\ell, \pm}, \eta^{\rm I}_{\ell, \pm}\in \Cone_{\Lambda_{n(\ell)}}$ and 
$\la\eta^{\rm R}_{\ell, +}|\eta^{\rm R}_{\ell, -}\ra=0=\la\eta^{\rm I}_{\ell, +}|\eta^{\rm I}_{\ell, -}\ra $.
Using the fact $\Cone_{\Lambda_m} \subseteq \Cone_{\Lambda_n}$ provided that $m<n$, we find that 
$\{\eta^{\rm R}_{\ell, \pm} : \ell \in \BbbN\}$ and $\{\eta^{\rm I}_{\ell, \pm} : \ell\in \BbbN\}$
 are convergent sequences. Now we set 
 $\eta^{\rm R}_{\pm}=\lim_{\ell\to \infty}\eta^{\rm R}_{\ell, \pm}$ and $\eta^{\rm I}_{\pm}=\lim_{\ell\to \infty}\eta^{\rm I}_{\ell, \pm}$. 
We claim that $
 \eta^{\rm R}_{-}=\eta^{\rm I}_{+}=\eta^{\rm I}_{-}=0
$. To see this, assume first that $\eta^{\rm I}_{-}\neq 0$. Trivially, $\eta^{\rm I}_{-} \in \Cone_{\mathbb{L}}$, which implies that $\la\eta|\eta^{\rm I}_{ -}\ra\ge 0$.
On the other hand, we observe that 
\be
\la \eta^{\rm I}_{-}|\eta\ra=\lim_{\ell\to \infty} \la\eta^{\rm I}_{\ell, -}|\eta_{\ell}\ra
=\la \eta^{\rm I}_{-}| \eta^{\rm R}_{+}-\eta^{\rm R}_{-}\ra+(-\im) \|\eta^{\rm I}_-\|^2,
\ee
which  is a contradiction.  By  using similar arguments, we can also show that  $ \eta^{\rm R}_{-}=\eta^{\rm I}_{+}=0$. Hence, $\eta=\eta^{\rm R}_{+} \in \Cone_{\mathbb{L}}$, which implies that $\Cone_{\mathrm{L}}^{\dagger} \subseteq \Cone_{\mathbb{L}}$.

Because $\Cone_{\mathbb{L}}$ is a self-dual cone, we readily confirm that the quadruple $\{\M_{\mathbb{L}}, \X_{\mathbb{L}},  \Cone_{\mathbb{L}}, J_{\mathbb{L}}\}$ is a standard form of $\M_{\mathbb{L}}$.

The property $F(\M_{\Lambda_n}, \vphi_{\Lambda_n}) \longleftarrow F(\M_{\Lambda_{n\rq{}}}, \vphi_{\Lambda_{n\rq{}}})$ follows from \eqref{ChainX}.
Let $P_{\Lambda_{n'}\mathbb{L}}$ be the orthogonal projection from $\X_{\mathbb{L}}$ to $\X_{\Lambda_{n'}}$.
By the construction of $\M_{\mathbb{L}}$, we have $\M_{\Lambda_{n'}}=P_{\Lambda_{n'}\mathbb{L}} \M_{\mathbb{L}} P_{\Lambda_{n'}\mathbb{L}}$. Furthermore, we have $\Cone_{\Lambda_{n'}} \subset \Cone_{\mathbb{L}}$ and $\Cone_{\Lambda_{n'}} =P_{\Lambda_{n'}\mathbb{L}}\Cone_{\mathbb{L}}$.
Hence, we conclude that $F(\M_{\Lambda_{n\rq{}}}, \vphi_{\Lambda_{n\rq{}}})
\dashleftarrow  \ilim F(\M_{\Lambda_n}, \vphi_{\Lambda_n})$.
\end{proof}
The standard form $\ilim F(\M_{\Lambda_n}, \vphi_{\Lambda_n})$ is thought to describe the magnetic properties of the macroscopic system. Theorem \ref{MacroFinite} implies that the magnetic properties of the macroscopic system and those of the finite systems are consistently connected.

Given a $\Lambda\in F_{\mathbb{L}}$, let $\psi_{\Lambda}$ be the normalized ground state of $H_{\Lambda, M=0}$. 
By Lemma \ref{PropIndLim}, we can regard $\psi_{\Lambda_n}$ as a unit vector in $\X_{\mathbb{L}}$.
Hence, for each $n\in \BbbN$, we  can  define the state $\zeta_{\Lambda_n}$ on $\M_{\mathbb{L}}$ by $\zeta_{\Lambda_n}(x):=\la \psi_{\Lambda_n}| x \psi_{\Lambda_n}\ra\, (x\in \M_{\mathbb{L}})$.
Because the unit
ball of the dual of $\M_{\mathbb{L}}$ is compact in the weak-$*$ topology \cite[Theorem IV.21]{Reed1981}, there
is a convergent subsequence of $\{\zeta_{\Lambda_n} : n\in \BbbN\}$. The state $\zeta_{\mathbb{L}}$ is then defined as the weak-$*$
limit of a convergent subsequence and can be regard as  a ground state of  the infinite system.\footnote{
For mathematical details of infinite fermion systems, see, e.g.,  \cite{ARAKI2003,Matsui1996}. This paper considers more general systems than those treated in \cite{ARAKI2003,Matsui1996}, and a rigorous study of the ground-state properties of such systems has yet to be explored.
}
In what follows, we denote by $\{\zeta_{\Lambda_{n_k}} : n_k\in \BbbN\}$ an arbitrarily fixed subsequence   of $\{\zeta_n : n\in \BbbN\}$ defining $\zeta_{\mathbb{L}}$.

\begin{Prop}\label{Inherit}
We have the following:
\begin{itemize}
\item[\rm (i)] $\displaystyle \lim_{n_k\to \infty} \zeta_{\Lambda_{n_k}}(N_{\Lambda_{n_k}})/|\Lambda_{n_k}|=\varrho$, where
$N_{\Lambda}$ is the electron number operator: $N_{\Lambda}=\sum_{x\in \Lambda} n_x$. 
\item[\rm (ii)] $\displaystyle 
\lim_{n_k\to \infty }\zeta_{\Lambda_{n_k}}({\bs S}_{\Lambda_{n_k}}^2)/|\Lambda_{n_k}|^2=s^2
$.
\item[\rm (iii)] For all $x\in \M_{\mathbb{L}}$, $\zeta_{\mathbb{L}}(x J_{\mathbb{L}} xJ_{\mathbb{L}}) \ge 0$ holds.
\end{itemize}
\end{Prop}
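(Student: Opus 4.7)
The plan is to handle the three parts in order, with parts (i) and (ii) being essentially direct computations and part (iii) being the analytic heart of the statement.

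For part (i), I would note that by construction each $\psi_{\Lambda_n}$ lies in the fixed-particle subspace $\F_{\Lambda_n,[\varrho|\Lambda_n|]}$, so it is an eigenvector of $N_{\Lambda_n}$ with eigenvalue $[\varrho|\Lambda_n|]$. Hence $\zeta_{\Lambda_n}(N_{\Lambda_n})=[\varrho|\Lambda_n|]$, and dividing by $|\Lambda_n|$ and letting $n\to\infty$ yields $\varrho$. For part (ii), since $\bs H$ is adapted to $O_{\varrho}$, Proposition \ref{GSinG} and Corollary \ref{StaGsFixH} ensure that $\psi_{\Lambda_n}$ has total spin $S_{H_{\Lambda_n}}=S(\M_{\Lambda_n},\vphi_{\Lambda_n})$. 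By the hypothesis that $O_{\varrho}$ induces a magnetic order of spin density $s$, one has $S_{H_{\Lambda_n}}=s|\Lambda_n|+o(|\Lambda_n|)$ along the specified exhausting sequence, and thus $\zeta_{\Lambda_n}(\bs S^2_{\Lambda_n})/|\Lambda_n|^2 = S_{H_{\Lambda_n}}(S_{H_{\Lambda_n}}+1)/|\Lambda_n|^2\to s^2$.

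For part (iii), the key observation is that $\psi_{\Lambda_{n_k}}$ lies in the macroscopic self-dual cone $\Cone_{\mathbb{L}}$. Indeed, condition (iii) of Definition \ref{DefAHamiC} gives ergodicity of $\{e^{-\beta H_{\Lambda_n,0}}\}_{\beta\ge 0}$ w.r.t.\ $L^2(\M_{\Lambda_n}[0],\vphi_{\Lambda_n,0})_+$, so Theorem \ref{pff} allows us to choose $\psi_{\Lambda_n}$ strictly positive w.r.t.\ $L^2(\M_{\Lambda_n}[0],\vphi_{\Lambda_n,0})_+$. Via the inclusions $L^2(\M_{\Lambda_n}[0],\vphi_{\Lambda_n,0})_+\subseteq L^2(\M_{\Lambda_n},\vphi_{\Lambda_n})_+\subseteq \Cone_{\mathbb{L}}$ (the second provided by Theorem \ref{MacroFinite}), I obtain $\psi_{\Lambda_{n_k}}\in \Cone_{\mathbb{L}}$. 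Applying Proposition \ref{AJAJP} to the standard form $\ilim F(\M_{\Lambda_n},\vphi_{\Lambda_n})$ then shows $xJ_{\mathbb{L}}xJ_{\mathbb{L}}\unrhd 0$ w.r.t.\ $\Cone_{\mathbb{L}}$ for every $x\in \M_{\mathbb{L}}$, so $xJ_{\mathbb{L}}xJ_{\mathbb{L}}\psi_{\Lambda_{n_k}}\in\Cone_{\mathbb{L}}$, and self-duality yields
\begin{equation*}
\la \psi_{\Lambda_{n_k}}\,|\,xJ_{\mathbb{L}}xJ_{\mathbb{L}}\psi_{\Lambda_{n_k}}\ra\ge 0\qquad(k\in\BbbN).
\end{equation*}

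The one delicate point is passing to the limit: since $xJ_{\mathbb{L}}xJ_{\mathbb{L}}\notin \M_{\mathbb{L}}$ in general, the weak-$*$ convergence of $\zeta_{\Lambda_{n_k}}$ on $\M_{\mathbb{L}}$ does not apply verbatim. My plan is to regard each $\zeta_{\Lambda_{n_k}}$ as the restriction of the vector state $\omega_{\Lambda_{n_k}}(\cdot)=\la\psi_{\Lambda_{n_k}}|\cdot\,\psi_{\Lambda_{n_k}}\ra\in \mathscr{L}(\X_{\mathbb{L}})^*$, and to extract from $\{n_k\}$ a further subsequence, via a diagonal argument over a countable weak-$*$ dense subset of $\M_{\mathbb{L}}$, along which $\omega_{\Lambda_{n_k}}(xJ_{\mathbb{L}}xJ_{\mathbb{L}})$ converges for every $x$ in that dense set; the limit then extends uniquely by continuity, and serves as the definition of $\zeta_{\mathbb{L}}(xJ_{\mathbb{L}}xJ_{\mathbb{L}})$. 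Termwise non-negativity is preserved in the limit. This compactness/refinement step is the main technical obstacle, since the outer algebra is non-separable in general; the remainder of the proof is essentially immediate from positivity of the cone and the OPOI property of $xJ_{\mathbb{L}}xJ_{\mathbb{L}}$.
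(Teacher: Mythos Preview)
Your proofs of parts (i) and (ii) coincide with the paper's. For part (iii), your strategy differs from the paper's in how the finite-$k$ positivity is obtained: you work directly in the macroscopic standard form $\ilim F(\M_{\Lambda_n},\vphi_{\Lambda_n})$ and invoke Proposition~\ref{AJAJP} there, whereas the paper stays in finite volume---restricting first to local $x\in\bigcup_n\M_{\Lambda_n}$, using the compatibility $J_{\mathbb{L}}\vartheta_{\Lambda_n}=\vartheta_{\Lambda_n}J_{\Lambda_n}$ of \eqref{JMacFi} to replace $J_{\mathbb{L}}$ by $J_{\Lambda_{n_k}}$ for large $k$, and then invoking Proposition~\ref{AJAJP} at the level of $F(\M_{\Lambda_{n_k}},\vphi_{\Lambda_{n_k}})$. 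Your route is slightly more direct and immediately handles general $x\in\M_{\mathbb L}$ at the finite-$k$ stage; the paper's route is a completely elementary reduction but must in principle be supplemented by a density argument to reach general $x$.

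You correctly flag the interpretational issue in passing to the limit: $xJ_{\mathbb L}xJ_{\mathbb L}\notin\M_{\mathbb L}$, so weak-$*$ convergence of $\zeta_{\Lambda_{n_k}}$ on $\M_{\mathbb L}$ does not by itself give meaning to $\zeta_{\mathbb L}(xJ_{\mathbb L}xJ_{\mathbb L})$. The paper does not address this point---it simply writes ``taking the limit $k\to\infty$''---so your proof is, in this respect, more careful than the paper's own. Your proposed diagonal/compactness fix is a reasonable way to proceed; a clean variant is to extract a weak-$*$ limit of the vector states $\omega_{\Lambda_{n_k}}$ on the separable $C^*$-subalgebra generated by $\bigcup_n\M_{\Lambda_n}$ and $J_{\mathbb L}\big(\bigcup_n\M_{\Lambda_n}\big)J_{\mathbb L}$, which covers all local $x$. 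The extension to arbitrary $x\in\M_{\mathbb L}$ then depends on how one interprets the symbol $\zeta_{\mathbb L}(xJ_{\mathbb L}xJ_{\mathbb L})$, and neither your sketch nor the paper fully pins this down.
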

\begin{proof}

The property (i) is a direct consequence of the setting that $\X_{\Lambda} \in \mathscr{I}(\h_{\Lambda, [\varrho |\Lambda|]})$ for all $ \Lambda\in  P(\mathbb{L})$. 
The property (ii) follows from Definition \ref{DefOrderM} and the fact that ${\bs S}^2_{\Lambda_n} \psi_{\Lambda_n}=S_{\Lambda_n}(S_{\Lambda_n}+1) \psi_{\Lambda_n}$ with $S_{\Lambda_n}=S(\M_{\Lambda_n}, \vphi_{\Lambda_n})$.

(iii) For any $x\in \bigcup_n \M_{\Lambda_n}$, we find that 
$\zeta_{\Lambda_{n_k}}(x_{n_k} J_{\Lambda_{n_k}} x J_{\Lambda_{n_k}}) \ge 0$ for sufficiently large $k$, where
$\{n_k\}_k$ is the subsequence to be chosen when defining $\zeta_{\mathbb{L}}$.
Taking the limit $k\to \infty$ and applying \eqref{JMacFi}, we obtain the property (iii).
\end{proof}

\begin{Rem}\upshape
\begin{itemize}
\item
Let us recall that the following  holds for finite systems:
 $\zeta_{\Lambda_n}(x J_{\Lambda_n} xJ_{\Lambda_n}) \ge 0\, (x\in \M_{\Lambda_n})$.
As we have repeatedly emphasized, the magnetic properties of the ground states are characterized by positivity. Therefore, the claim  (iii) of Proposition \ref{Inherit} asserts that the positivity describing the magnetic properties propagates from the finite systems to  the  macroscopic system consistently.

\item
Note that we also allow for the $s = 0$ cases in our definition of the magnetic orders. This is why we choose the ground states of the Hamiltonians restricted to the $M = 0$ subspace. The magnetic orders in the $s = 0$ cases may not be measurable as the long-range orders. However, as can be seen from the previous discussion, the ground states have positivity which characterizes the magnetic properties, and this justifies the use of the term  \lq\lq{}order\rq\rq{} even in the $s = 0$ cases.

\end{itemize}
\end{Rem}

Let ${\sf H}^{\prime}_{\varrho}=\{\X^{\prime}_{\Lambda} \in \mathscr{I}(\h_{\Lambda, [\varrho |\Lambda|]}) : \Lambda\in  P(\mathbb{L}) \}$ be another  net of IEE spaces. Assume that we are given a magnetic system 
$O^{\prime}_{\varrho}=\{\{\M^{\prime}_{\Lambda}, \vphi^{\prime}_{\Lambda}\} \in \mathscr{Q}(\X^{\prime}_{\Lambda}) : \X^{\prime}_{\Lambda}\in {\sf H}^{\prime}_{\varrho}, \  \Lambda\in  F_{\mathbb{L}} \}$
  associated with ${\sf H}^{\prime}_{\varrho}$.
  \begin{Thm}\label{MacroDiagram}
  Suppose that $O^{\prime}_{\varrho} \LRA O_{\varrho}$. Then
  we have $
  \ilim F(\M^{\prime}_{\Lambda_n}, \vphi^{\prime}_{\Lambda_n}) \dasharrow\ilim F(\M_{\Lambda_n}, \vphi_{\Lambda_n})
  $. Hence, if $m<n$, then the following diagram is commutative:
  \be
\begin{tikzcd}
F(\M_{\Lambda_m}^{\prime}, \vphi_{\Lambda_m}^{\prime}) \arrow[d]  &\arrow[l] F(\M_{\Lambda_{n}}^{\prime}, \vphi_{\Lambda_{n}}^{\prime}) \arrow[d]  &\arrow[l, dashed]    \ilim F(\M^{\prime}_{\Lambda_n}, \vphi^{\prime}_{\Lambda_n}) \arrow[d,dashed]\\
  F(\M_{\Lambda_m}, \vphi_{\Lambda_m})  &\arrow[l]F(\M_{\Lambda_{n}}, \vphi_{\Lambda_{n}})  &\arrow[l,dashed]  \ilim F(\M_{\Lambda_n}, \vphi_{\Lambda_n})
\end{tikzcd}\label{MacDiagComm}
\ee
  \end{Thm}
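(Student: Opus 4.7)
My plan is to split the argument into two parts: (A) establishing the relation $\ilim F(\M'_{\Lambda_n}, \vphi'_{\Lambda_n}) \dasharrow \ilim F(\M_{\Lambda_n}, \vphi_{\Lambda_n})$, and (B) checking commutativity of \eqref{MacDiagComm}. For (A), I first transfer the finite-level data to the inductive limit. The hypothesis $O'_\varrho \LRA O_\varrho$ supplies, for each $n$, an isometric embedding $\iota_n : \X_{\Lambda_n} \hookrightarrow \X'_{\Lambda_n}$ with corresponding orthogonal projection $P_{\Lambda_n}$ satisfying $\M_{\Lambda_n} = P_{\Lambda_n} \M'_{\Lambda_n} P_{\Lambda_n}$, $\Cone_{\Lambda_n} \subseteq \Cone'_{\Lambda_n}$ and $P_{\Lambda_n} \Cone'_{\Lambda_n} = \Cone_{\Lambda_n}$. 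The commutative diagram \eqref{CommD} built into Definition \ref{ConsisOO} ensures that $\iota_{n+1}\circ\vartheta_{\Lambda_{n+1}\Lambda_n} = \vartheta'_{\Lambda_{n+1}\Lambda_n}\circ\iota_n$, so by the universal property of inductive limits (Lemma \ref{PropIndLim}) the family $\{\iota_n\}$ assembles to a unique isometric embedding $\Theta: \X_{\mathbb{L}} \to \X'_{\mathbb{L}}$ characterized by $\Theta\circ\vartheta_{\Lambda_n} = \vartheta'_{\Lambda_n}\circ\iota_n$. I let $Q$ denote the orthogonal projection of $\X'_{\mathbb{L}}$ onto $\Theta\X_{\mathbb{L}}$; the intertwining identity $Q\vartheta'_{\Lambda_n} = \vartheta'_{\Lambda_n} P_{\Lambda_n}$ then follows by a direct computation on the dense union $\bigcup_n\vartheta'_{\Lambda_n}\X'_{\Lambda_n}$.

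Next I verify the three axioms defining $\dasharrow$. For $\Cone_{\mathbb{L}}\subseteq\Cone'_{\mathbb{L}}$, the finite-level inclusions $\Cone_{\Lambda_n}\subseteq\Cone'_{\Lambda_n}$ pass directly to closures in \eqref{PLDef}. For $\Cone_{\mathbb{L}} = Q\Cone'_{\mathbb{L}}$, the inclusion $\subseteq$ is immediate since $Q$ acts as the identity on $\Theta\X_{\mathbb{L}}\supseteq\Cone_{\mathbb{L}}$; for $\supseteq$, I take $\xi'\in\Cone'_{\mathbb{L}}$, approximate it by a sequence in $\bigcup_n\Cone'_{\Lambda_n}$, apply the continuous operator $Q$, and use $Q\Cone'_{\Lambda_n} = P_{\Lambda_n}\Cone'_{\Lambda_n} = \Cone_{\Lambda_n}$ to land in the closure of $\bigcup_n\Cone_{\Lambda_n} = \Cone_{\mathbb{L}}$. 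For $\M_{\mathbb{L}} = Q\M'_{\mathbb{L}}Q$, the inclusion $\M_{\mathbb{L}}\subseteq Q\M'_{\mathbb{L}}Q$ is clear from $\M_{\Lambda_n} = P_{\Lambda_n}\M'_{\Lambda_n}P_{\Lambda_n}$ and the double-commutant definition \eqref{MLDef}; the reverse requires more care and is the technical crux — I handle it by exhibiting $x\mapsto QxQ$ as the restriction to $\Theta\X_{\mathbb{L}}$ of the normal compression, which is ultraweakly continuous, hence maps $(\bigcup_n\M'_{\Lambda_n})''$ into $(\bigcup_n Q\M'_{\Lambda_n}Q)'' = (\bigcup_n\M_{\Lambda_n})'' = \M_{\mathbb{L}}$. (Compatibility with the modular conjugations, namely $J_{\mathbb{L}} = QJ'_{\mathbb{L}}Q\restriction\Theta\X_{\mathbb{L}}$, follows from \eqref{JMacFi} and the analogous identity $J_{\Lambda_n} = P_{\Lambda_n}J'_{\Lambda_n}\iota_n$ supplied by $F(\M'_{\Lambda_n},\vphi'_{\Lambda_n})\LRA F(\M_{\Lambda_n},\vphi_{\Lambda_n})$.)

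Part (B) is now formal. The horizontal arrows of \eqref{MacDiagComm} are instances of the chain from Theorem \ref{MacroFinite} applied separately to the unprimed and primed systems; the left vertical arrows encode the hypothesis $F(\M'_{\Lambda_k},\vphi'_{\Lambda_k})\LRA F(\M_{\Lambda_k},\vphi_{\Lambda_k})$ for $k=m,n$; and the rightmost vertical arrow is the $\dasharrow$ relation just constructed in part (A). Commutativity of the left square is then the diagram \eqref{CommD} provided by Definition \ref{ConsisOO}, while commutativity of the right square is the intertwining identity $Q\vartheta'_{\Lambda_n} = \vartheta'_{\Lambda_n}P_{\Lambda_n}$ established in the first paragraph. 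The main obstacle throughout is the $\M_{\mathbb{L}} = Q\M'_{\mathbb{L}}Q$ step, which requires that compression by the projection $Q$ be compatible with the weak-operator closure defining the inductive-limit von Neumann algebras; everything else reduces to carefully tracking the finite-level consistency relations through the inductive-limit construction.
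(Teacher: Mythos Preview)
Your proposal is correct and follows essentially the same route as the paper's own proof: extract the finite-level data $\M_{\Lambda_n}=P_n\M'_{\Lambda_n}P_n$, $\Cone_{\Lambda_n}\subseteq\Cone'_{\Lambda_n}$, $\Cone_{\Lambda_n}=P_n\Cone'_{\Lambda_n}$ from the hypothesis $O'_\varrho\LRA O_\varrho$, and then pass these through the inductive-limit construction to obtain the corresponding identities for $P_{\mathbb L}$ (your $Q$), $\M_{\mathbb L}$, $\Cone_{\mathbb L}$.

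The difference is one of explicitness. The paper simply asserts that ``combining these properties with the definition of the inductive limit of Hilbert spaces'' yields $\M_{\mathbb L}=P_{\mathbb L}\M'_{\mathbb L}P_{\mathbb L}$ and $\Cone_{\mathbb L}=P_{\mathbb L}\Cone'_{\mathbb L}$, with no further argument. You, by contrast, actually build the embedding $\Theta$ via the universal property, derive the intertwining relation $Q\vartheta'_{\Lambda_n}=\vartheta'_{\Lambda_n}P_{\Lambda_n}$, and --- most importantly --- flag and resolve the nontrivial direction $Q\M'_{\mathbb L}Q\subseteq\M_{\mathbb L}$ using ultraweak continuity of the compression map. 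The paper does not justify this step at all, so your treatment is genuinely more complete on the point you yourself identify as the technical crux. Your handling of part~(B) via the intertwining identity and Definition~\ref{ConsisOO} is likewise more explicit than the paper, which omits any separate discussion of commutativity.
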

  \begin{proof}
  Because $O^{\prime}_{\varrho} \LRA O_{\varrho}$, we have the following:
  \begin{itemize}
  \item $\X_{\Lambda_n}$ is a closed subspace of $\X'_{\Lambda_n}$.
  \item $\M_{\Lambda_n}=P_n \M'_{\Lambda_n} P_n$, where $P_n$ is the orthogonal projection from 
  $\X'_{\Lambda_n}$ to $\X_{\Lambda_n}$.
  \item $\Cone_{\Lambda_n} \subseteq \Cone'_{\Lambda_n}$ and $\Cone_{\Lambda_n}=P_n\Cone'_{\Lambda_n}$, where $\Cone_{\Lambda}=L^2(\M_{\Lambda}, \vphi_{\Lambda})_+$ and 
  $\Cone_{\Lambda}^{\prime}=L^2(\M_{\Lambda}^{\prime}, \vphi_{\Lambda}^{\prime})_+.$
  \end{itemize}
  Combining these properties with the definition of the inductive limit of Hilbert spaces, we obtain the following:
  \begin{itemize}
  \item $\M_{\mathbb{L}}=P_{\mathbb{L}} \M'_{\mathbb{L}} P_{\mathbb{L}}$, where $P_{\mathbb{L}}$ is the orthogonal projection from $\ilim \X'_{\Lambda_n}$ to $\ilim \X_{\Lambda_n}$.
  \item $\Cone_{\mathbb{L}} \subseteq \Cone'_{\mathbb{L}}$ and $\Cone_{\mathbb{L}}=P_{\mathbb{L}} \Cone'_{\mathbb{L}}$.
  \end{itemize}
  Here, $\M_{\mathbb{L}}$ and $\M'_{\mathbb{L}}$ are defined through  \eqref{MLDef}, and $\Cone_{\mathbb{L}}$ and $\Cone'_{\mathbb{L}}$ are defined through  \eqref{PLDef}.
  Hence, we conclude that $
  \ilim F(\M^{\prime}_{\Lambda_n}, \vphi^{\prime}_{\Lambda_n}) \dasharrow\ilim F(\M_{\Lambda_n}, \vphi_{\Lambda_n})
  $.
  \end{proof}
  
  In this way, macroscopic systems can be naturally included in the picture of the network of positivity propagation that characterizes magnetic orders.

\section{The Marshall--Lieb--Mattis stability classes}\label{Sect5}

\subsection{Overview}
In this section, we introduce the Marshall--Lieb--Mattis (MLM) stability class. A set of theorems on magnetic stability derived from this class is a non-trivial extension of the MLM theorem for the Heisenberg model \cite{Marshall1955,Lieb1962} and Lieb's theorem for the Hubbard model \cite{Lieb1989}. It also gives a unified understanding of the magnetic properties of the ground states of a comprehensive class of models, including the Holstein--Hubbard model and the Kondo lattice model.

To better understand the results of this section, we will now overview the history surrounding the MLM theorem.
In early studies of the antiferromagnetic Heisenberg model, Marshall's finding on the ground state \cite{Marshall1955} is still essential in today's studies of this model.
Mattis and Lieb formulated this result in \cite{Lieb1962} in the best form known today: 
the ground state of the antiferromagnetic Heisenberg model on a connected bipartite lattice is unique\footnote{To be precise, the ground state is unique apart from the trivial $(2S+1)$-fold degeneracy.} and has total spin $S=\big||\Lambda_A|-|\Lambda_B|\big|/2$.
This result is now widely known as the MLM theorem.
In 1989, Lieb discovered influential theorems about the Hubbard model \cite{Lieb1989}. One of his discoveries can be summarized as follows: 
in the half-filled repulsive Hubbard model on a bipartite connected lattice, the ground state is unique\footnote{The same footnote as for the MLM theorem applies.} and  has total spin
 $S=\big||\Lambda_A|-|\Lambda_B|\big|/2$.
It is not a coincidence that this theorem and the MLM theorem are similar. Understanding the deeper reasons for this is one of the goals of this section.
A fundamental idea in the proof of Lieb's theorem is to apply the idea of reflection positivity, used in   axiomatic quantum  field  theory \cite{Osterwalder1973,Osterwalder1975}, to spin space. Today, this method  is called  {\it spin reflection positivity} (SRP).
Lieb's SRP has since been applied to various models of many-electron systems with great success. 
First, an application of spin reflection positivity to the Kondo lattice model was given by Yanagisawa and Shimoi \cite{Yanagisawa1995}. Later, this result was improved by Tsunetsugu \cite{Tsunetsugu1997}.
Then, Ueda {\it et al.} gave an adaptation of  SRP to the periodic Anderson model \cite{Ueda1992}.
Freericks and Lieb applied  SPR to systems with electron-phonon  interactions \cite{Freericks1995}. 
Later, the author reformulated  SPR using order-preserving operator inequalities, which enabled the analysis of a broader class of models, including many-electron systems interacting with phonons and quantized radiation fields \cite{Miyao2019}.

As mentioned above, Lieb's SRP has a wide range of applications.\footnote{For more examples that could not be given here, see \cite{Shen1998, Tian2004} and the references therein. } The purpose of this section is to provide a mathematical framework that can describe these many results in a unified way.
This goal will be achieved using the theory constructed in Sections \ref{Sect3} and \ref{Sect4}.

\subsection{General results}\label{GeneMLM}

Let $G=(\mathbb{L}, E)$ be an infinite  connected bipartite graph. We  denote the bipartite structure of $G$ by 
$\mathbb{L}=\mathbb{L}_A\sqcup_E \mathbb{L}_B$; this means that  $\mathbb{L}$ can be divided into two disjoint and independent sets $\mathbb{L}_A$ and $\mathbb{L}_B$ such that every edge connects a vertex in $\mathbb{L}_A$ to one in $\mathbb{L}_B$. 
Recalling the definition of $F_{\mathbb{L}}$, i.e., \eqref{DefF(L)}, we see that, for each $\Lambda\in F_{\mathbb{L}}$, 
the subgraph $G_{\Lambda}=(\Lambda, E_{\Lambda})$ of $G$  satisfies the following properties:
\begin{itemize}
\item $G_{\Lambda}$ is an induced subgraph of $G$, which is connected.
\item If we denote the bipartite structure of $G_{\Lambda}$ as $\Lambda=\Lambda_A\sqcup_{E_{\Lambda}} \Lambda_B$, then $\Lambda_A=\mathbb{L}_A \cap \Lambda$ and $\Lambda_B=\mathbb{L}_B\cap \Lambda$.
\end{itemize}

For later convenience, we set
\be
F_{\mathbb{L}}^{\rm (e)}=\{\Lambda\in F_{\mathbb{L}} : \mbox{$|\Lambda|$ is an even number}\},
\ee
where $|\Lambda|$ stands for the cardinality of $\Lambda$.
Given a $\Lambda\in P(\mathbb{L})$, define
\be
\h_{\Lambda}=Q_{\Lambda} \F_{\Lambda, N=|\Lambda|}, \label{HSingE}
\ee
where $Q_{\Lambda}=\prod_{x\in \Lambda} (n_{x\up}-n_{x\down})^2$ and $\F_{\Lambda, N}$ is defined by \eqref{AntiSymm}. Because $Q_{\Lambda}$ commutes with $S_x^{(i)}\, (i=1, 2, 3)$, $\h_{\Lambda}$ is reducible.
The Hilbert space $\h_{\Lambda}$ represents a  half-filling system with one electron at each site.
In this subsection, we regard $\h_{\Lambda}$ itself as an IEE space associated with $\h_{\Lambda}$.
Let  ${\sf H}_{1/2}^{\rm MLM}$ be the net of IEE spaces given by $
{\sf H}_{1/2}^{\rm MLM}=\{\h_{\Lambda} : \Lambda\in F^{\rm (e)}_{\mathbb{L}}\}
$.  In the following, we introduce a crucial magnetic system associated with ${\sf H}_{1/2}^{\rm MLM}$; see Definition \ref{DefMLMcl}.

Define
\be
|\omega\ra_{\Lambda}=(-1)^{|\Lambda_B|}\Bigg[\prod'_{x\in \Lambda} c_{x\down}^*\Bigg]|\varnothing\ra_{\Lambda},
\ee
where $|\varnothing\ra_{\Lambda}$ is the fermionic Fock vacuum in  $\F_{\Lambda}:=\bigoplus_N \F_{\Lambda, N}$, and $\prod'_{x\in \Lambda}$ indicates the product taken over all vertex in $\Lambda$ with an arbitrarily fixed order.
For  given $X, Y\subseteq \Lambda$, define the vector
\be
|X, Y\ra_{\Lambda}=(-1)^{|\overline{Y}\cap \Lambda_B|}\Bigg[\prod'_{x\in X}c_{x\up}^* \Bigg] \Bigg[\prod'_{y\in \overline{Y}}c_{y\down}\Bigg]|\omega\ra_{\Lambda}, \label{DefCONS}
\ee
where $\overline{Y}=\Lambda\setminus Y$.
The vector $|X, Y\ra_{\Lambda}$ represents a state in which the regions $X$ and $Y$ are occupied by electrons with up-spin and down-spin, respectively. This definition may seem a little odd, but it is convenient because it is consistent with the hole-particle transformation we will introduce later.
Then $\{|X, \overline{X}\ra_{\Lambda} : X\subseteq \Lambda\}$ is a complete orthonormal system (CONS) of $\h_{\Lambda}$.
Note that, for any $X\subseteq \Lambda$, 
the vector $|X, \overline{X}\ra_{\Lambda}$  represents a state in which each vertex in  $\Lambda$ is occupied by a single electron; in addition, if $|\Lambda|$ is {\it even}, then by choosing $X\subseteq \Lambda$ such that  $|X|=|\Lambda|/2$,  we see that  $|X, \overline{X}\ra_{\Lambda}$ belongs to the  $M=0$-subspace.
This fact is essential when using {\it spin reflection positivity}, explaining why we need to introduce  $F^{\rm (e)}_{\mathbb{L}}$.

Let $\M_{\Lambda}^{\rm MLM}$ be the abelian von Neumann algebra on $\h_{\Lambda}$ generated by 
diagonal operators associated with $\{|X, \overline{X}\ra_{\Lambda} : X\subseteq \Lambda\}$.\footnote{
Given a CONS $\{x_n : n\in \BbbN\}$ in a Hilbert space $\h$, we say that a linear operator $A$ on $\h$ is {\it diagonal} with respect to $\{x_n : n\in \BbbN\}$, if $A$ can be expressed as $
A=\sum_{n=1}^{\infty} a_n |x_n\ra\la x_n|. 
$}
Note that every element in $\M_{\Lambda}^{\rm MLM}$ commutes with $S_{\Lambda}^{(3)}$, that is, $\M_{\Lambda}^{\rm MLM}$ is a von Neumann subalgebra of $\A_{\Lambda}(\h_{\Lambda})$.
Here, recall that $\A_{\Lambda}(\h_{\Lambda})$ is defined by \eqref{DefAX}.
Define the vector $\xi_{\Lambda}\in \h_{\Lambda}$ by 
\be
\xi_{\Lambda}=\sum_{X\subseteq \Lambda}  |X, \overline{X}\ra_{\Lambda}. \label{TaneVec}
\ee
Corresponding to $\xi_{\Lambda}$, we set $\vphi^{\rm MLM}_{\Lambda}=\vphi_{\xi_{\Lambda}}$, where, for a given vector $\eta$, we set $\vphi_{\eta}(a)=\la \eta|a\eta\ra$.
Trivially, $\vphi^{\rm MLM}_{\Lambda}$ is a faithful semi-finite normal weight on $\M_{\Lambda}^{\rm MLM}$. 
Hence, the pair $\{\M_{\Lambda}^{\rm MLM}, \vphi^{\rm MLM}_{\Lambda}\}$ is an IEE system associated with $\h_{\Lambda}$.
The corresponding standard form $F(\M_{\Lambda}^{\rm MLM}, \vphi^{\rm MLM}_{\Lambda})$ satisfies the following:
\begin{itemize}
\item $\h_{\Lambda}=L^2(\M_{\Lambda}^{\rm MLM}, \vphi_{\Lambda}^{\rm MLM})$.
\item $\psi\in L^2(\M_{\Lambda}^{\rm MLM}, \vphi^{\rm MLM}_{\Lambda})_+$, if  and only if,
$\psi_X\ge 0$ for all $X\subseteq \Lambda$, where $\psi_X= {}_{\Lambda}\la X, \overline{X}|\psi\ra_{\Lambda}$.
\item For each $\psi=\sum_{X\subseteq \Lambda} \psi_X |X, \overline{X}\ra_{\Lambda}$, the action of the modular conjugation $J_{\Lambda}$
 is given by $J_{\Lambda} \psi=\sum_{X\subseteq \Lambda} \psi_X^* |X, \overline{X}\ra_{\Lambda}$, where, for a given $z\in \BbbC$, $z^*$ stands for the complex conjugate of $z$.
\end{itemize}

\begin{Lemm}\label{VolSys}
If $\Lambda, \Lambda'\in F^{\rm (e)}_{\mathbb{L}}$ satisfies $\Lambda\subseteq \Lambda'$, then 
$F(\M_{\Lambda\rq{}}^{\rm MLM}, \vphi^{\rm MLM}_{\Lambda\rq{}})\LRA F(\M_{\Lambda}^{\rm MLM}, \vphi^{\rm MLM}_{\Lambda})$ holds. Hence, the net $O_{1/2}^{\rm MLM}=\big\{\{\M_{\Lambda}^{\rm MLM}, \vphi^{\rm MLM}_{\Lambda}\} : \Lambda\in F^{\rm (e)}_{\mathbb{L}}
\big\}$ is a magnetic system.
\end{Lemm}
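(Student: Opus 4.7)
My plan is to construct an explicit isometric embedding $\iota : \h_\Lambda \hookrightarrow \h_{\Lambda'}$ that intertwines the spin operators on $\Lambda$, and then to verify conditions (i), (ii), (iii) of Definition~\ref{DefConForm}. Setting $n = |\Lambda' \setminus \Lambda|$ (necessarily even, since both $|\Lambda|$ and $|\Lambda'|$ are even), I would define
\[
\iota\bigl(|X, \overline{X}\ra_\Lambda\bigr) := 2^{-n/2} \sum_{X_0 \subseteq \Lambda' \setminus \Lambda} |X \cup X_0, \overline{X \cup X_0}\ra_{\Lambda'} \quad (X \subseteq \Lambda),
\]
and extend by linearity. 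Because the vectors $\{|X \cup X_0, \overline{X \cup X_0}\ra_{\Lambda'}\}_{X, X_0}$ are distinct CONS elements of $\h_{\Lambda'}$, $\iota$ is manifestly isometric; and with the ordering $\prod'_{\Lambda'}$ chosen to extend $\prod'_{\Lambda}$ by appending the vertices of $\Lambda' \setminus \Lambda$ at the end, $\iota$ intertwines every $S_x^{(i)}$ for $x \in \Lambda$, so it is a morphism of IEE spaces.

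The decisive observation is that the cyclic vector $\xi_{\Lambda'}$ underlying $\vphi^{\rm MLM}_{\Lambda'}$ already lies in the image of $\iota$. Partitioning each $Y \subseteq \Lambda'$ uniquely as $Y = (Y \cap \Lambda) \sqcup (Y \setminus \Lambda)$ yields
\[
\xi_{\Lambda'} \;=\; \sum_{Y \subseteq \Lambda'} |Y, \overline{Y}\ra_{\Lambda'} \;=\; \sum_{X \subseteq \Lambda}\sum_{X_0 \subseteq \Lambda' \setminus \Lambda} |X \cup X_0, \overline{X \cup X_0}\ra_{\Lambda'} \;=\; 2^{n/2}\, \iota(\xi_\Lambda).
\]
If $P$ denotes the orthogonal projection from $\h_{\Lambda'}$ onto $\iota(\h_\Lambda)$, this yields $P\xi_{\Lambda'} = \xi_{\Lambda'}$, and hence for every $a \in \M^{\rm MLM}_{\Lambda'}$,
\[
\vphi^{\rm MLM}_{\Lambda'}(PaP) = \la P\xi_{\Lambda'} \,|\, a\, P\xi_{\Lambda'}\ra = \la \xi_{\Lambda'} \,|\, a\, \xi_{\Lambda'}\ra = \vphi^{\rm MLM}_{\Lambda'}(a),
\]
establishing condition (i) of Definition~\ref{DefConForm}.

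Conditions (ii) and (iii) are then routine CONS computations. A vector $\sum_X \psi_X |X, \overline{X}\ra_\Lambda$ lies in $L^2(\M^{\rm MLM}_\Lambda, \vphi^{\rm MLM}_\Lambda)_+$ iff every $\psi_X \ge 0$, and similarly on $\Lambda'$; applying $\iota$ to such a vector produces $2^{-n/2} \sum_Y \psi_{Y \cap \Lambda}\, |Y, \overline{Y}\ra_{\Lambda'}$, whose coefficients remain nonnegative, which gives (ii). For (iii), a direct calculation shows that the $\iota(|X, \overline{X}\ra_\Lambda)$-component of $P\phi$ for $\phi = \sum_Y \phi_Y |Y, \overline{Y}\ra_{\Lambda'} \in \Cone_{\Lambda'}$ equals $2^{-n/2}\sum_{X_0} \phi_{X \cup X_0} \ge 0$, so $P\phi \in \iota(\Cone_\Lambda)$; combined with the trivial inclusion $\iota(\Cone_\Lambda) \subseteq P\Cone_{\Lambda'}$ coming from $P\iota = \iota$, this gives (iii). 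Together with (i) these furnish $F(\M^{\rm MLM}_{\Lambda'}, \vphi^{\rm MLM}_{\Lambda'}) \LRA F(\M^{\rm MLM}_\Lambda, \vphi^{\rm MLM}_\Lambda)$, and the second assertion of the lemma follows because this consistency holds for every pair $\Lambda \subseteq \Lambda'$ in $F_{\mathbb{L}}^{\rm (e)}$.

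The main technical hurdle is the spin-intertwining verification in the first step: one must check that applying $c_{x\up}^{*} c_{x\down}$ to $|X \cup X_0, \overline{X \cup X_0}\ra_{\Lambda'}$ produces a sign depending only on $X$ and $x$, and not on $X_0$. Under the ordering convention above, the $(-1)^{|X_0|}$ factors obtained by anticommuting $c_{x\up}^{*}$ past the $\up$-creators and $c_{x\down}$ past the $\down$-annihilators indexed by $X_0$ cancel exactly, while the change in the Marshall prefactor $(-1)^{|Y \cap \Lambda_B|}$ under $Y \mapsto Y \cup \{x\}$ depends only on whether $x \in \Lambda_B$ (never on $X_0$), so the required sign identity holds. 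Carrying out this bookkeeping carefully is the one delicate point of the argument; once completed, the three verifications above close the proof.
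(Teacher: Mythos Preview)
Your proof is correct and is essentially the paper's argument written in coordinates: the paper packages your embedding as $\kappa\eta=\eta\otimes\tilde\xi_{\Lambda'\setminus\Lambda}$ (with $\tilde\xi_{\Lambda'\setminus\Lambda}=2^{-n/2}\xi_{\Lambda'\setminus\Lambda}$) via the Fock-space tensor identification $\F_{\Lambda'}\cong\F_\Lambda\otimes\F_{\Lambda'\setminus\Lambda}$, which on CONS vectors unwinds to exactly your $\iota$, and then verifies the same three conditions using the projection $P_{\Lambda\Lambda'}=1\otimes|\tilde\xi_{\Lambda'\setminus\Lambda}\ra\la\tilde\xi_{\Lambda'\setminus\Lambda}|$. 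The tensor formalism absorbs your sign bookkeeping into the canonical isomorphism $\F_{\rm F}(\X\oplus\frak Y)\cong\F_{\rm F}(\X)\otimes\F_{\rm F}(\frak Y)$, but the underlying map and verifications are identical.
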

\begin{proof}
First, let us recall a natural identification  of the fermionic Fock spaces: given Hilbert spaces $\X$ and $\frak{Y}$, $\F_{\rm F}(\X\oplus \frak{Y})=\F_{\rm F}(\X)\otimes \F_{\rm F}(\frak{Y})$ holds, where
$\F_{\rm F}(\X)$ stands for the fermionic Fock space over $\X$: $\F_{\rm F}(\X)=\bigoplus_n \bigwedge^n \X$. To be precise, there exists a unitary operator $\tau : \F_{\rm F}(\X\oplus \frak{Y})\to \F_{\rm F}(\X)\otimes \F_{\rm F}(\frak{Y})$ satisfying 
$
\tau|\varnothing \ra_{\X\oplus \frak{Y}}=|\varnothing \ra_{\X} \otimes |\varnothing\ra_{\frak{Y}}
$ and
\be
\tau c_{\X\oplus \frak{Y}}(f\oplus g) \tau^{-1}=c_{\X}(f)\otimes 1+(-1)^{N_{\X}} \otimes c_{\frak{Y}}(g)\ \ (f\in \X, g\in \frak{Y}). \label{IdnAnni}
\ee
Here, $|\varnothing \ra_{\X}$,  $c_{\X}(f)$ and $N_{\X}$ stand for the Fock vacuum,  an  annihilation operator,  and the number operator  in $\F_{\rm F}(\X)$, respectively. Using this fact, we find that $\h_{\Lambda} \otimes \h_{\Lambda'\setminus \Lambda}$ is a subspace of $\h_{\Lambda'}$, provided that $\Lambda\subset \Lambda'$. Indeed, by the above identification, we get $\F_{\Lambda\rq{}}=\F_{\Lambda}\otimes \F_{\Lambda\rq{}\setminus \Lambda}$, where $\F_{\Lambda}$ is the fermionic Fock space over $\ell^2(\Lambda) \oplus \ell^2(\Lambda)$. 
By comparing the $|\Lambda\rq{}|$-electron subspaces on both sides of this equation, we can see that 
$\F_{\Lambda\rq{}, |\Lambda\rq{}|}=\bigoplus_{m+n=|\Lambda\rq{}|} \F_{\Lambda, m} \otimes \F_{\Lambda\rq{}\setminus \Lambda, n}$. Hence, $\F_{\Lambda, |\Lambda|} \otimes \F_{\Lambda\rq{}\setminus \Lambda, |\Lambda\rq{}\setminus \Lambda|}$ is a subspace of $\F_{\Lambda\rq{}, |\Lambda\rq{}|}$.  Because $Q_{\Lambda\rq{}}=Q_{\Lambda} \otimes Q_{\Lambda\rq{}\setminus \Lambda}$
 due to \eqref{IdnAnni}, we conclude the claim.

Set $\tilde{\xi}_{\Lambda}=2^{-|\Lambda|/2} \xi_{\Lambda}$.
Define the isometric linear mapping $\kappa : \h_{\Lambda} \to \h_{\Lambda'}$ by 
$
\kappa \eta=\eta\otimes \tilde{\xi}_{\Lambda'\setminus \Lambda}\, (\eta\in \h_{\Lambda})
$.
Hence, $\h_{\Lambda}$ can be regarded as a subspace of $\h_{\Lambda'}$ by identifying the image $\kappa \h_{\Lambda}$ with $\h_{\Lambda}$.
Furthemore, we can identify $\M^{\rm MLM}_{\Lambda}$ with $\M^{\rm MLM}_{\Lambda} \otimes \Pi_{\Lambda'\setminus \Lambda}$, where $\Pi_{\Lambda' \setminus \Lambda}=|\tilde{\xi}_{\Lambda'\setminus \Lambda}\ra\la \tilde{\xi}_{ \Lambda'\setminus \Lambda} |$.

We denote by $P_{\Lambda\Lambda'}$ the orthogonal projection from $\h_{\Lambda'}$ to $\h_{\Lambda}$.
 Note that $P_{\Lambda\Lambda'}$ can be written as 
\be
P_{\Lambda\Lambda'}=1\otimes \Pi_{\Lambda'\setminus \Lambda}.
\ee
Using this formula, we readily confirm the following:
\begin{itemize}
\item $\M_{\Lambda}^{\rm MLM}=\M_{\Lambda}^{\rm MLM}\otimes \Pi_{\Lambda'\setminus \Lambda}=P_{\Lambda\Lambda'}\M_{\Lambda'}^{\rm MLM}P_{\Lambda\Lambda'}$.
\item $\vphi_{\Lambda'}^{\rm MLM}\circ \mathscr{E}_{\Lambda\Lambda'}=\vphi_{\Lambda'}^{\rm MLM}$,
where $
\mathscr{E}_{\Lambda\Lambda'}(x)=P_{\Lambda\Lambda'}xP_{\Lambda\Lambda'}\ (x\in \M_{\Lambda\rq{}}^{\rm MLM})
$.
\item $P_{\Lambda\Lambda'} L^2(\M_{\Lambda'}^{\rm MLM}, \vphi^{\rm MLM}_{\Lambda'})_+=L^2(\M_{\Lambda}^{\rm MLM}, \vphi^{\rm MLM}_{\Lambda})_+$.
\end{itemize}
Therefore, we conclude the desired assertion in the lemma.
\end{proof}

Lemma \ref{VolSys} allows for the following definition.
\begin{Def}\label{DefMLMcl}
\upshape
We call the magnetic system $O^{\rm MLM}_{1/2}$ in Lemma \ref{VolSys} the {\it  Marshall--Lieb--Mattis(MLM) system}. The stability class, $\mathscr{C}(O^{\rm MLM}_{1/2})$,  
 is called the {\it MLM stability class}.
\end{Def}

The various theorems in this section are derived from the following fundamental theorem:
\begin{Thm}\label{BasicMLM}
We have the following:
\begin{itemize}
\item[\rm (i)]$\displaystyle S(\M^{\rm MLM}_{\Lambda}, \vphi^{\rm MLM}_{\Lambda})=\big||\Lambda_A|-|\Lambda_B|\big|\big/2$ for all $\Lambda\in F^{\rm (e)}_{\mathbb{L}}$. 
\item[\rm (ii)] The mapping $S_{\bullet} : H\in \mathscr{A}_{\Lambda, |\Lambda|}(\M_{\Lambda}^{\rm MLM}, \vphi_{\Lambda}^{\rm MLM})\mapsto S_{H}$ is constant and satisfies $S_{H}=\big||\Lambda_A|-|\Lambda_B|\big|\big/2$ for all $\Lambda\in F^{\rm (e)}_{\mathbb{L}}$. Here, recall the definition of $\mathscr{A}_{\Lambda, N}(\cdots)$, i.e., \eqref{DefALN}.
\item[\rm (iii)] Suppose that we are given a magnetic system $O_{ 1/2}$ in the MLM stability class $\mathscr{C}(O^{\rm MLM}_{1/2})$, i.e.,  $O_{ 1/2} \LRA O^{\rm MLM}_{1/2}$. In addition,  suppose a net of Hamiltonians ${\bs H}$ is adapted to $O_{ 1/2}$. Then,  for all $\Lambda\in F^{\rm (e)}_{\mathbb{L}}$, each  ground state of $H_{\Lambda}\in {\bs H}$ has   total spin $S_{H_{\Lambda}}=\big||\Lambda_{ A}|-|\Lambda_{B}|\big|\big/2$. If there exist a non-negative constant $s$ and an increasing sequence of sets $\{\Lambda_n : n\in \BbbN\}\subset  F^{\rm (e)}_{\mathbb{L}}$ such that $\bigcup_{n=1}^{\infty} \Lambda_n=\mathbb{L}$ and 
$\big||\Lambda_{n, A}|-|\Lambda_{n, B}|\big|=2s|\Lambda_n|+o(|\Lambda_n|)$ as $n\to \infty$, then each NMGS  associated with $\bs H$ exhibits a  magnetic order with a spin density  of $s$.
\item[\rm (iv)] Suppose we are in the same setting as in {\rm (iii)}.
Denote  $O_{1/2} = \big\{\{\M_{\Lambda}, \vphi_{\Lambda}\} : \Lambda\in F_{\mathbb{L}}\big\}$, and let $\ilim F(\M_{\Lambda_n}, \vphi_{\Lambda_n})$ and $
\ilim F(\M^{\rm MLM}_{\Lambda_n}, \vphi^{\rm MLM}_{\Lambda_n})
$ be the standard forms for macroscopic systems as defined in Theorem \ref{MacroFinite}. In this case, the following diagram is commutative for each $m, n\in \BbbN$ with $m<n$:
 \be
\begin{tikzcd}
F(\M_{\Lambda_m}, \vphi_{\Lambda_m}) \arrow[d]  &\arrow[l] F(\M_{\Lambda_{n}}, \vphi_{\Lambda_{n}}) \arrow[d]  &\arrow[l, dashed]    \ilim F(\M_{\Lambda_n}, \vphi_{\Lambda_n}) \arrow[d,dashed]\\
  F(\M^{\rm MLM}_{\Lambda_m}, \vphi^{\rm MLM}_{\Lambda_m})  &\arrow[l]F(\M^{\rm MLM}_{\Lambda_{n}}, \vphi^{\rm MLM}_{\Lambda_{n}})  &\arrow[l,dashed]  \ilim F(\M^{\rm MLM}_{\Lambda_n}, \vphi^{\rm MLM}_{\Lambda_n})
\end{tikzcd}
\ee

\end{itemize}
\end{Thm}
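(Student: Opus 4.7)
The plan is to reduce everything to part (i) and then read off (ii)--(iv) from the general theorems of Sections \ref{Sect3} and \ref{Sect4}. For (i), the strategy is to exhibit a concrete Hamiltonian in $A_{\Lambda,|\Lambda|}(\M^{\rm MLM}_\Lambda,\vphi^{\rm MLM}_\Lambda)$ whose ground-state total spin can be computed by the classical Marshall--Lieb--Mattis theorem, and then to invoke Proposition \ref{GSinG} to conclude that this same value is $S(\M^{\rm MLM}_\Lambda,\vphi^{\rm MLM}_\Lambda)$. The natural choice is the antiferromagnetic Heisenberg Hamiltonian $H^{\rm Heis}_\Lambda=\sum_{\{x,y\}\in E_\Lambda} J_{xy}\,\bs{S}_x\cdot \bs{S}_y$ with $J_{xy}>0$, acting on $\h_\Lambda$. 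Conditions (i) and (ii) of Definition \ref{DefAHamiC} are immediate since $H^{\rm Heis}_\Lambda$ is a sum of $\mathrm{SU}(2)$ scalars and lives on a finite-dimensional space.

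The main obstacle is verifying condition (iii) of Definition \ref{DefAHamiC}, namely ergodicity of $\{e^{-\beta H^{\rm Heis}_{\Lambda,M}}\}_{\beta\ge 0}$ with respect to $L^2(\M^{\rm MLM}_\Lambda[M],\vphi^{\rm MLM}_{\Lambda,M})_+$. I would proceed by showing that in the CONS $\{|X,\overline{X}\rangle_\Lambda:|X|=|\Lambda|/2+M\}$ the off-diagonal entries of $-H^{\rm Heis}_{\Lambda,M}$ are non-negative and the resulting matrix is irreducible; by the lemma preceding Theorem \ref{pff} this yields the required ergodicity. The non-negativity of the off-diagonal entries is the classical Marshall sign trick: the factor $(-1)^{|\overline{Y}\cap\Lambda_B|}$ built into \eqref{DefCONS} is engineered so that each spin-flip term $S_x^+ S_y^-+S_x^- S_y^+$ across a bipartite edge $\{x,y\}$ picks up a compensating minus sign, leaving $-H^{\rm Heis}_{\Lambda,M}$ with manifestly positive hopping elements. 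Irreducibility follows from the connectedness of $(\Lambda, E_\Lambda)$, since any two configurations in a fixed $M$-sector can be connected by a finite sequence of nearest-neighbour spin swaps. A uniform version of this verification, covering the other examples of Sections \ref{Sect5}--\ref{Sect6} as well, is precisely what Appendix \ref{SectA} is designed to provide, and I would delegate the detailed check to that appendix rather than reproduce it inline.

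Once $H^{\rm Heis}_\Lambda\in A_{\Lambda,|\Lambda|}(\M^{\rm MLM}_\Lambda,\vphi^{\rm MLM}_\Lambda)$ is in hand, Proposition \ref{GSinG} yields a ground state in $G(\M^{\rm MLM}_\Lambda,\vphi^{\rm MLM}_\Lambda)$ with total spin $S(\M^{\rm MLM}_\Lambda,\vphi^{\rm MLM}_\Lambda)$; combining with the classical Marshall--Lieb--Mattis theorem, which identifies this spin as $\big||\Lambda_A|-|\Lambda_B|\big|/2$, establishes (i). Part (ii) is Corollary \ref{StaGsFixH} together with (i). For (iii), the hypothesis $O_{1/2}\LRA O^{\rm MLM}_{1/2}$ together with $\bs H$ being adapted to $O_{1/2}$ places every $H_\Lambda$ in $\mathscr{A}_{\Lambda,|\Lambda|}(\M^{\rm MLM}_\Lambda,\vphi^{\rm MLM}_\Lambda)$, so Corollary \ref{StaGsBigA} fixes $S_{H_\Lambda}=\big||\Lambda_A|-|\Lambda_B|\big|/2$; the magnetic-order statement is then read off Definition \ref{DefMagO} using the prescribed asymptotics $\big||\Lambda_{n,A}|-|\Lambda_{n,B}|\big|=2s|\Lambda_n|+o(|\Lambda_n|)$. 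Finally, (iv) is a direct invocation of Theorem \ref{MacroDiagram} applied to the consistency relation $O_{1/2}\LRA O^{\rm MLM}_{1/2}$.
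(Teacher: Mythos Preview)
Your approach is correct and closely parallels the paper's. The one substantive difference is the choice of reference Hamiltonian in part~(i): you use a nearest-neighbour Heisenberg antiferromagnet and then import the classical Marshall--Lieb--Mattis theorem as an external input to identify the ground-state spin, whereas the paper uses the specific MLM Hamiltonian $H^{\rm MLM}_\Lambda={\bs S}_{\Lambda_A}\cdot{\bs S}_{\Lambda_B}=\tfrac12({\bs S}_\Lambda^2-{\bs S}_{\Lambda_A}^2-{\bs S}_{\Lambda_B}^2)$, for which the ground-state spin can be read off by elementary angular-momentum coupling (maximise the sublattice spins to $S_A=|\Lambda_A|/2$, $S_B=|\Lambda_B|/2$, then minimise the total spin to $S=|S_A-S_B|$). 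The paper's choice keeps the argument self-contained and sidesteps any hint of circularity in quoting the MLM theorem inside a framework designed to generalise it; the required ergodicity is supplied by Theorem~\ref{ManyHa}~(i) in Appendix~\ref{SectB}, exactly the kind of delegation you anticipated.

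One small slip: in part~(ii) the domain is $\mathscr{A}_{\Lambda,|\Lambda|}(\M^{\rm MLM}_\Lambda,\vphi^{\rm MLM}_\Lambda)$, the union over the whole stability class $\mathscr{C}_{\Lambda,|\Lambda|}(\M^{\rm MLM}_\Lambda,\vphi^{\rm MLM}_\Lambda)$, not the single $A_{\Lambda,|\Lambda|}(\M^{\rm MLM}_\Lambda,\vphi^{\rm MLM}_\Lambda)$ that Corollary~\ref{StaGsFixH} handles; you need Corollary~\ref{StaGsBigA}, which is what the paper cites. For~(iii) the paper invokes Theorem~\ref{OtoO} rather than Corollary~\ref{StaGsBigA} plus Definition~\ref{DefMagO}, but these are equivalent routes. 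Part~(iv) matches.
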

\begin{proof} (i)
Fix $\Lambda\in F^{\rm (e)}_{\mathbb{L}}$ arbitrarily.
The {\it MLM Hamiltonian} on $\Lambda$ is defined by
\be
H^{\rm MLM}_{\Lambda}={\bs S}_{\Lambda_A}\cdot {\bs S}_{\Lambda_B}, \label{MLMHami}
\ee
where ${\bs S}_{\Lambda_A}=(S_{\Lambda_A}^{(1)}, S_{\Lambda_A}^{(2)}, S_{\Lambda_A}^{(3)})$ denote the total spin operators on $\Lambda_A$ : $S_{\Lambda_A}^{(i)}=\sum_{x\in \Lambda_A}S_x^{(i)}\, (i=1, 2, 3)$. The Hamiltonian $\MLM$ acts in $\h_{\Lambda}$.
Although  $H^{\rm MLM}_{\Lambda}$ has a simple form, the magnetic properties of its ground states are fundamental.  By (i) of Theorem \ref{ManyHa}, for any $M\in \mathrm{spec}(S_{\Lambda}^{(3)} \restriction\h_{\Lambda})$, we have
\be
e^{-\beta H_{\Lambda, M}^{\rm MLM}} \rhd 0\ \ \mbox{w.r.t. $L^2(\M_{\Lambda}^{\rm MLM}[M], \vphi_{\Lambda, M}^{\rm MLM})_+$ $\forall \beta >0$,}
\ee
where  $ H_{\Lambda, M}^{\rm MLM}$ denotes the restriction of $ H_{\Lambda}^{\rm MLM}$ to the $M$-subspace.
Due to Theorem \ref{pff}, the ground state, $\psi_{\Lambda, 0}$, of $H_{\Lambda, 0}^{\rm MLM}$ is unique and strictly positive w.r.t. $L^2(\M_{\Lambda}^{\rm MLM}[0], \vphi_{\Lambda, M}^{\rm MLM})_+$. Because $H_{\Lambda, 0}^{\rm MLM}$ commutes with $S_{\Lambda}^{(i)}\, (i=1, 2, 3)$, we know that $\psi_{\Lambda, 0}$ has total spin $S=\big||\Lambda_A|-|\Lambda_B|\big|/2$. In addition, by using arguments similar to those of the proof
of Proposition \ref{GSinG}, the ground state of $H_{\Lambda, M}^{\rm MLM}$ is unique, can be chosen to be strictly positive  w.r.t. $L^2(\M_{\Lambda}^{\rm MLM}[M], \vphi_{\Lambda, M}^{\rm MLM})_+$ and has total spin $S$  for all $M\in\mathrm{spec}(S_{\Lambda}^{(3)} \restriction \h_{\Lambda})$ with $|M| \le S$.
We denote by $\psi_{\Lambda, M}$ the ground state of $H_{\Lambda, M}^{\rm MLM}$. Define the  vector $\psi_{\Lambda}$ by $\psi_{\Lambda}=\bigoplus_{M\in \mathrm{spec}(S_{\Lambda}^{(3)} \restriction \h_{\Lambda})} \tilde{\psi}_{\Lambda, M}$, where
\begin{align}
\tilde{\psi}_{\Lambda, M}=
\begin{cases}
\psi_{\Lambda, M} & \mbox{if $|M| \le S$}\\
0 & \mbox{otherwise}.
\end{cases} 
\end{align}
From the construction, we know that $\psi_{\Lambda}$ is a ground state of $H_{\Lambda}^{\rm MLM}$ and belongs to  $ G(\M^{\rm MLM}_{\Lambda}, \vphi^{\rm MLM}_{\Lambda})$. Hence, from Theorem \ref{BasicThm}, it follows that $S(\M^{\rm MLM}_{\Lambda}, \vphi^{\rm MLM}_{\Lambda})=S$.

(ii),  (iii) and (iv) follow from Corollary \ref{StaGsBigA},  Theorem \ref{OtoO} and Theorem \ref{MacroDiagram}, respectively.
\end{proof}

Recall the discussion in Section \ref{Sect4} about abstract crystal lattices and their realizations.
By combining Proposition \ref{IsoStaCl} and Theorem \ref{BasicMLM}, we obtain the following.
\begin{Thm}
Let $G$ be an abstract crystal lattice and $\pi$ be an arbitrary realization of $G$.
Then Theorem \ref{BasicMLM} holds for the realized crystal lattice $G^{\pi}$. We will write Theorem \ref{BasicMLM}(${\pi}$) for Theorem \ref{BasicMLM} that holds for $G^{\pi}$.
In this case, the values of the total spin and the spin density in Theorem \ref{BasicMLM}($\pi$) do not depend on the realization $\pi$. In other words, these are graph invariants.

Let $O_{1/2}^{\rm MLM, \pi}$ be the MLM system on $G^{\pi}$. Then 
$\mathscr{C}(O^{\rm MLM}_{\varrho})$ and $\mathscr{C}(O_{\varrho}^{\rm MLM, \pi})$ are isomorphic in the sense of Proposition \ref{IsoStaCl}.
\end{Thm}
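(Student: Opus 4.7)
The plan is to reduce everything to Proposition \ref{IsoStaCl} by exhibiting the abstract MLM system $O_{1/2}^{\rm MLM}$ and a realized version $O_{1/2}^{\rm MLM, \pi}$ as equivalent magnetic systems in the sense of Definition \ref{MSEqui}. The starting observation is that a non-degenerate periodic realization $\pi$ of $G=(\mathbb{L},E)$ gives a graph isomorphism $\pi:G\to G^{\pi}$; since $\pi$ is injective and edges are defined via realized vertices, this isomorphism induces a bijection $\phi: F_{\mathbb{L}}\to F_{\mathbb{L}^\pi}$ with $\phi(\Lambda)=\pi(\Lambda)$ that preserves cardinalities and the bipartite splitting $\Lambda_A\sqcup \Lambda_B$.

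The first main step is to construct, for each $\Lambda\in F^{\rm (e)}_{\mathbb{L}}$, a unitary $U_\Lambda:\h_\Lambda\to \h_{\pi(\Lambda)}$ implementing the isomorphic-pair condition of Definition \ref{HilIso}. This unitary is the one induced at the level of the fermionic Fock space by the relabeling $x\mapsto \pi(x)$: at the CONS level it sends $|X,\overline{X}\ra_\Lambda$ to $|\pi(X),\overline{\pi(X)}\ra_{\pi(\Lambda)}$ (up to the consistent sign conventions in \eqref{DefCONS}, which are also carried by $\pi$ since $|\pi(\Lambda)_B|=|\Lambda_B|$), and it intertwines spin and creation/annihilation operators by the direct analogue of \eqref{SScc}. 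Next, I would check that $U_\Lambda$ implements the equivalence of the IEE systems $\{\M^{\rm MLM}_\Lambda,\vphi^{\rm MLM}_\Lambda\}$ and $\{\M^{\rm MLM,\pi}_{\pi(\Lambda)},\vphi^{\rm MLM,\pi}_{\pi(\Lambda)}\}$ in the sense of Definition \ref{IEESEquiI}: since $\M^{\rm MLM}_\Lambda$ is the diagonal algebra with respect to the CONS $\{|X,\overline{X}\ra_\Lambda\}$ and $\vphi^{\rm MLM}_\Lambda$ is the vectorial weight induced by $\xi_\Lambda=\sum_X |X,\overline{X}\ra_\Lambda$, the unitary $U_\Lambda$ transports both objects to their realized counterparts. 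This gives the equivalence of the magnetic systems $O_{1/2}^{\rm MLM}$ and $O_{1/2}^{\rm MLM,\pi}$.

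With that equivalence in hand, Proposition \ref{IsoStaCl}(ii) immediately yields the isomorphism of stability classes $\mathscr{C}(O_{1/2}^{\rm MLM})\simeq \mathscr{C}(O_{1/2}^{\rm MLM,\pi})$, while Proposition \ref{IsoStaCl}(i) transports the spin-density statement. For the total-spin statement, note that Theorem \ref{BasicMLM}(i) expresses $S(\M^{\rm MLM}_\Lambda,\vphi^{\rm MLM}_\Lambda)=\big||\Lambda_A|-|\Lambda_B|\big|/2$ purely in terms of the bipartite cardinalities; since $\phi$ preserves these cardinalities, the value coincides with $\big||\pi(\Lambda)_A|-|\pi(\Lambda)_B|\big|/2$, and it is therefore a graph invariant independent of $\pi$. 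The spin density in (iii) depends only on the asymptotic behaviour of the same cardinalities along an exhausting sequence $\{\Lambda_n\}$, which is likewise invariant under $\phi$; together with Theorem \ref{OtoO} and Theorem \ref{MacroDiagram}, this gives Theorem \ref{BasicMLM}($\pi$) in full.

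The only delicate point — and the main obstacle I expect — is the bookkeeping of fermionic signs when transporting the CONS \eqref{DefCONS} through $U_\Lambda$: one has to fix a consistent ordering convention on $\pi(\Lambda)$ compatible with the one on $\Lambda$ so that the sign factor $(-1)^{|\overline{Y}\cap \Lambda_B|}$ and the ordered products match after relabeling. Once this is settled, everything else is a direct application of the previously established results Theorem \ref{BasicMLM}, Proposition \ref{IsoStaCl}, Theorem \ref{OtoO} and Theorem \ref{MacroDiagram}.
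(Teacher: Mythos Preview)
Your approach is correct and matches the paper's own argument: the paper simply states that the theorem follows ``by combining Proposition \ref{IsoStaCl} and Theorem \ref{BasicMLM},'' and your proof unpacks exactly this, making explicit the equivalence of the abstract and realized MLM systems (via the relabeling unitary $U_\Lambda$) that is needed to invoke Proposition \ref{IsoStaCl}. Your concern about fermionic sign bookkeeping is legitimate but harmless here, since the diagonal von Neumann algebra $\M^{\rm MLM}_\Lambda$ and the self-dual cone $L^2(\M^{\rm MLM}_\Lambda,\vphi^{\rm MLM}_\Lambda)_+$ are insensitive to overall phases on the CONS vectors.
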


With this theorem in mind, we will mainly consider magnetic systems on abstract graphs in the remainder of this section. In concrete examples, however, we will consider the specific realizations of crystal lattices that are visually comprehensible.

Here are some typical examples:

\begin{Exa}\upshape
The two-dimensional square lattice in Figure. \ref{Crystal lattices} is a bipartite connected graph.
If $G_{\Lambda_n}$ is a square lattice with one side of length $2n$, we know that 
 $
\big||\Lambda_{n ,A}|-|\Lambda_{n, B}|\big|=0
$.
A similar consideration holds for $d$-dimensional hypercubic lattices.
\end{Exa}

\begin{Exa}[Star]\upshape
Given a $k\in \BbbN\, (k\ge 2)$, a {\it star} $S_{k}$ is a tree with one internal vertex and $k$ leaves.
Figure \ref{S_8} illustrates $S_8$. 
\begin{figure}[h]
\begin{center}
\includegraphics[scale=0.7]{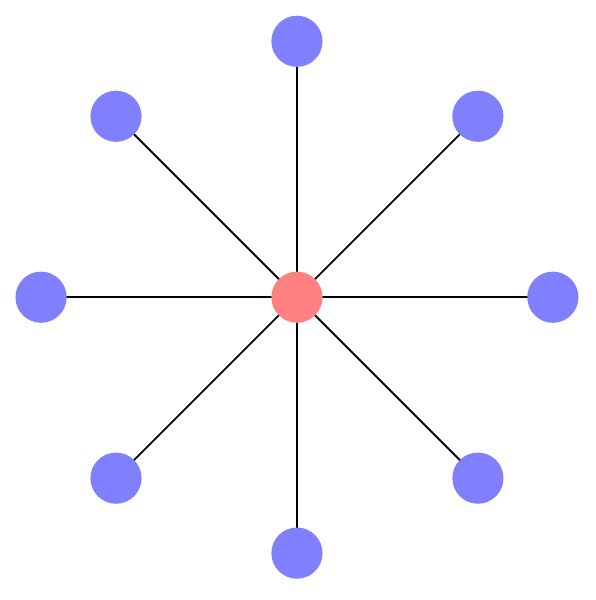}
\caption{The star $S_8$} 
    \label{S_8}
\end{center}
\end{figure}
 In this example, we  consider $S_{2n-1}\, (n\in \BbbN)$.
We denote the vertex at the center of the star as $x_0$ and the surrounding leaves as $L_n$.
$S_{2n-1}$ is a complete and bipartite graph. The bipartite structure of $S_{2n-1}=(\Lambda_n, E_n)$ is precisely given by 
$\Lambda_n=\Lambda_{ n, A} \sqcup_{E_n}\Lambda_{n, B} $ with 
$\Lambda_{n, A}=\{x_0\}$ and $\Lambda_{ n, B}=L_n$.
Hence, $\big||\Lambda_{n, A}|-|\Lambda_{n, B}|\big|=2n-2$ holds.

Next, we consider $S_{\infty}=(\mathbb{L}, E)$. We are interested in an increasing sequence
$\{S_{2n-1}=(\Lambda_n, E_n) : n\in \BbbN, \, n\ge 2\}$
 of subgraphs of $S_{\infty}$, where $S_{2n-1}$s  are chosen so that $S_{2n-1}\subset S_{2n+1}$ and $\bigcup_{n=2}^{\infty} \Lambda_n=\mathbb{L}$.
 Note that 
 \be
\frac{\big||\Lambda_{n, A}|-|\Lambda_{n, B}|\big|}{|\Lambda_n|}=\frac{2n-2}{2n}
\xrightarrow[n\to \infty]{} 1. \label{LimStar}
\ee
In the above setting, let us consider the MLM system $O_{1/2}^{\rm MLM}$ associated with this increasing sequence. Let $O_{1/2}$ be a magnetic system in the MLM stability class $\mathscr{C}(O_{1/2}^{\rm MLM})$.  Suppose we are given a net of Hamiltonians ${\bs H}$ adapted to $O_{1/2}$.
Every NMGS exhibits a magnetic order with a spin density of 
$1/2$ in the thermodynamic  limit.
\end{Exa}

\begin{Exa}[Regular tree]\upshape
A {\it regular tree}, $R^z$,  of degree $z$ is the infinite tree with $z$ edges at each vertex.
A regular tree is sometimes called a {\it Bethe lattice}. 
With one vertex, say $x_0$,  chosen as root, all other vertices are arranged in shells around this root vertex, which is then also called the origin of the graph. The number of vertices in the $k$-th shell, $H_k$, is given by
\be
|H_k|=z(z-1)^{k-1}\ (k\ge 1).
\ee
Figure \ref{R^3} illustrates $R^3$.
\begin{figure}[h]
\begin{center}
\includegraphics[scale=0.7]{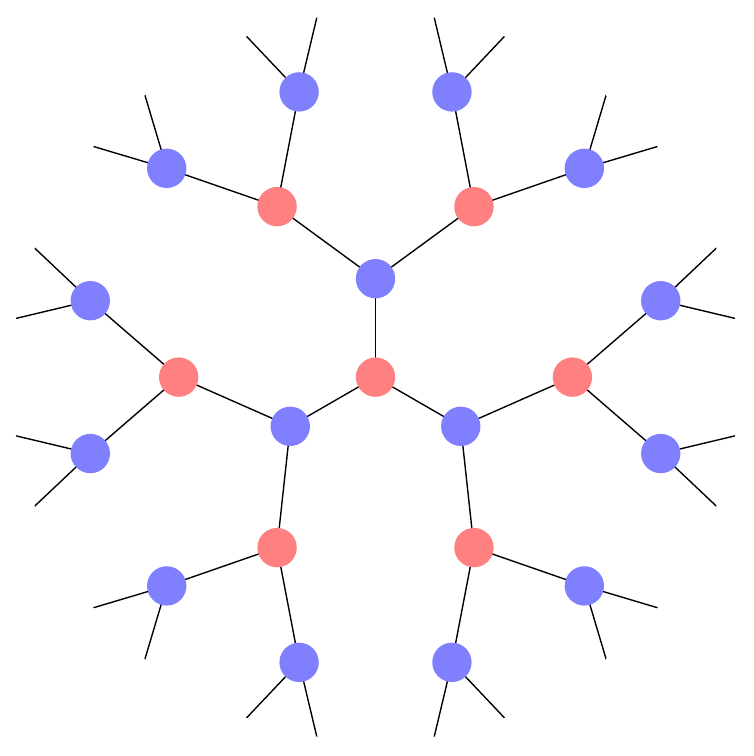}
\caption{A Behte lattice with $z=3$} 
    \label{R^3}
\end{center}
\end{figure}
Trivially, $R^z=(\mathbb{L}, E)$ is a  bipartite connected graph with the bipartite structure 
$\mathbb{L}_A=\{x_0\}\cup \bigcup_{k=1}^{\infty} H_{2k}$ and $\mathbb{L}_B=\bigcup_{k=1}^{\infty}H_{2k+1}$.
Below, let $z$ be an odd number. Given a natural number $n$, we set 
$\Lambda_n=\{x_0\}\cup \bigcup_{k=1}^n H_k$. 
In this case, we define 
$R^z_n=(\Lambda_n, E_n)$
 as the corresponding induced subgraph of $R^z$. We note that $|\Lambda_n|$  is indeed an even number.
 In this way,  we have constructed an increasing sequence $\{R^z_n : n\in \BbbN\}$ of subgraphs. A simple calculation shows that 
\be
\frac{\big||\Lambda_{n, A}|-|\Lambda_{n, B}|\big|}{|\Lambda_n|}
\xrightarrow[n\to \infty]{} \frac{z-2}{z}.
\ee
Under the above settings, let $O_{1/2}^{\rm MLM}$ be the MLM system associated with this increasing sequence.
Now, consider a magnetic system $O_{1/2}$ belonging to the MLM stability class $\mathscr{C}(O_{1/2}^{\rm MLM})$. We also consider a net of Hamiltonians ${\bs H}$ that is adapted to $O_{1/2}$.
Every NMGS associated with $\bs H$ exhibits a magnetic order with a  spin density of 
$\frac{1}{2}-\frac{1}{z}$
in the thermodynamic  limit.
By varying $z$, we can construct a magnetic order with spin density $s$ for various values of $s\in (0, 1/2)$.
\end{Exa}

\begin{Exa}\upshape
Consider the decorated chain given in Figure. \ref{1D}. If we denote by $G_{\Lambda_n}$ the subgraph with $2n$ edge lengths, we can see that $\big||\Lambda_{n, A}|-|\Lambda_{n, B}|\big|/|\Lambda_n|=1/2$.
Note that we can consider various other decorations of the chain than the one considered here.

\begin{figure}[h]
\begin{center}
\includegraphics[scale=0.7]{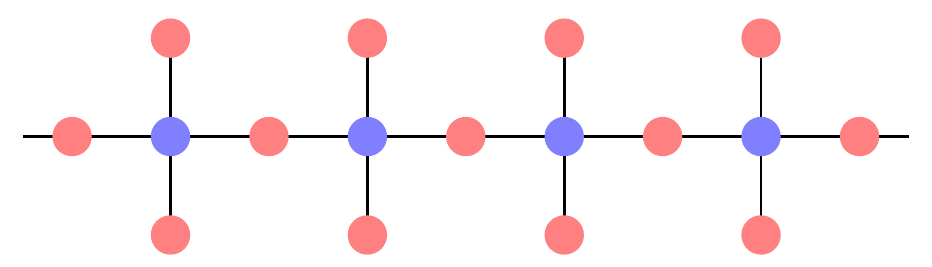}
\caption{A decorated  chain}
\label{1D}
\end{center}
\end{figure}
\end{Exa}

\begin{Exa}[The Lieb lattice]\upshape
Let us consider the two-dimensional Lieb lattice given in Figure \ref{Lieb-lattice}. As seen from this figure, the Lieb lattice is a bipartite connected graph, and if we denote by $G_{\Lambda_n}$ the Lieb lattice with $2n$ edge lengths, we can easily verify that $\big||\Lambda_{n, A}|-|\Lambda_{n, B}|\big|/|\Lambda_n|=1/3$.

\begin{figure}[h]
\begin{center}
\includegraphics[scale=0.7]{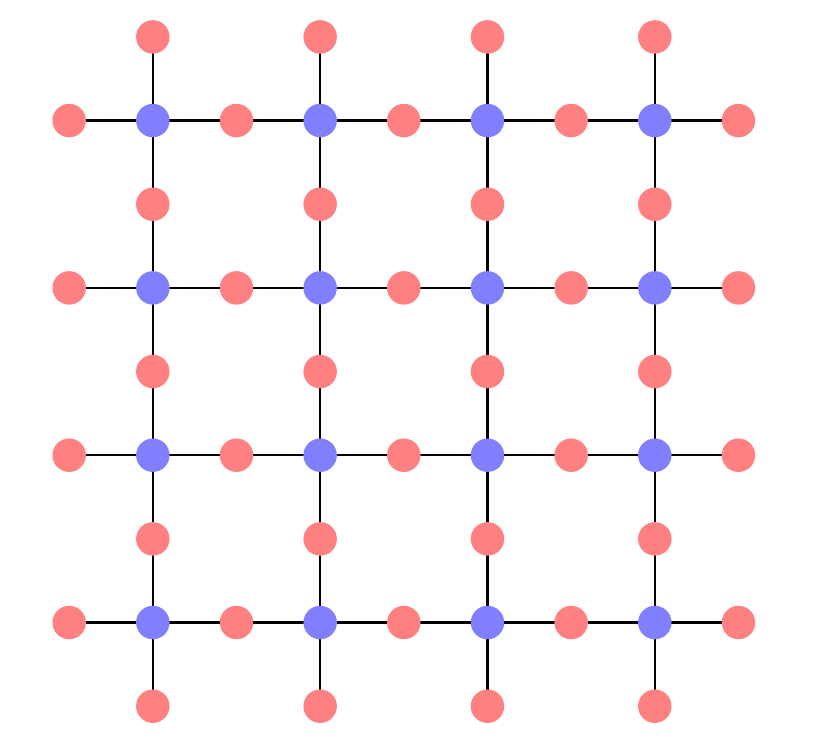}
\caption{The 2D Lieb lattice} 
    \label{Lieb-lattice}
\end{center}
\end{figure}
Recently, finite Lieb lattices have been realized in the laboratory, and their electronic properties have been clarified; see \cite{Slot2017}.
\end{Exa}

\subsection{The Heisenberg model}\label{SubsecHei}
As an application of Theorem \ref{BasicMLM}, let us first investigate the antiferromagnetic Heisenberg model.
The Heisenberg Hamiltonian on $\Lambda$ is given by 
\be
H^{\rm Hei}_{\Lambda} =\sum_{x, y\in \Lambda} J_{x y} {\bs S}_x\cdot {\bs S}_y.
\ee
The operator $H^{\rm Hei}_{\Lambda}$ acts in $\h_{\Lambda}$.
Set $E^J=\{\{x, y\} : J_{xy}\neq 0\}$ and $E_{\Lambda}^J=\{\{x, y\} : J_{xy}\neq 0, x,y\in \Lambda\}$ $(\Lambda\in F_{\mathbb{L}})$. Correspondingly, we have the graphs $G^J=(\mathbb{L}, E^J)$ and $G^J_{\Lambda}=(\Lambda, E_{\Lambda}^J)$;
$G^J$ and $G^J_{\Lambda}$ are called the {\it  graphs generated  by the interaction} $\{J_{xy} : x, y\in \mathbb{L}\}$.

We will use $G^J$ to describe the conditions that the interaction $J$ must satisfy. To do so, we need some preparation in graph theory.
First, let us recall the following well-known result:
every connected graph contains a normal spanning tree, with any specified vertex as its root. 
For proof of this claim, see \cite[Theorem 3]{Soukup}.
In the following, we write $T(G)=(\mathbb{L}, E^{\rm NST})$ for the arbitrarily fixed  normal spanning tree that $G$ contains.
Note that if $G$ is bipartite and connected, then $T(G)$ is also bipartite and $\mathbb{L}_A\sqcup_E \mathbb{L}_B=\mathbb{L}_A\sqcup_{E^{\rm NST}} \mathbb{L}_B$ holds.
Given a $\Lambda\in F_{\mathbb{L}}$,  we denote by $T(G)_{\Lambda}=(\Lambda, E_{\Lambda}^{\rm NST})$ the induced subgraph of $T(G)$.

In this section, we assume the following:
\begin{Assum}\label{CoupAss}\upshape
\indent
\begin{itemize}
\item[(i)] $J_{x y} =J_{yx}\ge 0$ for all $x, y\in \mathbb{L}$.
\item[(ii)] $T(G)$ is a subgraph of $G^J$ such that $\mathbb{L}_A\sqcup_E \mathbb{L}_B=\mathbb{L}_A\sqcup_{E^J} \mathbb{L}_B=\mathbb{L}$.
\item[(iii)] For each $\Lambda\in F_{\mathbb{L}}$, $T(G)_{\Lambda}$ is a subgraph of $G_{\Lambda}^J$ such that $\Lambda_A\sqcup_{E_{\Lambda}} \Lambda_B=\Lambda_A\sqcup_{E_{\Lambda}^J} \Lambda_B=\Lambda$.
\end{itemize}
\end{Assum}

\begin{Exa}\upshape
Let $G$ be an abstract crystal lattice and let $\pi$ be its realization. In general, if an interaction $J=\{J_{xy}\}$ satisfies $G^J=G^{\pi}$, then
$J$ is said to describe nearest-neighbor interactions. In this paper, we will extend this terminology to say that $J$ also describes nearest-neighbor interactions when $G^J=G$.\footnote{
Since we are considering an abstract graph, $\mathbb{L}$ is not a metric space. Therefore, the term  ``nearest neighbor" is not accurate, but for convenience, we will also call $J$, in this case, a nearest neighbor interaction.
}
\end{Exa}

\begin{Exa}\upshape
Let $\overline{G}=(\mathbb{L}, \overline{E})$ and $\overline{G}_{\Lambda}=(\Lambda, \overline{E}_{\Lambda})$ be the completed graphs such that  $\mathbb{L}_A\sqcup_{\overline{E}} \mathbb{L}_B=\mathbb{L}$ and 
  $\Lambda_A\sqcup_{\overline{E}_{\Lambda}} \Lambda_B=\Lambda$.
  Define the interaction $J^{\rm MLM}_{xy}$ by 
  \begin{align}
  J^{\rm MLM}_{xy}
  =\begin{cases}
  1, & \mbox{if $x\in \mathbb{L}_A$ and $y\in \mathbb{L}_B$, or $x\in \mathbb{L}_B$ and $y\in \mathbb{L}_A$  }\\
  0, & \mbox{otherwise}.
  \end{cases}
  \end{align}
  Hence, $E^{J^{\rm MLM}}=\overline{E}$ and $E_{\Lambda}^{J^{\rm MLM}}=\overline{E}_{\Lambda}\ (\Lambda\in F_{\mathbb{L}})$.
 We see that  the Heisenberg Hamiltonian associated with the interaction $\{J_{xy}^{\rm MLM} : x, y\in \mathbb{L}\}$ is equal to the MLM Hamiltonian given by \eqref{MLMHami}. From this, we can consider the MLM Hamiltonian  to be a particular case of the Heisenberg  Hamiltonian.
 
 The readers may think that the MLM Hamiltonian is artificial and physically unimportant. However, as we have seen, our theory describes magnetic properties independent of the details of the Hamiltonians. From this standpoint, the MLM Hamiltonian is worth analyzing. In fact, as we have seen in the proof of Theorem \ref{BasicMLM}, the MLM Hamiltonian is the fundamental Hamiltonian that characterizes the MLM stability class, and it has a remarkable feature not found in other Heisenberg models: the ground state energy and the total spin of the ground states  can be easily calculated.  This example highlights how the theory constructed in this paper is applied in practice: first to elucidate the magnetic properties of straightforward and solvable models and then build the corresponding stability classes before analyzing more complex models.
\end{Exa}

\begin{Thm}\label{ThmMLMHei}
Assume Condition \ref{CoupAss}.
We have the following:
\begin{itemize}
\item[{\rm (i)}] $H_{\Lambda}^{\rm Hei} \in A_{\Lambda, |\Lambda|}(\M_{\Lambda}^{\rm MLM}, \vphi_{\Lambda}^{\rm MLM})$ for all $\Lambda\in F^{\rm (e)}_{\mathbb{L}}$. In addition,  we have
\be
S_{H_{\Lambda}^{\rm Hei}}=\big||\Lambda_A|-|\Lambda_B|\big|\big/2.
\ee
\item[\rm (ii)] Set
${\bs H}^{\rm Hei}=\{H_{\Lambda}^{\rm Hei} :  \Lambda\in F^{\rm (e)}_{\mathbb{L}}\}$.  Then 
 ${\bs H}^{\rm Hei}$ are adapted to $O_{1/2}^{\rm MLM}$. Hence, 
 if there exist a non-negative constant $s$ and an increasing sequence of sets $\{\Lambda_n : n\in \BbbN\}\subset  F^{\rm (e)}_{\mathbb{L}}$ such that $\bigcup_{n=1}^{\infty} \Lambda_n=\mathbb{L}$ and 
$\big||\Lambda_{n, A}|-|\Lambda_{n, B}|\big|=2s|\Lambda_n|+o(|\Lambda_n|)$ as $n\to \infty$, then every NMGS  associated with ${\bs H}^{\rm Hei}$  exhibits a  magnetic order with a spin density of $s$.
 \end{itemize}
\end{Thm}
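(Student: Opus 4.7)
The proof naturally splits into verifying part (i) by checking the three axioms of Definition \ref{DefAHamiC} and then deducing (ii) from the general machinery of Section \ref{Sect4}. For (i), the first axiom --- that $e^{-\beta H_\Lambda^{\rm Hei}}$ commutes with $S_\Lambda^{(1)}, S_\Lambda^{(2)}, S_\Lambda^{(3)}$ --- is immediate since each summand ${\bs S}_x\cdot{\bs S}_y$ is constructed only from spin operators. The second axiom (existence of a ground state) is trivial because $\h_\Lambda$ is finite dimensional. The crucial third axiom is the ergodicity of $\{e^{-\beta H_{\Lambda,M}^{\rm Hei}}\}_{\beta\ge 0}$ with respect to $L^2(\M_\Lambda^{\rm MLM}[M],\vphi_{\Lambda,M}^{\rm MLM})_+$, and this is where the bulk of the work lives.

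For ergodicity, the plan is to expand ${\bs S}_x\cdot {\bs S}_y = S_x^{(3)}S_y^{(3)} + \tfrac{1}{2}(S_x^+ S_y^- + S_x^- S_y^+)$ and compute matrix elements in the Marshall-signed CONS $\{|X,\overline X\rangle_\Lambda\}$ defined in \eqref{DefCONS}. The diagonal piece contributes only to scalar multiples of basis elements, while the off-diagonal piece exchanges the up/down occupancy between sites $x$ and $y$. The key calculation is that, precisely because of the sign $(-1)^{|\overline{Y}\cap \Lambda_B|}$ in the definition of $|X,\overline X\rangle_\Lambda$, for every edge $\{x,y\}$ with $x\in\Lambda_A$, $y\in\Lambda_B$ and $J_{xy}\ge 0$ (as guaranteed by Condition \ref{CoupAss}(i)), the matrix entries $\langle X,\overline X|(-H^{\rm Hei}_\Lambda)|X',\overline{X'}\rangle_\Lambda$ are all non-negative. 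This is the classical Marshall sign trick recast at the level of the self-dual cone. Consequently, for a sufficiently large constant $c>0$, $c - H^{\rm Hei}_{\Lambda,M}\unrhd 0$ w.r.t.\ $L^2(\M_\Lambda^{\rm MLM}[M],\vphi_{\Lambda,M}^{\rm MLM})_+$, and so $e^{-\beta H^{\rm Hei}_{\Lambda,M}}\unrhd 0$ by a power series expansion.

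The second half of ergodicity --- irreducibility, i.e.\ that for every pair $u,v$ of nonzero positive vectors some $t\ge 0$ yields $\langle u|e^{-tH^{\rm Hei}_{\Lambda,M}}v\rangle>0$ --- follows from Condition \ref{CoupAss}(iii): because $T(G)_\Lambda$ is a connected bipartite subgraph of $G^J_\Lambda$ spanning $\Lambda$, the exchange terms $S_x^+S_y^-$ attached to edges of $T(G)_\Lambda$ already suffice to connect any two configurations $|X,\overline X\rangle_\Lambda$ and $|X',\overline{X'}\rangle_\Lambda$ with $|X|=|X'|$ by a finite chain of nearest-neighbor spin swaps. Expanding $e^{-tH^{\rm Hei}_{\Lambda,M}}$ in power series then produces a strictly positive matrix element at some finite order, which I would formalize by invoking the general connectivity/ergodicity criterion developed in Appendix \ref{SectA} (essentially Perron--Frobenius for an irreducible non-negative matrix on the basis $\{|X,\overline X\rangle_\Lambda : |X|=|\Lambda|/2+M\}$). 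With axiom (iii) established, Corollary \ref{StaGsFixH} combined with Theorem \ref{BasicMLM}(i) gives $S_{H_\Lambda^{\rm Hei}} = S(\M_\Lambda^{\rm MLM}, \vphi_\Lambda^{\rm MLM}) = \big||\Lambda_A|-|\Lambda_B|\big|/2$, finishing (i).

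Part (ii) is then nearly formal: ${\bs H}^{\rm Hei}$ being adapted to $O_{1/2}^{\rm MLM}$ is just the collective restatement of (i) over all $\Lambda\in F^{\rm (e)}_\mathbb{L}$ (cf.\ Definition \ref{ConsUni}), and the magnetic order statement is an immediate application of Theorem \ref{BasicMLM}(iii) taken with $O_{1/2}=O_{1/2}^{\rm MLM}$ (which consistently maps to itself). The main obstacle, as expected, is the ergodicity step --- specifically, keeping the Marshall sign bookkeeping clean enough to really obtain the order-preserving inequality $c-H^{\rm Hei}_{\Lambda,M}\unrhd 0$ rather than merely a real matrix representation, and then ensuring that the connectivity inherited from $T(G)_\Lambda$ passes to irreducibility \emph{within each} $M$-sector. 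This is exactly where Condition \ref{CoupAss} is used in its full strength.
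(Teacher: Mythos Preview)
Your proposal is correct and follows essentially the same route as the paper: the paper's proof simply cites Theorem~\ref{ManyHa}(ii) for the positivity-improving property of $e^{-\beta H^{\rm Hei}_{\Lambda,M}}$ and then invokes Theorem~\ref{BasicMLM}, while you inline the content of that appendix result---the Marshall-sign computation in the basis $|X,\overline X\rangle_\Lambda$ (which already encodes the hole-particle transformation via the factor $(-1)^{|\overline Y\cap\Lambda_B|}$), the $\Phi^{\rm s}/\Phi^{\rm w}$ split, and the connectivity argument through $T(G)_\Lambda$ that feeds into Theorem~\ref{AbstPI1}. The only cosmetic difference is that the paper packages the ergodicity step through the hole-particle unitary $W$ and the abstract Condition~\ref{PIBasicAI}, whereas you argue directly with matrix entries; both yield the same Perron--Frobenius conclusion.
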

\begin{proof}
Fix $\Lambda\in F^{\rm (e)}_{\mathbb{L}}$ arbitrarily. From (ii) of Theorem \ref{ManyHa}, it follows that 
\be
e^{-\beta H_{\Lambda, M}^{\rm MLM}}\rhd 0
\ \mbox{ w.r.t. $L^2(\M_{\Lambda}^{\rm MLM}[M], \vphi_{\Lambda, M}^{\rm MLM})_+$}
\ee
 for all $\beta >0$ and $M\in \mathrm{spec}(S_{\Lambda}^{(3)} \restriction \h_{\Lambda})$.
Hence, $H^{\rm Hei}_{\Lambda} \in A_{\Lambda, |\Lambda|}(\M_{\Lambda}^{\rm MLM}, \vphi_{\Lambda}^{\rm MLM})$ for all $\Lambda\in F^{\rm (e)}_{\mathbb{L}}$, which implies that 
${\bs H}^{\rm Hei}$ is adapted to $O_{1/2}^{\rm MLM}$.  By applying Theorem \ref{BasicMLM}, we obtain the desired results in Theorem \ref{ThmMLMHei}. 
\end{proof}

\begin{Rem} \upshape
This theorem contains the following important message. Consider the Heisenberg Hamiltonians defined by the couplings $J_1=\{J_{1, xy}\}$ and $J_2=\{J_{2, xy}\}$. 
Let us say that the forms of $J_1$ and $J_2$ are very different. Even in such a case,  we can conclude from Theorem \ref{ThmMLMHei} that the magnetic structures of the ground states of the two Hamiltonians  will be the same as long as $J_1$ and $J_2$ satisfy Condition \ref{CoupAss}.
For example, the Hamiltonians described by $J^{\rm MLM}$ and $J$ satisfying $G^J=T(G)$ have the same magnetic properties in the ground states in a sense claimed in Theorem \ref{ThmMLMHei}.
Notice that $J^{\rm MLM}$ is maximal in the sense that $G^{J^{\rm MLM}}$ is  a complete graph, while $J$ is minimal in the sense that $G^J$ is equal to  $T(G)$. From this observation, it can be seen that Theorem \ref{ThmMLMHei} describes a stability concerning the magnetic properties of the ground states of quantum spin systems described by the Heisenberg models.
\end{Rem}

\subsection{The Hubbard model}\label{MLM-H}
In this subsection, we examine the Hubbard model:
\be
H_{\Lambda}^{\rm H}=\sum_{x, y\in \Lambda} \sum_{\sigma=\up, \down}t_{xy}c_{x\sigma}^*c_{y\sigma}
+\sum_{x, y\in \Lambda} \frac{U_{xy}}{2}(n_x-1)(n_y-1),  \label{DefHubbard}
\ee
where $t_{xy}$ is the hopping matrix, and $U_{xy}$ is the energy of the Coulomb interaction.
The operator $H_{\Lambda}^{\rm H}$ acts in $\h_{\Lambda}^{\rm H} :=\F_{\Lambda, N=|\Lambda|}$. Note here that the Hilbert space in which $H_{\Lambda}^{\rm H}$ acts is different from that of $H_{\Lambda}^{\rm Hei}$.
In this paper, we assume that $\{t_{xy}\}$ and $\{U_{xy}\}$ are $|\Lambda|\times |\Lambda|$ real symmetric matrices.

Set \be
\mathfrak{R}_{\Lambda}=\big\{c_{x_1\up}^*\cdots c_{x_n\up}^* c_{y_1\up} \cdots c_{y_n\up} : x_1, \dots, x_n, y_1, \dots, y_n\in \Lambda, n\in \{1, \dots, |\Lambda|\}\big\}. \label{DefRalg}
\ee
We denote by $\M_{\Lambda}^{\rm H}$ the von Neumann algebra generated by $\mathfrak{R}_{\Lambda}$.
Define the state on $\M_{\Lambda}^{\rm H}$ by $\vphi_{\Lambda}^{\rm H}(x)=\vphi_{\xi_{\Lambda}}(x)\, (x\in \M_{\Lambda}^{\rm H})$, where $\xi_{\Lambda}$ is defined by \eqref{TaneVec}.
Due to \eqref{HSingE}, $Q_{\Lambda}$ is 
 the orthogonal projection from $\h_{\Lambda}^{\rm H}$ to $\h_{\Lambda}$.
 In addition, 
  we have $Q_{\Lambda}\xi_{\Lambda}=\xi_{\Lambda}$, which implies that
\be
\vphi_{\Lambda}^{\rm H} \circ \mathscr{E}^{\rm H}_{\Lambda}=\vphi_{\Lambda}^{\rm H}, \label{EHtoH}
\ee 
where $\mathscr{E}^{\rm H}_{\Lambda}(x)=Q_{\Lambda} x Q_{\Lambda}$.

\begin{Lemm}\label{ConHubb}
We have the following:
\begin{itemize}
\item[\rm (i)] If $\Lambda, \Lambda'\in F^{\rm (e)}_{\mathbb{L}}$ satisfies $\Lambda\subseteq \Lambda'$, then 
$F(\M^{\rm H}_{\Lambda\rq{}}, \vphi^{\rm H}_{\Lambda\rq{}})\longrightarrow F(\M^{\rm H}_{\Lambda}, \vphi^{\rm H}_{\Lambda})$.
\item[\rm (ii)] For each $\Lambda 
\in F^{\rm (e)}_{\mathbb{L}}
, $
$F(\M^{\rm H}_{\Lambda}, \vphi^{\rm H}_{\Lambda})\longrightarrow F(\M^{\rm MLM}_{\Lambda}, \vphi^{\rm MLM}_{\Lambda})$.
\end{itemize}
The above results can be summarized as the following commutative diagram:
\be
\begin{tikzcd}
  F(\M^{\rm H}_{\Lambda}, \vphi^{\rm H}_{\Lambda})   \arrow[d] &
  \ar[l]   F(\M^{\rm H}_{\Lambda\rq{}}, \vphi^{\rm H}_{\Lambda\rq{}})
  \arrow[d]  \\
  F(\M^{\rm MLM}_{\Lambda}, \vphi^{\rm MLM}_{\Lambda}) &\arrow[l] F(\M^{\rm MLM}_{\Lambda\rq{}}, \vphi^{\rm MLM}_{\Lambda\rq{}})
\end{tikzcd}
\ee
In addition, the diagram in {\rm (iv)} of Theorem \ref{BasicMLM} with $\{\M_{\Lambda}, \vphi_{\Lambda}\}=\{\M^{\rm H}_{\Lambda}, \vphi^{\rm H}_{\Lambda}\}$ becomes commutative.
\end{Lemm}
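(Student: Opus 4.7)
The plan is to verify, for each arrow in the lemma, the three conditions of Definition \ref{DefConForm}, based on an explicit tracial description of the standard form $F(\M_{\Lambda}^{\rm H},\vphi_{\Lambda}^{\rm H})$. Decomposing by spin,
\ben
\h_{\Lambda}^{\rm H}=\F_{\Lambda,|\Lambda|}=\bigoplus_{n=0}^{|\Lambda|}\F_{\Lambda,n}^{\up}\otimes \F_{\Lambda,|\Lambda|-n}^{\down},
\een
where $\F_{\Lambda,n}^{\sigma}$ denotes the $n$-particle sector of the spin-$\sigma$ fermionic Fock space, one checks that the number-preserving up-spin monomials in $\mathfrak{R}_{\Lambda}$ generate $\M_{\Lambda}^{\rm H}=\bigoplus_n\mathscr{L}(\F_{\Lambda,n}^{\up})\otimes 1$; moreover, from \eqref{TaneVec} and \eqref{DefCONS}, $\xi_{\Lambda}=\bigoplus_n \xi_{\Lambda,n}$ with $\xi_{\Lambda,n}$ implementing (up to signs) the antilinear bijection $|X\rangle_{\up}\leftrightarrow |\overline{X}\rangle_{\down}$. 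Consequently $\xi_{\Lambda}$ is cyclic and separating for $\M_{\Lambda}^{\rm H}$, the weight $\vphi_{\Lambda}^{\rm H}$ is, block by block, the canonical trace, and $L^2(\M_{\Lambda}^{\rm H},\vphi_{\Lambda}^{\rm H})_+$ is identified with the cone of positive Hilbert--Schmidt operators on $\bigoplus_n\F_{\Lambda,n}^{\up}$.

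For (ii), under the identification above, $Q_{\Lambda}$ acts block-wise by restricting an operator to its diagonal in the Slater-determinant basis, so $Q_{\Lambda}\M_{\Lambda}^{\rm H}Q_{\Lambda}=\M_{\Lambda}^{\rm MLM}$; condition (i) of Definition \ref{DefConForm} is exactly \eqref{EHtoH}. For the cone conditions, the MLM cone corresponds precisely to the subset of positive diagonal operators: a positive diagonal operator is a positive matrix (giving $L^2(\M_{\Lambda}^{\rm MLM},\vphi_{\Lambda}^{\rm MLM})_+\subseteq L^2(\M_{\Lambda}^{\rm H},\vphi_{\Lambda}^{\rm H})_+$), and the diagonal of a positive matrix has non-negative entries (giving $Q_{\Lambda}L^2(\M_{\Lambda}^{\rm H},\vphi_{\Lambda}^{\rm H})_+=L^2(\M_{\Lambda}^{\rm MLM},\vphi_{\Lambda}^{\rm MLM})_+$).

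For (i), I adopt the embedding $\eta\mapsto \eta\otimes \tilde{\xi}_{\Lambda'\setminus\Lambda}$ with projection $P_{\Lambda\Lambda'}=1\otimes|\tilde{\xi}_{\Lambda'\setminus\Lambda}\rangle\langle\tilde{\xi}_{\Lambda'\setminus\Lambda}|$ from the proof of Lemma \ref{VolSys}. A direct calculation gives the factorisation $\xi_{\Lambda'}=\xi_{\Lambda}\otimes \xi_{\Lambda'\setminus\Lambda}$, hence $P_{\Lambda\Lambda'}\xi_{\Lambda'}=\xi_{\Lambda'}$ and thus $\vphi_{\Lambda'}^{\rm H}\circ\mathscr{E}_{\Lambda\Lambda'}=\vphi_{\Lambda'}^{\rm H}$. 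For the algebra identity, \eqref{IdnAnni} rewrites every element of $\mathfrak{R}_{\Lambda'}$ as $A\otimes B$ with $A\in \M_{\Lambda}^{\rm H}\cup \{1\}$ and $B$ a number-preserving up-spin operator on $\F_{\Lambda'\setminus\Lambda}$; crucially, the $(-1)^{N_{\Lambda}}$ parity factors come in matched pairs because $\mathfrak{R}_{\Lambda'}$ has equal numbers of creators and annihilators. Sandwiching against $P_{\Lambda\Lambda'}$ then produces $\langle\tilde{\xi}|B|\tilde{\xi}\rangle A\in \M_{\Lambda}^{\rm H}$, the reverse inclusion being trivial via $A\otimes 1\in \M_{\Lambda'}^{\rm H}$. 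The cone conditions reduce, via the tracial identification, to the observations that $\tilde{\xi}_{\Lambda'\setminus\Lambda}$ corresponds to a multiple of the identity operator (so tensoring with it preserves positivity) and that the partial expectation $\mathrm{id}\otimes\langle\tilde{\xi}|\cdot|\tilde{\xi}\rangle$ is a positive partial trace, which is surjective onto positive Hilbert--Schmidt operators on $\bigoplus_n\F_{\Lambda,n}^{\up}$. The commutative diagram in the statement then follows from (i), (ii), and the transitivity in Remark \ref{RemCons}(3), while the macroscopic diagram in Theorem \ref{BasicMLM}(iv) follows by further combining these with Theorem \ref{MacroDiagram}.

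The main technical obstacle is careful fermionic sign bookkeeping: both the factorisation $\xi_{\Lambda'}=\xi_{\Lambda}\otimes \xi_{\Lambda'\setminus\Lambda}$ and the $(-1)^{N_{\Lambda}}$ cancellation for $\mathfrak{R}_{\Lambda'}$ depend on the bipartite-compatible sign $(-1)^{|\overline{Y}\cap\Lambda_B|}$ in \eqref{DefCONS}, which is precisely why that convention was chosen; once this bookkeeping is in place, the conceptual structure of the proof is transparent.
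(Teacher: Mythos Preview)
Your proof is correct and follows essentially the same route as the paper. Both arguments rest on the observation that $\M_{\Lambda}^{\rm H}$ is the block matrix algebra of number-preserving up-spin operators, that $\xi_{\Lambda}$ is cyclic and separating (so the cone is the cone of positive semi-definite coefficient matrices / positive Hilbert--Schmidt operators), that $Q_{\Lambda}$ extracts the diagonal, and that the embedding $\eta\mapsto\eta\otimes\tilde{\xi}_{\Lambda'\setminus\Lambda}$ with projection $1\otimes\Pi_{\Lambda'\setminus\Lambda}$ gives the volume compatibility.

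The only genuine difference is presentational: you front-load the Hilbert--Schmidt/tracial identification and then read off the cone conditions conceptually (tensoring with the identity preserves positivity; the compression is a positive partial trace), whereas the paper works directly with the coefficient matrices $\{\eta_{XY}\}$ and verifies by hand that the projected matrix $\tilde{\eta}_{X_{\Lambda},Y_{\Lambda}}=\sum_{X_{\Lambda'\setminus\Lambda}}\eta_{X_{\Lambda}\sqcup X_{\Lambda'\setminus\Lambda},\,Y_{\Lambda}\sqcup X_{\Lambda'\setminus\Lambda}}$ remains positive semi-definite. Your formulation is cleaner and makes explicit the link to the hole--particle picture that the paper only unpacks later in Appendix~B; the paper's version is more self-contained at the point of use. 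Your caveat about fermionic sign bookkeeping (the factorisation $\xi_{\Lambda'}=\xi_{\Lambda}\otimes\xi_{\Lambda'\setminus\Lambda}$ and the parity cancellation for $\mathfrak{R}_{\Lambda'}$) is exactly right, and the paper likewise absorbs these signs into the convention \eqref{DefCONS} without spelling them out.
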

\begin{proof}
(i) During this proof, we will also use the identification between fermionic Fock spaces used in the proof of Lemma \ref{VolSys}. 
Take $\Lambda, \Lambda' \in F_{\mathbb{L}}^{\rm (e)}$ such that  $\Lambda\subseteq \Lambda'$.
In the proof of Lemma \ref{VolSys}, we have  already proved  that $\h_{\Lambda}^{\rm H} \otimes \h_{\Lambda'\setminus \Lambda}^{\rm H}$can be regarded as  a subspace of $\h^{\rm H}_{\Lambda'}$. 
 
Given $\Lambda, \Lambda' \in F_{\mathbb{L}}^{\rm (e)}$ with $\Lambda\subseteq \Lambda'$, 
define the isometric linear mapping $\kappa^{\rm H} : \h_{\Lambda}^{\rm H} \to \h_{\Lambda'}^{\rm H}$ by 
$\kappa^{\rm H} \eta=\eta \otimes \tilde{\xi}_{\Lambda'\setminus \Lambda}\, (\eta \in \h^{\rm H}_{\Lambda})$, where $\tilde{\xi}_{\Lambda}$ is given in the proof of Lemma \ref{VolSys}. By identifying $\kappa^{\rm H} \h_{\Lambda}^{\rm H} $ with $\h_{\Lambda}^{\rm H}$, 
we can regard $\h_{\Lambda}^{\rm H}$ as a subspace of $\h_{\Lambda'}^{\rm H}$.
In addition, $\M_{\Lambda}^{\rm H}$ can be identified  with $\M_{\Lambda}^{\rm H}\otimes \Pi_{\Lambda'\setminus \Lambda}$, where 
$\Pi_{\Lambda' \setminus \Lambda}=|\tilde{\xi}_{\Lambda'\setminus \Lambda}\ra\la \tilde{\xi}_{ \Lambda'\setminus \Lambda} |$.
We denote by $P_{\Lambda\Lambda'}^{\rm H}$ the orthogonal projection from $\h^{\rm H}_{\Lambda'}$
to $\h_{\Lambda}^{\rm H}$. Trivially, $P_{\Lambda\Lambda'}^{\rm H}=1\otimes \Pi_{\Lambda'\setminus \Lambda}$.
We readily confirm the following:
\begin{itemize}
\item $\M^{\rm H}_{\Lambda}=\M_{\Lambda}^{\rm H}\otimes \Pi_{\Lambda'\setminus \Lambda}=
P_{\Lambda\Lambda'}^{\rm H}\M_{\Lambda'}^{\rm H}P_{\Lambda\Lambda'}^{\rm H}$.
\item $\vphi_{\Lambda'}^{\rm H} \circ \mathscr{E}_{\Lambda\Lambda'}^{\rm H}
=\vphi^{\rm H}_{\Lambda'}$, where $\mathscr{E}^{\rm H}_{\Lambda\Lambda'}(x)
=P_{\Lambda\Lambda'}^{\rm H} x P_{\Lambda\Lambda'}^{\rm H}\, (x\in \M_{\Lambda'}^{\rm H})$.
\end{itemize}
To complete the proof, we need only show the following:
\be
P_{\Lambda\Lambda'}^{\rm H}L^2(\M_{\Lambda'}^{\rm H}, \vphi_{\Lambda'}^{\rm H})_+=L^2(\M_{\Lambda}^{\rm H}, \vphi_{\Lambda}^{\rm H})_+. \label{LasBoss}
\ee
Recall that $\vphi_{\Lambda}^{\rm H}$ is defined by $\vphi_{\Lambda}^{\rm H}=\vphi_{\xi_{\Lambda}}$.
Because $\xi_{\Lambda}$ is cyclic and separating, $L^2(\M_{\Lambda}^{\rm H}, \vphi_{\Lambda}^{\rm H})_+$
can be expressed as
\be
L^2(\M_{\Lambda}^{\rm H}, \vphi_{\Lambda}^{\rm H})_+=\{x J_{\Lambda}x J_{\Lambda} \xi_{\Lambda} : x\in \M^{\rm H}_{\Lambda}\}.
\ee  
Note that $J_{\Lambda} c_{x_1\up}^*\cdots c_{x_n\up}^* c_{y_1\up} \cdots c_{y_n\up} J_{\Lambda}
=(-1)^{|X\cap \Lambda_B|+|Y\cap \Lambda_B|}c_{x_1\down}\cdots c_{x_n\down} c_{y_1\down}^* \cdots c_{y_n\down}^* $.
Hence, if we express $\eta\in L^2(\M_{\Lambda'}^{\rm H}, \vphi_{\Lambda'}^{\rm H})$ as 
\be
\eta=\sum_{{X, Y\subseteq \Lambda'}\atop{|X|=|Y|}} \eta_{XY} |X, \overline{Y}\ra_{\Lambda'},\label{ExpEta}
\ee
then $\eta$ belongs to $ L^2(\M_{\Lambda'}^{\rm H}, \vphi_{\Lambda'}^{\rm H})_+$, if and only if, $\{\eta'_{X, Y} \}_{X, Y}$ is a  positive semi-definite matrix, where $\eta'_{XY} $ is defined by
$\eta'_{X, Y}=\eta_{X, Y}$ if $|X|=|Y|$, $\eta_{X, Y}'=0$ if $|X|\neq |Y|$.
Note that we have used $ |X, \overline{Y}\ra_{\Lambda'}$ instead of $ |X, Y\ra_{\Lambda'}$ in the expression \eqref{ExpEta}.
Suppose that $\eta \in L^2(\M_{\Lambda'}^{\rm H}, \vphi_{\Lambda'}^{\rm H})_+$.
The action of the orthogonal projection  $P_{\Lambda\Lambda'}^{\rm H}$ on the vector $\eta$ is as follows: 
\be
P_{\Lambda\Lambda'}^{\rm H}\eta=\sum_{{X_{\Lambda}, Y_{\Lambda} \subseteq \Lambda}\atop{|X_{\Lambda}|=|Y_{\Lambda}|}} \tilde{\eta}_{X_{\Lambda}, Y_{\Lambda}} |X_{\Lambda}, \overline{Y}_{\Lambda}\ra_{\Lambda} \otimes \tilde{\xi}_{\Lambda'\setminus \Lambda},
\ee
where
\be
\tilde{\eta}_{X_{\Lambda}, Y_{\Lambda}} =\sum_{X_{\Lambda'\setminus \Lambda}\subseteq \Lambda'\setminus \Lambda}
\eta_{X_{\Lambda} \sqcup X_{\Lambda'\setminus \Lambda}, Y_{\Lambda} \sqcup X_{\Lambda'\setminus \Lambda}}.
\ee
Set $\tilde{\eta}'_{X_{\Lambda}, Y_{\Lambda}}=\tilde{\eta}_{X_{\Lambda}, Y_{\Lambda}}$ if $|X_{\Lambda}|=|Y_{\Lambda}|$, $\tilde{\eta}'_{X_{\Lambda}, Y_{\Lambda}} =0$ if $|X_{\Lambda}|\neq |Y_{\Lambda}|$.
Since $\{\eta'_{XY}\}$ is positive semi-definite, so is $\{\tilde{\eta}'_{X_{\Lambda}, Y_{\Lambda}}\}$. This means 
that $P_{\Lambda\Lambda'}^{\rm H}\eta \in L^2(\M_{\Lambda}^{\rm H}, \vphi_{\Lambda}^{\rm H})_+$.
Hence, we conclude that $P_{\Lambda\Lambda'}^{\rm H}L^2(\M_{\Lambda'}^{\rm H}, \vphi_{\Lambda'}^{\rm H})_+\subseteq L^2(\M_{\Lambda}^{\rm H}, \vphi_{\Lambda}^{\rm H})_+$. 
The inverse inclusion relation can be easily shown. 
This completes the proof of  \eqref{LasBoss}.

(ii) First, recall the definition of $Q_{\Lambda}$:  $Q_{\Lambda}=\prod_{x\in \Lambda} (n_{x\up}-n_{x\down})^2$. From the definition of $\h_{\Lambda}$, i.e., Eq. \eqref{HSingE}, we get 
\be
Q_{\Lambda} L^2(\M_{\Lambda}^{\rm H}, \vphi_{\Lambda}^{\rm H})=L^2(\M_{\Lambda}^{\rm MLM}, \vphi_{\Lambda}^{\rm MLM}).
\ee

Note that $Q_{\Lambda}  c_{x_1\up}^*\cdots c_{x_n\up}^* c_{y_1\up} \cdots c_{y_n\up} Q_{\Lambda}$ is non-zero,
 if and only if, $\{x_1, \dots, x_n\}=\{y_1, \dots, y_n\}$.
 In this case, $Q_{\Lambda}  c_{x_1\up}^*\cdots c_{x_n\up}^* c_{y_1\up} \cdots c_{y_n\up} Q_{\Lambda}$ is identical to $n_{x_1\up}\cdots n_{x_n\up}$, apart from the sign.
 From this fact, it follows that  $Q_{\Lambda} \M^{\rm H}_{\Lambda} Q_{\Lambda}\subseteq \M_{\Lambda}^{\rm MLM}$. Since it is easy to show the inverse inclusion relation, the equality 
 $Q_{\Lambda} \M^{\rm H}_{\Lambda} Q_{\Lambda}= \M_{\Lambda}^{\rm MLM}$
  is eventually established.
  
  Next, let us prove $Q_{\Lambda} L^2(\M_{\Lambda}^{\rm H}, \vphi_{\Lambda}^{\rm H})_+=L^2(\M_{\Lambda}^{\rm MLM}, \vphi_{\Lambda}^{\rm MLM})_+$.
 Recall that if we denote a vector $\eta\in L^2(\M_{\Lambda}^{\rm H}, \vphi_{\Lambda}^{\rm H})_+$ as 
 $\eta=\sum_{{X, Y\subseteq \Lambda}\atop{|X|=|Y|}} \eta_{XY} |X, \overline{Y}\ra_{\Lambda}$, then 
 $\{\eta'_{X, Y}\}$ is a positive definite matrix.
 In particular, we can see that the diagonal components are all positive: $\eta'_{X, X} \ge 0$.
 Since the action of $Q_{\Lambda}$ on $\eta$ is 
 $
 Q_{\Lambda}\eta=\sum_{X\subseteq \Lambda} \eta_{X, X}|X, \overline{X}\ra_{\Lambda}
 $, we know that $Q_{\Lambda}\eta\in L^2(\M_{\Lambda}^{\rm MLM}, \vphi_{\Lambda}^{\rm MLM})_+$.
 Hence, $ Q_{\Lambda} L^2(\M_{\Lambda}^{\rm H}, \vphi_{\Lambda}^{\rm H})_+\subseteq L^2(\M_{\Lambda}^{\rm MLM}, \vphi_{\Lambda}^{\rm MLM})_+$ holds. It is straightforward to show the inverse inclusion relation.
 
 Combining the above arguments with \eqref{EHtoH}, we can conclude that (ii) is valid.
\end{proof}

In what follows, we examine properties of  the magnetic system $O_{1/2}^{\rm H}
:=\big\{
\{\M_{\Lambda}^{\rm H}, \vphi^{\rm H}_{\Lambda}\} : \Lambda\in F_{\mathbb{L}}^{(\rm e)}
\big\}
$.

Let $G^t$ and $G^t_{\Lambda}\, (\Lambda\in F_{\mathbb{L}})$ be the graphs generated  by the hopping matrix
 $\{t_{xy}\}$; see  the definition immediately preceding Condition \ref{CoupAss}. 
The following conditions are imposed on $H_{\Lambda}^{\rm H}$.

\begin{Assum}\label{AssHubbard}\upshape
Let $T(G)$ be the normal spanning tree introduced in Condition \ref{CoupAss}.

\begin{itemize}
\item[(ii)] $T(G)$ is a subgraph of $G^t$ such that $\mathbb{L}_A\sqcup_E \mathbb{L}_B=\mathbb{L}_A\sqcup_{E^t} \mathbb{L}_B=\mathbb{L}$.
\item[(iii)] For each $\Lambda\in F_{\mathbb{L}}$, $T(G)_{\Lambda}$ is a subgraph of $G_{\Lambda}^t$ such that $\Lambda_A\sqcup_{E_{\Lambda}} \Lambda_B=\Lambda_A\sqcup_{E_{\Lambda}^t} \Lambda_B=\Lambda$.
\end{itemize}
\end{Assum}

Using the theory constructed in the previous sections,
Lieb's theorem is extended as follows.
\begin{Thm}\label{MLMHubbThm}
Assume Condition \ref{AssHubbard}. 
Assume that,
for each  $\Lambda\in F^{\rm (e)}_{\mathbb{L}}$, $\{U_{xy}\}_{x, y\in \Lambda}$ is a positive definite matrix:
\be
\sum_{x, y\in\Lambda} z_x^* z_y U_{xy} >0\ \ \ \forall \{z_x\}_{x\in \Lambda} \in \BbbC^{\Lambda} \setminus \{0\}.
\ee
We have the following:
\begin{itemize}
\item[\rm (i)] For each $\Lambda\in F^{\rm (e)}_{\mathbb{L}}$, $H_{\Lambda}^{\rm H}\in \mathscr{A}_{\Lambda, |\Lambda|}(\M_{\Lambda}^{\rm MLM}, \vphi_{\Lambda}^{\rm MLM})$ holds. Hence,  $S_{H_{\Lambda}^{\rm H}}=\big||\Lambda_A|-|\Lambda_B|\big|\big/2$.
 \item[\rm (ii)] $O_{1/2}^{\rm H} \LRA O_{1/2}^{\rm MLM}$. Hence, $O_{1/2}^{\rm H} \in \mathscr{C}(O_{1/2}^{\rm MLM})$.
 \item[\rm (iii)] Set ${\bs H}^{\rm H}=\{H_{\Lambda} ^{\rm H} : \Lambda \in F^{\rm (e)}_{\mathbb{L}}\}$. Then ${\bs H}^{\rm H}$ is adapted to $O_{1/2}^{\rm H}$.  Hence,
 if there exist a non-negative  constant $s$ and an increasing sequence of sets $\{\Lambda_n : n\in \BbbN\}\subset  F^{\rm (e)}_{\mathbb{L}}$ such that $\bigcup_{n=1}^{\infty} \Lambda_n=\mathbb{L}$ and 
$\big||\Lambda_{n, A}|-|\Lambda_{n, B}|\big|=2s|\Lambda_n|+o(|\Lambda_n|)$ as $n\to \infty$, then each NMGS  associated with ${\bs H}^{\rm H}$ exhibits a  magnetic order with a spin density of  $s$.
\end{itemize}
\end{Thm}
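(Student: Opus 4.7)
My plan is to deduce the theorem from Lemma \ref{ConHubb}, the machinery of Sections \ref{Sect3}--\ref{Sect4}, and Theorem \ref{BasicMLM}, with the only genuinely new analytic work being the verification that $H_\Lambda^{\rm H}$ satisfies Definition \ref{DefAHamiC} relative to the Hubbard IEE system $\{\M_\Lambda^{\rm H}, \vphi_\Lambda^{\rm H}\}$. Claim (ii) is almost immediate: Lemma \ref{ConHubb}(i) gives $F(\M_{\Lambda'}^{\rm H}, \vphi_{\Lambda'}^{\rm H}) \LRA F(\M_\Lambda^{\rm H}, \vphi_\Lambda^{\rm H})$ whenever $\Lambda \subseteq \Lambda'$ in $F^{\rm (e)}_{\mathbb{L}}$, so $O_{1/2}^{\rm H}$ is a magnetic system; Lemma \ref{ConHubb}(ii) then yields $F(\M_\Lambda^{\rm H}, \vphi_\Lambda^{\rm H}) \LRA F(\M_\Lambda^{\rm MLM}, \vphi_\Lambda^{\rm MLM})$ for every $\Lambda$, which is exactly the relation $O_{1/2}^{\rm H} \LRA O_{1/2}^{\rm MLM}$ required by Definition \ref{ConsisOO}.

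For (i), I must verify the three conditions of Definition \ref{DefAHamiC} for $H_\Lambda^{\rm H}$ relative to $\{\M_\Lambda^{\rm H}, \vphi_\Lambda^{\rm H}\}$. Conditions (i) and (ii) are routine: the hopping term and the density-density Coulomb term commute individually with $S_\Lambda^{(1)}, S_\Lambda^{(2)}, S_\Lambda^{(3)}$, and $\h_\Lambda^{\rm H}$ is finite dimensional, so a ground state exists. The heart of the argument is the ergodicity of $\{e^{-\beta H_{\Lambda,M}^{\rm H}}\}_{\beta\ge 0}$ on each $M$-sector with respect to $L^2(\M_\Lambda^{\rm H}[M], \vphi_{\Lambda,M}^{\rm H})_+$. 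I would first show positivity preservation: in the description of $L^2(\M_\Lambda^{\rm H}, \vphi_\Lambda^{\rm H})_+$ from the proof of Lemma \ref{ConHubb}(i), $J_\Lambda$ intertwines up-spin and down-spin operators (up to signs fixed by the bipartite parity in \eqref{DefCONS}); consequently the hopping term splits as $K_\up + J_\Lambda K_\up J_\Lambda$ with $K_\up = -\sum_{x,y} t_{xy} c_{x\up}^* c_{y\up}$, while positive definiteness of $\{U_{xy}\}$ allows a Cholesky-type factorization $U = W^\top W$ so that the Coulomb term becomes, up to a constant and a one-body term absorbable into the chemical potential, a sum of squares of self-adjoint operators of the form $A_r + J_\Lambda A_r J_\Lambda$ with $A_r$ polynomial in $n_{x\up}$. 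Proposition \ref{AJAJP} applied to each elementary block $a J_\Lambda a J_\Lambda$, combined with the Trotter product formula and the algebraic manipulation lemmas for OPOIs from Section \ref{PrelCond}, then yields $e^{-\beta H_\Lambda^{\rm H}} \unrhd 0$ w.r.t.\ $L^2(\M_\Lambda^{\rm H}, \vphi_\Lambda^{\rm H})_+$, and this restricts sector by sector.

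To upgrade positivity preservation to ergodicity, I would appeal to the general ergodicity criterion developed in Appendix \ref{SectA} (the same one that powers Theorem \ref{ManyHa} for the MLM Hamiltonian), applied to the Hubbard generator on each $M$-sector. The required irreducibility is a connectivity statement on the basis $\{|X,\overline{Y}\rangle_\Lambda : |X|=|Y|,\ |X|+|\Lambda|-|Y|\text{ fixed}\}$ inside the $M$-sector: any two such configurations can be linked by a sequence of single-spin hops and on-site interactions. This is where Condition \ref{AssHubbard} does the work---the spanning tree $T(G)_\Lambda\subseteq G_\Lambda^t$ ensures that up-spin (and down-spin) configurations of fixed cardinality form a connected set under the available single-particle hops, and the interaction term then connects the ``diagonal'' and ``off-diagonal'' subalgebra components via its square structure. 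I expect this connectivity check to be the main obstacle, since it has to be performed carefully at the level of matrix entries in the basis $|X,\overline{Y}\rangle_\Lambda$ rather than just configurations $|X,\overline{X}\rangle_\Lambda$; the bipartite hypothesis and the positive-definite $U$ are both essential here.

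Finally, (iii) is immediate once (i) and (ii) are established: the family ${\bs H}^{\rm H}$ is adapted to $O_{1/2}^{\rm H}$ by definition (this is just (i) applied at each $\Lambda$), and then Corollary \ref{StaGsBigA}, together with Theorem \ref{BasicMLM}(i), gives $S_{H_\Lambda^{\rm H}} = S(\M_\Lambda^{\rm MLM}, \vphi_\Lambda^{\rm MLM}) = \bigl||\Lambda_A|-|\Lambda_B|\bigr|/2$. The magnetic-order conclusion under the growth hypothesis on $\bigl||\Lambda_{n,A}|-|\Lambda_{n,B}|\bigr|$ then follows from Theorem \ref{BasicMLM}(iii) (equivalently Theorem \ref{OtoO}(ii)) applied to the pair $(O_{1/2}^{\rm H}, {\bs H}^{\rm H}) \LRA (O_{1/2}^{\rm MLM}, \cdot)$.
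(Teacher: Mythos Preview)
Your proposal is correct and follows the paper's architecture: Lemma~\ref{ConHubb} gives (ii), the ergodicity statement (iii) of Theorem~\ref{ManyHa} yields $H_\Lambda^{\rm H}\in A_{\Lambda,|\Lambda|}(\M_\Lambda^{\rm H},\vphi_\Lambda^{\rm H})$, and Theorem~\ref{BasicMLM} closes (i) and (iii). Two small corrections to your expansion of the ergodicity step. First, since $J_\Lambda n_{x\up}J_\Lambda=1-n_{x\down}$, one has $n_x-1=n_{x\up}-J_\Lambda n_{x\up}J_\Lambda$, so after Cholesky the Coulomb term is $\tfrac12\sum_r(A_r-J_\Lambda A_rJ_\Lambda)^2$ rather than a sum of $(A_r+J_\Lambda A_rJ_\Lambda)$'s; the cross term $-2A_rJ_\Lambda A_rJ_\Lambda$ still has the correct sign for positivity preservation, so this is only a slip. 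Second, the criterion invoked for the Hubbard model in Appendix~\ref{SectB} is Theorem~\ref{AbstPI2} (Condition~\ref{PIBasicAII}), in which the hopping part $\Phi^{\rm w}$ and the Coulomb part $\Phi^{\rm s}$ must cooperate through the intermediate factors $e^{-sH^{{\bs \Psi},{\rm w}}_\Lambda}$ to achieve the strict inequality; this is genuinely more delicate than Theorem~\ref{AbstPI1} used for the MLM and Heisenberg Hamiltonians, so your parenthetical ``the same one that powers Theorem~\ref{ManyHa} for the MLM Hamiltonian'' understates what is needed.
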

\begin{proof}
Fix $\Lambda\in F^{\rm (e)}_{\mathbb{L}}$. By applying (iii) of Theorem \ref{ManyHa}, we know that
$
\{e^{-\beta H^{\rm H}_{\Lambda, M}} \}_{\beta\ge 0}$ is ergodic  w.r.t. $L^2(\M^{\rm H}_{\Lambda}[M], \vphi^{\rm H}_{\Lambda, M})_+$ for all $ M\in \mathrm{spec}(S_{\Lambda}^{(3)} \restriction \h_{\Lambda})$.
Hence, $H^{\rm H}_{\Lambda}\in A_{\Lambda, |\Lambda|}(\M_{\Lambda}^{\rm H}, \vphi_{\Lambda}^{\rm H})$
for all $\Lambda\in F^{\rm (e)}_{\mathbb{L}}$, which means that ${\bs H}^{\rm H}$ is adapted to $O_{1/2}^{\rm H}$.
By using Theorem \ref{BasicMLM} and Lemma \ref{ConHubb}, we conclude the desired results in Theorem \ref{MLMHubbThm}.
\end{proof}

In this way, it turns out that we can unify the MLM theorem and Lieb's theorem.

\subsection{The Holstein--Hubbard model} \label{MLM-HH}
Let us consider the Holstein--Hubbard model of interacting electrons coupled to dispersionless phonons of frequency $\omega > 0$. The Hamiltonian is
\be
H^{\rm HH}_{\Lambda} = H^{\rm H}_{\Lambda} +
\sum_{x, y\in \Lambda}
g_{xy} (n_x -1)(b_y^*
 + b_y ) +\sum_{x\in \Lambda}
\omega b_x^*b_x,
\ee
where $H_{\Lambda}^{\rm H}$ is the Hubbard Hamiltonian given by \eqref{DefHubbard};  $b_x^*$
 and $b_x$ are bosonic creation- and annihilation operators at site $x\in \Lambda$, respectively. The operators $b_x^*$
 and $b_x$ satisfy the canonical commutation
relations:
\be
[b_x, b_y^*]=\delta_{xy},\ \ [b_x, b_y]=0.
\ee
$g_{xy}$ is the strength of the electron-phonon interaction.
We assume that the matrix $\{g_{xy}\}$ is real and symmetric. The operator $H_{\Lambda}^{\rm HH}$ acts in 
$\h^{\rm HH}_{\Lambda}=\h^{\rm H}_{\Lambda} \otimes \F^{\rm ph}_{\Lambda}$, where $\F^{\rm ph}_{\Lambda}$ is the bosonic Fock space over $\ell^2(\Lambda)$: $\F^{\rm ph}_{\Lambda}=\bigoplus_{n=0}^{\infty} \otimes^n_{\rm s} \ell^2(\Lambda)$. By the Kato--Rellich theorem \cite[Theorem X.12]{Reed1975}, $H_{\Lambda}^{\rm HH}$ is self-adjoint on $
\D(N^{\rm ph}_{\Lambda})$, bounded from below, where $N_{\Lambda}^{\rm ph}=\sum_{x\in \Lambda} b_x^*b_x$.

Recall the following identification:
\be
\F^{\rm ph}_{\Lambda}=L^2(\BbbR^{|\Lambda|}). \label{IdnLF}
\ee
Using this fact, we define the abelian von Neumann algebra on $\F^{\rm ph}_{\Lambda}$ by 
\be
\M_{\Lambda}^{\rm ph}=L^{\infty}(\BbbR^{|\Lambda|}).
\ee
The bosonic Fock vacuum in $\F^{\rm ph}_{\Lambda}$ is denoted by $\Omega_{\Lambda}^{\rm ph}$. Now, define
\be
\M^{\rm HH}_{\Lambda}=\M^{\rm H}_{\Lambda} \otimes \M^{\rm ph}_{\Lambda}\label{vNTensor}
\ee
and $\vphi^{\rm HH}_{\Lambda}=\vphi_{\xi_{\Lambda}^{\rm HH}}$, where $\xi_{\Lambda}^{\rm HH}=\xi_{\Lambda} \otimes \Omega_{\Lambda}^{\rm ph}$. Here, recall that $\xi_{\Lambda}$ is defined by \eqref{TaneVec}. Under the identification \eqref{IdnLF}, we can identify 
$
\Omega_{\Lambda}^{\rm ph}$ with the function $(\omega/|\Lambda|)^{|\Lambda|/4} e^{-\omega {\bs q}^2/2},\ \ {\bs q}\in \BbbR^{|\Lambda|}
$. Hence, we readily confirm that $\vphi^{\rm HH}_{\Lambda}$ is a faithful semi-finite normal weight on $\M^{\rm HH}_{\Lambda}$. The   isometric linear  mapping $\kappa^{\rm HH}_{\Lambda}$ from $\h^{\rm H}_{\Lambda}$ into $\h^{\rm HH}_{\Lambda}$ is given by 
$
\kappa_{\Lambda}^{\rm HH} \eta=\eta\otimes \Omega^{\rm ph}_{\Lambda}\  (\eta\in \h^{\rm H}_{\Lambda}).
$ Hence, $\h^{\rm HH}_{\Lambda}$ is an IEE space associated with $\h_{\Lambda}^{\rm H}$.
Now let $Q_{\Lambda}^{\rm HH}$ be the orthogonal projection from $\h^{\rm HH}_{\Lambda}$ to $\h^{\rm H}_{\Lambda}$.  Trivially, $Q^{\rm HH}_{\Lambda}\xi_{\Lambda}^{\rm HH}=\xi_{\Lambda}^{\rm HH}$ holds, which implies that 
\be
\vphi^{\rm HH}_{\Lambda} \circ \mathscr{E}_{\Lambda}^{\rm HH}=\vphi^{\rm H}_{\Lambda}, 
\ee
where $
\mathscr{E}_{\Lambda}^{\rm HH}(x)=Q^{\rm HH}_{\Lambda} x Q^{\rm HH}_{\Lambda}\ (x\in \M_{\Lambda}^{\rm HH})
$. 

\begin{Lemm}\label{ConHoHu}
We have the following:
\begin{itemize}
\item[\rm (i)] If $\Lambda, \Lambda'\in F^{\rm (e)}_{\mathbb{L}}$ satisfies $\Lambda\subseteq \Lambda'$, then 
$F(\M^{\rm HH}_{\Lambda\rq{}}, \vphi^{\rm HH}_{\Lambda\rq{}})\longrightarrow F(\M^{\rm HH}_{\Lambda}, \vphi^{\rm HH}_{\Lambda})$.
\item[\rm (ii)] For each $\Lambda 
\in F^{\rm (e)}_{\mathbb{L}}
, $
$F(\M^{\rm HH}_{\Lambda}, \vphi^{\rm HH}_{\Lambda})\longrightarrow F(\M^{\rm H}_{\Lambda}, \vphi^{\rm H}_{\Lambda})$.
\end{itemize}
The above results can be summarized as the following commutative diagram:
\be
\begin{tikzcd}
  F(\M^{\rm HH}_{\Lambda}, \vphi^{\rm HH}_{\Lambda})   \arrow[d] &
  \ar[l]   F(\M^{\rm HH}_{\Lambda\rq{}}, \vphi^{\rm HH}_{\Lambda\rq{}})
  \arrow[d]  \\
  F(\M^{\rm H}_{\Lambda}, \vphi^{\rm H}_{\Lambda}) &\arrow[l] F(\M^{\rm H}_{\Lambda\rq{}}, \vphi^{\rm H}_{\Lambda\rq{}})
\end{tikzcd}
\ee
Furthermore, when $\{\M_{\Lambda}, \vphi_{\Lambda}\}$  is replaced by $\{\M^{\rm H}_{\Lambda}, \vphi^{\rm H}_{\Lambda}\}$ and $\{\M^{\prime}_{\Lambda}, \vphi^{\prime}_{\Lambda}\}$ is replaced by $\{\M^{\rm HH}_{\Lambda}, \vphi^{\rm HH}_{\Lambda}\}$ in Theorem \ref{MacroDiagram}, the diagram \eqref{MacDiagComm} becomes commutative.
\end{Lemm}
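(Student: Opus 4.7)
The plan is to mirror the proof of Lemma \ref{ConHubb} almost verbatim, with the one additional ingredient being the factorization of the phonon Fock space. Specifically, for any partition $\Lambda\subset \Lambda'$, the identification $\F^{\rm ph}_{\Lambda'}=\F^{\rm ph}_{\Lambda}\otimes \F^{\rm ph}_{\Lambda'\setminus\Lambda}$, together with the fermionic factorization already established in the proof of Lemma \ref{VolSys}, yields
\be
\h^{\rm HH}_{\Lambda'}\supseteq \h^{\rm HH}_{\Lambda}\otimes (\h^{\rm H}_{\Lambda'\setminus\Lambda}\otimes \F^{\rm ph}_{\Lambda'\setminus\Lambda}), \een
after suitable reordering of tensor factors. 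Analogously, $\M^{\rm HH}_{\Lambda'}=\M^{\rm HH}_{\Lambda}\otimes \M^{\rm HH}_{\Lambda'\setminus\Lambda}$ through \eqref{vNTensor}, and $\xi^{\rm HH}_{\Lambda'}=\xi^{\rm HH}_{\Lambda}\otimes \xi^{\rm HH}_{\Lambda'\setminus\Lambda}$.

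For (i), I would introduce the isometric linear embedding $\kappa^{\rm HH}_{\Lambda'\Lambda}:\h^{\rm HH}_{\Lambda}\to \h^{\rm HH}_{\Lambda'}$, $\eta\mapsto \eta\otimes \tilde{\xi}_{\Lambda'\setminus\Lambda}\otimes \Omega^{\rm ph}_{\Lambda'\setminus\Lambda}$, and identify $\h^{\rm HH}_{\Lambda}$ with its image. The orthogonal projection $P^{\rm HH}_{\Lambda\Lambda'}$ then factorizes as $1\otimes \Pi^{\rm HH}_{\Lambda'\setminus\Lambda}$, where $\Pi^{\rm HH}_{\Lambda'\setminus\Lambda}=|\tilde{\xi}_{\Lambda'\setminus\Lambda}\ra\la \tilde{\xi}_{\Lambda'\setminus\Lambda}|\otimes |\Omega^{\rm ph}_{\Lambda'\setminus\Lambda}\ra\la \Omega^{\rm ph}_{\Lambda'\setminus\Lambda}|$. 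Conditions (i) and (ii) of Definition \ref{DefConForm} are then essentially immediate; the nontrivial verification is condition (iii), i.e., $P^{\rm HH}_{\Lambda\Lambda'}L^2(\M^{\rm HH}_{\Lambda'},\vphi^{\rm HH}_{\Lambda'})_+=L^2(\M^{\rm HH}_{\Lambda},\vphi^{\rm HH}_{\Lambda})_+$, which I would deduce in two steps: first reduce to the already-known bosonic and fermionic pieces separately, then use the fact that for cyclic-separating vectors, the self-dual cone is the closure of $\{xJxJ\xi^{\rm HH}_{\Lambda'}:x\in \M^{\rm HH}_{\Lambda'}\}$ together with Proposition \ref{AJAJP}.

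For (ii), I would use the tensor decomposition $\M^{\rm HH}_{\Lambda}=\M^{\rm H}_{\Lambda}\otimes \M^{\rm ph}_{\Lambda}$ with $\M^{\rm ph}_{\Lambda}=L^\infty(\BbbR^{|\Lambda|})$. The projection $Q^{\rm HH}_{\Lambda}$ factorizes as $1\otimes |\Omega^{\rm ph}_{\Lambda}\ra\la \Omega^{\rm ph}_{\Lambda}|$, so that $Q^{\rm HH}_{\Lambda}\M^{\rm HH}_{\Lambda}Q^{\rm HH}_{\Lambda}\simeq \M^{\rm H}_{\Lambda}$, and the weight compatibility $\vphi^{\rm HH}_{\Lambda}\circ \mathscr{E}^{\rm HH}_{\Lambda}=\vphi^{\rm H}_{\Lambda}$ is already recorded in the text. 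What remains is the cone identity $Q^{\rm HH}_{\Lambda}L^2(\M^{\rm HH}_{\Lambda},\vphi^{\rm HH}_{\Lambda})_+=L^2(\M^{\rm H}_{\Lambda},\vphi^{\rm H}_{\Lambda})_+$. Here the key computation is that the modular conjugation decomposes as $J^{\rm HH}_{\Lambda}=J^{\rm H}_{\Lambda}\otimes J^{\rm ph}_{\Lambda}$, where $J^{\rm ph}_{\Lambda}$ is pointwise complex conjugation on $L^2(\BbbR^{|\Lambda|})$. For elementary tensors $a\otimes b\in \M^{\rm H}_{\Lambda}\otimes \M^{\rm ph}_{\Lambda}$,
\ben
(a\otimes b)J^{\rm HH}_{\Lambda}(a\otimes b)J^{\rm HH}_{\Lambda}\xi^{\rm HH}_{\Lambda}=(aJ^{\rm H}_{\Lambda}aJ^{\rm H}_{\Lambda}\xi_{\Lambda})\otimes (bJ^{\rm ph}_{\Lambda}bJ^{\rm ph}_{\Lambda}\Omega^{\rm ph}_{\Lambda}),
\een
and applying $Q^{\rm HH}_{\Lambda}$ produces $\la \Omega^{\rm ph}_{\Lambda}|bJ^{\rm ph}_{\Lambda}bJ^{\rm ph}_{\Lambda}\Omega^{\rm ph}_{\Lambda}\ra\cdot (aJ^{\rm H}_{\Lambda}aJ^{\rm H}_{\Lambda}\xi_{\Lambda})$, a non-negative scalar times a positive-cone element of the Hubbard system, because $bJ^{\rm ph}_{\Lambda}bJ^{\rm ph}_{\Lambda}\Omega^{\rm ph}_{\Lambda}$ lies in the bosonic self-dual cone (non-negative functions) and $\Omega^{\rm ph}_{\Lambda}$ is strictly positive there. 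By density and continuity of $Q^{\rm HH}_{\Lambda}$ one obtains $\subseteq$; the reverse inclusion is trivial since any $\eta\in L^2(\M^{\rm H}_{\Lambda},\vphi^{\rm H}_{\Lambda})_+$ lifts to $\eta\otimes \Omega^{\rm ph}_{\Lambda}\in L^2(\M^{\rm HH}_{\Lambda},\vphi^{\rm HH}_{\Lambda})_+$, which $Q^{\rm HH}_{\Lambda}$ returns to $\eta$.

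The main obstacle is the cone identity in (ii): one must show that the projection $Q^{\rm HH}_{\Lambda}$ maps the full self-dual cone of the tensor-product standard form onto the Hubbard cone, even though the self-dual cone is strictly larger than the set of elementary positive tensors. The density argument sketched above, reducing arbitrary cone elements to limits of combinations of elementary tensors $aJaJ\xi^{\rm HH}_{\Lambda}$, is the heart of the proof and deserves careful treatment. Once (i) and (ii) are established, the commutative diagram is immediate from transitivity (item 3 of Remark \ref{RemCons}), and the claim concerning the inductive-limit diagram \eqref{MacDiagComm} specialized to $\{\M_{\Lambda},\vphi_{\Lambda}\}=\{\M^{\rm H}_{\Lambda},\vphi^{\rm H}_{\Lambda}\}$ and $\{\M'_{\Lambda},\vphi'_{\Lambda}\}=\{\M^{\rm HH}_{\Lambda},\vphi^{\rm HH}_{\Lambda}\}$ follows directly from Theorem \ref{MacroDiagram} applied to this consistency pair.
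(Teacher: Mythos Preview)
Your overall strategy---factor the Fock space and the von Neumann algebra, identify the projection, and then verify the cone identity---is the same scaffolding the paper uses, but your treatment of the cone identity differs in a substantive way and has a gap as written.

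The paper does not argue via density of elementary tensors $a\otimes b$ in $\M^{\rm HH}_{\Lambda}$. Instead, it exploits the commutativity of $\M^{\rm ph}_{\Lambda}=L^\infty(\BbbR^{|\Lambda|})$ to give a \emph{direct-integral} characterization of the cone: every $\eta\in L^2(\M^{\rm HH}_{\Lambda},\vphi^{\rm HH}_{\Lambda})$ is expanded as $\eta=\sum_{X,Y}|X,\overline{Y}\rangle_{\Lambda}\otimes\eta_{XY}$ with $\eta_{XY}\in L^2(\BbbR^{|\Lambda|})$, and $\eta$ lies in the positive cone if and only if the matrix $\{\eta'_{XY}(\bs q)\}$ is positive semi-definite for almost every $\bs q$. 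With this pointwise description in hand, both cone identities reduce to elementary integration arguments against the strictly positive Gaussian $\Omega^{\rm ph}_{\Lambda}(\bs q)$: positive semi-definiteness is preserved under the partial trace over $X_{\Lambda'\setminus\Lambda}$ and under integration in $\bs q_{\Lambda'\setminus\Lambda}$ for part (i), and under integration in $\bs q$ for part (ii). No density argument is needed.

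Your route via $xJxJ\xi^{\rm HH}_{\Lambda}$ can be made to work, but as stated it is incomplete. The self-dual cone is the closure of $\{xJxJ\xi^{\rm HH}_{\Lambda}:x\in\M^{\rm HH}_{\Lambda}\}$, not of the subset where $x=a\otimes b$ is an elementary tensor. For a finite sum $x=\sum_i a_i\otimes b_i$ one obtains cross terms
\[
Q^{\rm HH}_{\Lambda}\bigl(xJ^{\rm HH}_{\Lambda}xJ^{\rm HH}_{\Lambda}\xi^{\rm HH}_{\Lambda}\bigr)=\sum_{i,j}M_{ij}\,a_iJ^{\rm H}_{\Lambda}a_jJ^{\rm H}_{\Lambda}\xi_{\Lambda},\qquad M_{ij}=\langle\Omega^{\rm ph}_{\Lambda}\,|\,b_i\overline{b_j}\,\Omega^{\rm ph}_{\Lambda}\rangle,
\]
and you must still argue that this lies in $L^2(\M^{\rm H}_{\Lambda},\vphi^{\rm H}_{\Lambda})_+$. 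This does follow, since $\{M_{ij}\}$ is positive semi-definite and can be diagonalized so that the sum rewrites as $\sum_k c_kJ^{\rm H}_{\Lambda}c_kJ^{\rm H}_{\Lambda}\xi_{\Lambda}$ with $c_k\in\M^{\rm H}_{\Lambda}$; but you did not supply this step, and your phrase ``deserves careful treatment'' is an acknowledgment rather than a proof. The paper's direct-integral characterization bypasses this issue entirely.
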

\begin{proof}
(i)
The basic idea of the proof is the same as in  the proof of Lemma \ref{ConHubb}.
In this proof, we will use the identification between fermionic Fock spaces used in the proof of Lemma \ref{VolSys}. In addition, the following identification between bosonic Fock spaces is also helpful:
 given Hilbert spaces $\X$ and $\frak{Y}$, $\F_{\rm B}(\X\oplus \frak{Y})=\F_{\rm B}(\X)\otimes \F_{\rm B}(\frak{Y})$ holds, where $\F_{\rm B}(\X)$ denotes the bosonic Fock space over $\X$: $\F_{\rm B}(\X)=\bigoplus_n \otimes_{\rm s}^n \X$. To be precise, there exists a unitary operator $\tau : \F_{\rm B}(\X\oplus \frak{Y})\to \F_{\rm B}(\X)\otimes \F_{\rm B}(\frak{Y})$ satisfying 
$
\tau|\varnothing \ra^{\rm B}_{\X\oplus \frak{Y}}=|\varnothing \ra^{\rm B}_{\X} \otimes |\varnothing\ra^{\rm B}_{\frak{Y}}
$ and
\be
\tau b_{\X\oplus \frak{Y}}(f\oplus g) \tau^{-1}=b_{\X}(f)\otimes 1+ 1 \otimes b_{\frak{Y}}(g)\ \ (f\in \X, g\in \frak{Y}).
\ee
Here, $|\varnothing \ra^{\rm B}_{\X}$ and   $b_{\X}(f)$  stand for the Fock vacuum and  an  annihilation operator  in $\F_{\rm B}(\X)$, respectively. Using the above observation, we obtain the practical identification: $\F^{\rm ph}_{\Lambda\rq{}}=\F^{\rm ph}_{\Lambda} \otimes \F^{\rm ph}_{\Lambda\rq{}\setminus \Lambda}$ provided   that $\Lambda\subset \Lambda\rq{}$.

Take $\Lambda, \Lambda'\in F^{\rm (e)}_{\mathbb{L}}$ with $\Lambda\subseteq \Lambda'$. Using the above identifications, we can regard $\h_{\Lambda}^{\rm HH} \otimes \h_{\Lambda'\setminus \Lambda}^{\rm HH}$ as as closed subspace of $\h_{\Lambda'}^{\rm HH}$.\footnote{ Indeed, by the identifications,
 we get 
 $
 \F_{\Lambda'} \otimes \F^{\rm ph}_{\Lambda'} =
 (\F_{\Lambda}\otimes \F_{\Lambda'\setminus \Lambda}) \otimes (\F^{\rm ph}_{\Lambda}\otimes \F^{\rm ph}_{\Lambda'\setminus \Lambda})
 \simeq
  (\F_{\Lambda} \otimes \F^{\rm ph}_{\Lambda}) \otimes (\F_{\Lambda'\setminus \Lambda} \otimes \F^{\rm ph}_{\Lambda'\setminus \Lambda}) 
 $. Considering the $|\Lambda'|$-electron subspaces on both sides of this equality, we get the following: 
 $
 \h_{\Lambda'}^{\rm HH}=\F_{\Lambda', |\Lambda'|} \otimes \F^{\rm ph}_{\Lambda'}
 =\bigoplus_{m+n=|\Lambda'|} (\F_{\Lambda, m} \otimes \F^{\rm ph}_{\Lambda}) \otimes (\F_{\Lambda'\setminus \Lambda, n} \otimes \F^{\rm ph}_{\Lambda'\setminus \Lambda})
 $.
  Thus, from this equality, we obtain the desired assertion.
 }
Define the isometric linear mapping $\kappa^{\rm HH} : \h^{\rm HH}_{\Lambda} \to \h_{\Lambda\rq{}}^{\rm HH}$
 by $\kappa^{\rm HH} \eta=\eta\otimes \tilde{\xi}^{\rm HH}_{\Lambda'\setminus \Lambda}\, (\eta\in \h_{\Lambda}^{\rm HH})$, where  $\tilde{\xi}^{\rm HH}_{\Lambda}=\tilde{\xi}_{\Lambda} \otimes \Omega^{\rm ph}_{\Lambda}$. Here, recall that 
 $\tilde{\xi}_{\Lambda}$ is given in the proof of Lemma \ref{VolSys}
  and $\Omega^{\rm ph}_{\Lambda}$
  is the Fock vacuum in $\F^{\rm ph}_{\Lambda}$. By identifying $\kappa^{\rm HH} \h^{\rm HH}_{\Lambda}$
 with $\h^{\rm HH}_{\Lambda}$, we can regard $\h^{\rm HH}_{\Lambda}$ as a closed subspace of $\h^{\rm HH}_{\Lambda'}$. In addition, we have the identification: $\M^{\rm HH}_{\Lambda}=\M^{\rm HH}_{\Lambda} \otimes \Pi^{\rm HH}_{\Lambda'\setminus \Lambda}$, where $\Pi_{\Lambda}^{\rm HH}=|\tilde{\xi}^{\rm HH}_{\Lambda}\ra\la \tilde{\xi}^{\rm HH}_{\Lambda}|$.
 
 Set $P^{\rm HH}_{\Lambda\Lambda'}=1\otimes\Pi^{\rm HH}_{\Lambda'\setminus \Lambda}$. Then $P^{\rm HH}_{\Lambda\Lambda'}$ is the orthogonal projection from $\h^{\rm HH}_{\Lambda'}$ to $\h^{\rm HH}_{\Lambda}$.
It is easy to show the following properties: 
\begin{itemize}
\item $\M^{\rm HH}_{\Lambda}=
P_{\Lambda\Lambda'}^{\rm HH}\M_{\Lambda'}^{\rm HH}P_{\Lambda\Lambda'}^{\rm HH}$.
\item $\vphi_{\Lambda'}^{\rm HH} \circ \mathscr{E}_{\Lambda\Lambda'}^{\rm HH}
=\vphi^{\rm HH}_{\Lambda'}$, where $\mathscr{E}^{\rm HH}_{\Lambda\Lambda'}(x)
=P_{\Lambda\Lambda'}^{\rm HH} x P_{\Lambda\Lambda'}^{\rm HH}\, (x\in \M_{\Lambda'}^{\rm HH})$.
\end{itemize}
All that remains is to show the following:
\be
P_{\Lambda\Lambda'}^{\rm HH}L^2(\M_{\Lambda'}^{\rm HH}, \vphi_{\Lambda'}^{\rm HH})_+=L^2(\M_{\Lambda}^{\rm HH}, \vphi_{\Lambda}^{\rm HH})_+. \label{LasBossHH}
\ee
Before starting on this proof, we need to do a little preparation.
In general, each  element $\eta$ in $L^2(\M_{\Lambda}^{\rm HH}, \vphi_{\Lambda}^{\rm HH})$ can be represented as 
\be
\eta=\sum_{{X, Y\subseteq \Lambda}\atop{|X|=|Y|}}  |X, \overline{Y}\ra_{\Lambda} \otimes\eta_{XY},\ \ 
\eta_{XY} \in \F^{\rm ph}_{\Lambda}.
\ee
Using the identification \eqref{IdnLF}, we can regard $\eta_{XY}$ as an element in $L^2(\BbbR^{|\Lambda|})$.
Therefore, $\eta\in L^2(\M_{\Lambda}^{\rm HH}, \vphi_{\Lambda}^{\rm HH})_+$, if and only if, $\{\eta_{XY}'({\bs q})\}$ is positive semi-definite for  a.e. $\bs q$,
where $\eta'_{XY}({\bs q})=\eta_{XY}({\bs q})$ if $|X|=|Y|$,   $\eta'_{XY}({\bs q})=0$ if $|X|\neq |Y|$.

Now suppose that $\eta\in L^2(\M_{\Lambda\rq{}}^{\rm HH}, \vphi_{\Lambda\rq{}}^{\rm HH})_+$.
The action of   $P_{\Lambda\Lambda'}^{\rm HH}$ on the vector $\eta$ is as follows: 
\be
P_{\Lambda\Lambda'}^{\rm H}\eta=\Bigg[ \sum_{{X_{\Lambda}, Y_{\Lambda} \subseteq \Lambda}\atop{|X_{\Lambda}|=|Y_{\Lambda}|}} 
 |X_{\Lambda}, \overline{Y}_{\Lambda}\ra_{\Lambda}\otimes \tilde{\eta}_{X_{\Lambda}, Y_{\Lambda}} \Bigg] \otimes \tilde{\xi}^{\rm HH}_{\Lambda'\setminus \Lambda},
\ee
where
\be
\tilde{\eta}_{X_{\Lambda}, Y_{\Lambda}}({\bs q}_{\Lambda}) =\sum_{X_{\Lambda'\setminus \Lambda}\subseteq \Lambda'\setminus \Lambda} \int_{\BbbR^{\Lambda'\setminus \Lambda}} d{\bs q}_{\Lambda' \setminus \Lambda}
\eta_{X_{\Lambda} \sqcup X_{\Lambda'\setminus \Lambda}, Y_{\Lambda} \sqcup X_{\Lambda'\setminus \Lambda}}({\bs q}_{\Lambda}, {\bs q}_{\Lambda'\setminus \Lambda}) \Omega_{\Lambda\rq{}\setminus \Lambda}^{\rm ph}({\bs q}_{\Lambda\rq{}\setminus \Lambda})^2.
\ee
Here, recall that under the identification \eqref{IdnLF}, we already know that $\Omega^{\rm ph}_{\Lambda} ({\bs q})=(\omega/|\Lambda|)^{|\Lambda|/4} e^{-\omega {\bs q}^2/2}$.
Define $\tilde{\eta}'_{X_{\Lambda}, Y_{\Lambda}}({\bs q})$ in the same way as before.
Since $\Omega^{\rm ph}_{\Lambda}$ is a Gaussian function, $\Omega_{\Lambda}^{\rm ph}({\bs q})$ is strictly positive almost everywhere. In addition, $\{\eta'_{XY}({\bs q})\}$ is  positive semi-definite for a.e. $\bs q$.
Hence, $\{\tilde{\eta}'_{X_{\Lambda}, Y_{\Lambda}}({\bs q}_{\Lambda})\}$ is  positive semi-definite for a.e. ${\bs q}_{\Lambda}$, which implies that $P_{\Lambda\Lambda'}^{\rm H}\eta \in L^2(\M_{\Lambda}^{\rm HH}, \vphi_{\Lambda}^{\rm HH})_+$. Thus, we conclude that $P_{\Lambda\Lambda'}^{\rm HH}L^2(\M_{\Lambda'}^{\rm HH}, \vphi_{\Lambda'}^{\rm HH})_+\subseteq L^2(\M_{\Lambda}^{\rm HH}, \vphi_{\Lambda}^{\rm HH})_+$.
The inverse inclusion relation can be easily shown. 

(ii) 
Note that $ Q_{\Lambda}^{\rm HH}$ can be espressed as $Q_{\Lambda}^{\rm HH}=1\otimes |\Omega^{\rm ph}_{\Lambda}\ra\la\Omega^{\rm ph}_{\Lambda}|$.
We readily confirm the following properties:
\begin{itemize}
\item $Q_{\Lambda}^{\rm HH} L^2(\M_{\Lambda}^{\rm HH}, \vphi_{\Lambda}^{\rm HH})=L^2(\M_{\Lambda}^{\rm H}, \vphi_{\Lambda}^{\rm H})$.
\item $
Q_{\Lambda}^{\rm HH} \M_{\Lambda}^{\rm HH} Q_{\Lambda}^{\rm HH}=\M^{\rm H}_{\Lambda}.
$
\item $Q_{\Lambda}^{\rm HH} L^2(\M_{\Lambda}^{\rm HH}, \vphi_{\Lambda}^{\rm HH})_+=L^2(\M_{\Lambda}^{\rm H}, \vphi_{\Lambda}^{\rm H})_+$
\end{itemize}
The third property can be shown using the fact that $\Omega_{\Lambda}^{\rm ph}({\bs q})$ is strictly  positive for all ${\bs q} \in \BbbR^{\Lambda}$.
\end{proof}

Lemma \ref{ConHoHu} allows us to define the magnetic system $O^{\rm HH}_{1/2}$
 by $O^{\rm HH}_{1/2}=\big\{
\{\M_{\Lambda}^{\rm HH}, \vphi^{\rm HH}_{\Lambda}\} : \Lambda\in F_{\mathbb{L}}
\big\}$.

The following condition is necessary to state the results:
\begin{Assum}\label{AssG}\upshape
For each $\Lambda\in F_{\mathbb{L}}^{\rm (e)}$, $\sum_{y\in \Lambda}g_{xy}$ is independent of $x\in \Lambda$.\footnote{This sum can be dependent on $\Lambda$.}
\end{Assum}

We are now ready to present our results on the Holstein--Hubbard model.

\begin{Thm}\label{MLMHHThm} Assume Conditions \ref{AssHubbard} and \ref{AssG}. 
Let 
\be
U_{{\rm eff}, xy}=U_{xy}-\frac{2}{\omega} \sum_{z\in \Lambda} g_{xz} g_{yz}. \label{DefUeff}
\ee
Assume that, for each  $\Lambda\in F^{\rm (e)}_{\mathbb{L}}$, $\{U_{{\rm eff}, xy}\}_{x, y\in \Lambda}$ is a positive definite matrix.

 We have the following:
\begin{itemize}
\item[\rm (i)] For each $\Lambda\in F^{\rm (e)}_{\mathbb{L}}$, $H_{\Lambda}^{\rm HH}\in \mathscr{A}_{\Lambda, |\Lambda|}(\M_{\Lambda}^{\rm MLM}, \vphi_{\Lambda}^{\rm MLM})$ holds. Hence,  $S_{H_{\Lambda}^{\rm HH}}=\big||\Lambda_A|-|\Lambda_B|\big|\big/2$.
 \item[\rm (ii)] $O_{1/2}^{\rm HH} \LRA O_{1/2}^{\rm H} \LRA O_{1/2}^{\rm MLM}$. Hence, $O_{1/2}^{\rm HH} \in \mathscr{C}(O_{1/2}^{\rm MLM})$.
 \item[\rm (iii)] Set ${\bs H}^{\rm HH}=\{H_{\Lambda} ^{\rm HH} : \Lambda \in F^{\rm (e)}_{\mathbb{L}}\}$. Then ${\bs H}^{\rm HH}$ is adapted to $O_{1/2}^{\rm HH}$.  Hence, 
 if there exists a non-negative constant $s$ and an increasing sequence of sets $\{\Lambda_n : n\in \BbbN\}\subset  F^{\rm (e)}_{\mathbb{L}}$ such that $\bigcup_{n=1}^{\infty} \Lambda_n=\mathbb{L}$ and 
$\big||\Lambda_{n, A}|-|\Lambda_{n, B}|\big|=2s|\Lambda_n|+o(|\Lambda_n|)$ as $n\to \infty$, then each NMGS  associated with $H_{\Lambda}^{\rm HH}$ exhibits a  magnetic order with a  spin density of $s$.
\end{itemize}
\end{Thm}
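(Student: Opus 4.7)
The plan is to reduce Theorem~\ref{MLMHHThm} to pieces already in place: Theorem~\ref{BasicMLM}, Theorem~\ref{MLMHubbThm}, Lemma~\ref{ConHoHu}, Theorem~\ref{OtoO}, and the ergodic-semigroup machinery of Appendices~\ref{SectA}--\ref{SectB}. Assertion~(ii) comes first and is essentially formal: Lemma~\ref{ConHoHu} supplies both the horizontal arrows $F(\M^{\rm HH}_{\Lambda'}, \vphi^{\rm HH}_{\Lambda'})\LRA F(\M^{\rm HH}_{\Lambda}, \vphi^{\rm HH}_{\Lambda})$ for $\Lambda\subseteq \Lambda'$ and the vertical arrow $F(\M^{\rm HH}_\Lambda, \vphi^{\rm HH}_\Lambda)\LRA F(\M^{\rm H}_\Lambda, \vphi^{\rm H}_\Lambda)$; Theorem~\ref{MLMHubbThm}(ii) gives $O^{\rm H}_{1/2}\LRA O^{\rm MLM}_{1/2}$; and item~3 of Remark~\ref{RemCons} chains these into $O^{\rm HH}_{1/2}\LRA O^{\rm H}_{1/2}\LRA O^{\rm MLM}_{1/2}$, which is precisely the statement $O^{\rm HH}_{1/2}\in \mathscr{C}(O^{\rm MLM}_{1/2})$.

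For~(i), the three clauses of Definition~\ref{DefAHamiC} must be verified for $H^{\rm HH}_\Lambda$ and $\{\M^{\rm HH}_\Lambda, \vphi^{\rm HH}_\Lambda\}$; once they are, Corollary~\ref{StaGsBigA} combined with Theorem~\ref{BasicMLM}(ii) reads off $S_{H^{\rm HH}_\Lambda}=\big||\Lambda_A|-|\Lambda_B|\big|\big/2$. Clause~(i) of Definition~\ref{DefAHamiC} is immediate because every summand of $H^{\rm HH}_\Lambda$ commutes with each $S^{(i)}_\Lambda$. Clause~(ii) follows from Kato--Rellich self-adjointness together with the general ground-state criterion of Appendix~\ref{SectA}. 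The crux is clause~(iii), the ergodicity of $\{e^{-\beta H^{\rm HH}_{\Lambda,M}}\}_{\beta\ge 0}$ on $L^2(\M^{\rm HH}_\Lambda[M], \vphi^{\rm HH}_{\Lambda,M})_+$. I would establish it by completing the square in the phonon coordinates $b_y + b_y^* = \sqrt{2\omega}\,q_y$: this replaces the bilinear electron-phonon coupling by a four-fermion interaction whose coupling matrix is precisely $U_{\rm eff}$ from \eqref{DefUeff}, together with a displaced harmonic oscillator in each mode. Condition~\ref{AssG} is exactly what guarantees that the residual one-body corrections become constant on the $N=|\Lambda|$ sector and therefore drop out of the ergodicity analysis. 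Positive-definiteness of $\{U_{{\rm eff},xy}\}$ then places the electronic factor in the regime of Theorem~\ref{MLMHubbThm}, while the Mehler semigroup of the displaced oscillator is positivity improving on the Gaussian cone generated by $\Omega^{\rm ph}_\Lambda$. Assembling these ingredients through the Trotter product formula and Proposition~\ref{AJAJP} yields the ergodicity of the full semigroup; the detailed bookkeeping is carried out in Appendix~\ref{SectB}.

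Part~(iii) of Theorem~\ref{MLMHHThm} is then automatic: $\bs H^{\rm HH}$ is adapted to $O^{\rm HH}_{1/2}$ by~(i); $O^{\rm HH}_{1/2}\in \mathscr{C}(O^{\rm MLM}_{1/2})$ by~(ii); under the stated bipartite asymptotics Theorem~\ref{BasicMLM}(iii) produces a magnetic order of spin density $s$ for $O^{\rm MLM}_{1/2}$; and Corollary~\ref{StaMagOrder} (equivalently Theorem~\ref{OtoO}(ii)) transfers the order to every NMGS associated with $\bs H^{\rm HH}$. The principal obstacle lies entirely in clause~(iii) of Definition~\ref{DefAHamiC}: the self-dual cone $L^2(\M^{\rm HH}_\Lambda[M], \vphi^{\rm HH}_{\Lambda,M})_+$ entangles the fermionic MLM-type cone with the phonon Gaussian cone through the cross term $(n_x-1)(b_y^*+b_y)$, which is not separately a positivity preserver on either tensor factor. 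The completing-the-square step is the device that disentangles the two sectors, and verifying that the induced displacement of $q_y$ by an $n$-dependent amount respects both the cone structure and the total-spin symmetry is where the technical care must be concentrated.
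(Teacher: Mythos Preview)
Your outline is correct and matches the paper's route: part~(ii) comes from Lemma~\ref{ConHoHu} together with Theorem~\ref{MLMHubbThm}(ii); part~(i) from verifying Definition~\ref{DefAHamiC} via the ergodicity result (iv) of Theorem~\ref{ManyHa}; and part~(iii) from Theorem~\ref{BasicMLM} once (i) and (ii) are in hand.

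One point of your narrative is worth sharpening. Your phrase ``completing the square'' is the Lang--Firsov transformation the paper uses in Appendix~\ref{SectB}, but the picture you sketch---an electronic Hubbard-type piece with coupling $U_{\rm eff}$ tensored with a displaced Mehler semigroup---overstates the decoupling. After conjugating by the $n$-dependent phonon displacement (preceded in the paper by the hole--particle transformation), the linear electron--phonon term is indeed eliminated and the Coulomb coupling is renormalised to $U_{\rm eff}$, but the electronic hopping picks up phonon-dependent phase factors $e^{\pm i\theta_{xy,\sigma}}$. So the transformed Hamiltonian is not a tensor product, and the ergodicity does not follow from combining Theorem~\ref{MLMHubbThm} with Mehler positivity via Trotter; it requires the interaction splitting $\Phi=\Phi^{\rm s}+\Phi^{\rm w}$ of Appendix~\ref{SectA} and the general criterion Theorem~\ref{AbstPI2}. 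You acknowledge this when you say the bookkeeping is deferred to Appendix~\ref{SectB} and that the $n$-dependent displacement is where the care lies, but the intermediate sentences suggesting a factorised argument should be dropped or rephrased.
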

\begin{proof}
Choose   $\Lambda\in F^{\rm (e)}_{\mathbb{L}}$ arbitrarily. 
By  (ii) of Lemma \ref{ConHoHu}, we readily confirm that $O^{\rm HH}_{1/2} \LRA O^{\rm H}_{1/2}$. Combining this with (ii) of Theorem \ref{MLMHubbThm}, we get (ii) of Theorem \ref{MLMHHThm}.
By  applying (iv) of Theorem \ref{ManyHa},  we see that 
$\{e^{-\beta H_{\Lambda, M}^{\rm HH}}\}_{\beta \ge 0} $ is ergodic  w.r.t. $L^2(\M_{\Lambda}^{\rm HH}[M], \vphi^{\rm HH}_{\Lambda, M})_+$ for all  $ M\in \mathrm{spec}(S_{\Lambda}^{(3)} \restriction \h_{\Lambda})$. 
Hence, $H^{\rm HH}_{\Lambda} \in A_{\Lambda, |\Lambda|}(\M^{\rm HH}_{\Lambda}, \vphi^{\rm HH}_{\Lambda})$, which implies that ${\bs H}^{\rm HH}$
is adapted to $O^{\rm HH}_{1/2}$. By applying (ii) of Lemma \ref{ConHoHu}, we find that $H^{\rm HH}_{\Lambda}\in \mathscr{A}_{\Lambda, |\Lambda|}(\M^{\rm MLM}_{\Lambda}, \vphi^{\rm MLM}_{\Lambda})$.
Finally, using Theorem \ref{BasicMLM}, we conclude (iii) of Theorem \ref{MLMHHThm}.
\end{proof}

\begin{Exa}\upshape
A typical example of satisfying Condition \ref{AssG} is the case where $g_{xy}=g\delta_{xy}$.
If the energy of the  Coulomb interaction is of the form  $U_{xy}=U\delta_{xy}$,  then the energy of the effective Coulomb interaction is given  by $U_{{\rm eff}, xy}=(U-2g^2/\omega)\delta_{xy}$.
In this case, $\{U_{{\rm eff}, xy}\}$ is positive definite, provided that  $|g|<\sqrt{\omega U/2}$.
For more examples of $g_{xy}$ satisfying Condition \ref{AssG}, see \cite{Miyao2016}.
\end{Exa}

According to Theorem \ref{MLMHHThm}, if the electron-phonon interaction is not too strong compared to the Coulomb interaction, the Holstein--Hubbard model belongs to
$ \mathscr{A}_{\Lambda, |\Lambda|}(\M_{\Lambda}^{\rm MLM}, \vphi_{\Lambda}^{\rm MLM})$, and consequently, the magnetic properties of the ground states are described by the MLM stability class $\mathscr{C}(O_{1/2}^{\rm MLM})$.
Recalling the fact proved in Theorem \ref{MLMHubbThm}, that the MLM stability class also describes the magnetic properties of the ground states of the Hubbard model, Theorem \ref{MLMHHThm} shows that the magnetic properties of the ground states of the Hubbard model are stable even under the electron-phonon interaction.

\subsection{The Kondo lattice model}

In this subsection, we discuss  the magnetic properties of the ground states of the Kondo lattice model. The stability of the magnetic properties of the ground states of this system can be explained using a deformed MLM class. As an example, the Kondo  lattice system interacting with lattice vibrations will be examined in detail.
The results on the stability of the magnetic properties of the Kondo system are, of course, significant in themselves, but more importantly, a variant of the MLM class naturally appears in the analysis of strongly correlated electron systems. This fact implies that the discovery of various variants of the MLM class is essential for the understanding of strongly correlated electron systems.

As before, let $G=(\mathbb{L}, E)$ be a connected bipartite infinite graph. Fix $\Lambda \in F_{\mathbb{L}}$,
arbitrarily.
The Kondo lattice model (KLM) on $\Lambda$ is given by 
\be
H_{\Lambda}^{\rm K}=\sum_{x, y\in\Lambda} \sum_{\sigma=\up, \down}t_{xy} c_{x\sigma}^*c_{y\sigma}+J\sum_{x\in \Lambda}{\bs S}^c_x\cdot {\bs S}_x^f+\sum_{x, y\in \Lambda} \frac{U_{xy}}{2} (n_x^c-1)(n_y^c-1),
\ee
where ${\bs S}^c_x=(S_x^{c, (1)}, S_x^{c, (2)}, S_x^{c, (3)})$ are the spin operators at vertex $x$ associated with the conduction electrons, namely, defined by \eqref{DefSpinC}; ${\bs S}^f_x=(S_x^{f, (1)}, S_x^{f, (2)}, S_x^{f, (3)})$ are the spin operators at vertex $x$ associated with the localized spins:
\be
S_x^{f, (j)}=\frac{1}{2} \sum_{\sigma, \sigma\rq{}=\up, \down}f_{x\sigma}^* \big(s^{(j)}\big)_{\sigma \sigma\rq{}} f_{x\sigma\rq{}},\ \ j=1, 2, 3.
\ee
Here, $f_{x\sigma}$ denotes the annihilation operator for the localized spins which obeys the following anti-commutation relations:
\be
\{f_{x\sigma}, f_{x\rq{}\sigma\rq{}}^*
\}=\delta_{xx\rq{}} \delta_{\sigma\sigma\rq{}},\ \ \{f_{x\sigma}, f_{x\rq{}\sigma\rq{}}
\}=0.
\ee
 Remark that the operators $c_{x\sigma}$ and $f_{x\sigma}$ satisfy
\be
\{c_{x\sigma}, f_{x\rq{}\sigma\rq{}}\}=0=\{c_{x\sigma}, f_{x\rq{}\sigma\rq{}}^*\}.
\ee
The Hamiltonian $H_{\Lambda}^{\rm K}$ acts in $\h_{\Lambda}^{\rm K}
=P_{0, \Lambda}\bigwedge^{2|\Lambda|} \big(
\ell^2(\Lambda \sqcup \Lambda)\oplus \ell^2(\Lambda \sqcup \Lambda) 
\big),$ where $\Lambda\sqcup \Lambda$ stands for the discriminated union of $\Lambda$ and $\Lambda$, and 
\be
P_{0, \Lambda} =\prod_{x\in \Lambda} (
n_{x\up}^f -n_{x\down}^f
)^2.
\ee
Here, $n_{x\sigma}^f$ stands for the number operator for the localized spins: $n_{x\sigma}^f=f_{x\sigma}^*f_{x\sigma}$.
Note that $n_{x\up}^f+n_{x\down}^f=1$ holds on $\h^{\rm K}_{\Lambda}$.
The total spin operators are given by 
\be
S_{\rm tot, \Lambda}^{(j)}=S_{\Lambda}^{c, (j)}+S_{\Lambda}^{f, (j)},\ \ j=1, 2, 3,
\ee
where $S_{\Lambda}^{f, (j)}=\sum_{x\in \Lambda} S_x^{f, (j)}$.
In addition, we set
${\bs S}_{\rm tot, \Lambda}^2=
(S_{\rm tot, \Lambda}^{(1)})^2+(S_{\rm tot, \Lambda}^{(2)})^2+(S_{\rm tot, \Lambda}^{(3)})^2
$.

To describe the properties of the ground states of the KLM, we introduce some definitions.
Let $\mathbb{L}^c$ and $\mathbb{L}^f$ be two copies of $\mathbb{L}$. We denote by $\BbbL^c\sqcup \BbbL^f$ be the discriminated union of $\BbbL^c$ and $\BbbL^f$:
$
\BbbL^c\sqcup \BbbL^f=\{(x, c), (x, f) : x\in \BbbL\} 
$.
In the Kondo lattice system, the properties of the ground states differ greatly depending on the sign of the coupling constant $J$.
In order to express this mathematically, we define two graphs with $\BbbL^c\sqcup \BbbL^f$ as the vertex set.
Let us start with the definition of the graph $G^{\rm AF}$, which describes the properties of the ground states when the coupling constant is antiferromagnetic, i.e., $J>0$.
The edge set,  $\mathsf{E}^{\rm AF}$,  of $G^{\rm AF}$ is defined by
\begin{align}
\{(x, \tau), (x\rq{}, \tau\rq{})\} \in \mathsf{E}^{\rm AF} 
\Longleftrightarrow\begin{cases}
\{x, x\rq{}\} \in E & \mbox{if $\tau=\tau\rq{}=c$}\\
x=x\rq{} & \mbox{if $\tau \neq \tau\rq{}$}. \label{DefEF}
\end{cases}
\end{align}

Next, let us define the graph $G^{\rm F}$ to be used when the coupling constant is ferromagnetic, i.e., $J<0$.
The edge set $\mathsf{E}^{\rm F}$ of $G^{\rm F}$ can be characterized as follows: 
\begin{align}
\{(x, \tau), (x\rq{}, \tau\rq{})\} \in \mathsf{E}^{\rm F}
\Longleftrightarrow
 \mbox{$\{x, x\rq{}\}\in E$ if $\tau=\tau\rq{}=c$ or $\tau \neq \tau\rq{}$}. \label{DefEAF}
\end{align}
It is easy to see that both $G^{\rm AF}$ and $G^{\rm F}$ are connected graphs.
Note that each of these graphs is given the following bipartite structure: 
\begin{align}
{\rm AF} :\   & \BbbL^c\sqcup \BbbL^f=(\BbbL_A^c\sqcup \BbbL_B^f) \sqcup (\BbbL_B^c\sqcup \BbbL^f_A)\label{AFBI},\\
{\rm F} : \  & \BbbL^c\sqcup \BbbL^f=(\BbbL_A^c\sqcup \BbbL_A^f) \sqcup (\BbbL_B^c\sqcup \BbbL^f_B).\label{FBI}
\end{align}

For a given $\Lambda\in F_{\mathbb{L}}^{\rm (e)}$, we can define the subgraphs $G_{\Lambda}^{\sharp}=(\Lambda\sqcup \Lambda, \mathsf{E}_{\Lambda}^{\sharp})\, (\sharp={\rm AF, F})$ of the graphs $G^{\sharp}$, where the edge sets, $\mathsf{E}_{\Lambda}^{\sharp}$, are defined by replacing $E$ in   \eqref{DefEF} and \eqref{DefEAF}  by $E_{\Lambda}$.  $G_{\Lambda}^{\sharp}\, (\sharp={\rm AF, F})$ are also  connected bipartite graphs, and their bipartite structures are given by the following:
\begin{align}
{\rm AF} :\   & \Lambda^c\sqcup \Lambda^f=(\Lambda_A^c\sqcup \Lambda_B^f) \sqcup (\Lambda_B^c\sqcup \Lambda^f_A) =: ( \Lambda^c\sqcup \Lambda^f)_{1, {\rm AF}}\sqcup ( \Lambda^c\sqcup \Lambda^f)_{2, {\rm AF}},\label{BipA}\\
{\rm F} : \  & \Lambda^c\sqcup \Lambda^f=(\Lambda_A^c\sqcup \Lambda_A^f) \sqcup (\Lambda_B^c\sqcup \Lambda^f_B)=: ( \Lambda^c\sqcup \Lambda^f)_{1, {\rm F}}\sqcup ( \Lambda^c\sqcup \Lambda^f)_{2, {\rm F}},\label{BipB}
\end{align}
where $\Lambda^c$ and $\Lambda^f$ are copies of $\Lambda$.

The bipartite structures give rise to  the abelian von Neumann algebra 
$\N_{\Lambda}^{\rm MLM}$ and the   faithful semi-finite normal weights $\vphi^{\rm MLM}_{\Lambda, \sharp}\ (\sharp={\rm AF, F})$ as follows.
For simplicity of notation, we set 
\be
a_{u\sigma}=\begin{cases}
c_{x\sigma} & \mbox{for $u=(x, c) \in \BbbL^c\sqcup\BbbL^f$}\\
f_{x\sigma} & \mbox{for $u=(x, f) \in \BbbL^c\sqcup\BbbL^f$. }
\end{cases}
\ee
Set 
\be
|\Omega\ra_{\Lambda}=\Bigg[\prod'_{u\in \Lambda^c\sqcup \Lambda^f} a_{u\down}^*\Bigg]|\varnothing\ra_{\Lambda^c\sqcup \Lambda^f},
\ee
where $|\varnothing\ra_{\Lambda^c\sqcup \Lambda^f}$ is the fermionic Fock vacuum in  $\F_{\Lambda^c\sqcup \Lambda^f }$, and $\prod'_{u\in \Lambda^c\sqcup \Lambda^f}$ indicates the product taken over all vertex in $\Lambda^c\sqcup \Lambda^f$ with an arbitrarily fixed order.
Given $U, V\in \Lambda^c\sqcup \Lambda^f$, define
\be
|U, V\ra_{\sharp}=(-1)^{|V\cap (\Lambda^c\sqcup \Lambda^f)_{2, \sharp}|}\Bigg[\prod_{u\in U}a_{u\up}^* \Bigg] \Bigg[\prod_{v\in \overline{V}}a_{v\down} \Bigg]|\Omega\ra,\ \ \sharp={\rm AF, F}.
\ee
Then $\{|U, \overline{U}\ra_{\sharp} : U\subseteq \Lambda^c\sqcup \Lambda^f\}$ are CONSs of $\h_{\Lambda^c\sqcup \Lambda^f}$, where $\h_{\Lambda^c\sqcup \Lambda^f}$ is defined by \eqref{HSingE} with $\Lambda$ replaced by $\Lambda^c\sqcup \Lambda^f$. Define the  vectors $\tilde{\xi}_{\Lambda, \sharp}\in \h_{\Lambda^c\sqcup \Lambda^f}$ by 
\be
\tilde{\xi}_{\Lambda, \sharp}=\sum_{U\subseteq \Lambda^c\sqcup \Lambda^f}  |U, \overline{U}\ra_{\sharp}\ \ (\sharp={\rm AF, F}). \label{TaneVec2}
\ee
As in Subsection \ref{GeneMLM}, we set $\vphi_{\Lambda, \sharp}^{\rm MLM}=\vphi_{\tilde{\xi}_{\Lambda, \sharp}}$. We denote by $\N_{\Lambda}^{\rm MLM}$ the abelian von Neuman algebra generated by diagonal operators associated with  $\{|U, \overline{U}\ra_{\sharp} : U\subseteq \Lambda^c\sqcup \Lambda^f\}$.
Note that $\N_{\Lambda}^{\rm MLM}$ does not depend on $\sharp$, since the definition of $|U, \overline{U}\ra_{\sharp}$ is only different in sign when $\sharp=\rm AF$ and when $\sharp=\rm F$.
The  modified MLM systems:
\be 
O_{1/2, \sharp}^{\rm MLM}:=\big\{
\{\N_{\Lambda}^{\rm MLM}, \vphi_{\Lambda, \sharp}^{\rm MLM}\} : 
\Lambda \in  F^{\rm (e)}_{\mathbb{L}}
\big\}\ (\sharp={\rm AF}, {\rm F})
\ee
 play  essential roles to examine magnetic properties of the Kondo lattice systems.

\begin{Lemm}\label{VolSysKLM}
If $\Lambda, \Lambda'\in  F^{\rm (e)}_{\mathbb{L}}$ satisfies $\Lambda\subseteq \Lambda'$, then 
$F(\N_{\Lambda\rq{}}^{\rm MLM}, \vphi^{\rm MLM}_{\Lambda\rq{}, \sharp})\LRA F(\N_{\Lambda}^{\rm MLM}, \vphi^{\rm MLM}_{\Lambda, \sharp})$ holds for $\sharp ={\rm AF}, {\rm F}$. Hence, the nets $O_{1/2, \sharp}^{\rm MLM}$ are magnetic systems.
\end{Lemm}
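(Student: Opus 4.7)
\begin{Proof}[Proof proposal for Lemma \ref{VolSysKLM}]
The plan is to mimic the proof of Lemma \ref{VolSys} line by line, replacing the finite lattice $\Lambda$ by the doubled set $\Lambda^c\sqcup \Lambda^f$ and using the bipartite decomposition \eqref{BipA} (for $\sharp={\rm AF}$) or \eqref{BipB} (for $\sharp={\rm F}$) in place of the original one. Since $\N_{\Lambda}^{\rm MLM}$ does not depend on $\sharp$ and the vectors $|U,\overline{U}\rangle_{\sharp}$ differ only by an overall sign between the two cases, it is enough to carry out the argument for fixed $\sharp$ and observe that the identifications below are consistent with this sign convention.

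First, using the canonical identification $\F_{\Lambda'^c\sqcup \Lambda'^f}=\F_{\Lambda^c\sqcup \Lambda^f}\otimes \F_{(\Lambda'\setminus\Lambda)^c\sqcup (\Lambda'\setminus\Lambda)^f}$ together with the factorization $P_{0,\Lambda'}=P_{0,\Lambda}\otimes P_{0,\Lambda'\setminus\Lambda}$ and $Q_{\Lambda'^c\sqcup\Lambda'^f}=Q_{\Lambda^c\sqcup\Lambda^f}\otimes Q_{(\Lambda'\setminus\Lambda)^c\sqcup(\Lambda'\setminus\Lambda)^f}$ (derived from \eqref{IdnAnni} exactly as in the proof of Lemma \ref{VolSys}), I would verify that $\h_{\Lambda^c\sqcup \Lambda^f}\otimes \h_{(\Lambda'\setminus\Lambda)^c\sqcup (\Lambda'\setminus\Lambda)^f}$ sits inside $\h_{\Lambda'^c\sqcup \Lambda'^f}$ as a closed subspace. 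Next, set $\widetilde{\xi}'_{\Lambda^c\sqcup\Lambda^f,\sharp}=2^{-|\Lambda^c\sqcup\Lambda^f|/2}\widetilde{\xi}_{\Lambda,\sharp}$ and define the isometric embedding $\kappa:\h_{\Lambda^c\sqcup\Lambda^f}\to \h_{\Lambda'^c\sqcup \Lambda'^f}$ by $\kappa\eta=\eta\otimes \widetilde{\xi}'_{(\Lambda'\setminus\Lambda)^c\sqcup(\Lambda'\setminus\Lambda)^f,\sharp}$. Identifying $\kappa\h_{\Lambda^c\sqcup \Lambda^f}$ with $\h_{\Lambda^c\sqcup \Lambda^f}$, the orthogonal projection $P_{\Lambda\Lambda'}$ from $\h_{\Lambda'^c\sqcup\Lambda'^f}$ onto $\h_{\Lambda^c\sqcup \Lambda^f}$ is $P_{\Lambda\Lambda'}=1\otimes \Pi_{(\Lambda'\setminus\Lambda)^c\sqcup(\Lambda'\setminus\Lambda)^f,\sharp}$, where $\Pi_{\bullet,\sharp}$ is the rank-one projection onto $\widetilde{\xi}'_{\bullet,\sharp}$.

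With these identifications in place, I would check the three defining properties of Definition \ref{DefConForm}. Because $\N_{\Lambda}^{\rm MLM}$ is the abelian von Neumann algebra of diagonal operators with respect to the CONS $\{|U,\overline{U}\rangle_{\sharp}:U\subseteq \Lambda^c\sqcup \Lambda^f\}$, the tensor-product structure gives $\N_{\Lambda}^{\rm MLM}=\N_{\Lambda}^{\rm MLM}\otimes \Pi_{(\Lambda'\setminus\Lambda)^c\sqcup(\Lambda'\setminus\Lambda)^f,\sharp}=P_{\Lambda\Lambda'}\N_{\Lambda'}^{\rm MLM}P_{\Lambda\Lambda'}$. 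The identity $\vphi_{\Lambda',\sharp}^{\rm MLM}\circ \mathscr{E}_{\Lambda\Lambda'}=\vphi_{\Lambda',\sharp}^{\rm MLM}$ for the conditional expectation $\mathscr{E}_{\Lambda\Lambda'}(x)=P_{\Lambda\Lambda'}xP_{\Lambda\Lambda'}$ then follows immediately from $P_{\Lambda\Lambda'}\widetilde{\xi}_{\Lambda',\sharp}=\widetilde{\xi}_{\Lambda,\sharp}\otimes \widetilde{\xi}'_{(\Lambda'\setminus\Lambda)^c\sqcup(\Lambda'\setminus\Lambda)^f,\sharp}$, which in turn relies on the product formula $\widetilde{\xi}_{\Lambda',\sharp}=\widetilde{\xi}_{\Lambda,\sharp}\otimes \widetilde{\xi}_{(\Lambda'\setminus\Lambda)^c\sqcup(\Lambda'\setminus\Lambda)^f,\sharp}$ for the reference vectors. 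Finally, $P_{\Lambda\Lambda'}L^2(\N_{\Lambda'}^{\rm MLM},\vphi_{\Lambda',\sharp}^{\rm MLM})_+=L^2(\N_{\Lambda}^{\rm MLM},\vphi_{\Lambda,\sharp}^{\rm MLM})_+$ reduces, by the explicit description of the positive cone of an abelian standard form in the CONS diagonalizing $\N^{\rm MLM}$, to the trivial statement that coefficients indexed by $U\subseteq \Lambda^c\sqcup\Lambda^f$ stay non-negative under partial summation over $U'\subseteq (\Lambda'\setminus\Lambda)^c\sqcup (\Lambda'\setminus\Lambda)^f$ against non-negative weights.

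The main technical point to be careful about, and where I expect the only real bookkeeping to occur, is verifying the product formula $\widetilde{\xi}_{\Lambda',\sharp}=\widetilde{\xi}_{\Lambda,\sharp}\otimes \widetilde{\xi}_{(\Lambda'\setminus\Lambda)^c\sqcup(\Lambda'\setminus\Lambda)^f,\sharp}$. This requires matching the signs $(-1)^{|V\cap (\Lambda^c\sqcup\Lambda^f)_{2,\sharp}|}$ appearing in the definition of $|U,\overline{U}\rangle_{\sharp}$ for $\Lambda$, $\Lambda'\setminus\Lambda$, and $\Lambda'$, using the fact that the bipartition \eqref{BipA} (resp. \eqref{BipB}) of the larger set restricts to the corresponding bipartition of each of the smaller sets, so that $(\Lambda'^c\sqcup \Lambda'^f)_{2,\sharp}\cap (U_1\sqcup U_2)=((\Lambda^c\sqcup \Lambda^f)_{2,\sharp}\cap U_1)\sqcup(((\Lambda'\setminus\Lambda)^c\sqcup (\Lambda'\setminus\Lambda)^f)_{2,\sharp}\cap U_2)$ for $U_1\subseteq \Lambda^c\sqcup \Lambda^f$ and $U_2\subseteq (\Lambda'\setminus\Lambda)^c\sqcup (\Lambda'\setminus\Lambda)^f$. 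Once this additivity of exponents is in hand, the signs collapse and the product formula follows, completing the proof that $O_{1/2,\sharp}^{\rm MLM}$ is a magnetic system.
\end{Proof}
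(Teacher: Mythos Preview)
Your proposal is correct and follows essentially the same approach as the paper: the paper's proof simply observes that $\N^{\rm MLM}_{\Lambda}=\M^{\rm MLM}_{\Lambda\sqcup\Lambda}$ in the notation of Subsection~\ref{GeneMLM}, and then invokes the proof of Lemma~\ref{VolSys} with the graph $G$ replaced by $G^{\sharp}$. You have unpacked exactly this reduction, including the sign bookkeeping for the product formula of the reference vectors, which the paper leaves implicit.
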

\begin{proof}
Note that using the symbols from Subsection \ref{GeneMLM}, we can express $\N^{\rm MLM}_{\Lambda}=\M^{\rm MLM}_{\Lambda\sqcup \Lambda}$.
Therefore, if we change the considered graph in the proof of Lemma \ref{VolSys} from $G$ to $G^{\sharp}$, we can derive the assertion of the lemma.
\end{proof}

Corresponding to \eqref{DefRalg}, we set 
\be
\mathfrak{R}_{\Lambda^c\sqcup \Lambda^f}=
\big\{
a_{u_1\up}^*\cdots a_{u_n\up}^*a_{v_1\down}\cdots a_{v_n\down} : u_1, \dots, u_n, v_1,\dots, v_n\in \Lambda^c\sqcup \Lambda^f,\ n\in \{1, \dots, 2|\Lambda|\}
\big\}.
\ee
Define $\M_{\Lambda}^{\rm K}=P_{0, \Lambda} (\mathfrak{R}_{\Lambda^c\sqcup \Lambda^f})^{\prime\prime}P_{0, \Lambda}$. 
Note that unlike $\M^{\rm H}_{\Lambda}$ defined in Subsection \ref{MLM-H}, the orthogonal projection operator $P_{0, \Lambda}$ is included in the definition of $\M_{\Lambda}^{\rm K}$. This difference will become important later.
The states $\vphi_{\Lambda, \sharp}^{\rm MLM}\, (\sharp={\rm AF, F})$ can be naturally extended to $\M_{\Lambda}^{\rm K}$. We denote  by $\vphi_{\Lambda, \sharp}^{\rm K}$ the extensions.
The following nets of the IEE systems:
\be
O^{\rm K}_{1/2, \sharp}:=\big\{
\{ \M_{\Lambda}^{\rm K}, \vphi_{\Lambda, \sharp}^{\rm K}\} : 
\Lambda\in   F^{\rm (e)}_{\mathbb{L}}
\big\}\ (\sharp ={\rm AF}, {\rm F})
\ee
 are   also useful for characterizing  the magnetic properties of the ground states  of  the Kondo lattice model.

\begin{Lemm}\label{ConKLM}
We have the following:
\begin{itemize}
\item[\rm (i)] If $\Lambda, \Lambda'\in F^{\rm (e)}_{\mathbb{L}}$ satisfies $\Lambda\subseteq \Lambda'$, then 
$F(\M^{\rm K}_{\Lambda\rq{}}, \vphi^{\rm K}_{\Lambda\rq{}, \sharp})\longrightarrow F(\M^{\rm K}_{\Lambda}, \vphi^{\rm K}_{\Lambda, \sharp})$ for $\sharp={\rm AF}, {\rm F}$.
Hence, $O^{\rm K}_{1/2, \sharp}$ are magnetic systems.
\item[\rm (ii)] For each 
 $\sharp={\rm AF}, {\rm F}$ and $\Lambda 
\in F^{\rm (e)}_{\mathbb{L}}
$,  $F(\M^{\rm K}_{\Lambda}, \vphi^{\rm K}_{\Lambda, \sharp})\longrightarrow F(\N^{\rm MLM}_{\Lambda}, \vphi^{\rm MLM}_{\Lambda, \sharp})$.
\end{itemize}
The above results can be summarized as the following commutative diagram:
\be
\begin{tikzcd}
  F(\M^{\rm K}_{\Lambda}, \vphi^{\rm K}_{\Lambda, \sharp})   \arrow[d] &
  \ar[l]   F(\M^{\rm K}_{\Lambda\rq{}}, \vphi^{\rm K}_{\Lambda\rq{}, \sharp})
  \arrow[d]  \\
  F(\N^{\rm MLM}_{\Lambda}, \vphi^{\rm MLM}_{\Lambda, \sharp}) &\arrow[l] F(\N^{\rm MLM}_{\Lambda\rq{}}, \vphi^{\rm MLM}_{\Lambda\rq{}, \sharp})
\end{tikzcd}
\ee
Furthermore, when $\{\M_{\Lambda}, \vphi_{\Lambda}\}$  is replaced by $\{\N^{\rm MLM}_{\Lambda}, \vphi^{\rm MLM }_{\Lambda, \sharp}\}$ and $\{\M^{\prime}_{\Lambda}, \vphi^{\prime}_{\Lambda}\}$ is replaced by $\{\M^{\rm K}_{\Lambda}, \vphi^{\rm K}_{\Lambda, \sharp}\}$ in Theorem \ref{MacroDiagram}, the diagram \eqref{MacDiagComm} is commutative.
\end{Lemm}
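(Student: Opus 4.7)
The plan is to adapt, almost line by line, the arguments from the proof of Lemma \ref{ConHubb}, with the extra bookkeeping needed to handle (a) the simultaneous presence of conduction electrons $c_{x\sigma}$ and localized-spin fermions $f_{x\sigma}$, (b) the projection $P_{0,\Lambda}$ built into the definition of $\h^{\rm K}_{\Lambda}$, and (c) the two bipartite structures \eqref{BipA}, \eqref{BipB} that distinguish $\sharp={\rm AF}$ from $\sharp={\rm F}$. Throughout, the standard tensor-product identification for fermionic Fock spaces over direct sums will be applied with the direct-sum decomposition $\Lambda'^c\sqcup\Lambda'^f=(\Lambda^c\sqcup\Lambda^f)\sqcup((\Lambda'\setminus\Lambda)^c\sqcup(\Lambda'\setminus\Lambda)^f)$; crucially, $P_{0,\Lambda'}$ factorizes as $P_{0,\Lambda}\otimes P_{0,\Lambda'\setminus\Lambda}$ under this identification, so restricting to the $(2|\Lambda'|)$-particle sector gives an inclusion $\h^{\rm K}_{\Lambda}\otimes\h^{\rm K}_{\Lambda'\setminus\Lambda}\subseteq\h^{\rm K}_{\Lambda'}$.

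For (i), set $\tilde\xi^{\rm K}_{\Lambda,\sharp}=2^{-|\Lambda|}\tilde\xi_{\Lambda,\sharp}$, which is a unit vector since $\{|U,\overline U\ra_{\sharp}\}_{U\subseteq\Lambda^c\sqcup\Lambda^f}$ is a CONS of $\h_{\Lambda^c\sqcup\Lambda^f}\subseteq\h^{\rm K}_{\Lambda}$ of cardinality $2^{2|\Lambda|}$. Embed $\h^{\rm K}_{\Lambda}$ isometrically into $\h^{\rm K}_{\Lambda'}$ via $\eta\mapsto\eta\otimes\tilde\xi^{\rm K}_{\Lambda'\setminus\Lambda,\sharp}$; then $P^{\rm K}_{\Lambda\Lambda'}=1\otimes|\tilde\xi^{\rm K}_{\Lambda'\setminus\Lambda,\sharp}\ra\la\tilde\xi^{\rm K}_{\Lambda'\setminus\Lambda,\sharp}|$. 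The identities $\M^{\rm K}_{\Lambda}=\M^{\rm K}_{\Lambda}\otimes\Pi^{\rm K}_{\Lambda'\setminus\Lambda,\sharp}=P^{\rm K}_{\Lambda\Lambda'}\M^{\rm K}_{\Lambda'}P^{\rm K}_{\Lambda\Lambda'}$ (with $\Pi^{\rm K}_{\Lambda,\sharp}=|\tilde\xi^{\rm K}_{\Lambda,\sharp}\ra\la\tilde\xi^{\rm K}_{\Lambda,\sharp}|$) and $\vphi^{\rm K}_{\Lambda',\sharp}\circ\mathscr{E}^{\rm K}_{\Lambda\Lambda'}=\vphi^{\rm K}_{\Lambda',\sharp}$ then follow formally. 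The crucial step is the cone identity $P^{\rm K}_{\Lambda\Lambda'}L^2(\M^{\rm K}_{\Lambda'},\vphi^{\rm K}_{\Lambda',\sharp})_+=L^2(\M^{\rm K}_{\Lambda},\vphi^{\rm K}_{\Lambda,\sharp})_+$. Expanding $\eta\in L^2(\M^{\rm K}_{\Lambda'},\vphi^{\rm K}_{\Lambda',\sharp})$ as $\eta=\sum_{U,V}\eta_{UV}|U,\overline V\ra_{\sharp}$, one identifies positivity with positive semi-definiteness of the extended matrix $\{\eta'_{UV}\}$ (zero off the $|U|=|V|$ block); the action of $P^{\rm K}_{\Lambda\Lambda'}$ amounts to summing out the $(\Lambda'\setminus\Lambda)$-indices against a fixed positive vector, which preserves positive semi-definiteness. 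The inverse inclusion is immediate from the tensor embedding.

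For (ii), introduce $Q^c_{\Lambda}=\prod_{x\in\Lambda}(n^c_{x\up}-n^c_{x\down})^2$, which commutes with $P_{0,\Lambda}$ and is the orthogonal projection from $\h^{\rm K}_{\Lambda}$ onto $\h_{\Lambda^c\sqcup\Lambda^f}$ (every site, conduction or localized, singly occupied). Because $Q^c_{\Lambda}\tilde\xi_{\Lambda,\sharp}=\tilde\xi_{\Lambda,\sharp}$, the normal conditional expectation $\mathscr{E}^c_{\Lambda}(x)=Q^c_{\Lambda}xQ^c_{\Lambda}$ satisfies $\vphi^{\rm K}_{\Lambda,\sharp}\circ\mathscr{E}^c_{\Lambda}=\vphi^{\rm MLM}_{\Lambda,\sharp}$. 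One checks $Q^c_{\Lambda}\M^{\rm K}_{\Lambda}Q^c_{\Lambda}=\N^{\rm MLM}_{\Lambda}$ by observing that $Q^c_{\Lambda}a^*_{u_1\up}\cdots a^*_{u_n\up}a_{v_1\down}\cdots a_{v_n\down}Q^c_{\Lambda}$ is non-zero only when the conduction-site labels on the two sides match, producing (up to a sign) a diagonal operator in the $|U,\overline U\ra_{\sharp}$ basis, while on localized sites only singly occupied configurations survive. Finally, the cone identity $Q^c_{\Lambda}L^2(\M^{\rm K}_{\Lambda},\vphi^{\rm K}_{\Lambda,\sharp})_+=L^2(\N^{\rm MLM}_{\Lambda},\vphi^{\rm MLM}_{\Lambda,\sharp})_+$ follows by extracting the diagonal: the diagonal $\eta_{UU}$ of a positive semi-definite matrix is non-negative.

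The main obstacle will be the cone identity in (i): the sign prefactor $(-1)^{|V\cap(\Lambda^c\sqcup\Lambda^f)_{2,\sharp}|}$ in the definition of $|U,\overline V\ra_{\sharp}$ depends on $\sharp$ through the bipartite decomposition \eqref{BipA}--\eqref{BipB}, and one must verify that under the factorization $\Lambda'^c\sqcup\Lambda'^f$ into $\Lambda^c\sqcup\Lambda^f$ and $(\Lambda'\setminus\Lambda)^c\sqcup(\Lambda'\setminus\Lambda)^f$, the sign convention is multiplicative so that the reference vector $\tilde\xi^{\rm K}_{\Lambda'\setminus\Lambda,\sharp}$ assigns strictly positive (rather than merely non-negative or sign-mixed) weights to all relevant basis vectors. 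Once this compatibility is pinned down, the diagonal-extraction/partial-trace arguments go through uniformly for both $\sharp={\rm AF}$ and $\sharp={\rm F}$, and the commutative diagram together with Theorem \ref{MacroDiagram} gives the claimed consistency at the macroscopic level.
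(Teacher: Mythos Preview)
Your approach is correct and follows the same underlying idea as the paper's proof, but the paper organizes the argument more economically. Rather than redoing the positive-semi-definiteness bookkeeping directly in $\h^{\rm K}_{\Lambda}$, the paper observes the relations
\[
\h^{\rm K}_{\Lambda}=P_{0,\Lambda}\h^{\rm H}_{\Lambda^c\sqcup\Lambda^f},\qquad
\M^{\rm K}_{\Lambda}=P_{0,\Lambda}\M^{\rm H}_{\Lambda^c\sqcup\Lambda^f}P_{0,\Lambda},\qquad
L^2(\M^{\rm K}_{\Lambda},\vphi^{\rm K}_{\Lambda,\sharp})_+=P_{0,\Lambda}L^2(\M^{\rm H}_{\Lambda^c\sqcup\Lambda^f},\vphi_{\xi_{\Lambda,\sharp}})_+,
\]
and then simply invokes Lemma~\ref{ConHubb} on the doubled lattice $\Lambda\sqcup\Lambda$ together with the commutation $P_{0,\Lambda}\,{\bs P}_{\Lambda\Lambda'}={\bs P}_{\Lambda\Lambda'}\,P_{0,\Lambda'}$ (where ${\bs P}_{\Lambda\Lambda'}:=P^{\rm H}_{\Lambda\sqcup\Lambda,\Lambda'\sqcup\Lambda'}$). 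This immediately yields $P^{\rm K}_{\Lambda\Lambda'}L^2(\M^{\rm K}_{\Lambda'},\vphi^{\rm K}_{\Lambda',\sharp})_+=L^2(\M^{\rm K}_{\Lambda},\vphi^{\rm K}_{\Lambda,\sharp})_+$ without re-examining any matrix coefficients. In particular, the sign-compatibility issue you single out as the ``main obstacle'' is entirely absorbed into the already-proved Hubbard case: once Lemma~\ref{ConHubb} holds on $\Lambda\sqcup\Lambda$ with the bipartite structure \eqref{BipA} or \eqref{BipB}, the projection $P_{0,\Lambda}$ just restricts to a principal submatrix and preserves positivity. For part~(ii) the paper proceeds identically in spirit, using $Q_{\Lambda\sqcup\Lambda}=Q^{\rm K}_{\Lambda}P_{0,\Lambda}$ (your $Q^c_{\Lambda}$ is the paper's $Q^{\rm K}_{\Lambda}$) to reduce to the diagonal-extraction step of Lemma~\ref{ConHubb}(ii). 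Your self-contained route works, but the paper's modular reduction to Lemma~\ref{ConHubb} is shorter and avoids the sign verification altogether.
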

\begin{proof}
The proof is almost the same as that of Lemma \ref{ConHubb}. However, for the convenience of the reader, a few remarks are in order.

First, note the following relations:
  \be
  \h_{\Lambda}^{\rm K}=P_{0, \Lambda}\h^{\rm H}_{\Lambda^c\sqcup \Lambda^f},\ \ \M^{\rm K}_{\Lambda}=P_{0, \Lambda} \M^{\rm H}_{\Lambda^c\sqcup \Lambda^f}P_{0, \Lambda}, 
\, \ \ L^2(\M^{\rm K}_{\Lambda}, \vphi^{\rm K}_{\Lambda, \sharp})_+=P_{0, \Lambda} L^2(\M^{\rm H}_{\Lambda^c\sqcup \Lambda^f}, \vphi_{\xi_{\Lambda, \sharp}})_+.  
  \ee
  Take $\Lambda, \Lambda'\in F_{\mathbb{L}}$ with $\Lambda\subset \Lambda'$.
  We set ${\bs P}_{\Lambda\Lambda'}:=P^{\rm H}_{\Lambda\sqcup \Lambda, \Lambda'\sqcup \Lambda'}$.
  Then $\bs P_{\Lambda\Lambda'}$ is the orthogonal projection from $L^2(\M_{\Lambda'\sqcup \Lambda'}^{\rm H}, \vphi_{\xi_{\Lambda'}, \sharp})$ to $L^2(\M_{\Lambda\sqcup \Lambda}^{\rm H}, \vphi_{\xi_{\Lambda}, \sharp})$. 
  Using arguments similar to those in the proof of (i) of Lemma \ref{ConHubb}, we know  that 
  \be
  {\bs P}_{\Lambda\Lambda'}L^2(\M_{\Lambda'\sqcup \Lambda'}^{\rm H}, \vphi_{\xi_{\Lambda'}, \sharp})_+=
  L^2(\M_{\Lambda\sqcup \Lambda}^{\rm H}, \vphi_{\xi_{\Lambda}, \sharp})_+. \label{KL2Diff}
  \ee
  Let $P^{\rm K}_{\Lambda\Lambda'}$ be the orthogonal projection from $L^2(\M^{\rm K}_{\Lambda'}, \vphi^{\rm K}_{\Lambda', \sharp})$ to $L^2(\M^{\rm K}_{\Lambda}, \vphi^{\rm K}_{\Lambda, \sharp})$.
 We readily confirm that 
 $P^{\rm K}_{\Lambda\Lambda'}={\bs P}_{\Lambda\Lambda'}\restriction \h^{\rm K}_{\Lambda'}$ and $
 P_{0, \Lambda} {\bs P}_{\Lambda\Lambda'}={\bs P}_{\Lambda\Lambda'}P_{0, \Lambda'}
 $.
 Combining these with \eqref{KL2Diff}, we obtain 
 \be
 P^{\rm K}_{\Lambda\Lambda'} L^2(\M^{\rm K}_{\Lambda'}, \vphi^{\rm K}_{\Lambda', \sharp})_+=
 L^2(\M^{\rm K}_{\Lambda}, \vphi^{\rm K}_{\Lambda, \sharp})_+.
 \ee 
 This completes the proof of (i) of Lemma \ref{ConKLM}.

Next, 
let $Q_{\Lambda}^{\rm K}$ be the orthogonal projection from $\h_{\Lambda}^{\rm K}$ to $\h_{\Lambda^c\sqcup \Lambda^f}$: $Q_{\Lambda}^{\rm K}=\prod_{x\in \Lambda} (n_{x\up}^c-n_{x\down}^c)^2$. Then we readily confirm that 
\be
Q_{\Lambda\sqcup \Lambda}=Q_{\Lambda}^{\rm K} P_{0, \Lambda}.
\ee
From the proof of (ii) of Lemma \ref{ConHubb}, we find
\begin{align}
Q_{\Lambda}^{\rm K} \M_{\Lambda}^{\rm K} Q_{\Lambda}^{\rm K}
=Q_{\Lambda^c\sqcup \Lambda^f}  (\mathfrak{R}_{\Lambda^c\sqcup \Lambda^f})^{\prime\prime}Q_{\Lambda^c\sqcup \Lambda^f} =\M^{\rm MLM}_{\Lambda^c\sqcup \Lambda^f}=\N^{\rm MLM}_{\Lambda}.
\end{align}
and 
\be
Q_{\Lambda\sqcup \Lambda}L^2(\M_{\Lambda\sqcup \Lambda}, \vphi_{\xi_{\Lambda}, \sharp})_+
=L^2(\M^{\rm MLM}_{\Lambda\sqcup \Lambda}, \vphi_{\xi_{\Lambda}, \sharp})_+.\label{QL2K}
\ee
Combining these with the property $L^2(\M_{\Lambda}^{\rm K}, \vphi^{\rm K}_{\Lambda, \sharp})_+=P_{0, \Lambda} L^2(\M^{\rm H}_{\Lambda\sqcup \Lambda}, \vphi_{\xi_{\Lambda}, \sharp})_+$,  we obtain
\begin{align}
Q^{\rm K}_{\Lambda} L^2(\M_{\Lambda}^{\rm K}, \vphi^{\rm K}_{\Lambda, \sharp})_+&=P_{0, \Lambda} Q^{\rm K}_{\Lambda} L^2(\M^{\rm H}_{\Lambda\sqcup \Lambda}, \vphi_{\xi_{\Lambda}, \sharp})_+\no
&=Q_{\Lambda\sqcup \Lambda} L^2(\M^{\rm H}_{\Lambda\sqcup \Lambda}, \vphi_{\xi_{\Lambda}, \sharp})_+
=L^2(\N^{\rm MLM}_{\Lambda}, \vphi^{\rm MLM}_{\Lambda, \sharp})_+.
\end{align}
This completes the proof of (ii) of Lemma \ref{ConKLM}.
\end{proof}

The basic results for the KLM are summarized in the following theorem.
\begin{Thm} \label{KondoThm1}
Assume Condition \ref{AssHubbard}.  Assume that, for each $\Lambda \in F_{\mathbb{L}}^{({\rm e})}$. the matrix $\{U_{xy}\}_{x, y\in \Lambda}$ is positive   semi-definite. Then we have the following:
\begin{itemize}
\item[\rm (i)] If $J>0$, the antiferromagnetic coupling, then 
$H_{\Lambda}^{\rm K} \in \mathscr{A}_{\Lambda^c\sqcup \Lambda^f, 2|\Lambda|}(\N^{\rm MLM}_{\Lambda}, \vphi^{\rm MLM}_{\Lambda, {\rm AF}})$ holds.
Hence, $S_{H_{\Lambda}^{\rm K}}=0$.
\item[\rm (ii)] If $J<0$, the ferromagnetic coupling, then 
$H_{\Lambda}^{\rm K} \in \mathscr{A}_{\Lambda^c\sqcup \Lambda^f, 2|\Lambda|}(\N^{\rm MLM}_{\Lambda}, \vphi^{\rm MLM}_{\Lambda, {\rm F}})$ holds.
Hence, $S_{H_{\Lambda}^{\rm K}}=\big||\Lambda_A|-|\Lambda_B|\big|$.
\item[\rm (iii)] $O_{1/2, \sharp}^{\rm K} \LRA O_{1/2, \sharp}^{\rm MLM}\ (\sharp={\rm AF, F})$.
\item[\rm (iv)]  Set ${\bs H}^{\rm K}=\{H_{\Lambda} ^{\rm K} : \Lambda \in F_{\mathbb{L}}\}$.
If $J>0$, then ${\bs H}^{\rm K}$ is adapted to $O^{\rm K}_{1/2, {\rm AF}}$. 
For any  increasing sequence of sets $\{\Lambda_n : n\in \BbbN\}\subset  F_{\mathbb{L}}$ such that $\bigcup_{n=1}^{\infty} \Lambda_n=\mathbb{L}$,  each NMGS  associated with ${\bs H}^{\rm K}$ $\{\psi^{\rm K}_{\Lambda_n} : n\in \BbbN \}$ exhibits a  magnetic order with a spin density of $0$.
In contrast to this, if $J<0$, then 
${\bs H}^{\rm K}$ is adapted to $O^{\rm K}_{1/2, {\rm F}}$.
Hence, 
 if there exist a non-negative constant $s$ and an increasing sequence of sets $\{\Lambda_n : n\in \BbbN\}\subset  F_{\mathbb{L}}$ such that $\bigcup_{n=1}^{\infty} \Lambda_n=\mathbb{L}$ and 
$\big||\Lambda_{n, A}|-|\Lambda_{n, B}|\big|=s|\Lambda_n|+o(|\Lambda_n|)$ as $n\to \infty$, then each NMGS   associated with ${\bs H}^{\rm K}$  exhibits a  magnetic order with a spin density of  $s$.
\end{itemize}

\end{Thm}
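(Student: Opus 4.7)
The overall strategy is to verify that $H^{\rm K}_{\Lambda}$ is adapted to the IEE system $\{\M^{\rm K}_{\Lambda}, \vphi^{\rm K}_{\Lambda, \sharp}\}$ (with $\sharp=\rm AF$ when $J>0$ and $\sharp=\rm F$ when $J<0$) in the sense of Definition \ref{DefAHamiC}, then use Lemma \ref{ConKLM} to transport the resulting magnetic information to the MLM system $\{\N^{\rm MLM}_{\Lambda}, \vphi^{\rm MLM}_{\Lambda, \sharp}\}$ and apply Theorem \ref{BasicMLM}. The two parts \eqref{BipA} and \eqref{BipB} of the bipartite structures of $G^{\sharp}_{\Lambda}$ produce opposite signs for the Kondo term ${\bs S}^c_x \cdot {\bs S}^f_x$: in $G^{\rm AF}_{\Lambda}$ the two on-site spins sit in opposite classes, which is compatible with positivity preservation precisely when $J>0$, while in $G^{\rm F}_{\Lambda}$ they sit in the same class, matching the $J<0$ case. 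This sign matching is the conceptual core of the result.

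Conditions (i) and (ii) of Definition \ref{DefAHamiC} are immediate: SU$(2)$ invariance of hopping, Coulomb, and Heisenberg couplings gives commutation with ${\bs S}_{\rm tot, \Lambda}$, and a ground state exists by finite-dimensionality of $\h^{\rm K}_{\Lambda}$. The substantive task is condition (iii), that $\{e^{-\beta H^{\rm K}_{\Lambda, M}}\}_{\beta \ge 0}$ is ergodic w.r.t. $L^2(\M^{\rm K}_{\Lambda}[M], \vphi^{\rm K}_{\Lambda, \sharp, M})_+$ for every $M \in \mathrm{spec}(S^{(3)}_{\rm tot, \Lambda}\restriction \h^{\rm K}_{\Lambda})$. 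This is the principal obstacle and is where the sign of $J$ enters the analysis. Using the OPOI calculus of Section \ref{PrelSta}, one writes $H^{\rm K}_{\Lambda}$ as a sum of a hopping piece (positivity preserving thanks to bipartiteness of $G^{\sharp}_{\Lambda}$ and the form of $c^*J_{\sharp}c^*J_{\sharp}$ via Proposition \ref{AJAJP}), an on-site Kondo piece (whose sign is aligned with the cone only for the declared $\sharp$), and a Coulomb piece (handled by the positive semi-definiteness of $\{U_{xy}\}$ as in the proof of Theorem \ref{MLMHubbThm}). Irreducibility, required for the strict-positivity improvement, comes from connectedness of $G^{\sharp}_{\Lambda}$ together with Condition \ref{AssHubbard}. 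The general ergodicity results of Appendix \ref{SectA}, specialized in Appendix \ref{SectB}, are expected to deliver this step.

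Once adaptation is established, parts (i) and (ii) follow by combining Lemma \ref{ConKLM}(ii), which gives $F(\M^{\rm K}_{\Lambda}, \vphi^{\rm K}_{\Lambda, \sharp}) \longrightarrow F(\N^{\rm MLM}_{\Lambda}, \vphi^{\rm MLM}_{\Lambda, \sharp})$, with Corollary \ref{StaGsBigA}, whence $H^{\rm K}_{\Lambda} \in \mathscr{A}_{\Lambda^c\sqcup \Lambda^f, 2|\Lambda|}(\N^{\rm MLM}_{\Lambda}, \vphi^{\rm MLM}_{\Lambda, \sharp})$ and $S_{H^{\rm K}_{\Lambda}} = S(\N^{\rm MLM}_{\Lambda}, \vphi^{\rm MLM}_{\Lambda, \sharp})$. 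Applying Theorem \ref{BasicMLM}(i) to the bipartite graph $G^{\sharp}_{\Lambda}$, this total spin equals $\big||(\Lambda^c\sqcup\Lambda^f)_{1,\sharp}|-|(\Lambda^c\sqcup\Lambda^f)_{2,\sharp}|\big|/2$. In the antiferromagnetic case, \eqref{BipA} gives $|(\Lambda^c\sqcup\Lambda^f)_{1,\rm AF}|=|\Lambda_A|+|\Lambda_B|=|\Lambda|=|(\Lambda^c\sqcup\Lambda^f)_{2,\rm AF}|$, yielding $S_{H^{\rm K}_{\Lambda}}=0$. In the ferromagnetic case, \eqref{BipB} gives cardinalities $2|\Lambda_A|$ and $2|\Lambda_B|$, yielding $S_{H^{\rm K}_{\Lambda}}=\big||\Lambda_A|-|\Lambda_B|\big|$.

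Part (iii) is Lemma \ref{ConKLM}(i)--(ii) reread in the language of Definition \ref{ConsisOO}: the chain $F(\M^{\rm K}_{\Lambda}, \vphi^{\rm K}_{\Lambda, \sharp}) \longrightarrow F(\N^{\rm MLM}_{\Lambda}, \vphi^{\rm MLM}_{\Lambda, \sharp})$ holds for every $\Lambda\in F^{\rm (e)}_{\mathbb{L}}$ and is compatible with enlarging $\Lambda$, so $O^{\rm K}_{1/2, \sharp}\LRA O^{\rm MLM}_{1/2, \sharp}$. Part (iv) is then a direct application of Theorem \ref{OtoO}(ii): ${\bs H}^{\rm K}$ is adapted to $O^{\rm K}_{1/2, \sharp}$ by (i)/(ii), and for any exhausting sequence $\{\Lambda_n\}$ of $F^{\rm (e)}_{\mathbb{L}}$ one has $S_{H^{\rm K}_{\Lambda_n}}=0$ in the AF case (spin density $0$) and $S_{H^{\rm K}_{\Lambda_n}}=\big||\Lambda_{n,A}|-|\Lambda_{n,B}|\big|=s|\Lambda_n|+o(|\Lambda_n|)$ in the F case (spin density $s$), matching Definition \ref{DefMagO}.
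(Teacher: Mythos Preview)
Your proposal is correct and follows essentially the same route as the paper: establish ergodicity of $\{e^{-\beta H^{\rm K}_{\Lambda,\sharp,M}}\}_{\beta\ge 0}$ w.r.t.\ $L^2(\M^{\rm K}_{\Lambda}[M],\vphi^{\rm K}_{\Lambda,\sharp,M})_+$ via Theorem \ref{ManyHa}(v), use Lemma \ref{ConKLM} to place $H^{\rm K}_{\Lambda}$ in $\mathscr{A}_{\Lambda^c\sqcup\Lambda^f,2|\Lambda|}(\N^{\rm MLM}_{\Lambda},\vphi^{\rm MLM}_{\Lambda,\sharp})$, and then read off the total spin from Theorem \ref{BasicMLM} applied to the bipartite graphs $G^{\sharp}_{\Lambda}$. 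Your explicit cardinality computations for \eqref{BipA} and \eqref{BipB} and your invocation of Theorem \ref{OtoO}(ii) for part (iv) are exactly what the paper's outline leaves implicit.
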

\begin{proof}

We give only a brief outline of the proof.
When emphasizing that the sign of the coupling constant $J$ in $H_{\Lambda}^{\rm K}$ is positive (resp. negative), we will denote the Hamiltonian $H_{\Lambda}^{\rm K}$ as $H_{\Lambda, \rm AF}^{\rm K}$ (resp. $H_{\Lambda, \rm F}^{\rm K}$).
Considering the bipartite structures \eqref{BipA} and \eqref{BipB}  of the  graphs $G_{\Lambda}^{\sharp}$, we see by Theorem \ref{BasicMLM} that: 
\begin{align}
S(\N^{\rm MLM}_{\Lambda}, \vphi^{\rm MLM}_{\Lambda, \sharp})=
\begin{cases}
0 & \mbox{if $\sharp={\rm AF}$}\\
\big|
|\Lambda_A|-|\Lambda_B|
\big| & \mbox{if $\sharp={\rm F}$}.
\end{cases}
\end{align}
On the other hand, we see that $\{e^{-\beta H_{\Lambda, \sharp, M}^{\rm K}}\}_{\beta \ge 0} $ is ergodic 
w.r.t. $L^2(\N^{\rm MLM}_{\Lambda}[M], \vphi^{\rm MLM}_{\Lambda, \sharp, M})_+$ for all   $M\in \mathrm{spec}(S^{(3)}_{\rm tot, \Lambda} \restriction \h_{\Lambda^c\sqcup \Lambda^f})$ and $\sharp={\rm AF, F}$  by  (v) of Theorem \ref{ManyHa}, where $H_{\Lambda, \sharp, M}^{\rm K}$ is the restriction of $H_{\Lambda, \sharp}^{\rm K}$ to the $M$-subspace. 
 By this fact and Lemma \ref{ConKLM}, $H_{\Lambda, \sharp}^{\rm K}\in \mathscr{A}_{\Lambda^c\sqcup \Lambda^f, 2|\Lambda|}(\N^{\rm MLM}_{\Lambda}, \vphi^{\rm MLM}_{\Lambda, \sharp}) $ follows for $\sharp={\rm AF, F}$. Therefore, by changing the graph from $G$ to $G^{\sharp}$ in Theorem \ref{BasicMLM}, the claims of Theorem \ref{KondoThm1} follow.
\end{proof}

Theorem \ref{KondoThm1} is a generalization of the results in  \cite{Tsunetsugu1997,Yanagisawa1995} by the theory we have constructed in the previous sections.
In contrast to Theorem \ref{MLMHubbThm}, we only assume that $\{U_{xy}\}$ is positive semi-definite in Theorem \ref{KondoThm1}. Accordingly, the case where $U_{xy}=0$ is also included in the theorem.

Next, we discuss the results on the stability of the magnetic properties of the ground states of the Kondo lattice model.
For this purpose, we consider a model in which conduction electrons interact with lattice vibrations:
\be
H^{\rm KH}_{\Lambda}=H_{\Lambda}^{\rm K}+
\sum_{x, y\in \Lambda}
g_{xy} (n_x -1)(b_y^*
 + b_y ) +\sum_{x\in \Lambda}
\omega b_x^*b_x,
\ee
where the bosonic annihilation  operator $b_x$ is given  in Subsection \ref{MLM-HH}. 
As in the case of the Holstein--Hubbard model, $g_{xy}$ represents the strength of the electron-phonon interaction.
We assume that the matrix $\{g_{xy}\}$ is real and symmetric.
The Hamiltonian $H_{\Lambda}^{\rm KH}$ acts in $\h_{\Lambda}^{\rm KH}:=\h_{\Lambda}^{\rm K} \otimes \F_{\Lambda}^{\rm ph}$, where $\F_{\Lambda}^{\rm ph}$ is the bosonic Fock space given in Subsection \ref{MLM-HH}.
By applying the Kato--Rellich theorem \cite[Theorem X.12]{Reed1975}, we can prove that $H_{\Lambda}^{\rm KH}$ is self-adjoint and bounded from below.
 Define 
\be
\M^{\rm KH}_{\Lambda}=\M_{\Lambda}^{\rm K} \otimes \M_{\Lambda}^{\rm ph},\ \ \vphi^{\rm KH}_{\Lambda, \sharp}=\vphi_{\xi_{\Lambda, \sharp}^{\rm KH}},
\ee
where $\xi_{\Lambda, \sharp}^{\rm KH}=\tilde{\xi}_{\Lambda, \sharp}\otimes \Omega_{\Lambda}^{\rm ph}$
with $
\tilde{\xi}_{\Lambda, \sharp}
$ defined by \eqref{TaneVec2} and $\Omega_{\Lambda}^{\rm ph}$ the bosonic Fock vacuum in $\F^{\rm ph}_{\Lambda}$. 
Let us consider the following nets of the IEE systems:
 \be
O_{1/2, \sharp}^{\rm KH}
:=\big\{
\{\M_{\Lambda}^{\rm KH}, \vphi_{\Lambda, \sharp}^{\rm KH}\} : \Lambda\in F_{\BbbL}^{\rm (e)} 
\big\}\ \ (\sharp={\rm AF, F}).
\ee

\begin{Lemm}\label{ConKoPh}
We have the following:
\begin{itemize}
\item[\rm (i)] If $\Lambda, \Lambda'\in F^{\rm (e)}_{\mathbb{L}}$ satisfies $\Lambda\subseteq \Lambda'$, then 
$F(\M^{\rm KH}_{\Lambda\rq{}}, \vphi^{\rm KH}_{\Lambda\rq{}, \sharp})\longrightarrow F(\M^{\rm KH}_{\Lambda}, \vphi^{\rm KH}_{\Lambda, \sharp})$ for $\sharp={\rm AF, F}$. Hence, $O^{\rm KH}_{1/2, \sharp}\, (\sharp={\rm AF, F})$ are magnetic systems.
\item[\rm (ii)] For each $\sharp ={\rm AF, F}$  and $\Lambda 
\in F^{\rm (e)}_{\mathbb{L}}
, $
$F(\M^{\rm KH}_{\Lambda}, \vphi^{\rm KH}_{\Lambda, \sharp})\longrightarrow F(\M^{\rm K}_{\Lambda}, \vphi^{\rm K}_{\Lambda, \sharp})$.
\end{itemize}
The above results can be summarized as the following commutative diagram:
\be
\begin{tikzcd}
  F(\M^{\rm KH}_{\Lambda}, \vphi^{\rm KH}_{\Lambda, \sharp})   \arrow[d] &
  \ar[l]   F(\M^{\rm KH}_{\Lambda\rq{}}, \vphi^{\rm KH}_{\Lambda\rq{}, \sharp})
  \arrow[d]  \\
  F(\M^{\rm K}_{\Lambda}, \vphi^{\rm K}_{\Lambda, \sharp}) &\arrow[l] F(\M^{\rm K}_{\Lambda\rq{}}, \vphi^{\rm K}_{\Lambda\rq{}, \sharp})
\end{tikzcd}
\ee
Furthermore, when $\{\M_{\Lambda}, \vphi_{\Lambda}\}$  is replaced by $\{\M^{\rm K}_{\Lambda}, \vphi^{\rm K }_{\Lambda, \sharp}\}$ and $\{\M^{\prime}_{\Lambda}, \vphi^{\prime}_{\Lambda}\}$ is replaced by $\{\M^{\rm KH}_{\Lambda}, \vphi^{\rm KH}_{\Lambda, \sharp}\}$ in Theorem \ref{MacroDiagram}, the diagram \eqref{MacDiagComm} is commutative.
\end{Lemm}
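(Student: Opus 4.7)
\begin{Proof}[Proof plan]
The approach is to adapt, essentially verbatim, the proof strategy used in Lemmas \ref{ConHoHu} and \ref{ConKLM}, combining the phonon-related bookkeeping of the former with the Kondo-constraint bookkeeping of the latter. Throughout the plan I will exploit the tensor factorization $\h^{\rm KH}_{\Lambda}=\h^{\rm K}_{\Lambda}\otimes \F^{\rm ph}_{\Lambda}$ together with $\M^{\rm KH}_{\Lambda}=\M^{\rm K}_{\Lambda}\otimes \M^{\rm ph}_{\Lambda}$ and $\xi_{\Lambda,\sharp}^{\rm KH}=\tilde\xi_{\Lambda,\sharp}\otimes \Omega_{\Lambda}^{\rm ph}$.

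For (i), the plan is to mimic the proof of (i) in Lemma \ref{ConHoHu}. Fix $\Lambda\subseteq \Lambda'$ in $F_{\mathbb{L}}^{\rm (e)}$. Using the fermionic identification $\F_{(\Lambda^c\sqcup\Lambda^f)\sqcup((\Lambda'\setminus\Lambda)^c\sqcup(\Lambda'\setminus\Lambda)^f)}\simeq \F_{\Lambda^c\sqcup\Lambda^f}\otimes\F_{(\Lambda'\setminus\Lambda)^c\sqcup(\Lambda'\setminus\Lambda)^f}$ (already employed in Lemma \ref{ConKLM}), together with the bosonic identification $\F^{\rm ph}_{\Lambda'}\simeq \F^{\rm ph}_{\Lambda}\otimes \F^{\rm ph}_{\Lambda'\setminus\Lambda}$, one obtains the natural inclusion $\h^{\rm KH}_{\Lambda}\otimes \h^{\rm KH}_{\Lambda'\setminus\Lambda}\hookrightarrow \h^{\rm KH}_{\Lambda'}$. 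I would define the isometric embedding $\kappa^{\rm KH}\eta=\eta\otimes \tilde\xi_{\Lambda'\setminus\Lambda,\sharp}^{\rm KH}/\|\tilde\xi_{\Lambda'\setminus\Lambda,\sharp}^{\rm KH}\|$ and verify the standard triplet $\M^{\rm KH}_{\Lambda}=P^{\rm KH}_{\Lambda\Lambda'}\M^{\rm KH}_{\Lambda'}P^{\rm KH}_{\Lambda\Lambda'}$, $\vphi^{\rm KH}_{\Lambda',\sharp}\circ \mathscr{E}^{\rm KH}_{\Lambda\Lambda'}=\vphi^{\rm KH}_{\Lambda',\sharp}$, and the key positivity identity $P^{\rm KH}_{\Lambda\Lambda'}L^2(\M^{\rm KH}_{\Lambda'},\vphi^{\rm KH}_{\Lambda',\sharp})_+=L^2(\M^{\rm KH}_{\Lambda},\vphi^{\rm KH}_{\Lambda,\sharp})_+$. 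For the last identity I would expand a vector $\eta\in L^2(\M^{\rm KH}_{\Lambda'},\vphi^{\rm KH}_{\Lambda',\sharp})$ in the basis $\{|U,\overline V\rangle_\sharp\otimes f_{UV}\}$ with $f_{UV}\in L^2(\BbbR^{|\Lambda'|})$, recall that $\eta\in L^2(\cdots)_+$ amounts to positive semi-definiteness of $\{f'_{UV}({\bs q})\}$ at a.e.\ ${\bs q}$, and observe that $P^{\rm KH}_{\Lambda\Lambda'}$ acts by integrating out the $\Lambda'\setminus\Lambda$ phonon variables against $|\Omega^{\rm ph}_{\Lambda'\setminus\Lambda}|^2$, a strictly positive Gaussian weight. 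Positive semi-definiteness is preserved under this integration against a positive scalar measure, exactly as in the Holstein--Hubbard case. The Kondo constraint $P_{0,\Lambda}$ enters only through the fermionic basis, so the argument of Lemma \ref{ConKLM}(i) handles it without modification.

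For (ii), I would follow the proof of (ii) in Lemma \ref{ConHoHu}. Here $Q^{\rm KH}_{\Lambda}=1\otimes|\Omega^{\rm ph}_{\Lambda}\rangle\langle \Omega^{\rm ph}_{\Lambda}|$ is the orthogonal projection from $\h^{\rm KH}_{\Lambda}$ onto $\h^{\rm K}_{\Lambda}$. The three required properties are: $Q^{\rm KH}_{\Lambda}L^2(\M^{\rm KH}_{\Lambda},\vphi^{\rm KH}_{\Lambda,\sharp})=L^2(\M^{\rm K}_{\Lambda},\vphi^{\rm K}_{\Lambda,\sharp})$, $Q^{\rm KH}_{\Lambda}\M^{\rm KH}_{\Lambda}Q^{\rm KH}_{\Lambda}=\M^{\rm K}_{\Lambda}$, and $Q^{\rm KH}_{\Lambda}L^2(\M^{\rm KH}_{\Lambda},\vphi^{\rm KH}_{\Lambda,\sharp})_+=L^2(\M^{\rm K}_{\Lambda},\vphi^{\rm K}_{\Lambda,\sharp})_+$. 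The first two follow from the tensor-product structure of $\M^{\rm KH}_{\Lambda}$ and the fact that $Q^{\rm KH}_{\Lambda}\xi^{\rm KH}_{\Lambda,\sharp}=\xi^{\rm KH}_{\Lambda,\sharp}$, which immediately gives $\vphi^{\rm KH}_{\Lambda,\sharp}\circ \mathscr{E}^{\rm KH}_{\Lambda}=\vphi^{\rm K}_{\Lambda,\sharp}$ with $\mathscr{E}^{\rm KH}_{\Lambda}(x)=Q^{\rm KH}_{\Lambda}xQ^{\rm KH}_{\Lambda}$. The positivity identity follows again from the strict pointwise positivity of $\Omega^{\rm ph}_{\Lambda}({\bs q})=(\omega/|\Lambda|)^{|\Lambda|/4}e^{-\omega{\bs q}^2/2}$: evaluating the phonon-dependent matrix of coefficients of an element of $L^2(\M^{\rm KH}_{\Lambda},\vphi^{\rm KH}_{\Lambda,\sharp})_+$ against $\Omega^{\rm ph}_{\Lambda}$ yields a positive semi-definite matrix of scalars.

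The main obstacle, as in the earlier lemmas, is not the algebraic bookkeeping but the verification that the positivity identities survive the two types of projection. The delicate point is to write down the correct expansion of a general vector in $L^2(\M^{\rm KH}_{\Lambda'},\vphi^{\rm KH}_{\Lambda',\sharp})$ so that the Kondo constraint encoded by $P_{0,\Lambda'}$ does not destroy the tensor factorization of the phonon variables; once this is handled exactly as in the proof of Lemma \ref{ConKLM}, the remainder of the verification reduces to the positivity of Gaussian weights, and the commutative diagram together with the macroscopic-limit statement follows formally from Definition \ref{DefConForm}, Remark \ref{RemCons}(3), and Theorem \ref{MacroDiagram}.
\end{Proof}
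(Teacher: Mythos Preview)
Your proposal is correct and follows precisely the approach indicated in the paper, which simply states that the proof is almost the same as that of Lemma~\ref{ConHoHu} with the proviso that, as in Lemma~\ref{ConKLM}, the projection $P_{0,\Lambda}$ must be handled appropriately. You have in fact supplied more detail than the paper itself, but the strategy---combining the phonon-integration argument of Lemma~\ref{ConHoHu} with the Kondo-constraint bookkeeping of Lemma~\ref{ConKLM}---is identical.
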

\begin{proof}
The proof is almost the same as that of Lemma \ref{ConHoHu}.
Note that as in the proof of Lemma \ref{ConKLM}, the orthogonal projection $P_{0, \Lambda}$ must be taken into account adequately.
\end{proof}

With this lemma and Theorem \ref{KondoThm1}, we have the following theorem:
\begin{Thm}\label{ThmKondo}
Assume Conditions \ref{AssHubbard} and \ref{AssG}.  Assume that, for  each $\Lambda\in F_{\mathbb{L}}^{\rm (e)}$. the matrix $\{U_{{\rm eff}, xy}\}_{x, y\in \Lambda}$ is positive semi-definite, where
$U_{{\rm eff}, xy}$ is given by \eqref{DefUeff}. Then we have the following:
\begin{itemize}
\item[\rm (i)] If $J>0$,  then 
$H_{\Lambda}^{\rm KH} \in \mathscr{A}_{\Lambda^c\sqcup \Lambda^f, 2|\Lambda|}(\N^{\rm MLM}_{\Lambda}, \vphi^{\rm MLM}_{\Lambda, {\rm AF}})$ holds.
Hence, $S_{H_{\Lambda}^{\rm KH}}=0$.
\item[\rm (ii)] If $J<0$,  then 
$H_{\Lambda}^{\rm KH} \in \mathscr{A}_{\Lambda^c\sqcup \Lambda^f, 2|\Lambda|}(\N^{\rm MLM}_{\Lambda}, \vphi^{\rm MLM}_{\Lambda, {\rm F}})$ holds.
Hence, $S_{H_{\Lambda}^{\rm KH}}=\big||\Lambda_A|-|\Lambda_B|\big|$.
\item[\rm (iii)] $O_{1/2, \sharp}^{\rm KH} \longrightarrow O_{1/2, \sharp}^{\rm K} \longrightarrow O_{1/2, \sharp}^{\rm MLM}\ (\sharp={\rm AF, F})$. Hence, $O_{1/2, \sharp}^{\rm KH}\in \mathscr{C}(O_{1/2, \sharp}^{\rm MLM})$.
\item[\rm (iv)]  Set ${\bs H}^{\rm KH}=\{H_{\Lambda} ^{\rm KH} : \Lambda \in F_{\mathbb{L}}\}$.
If $J>0$, then ${\bs H}^{\rm KH}$ is adapted to $O^{\rm KH}_{1/2, {\rm AF}}$. 
For any  increasing sequence of sets $\{\Lambda_n : n\in \BbbN\}\subset  F_{\mathbb{L}}$ such that $\bigcup_{n=1}^{\infty} \Lambda_n=\mathbb{L}$,  each NMGS associated with ${\bs H}^{\rm KH}$ exhibits a  magnetic order with a spin density of  $0$.
In contrast to this, if $J<0$, then 
${\bs H}^{\rm KH}$ is adapted to $O^{\rm KH}_{1/2, {\rm F}}$. 
Hence, 
 if there exists a non-negative constant $s$ and an increasing sequence of sets $\{\Lambda_n : n\in \BbbN\}\subset  F_{\mathbb{L}}$ such that $\bigcup_{n=1}^{\infty} \Lambda_n=\mathbb{L}$ and 
$\big||\Lambda_{n, A}|-|\Lambda_{n, B}|\big|=s|\Lambda_n|+o(|\Lambda_n|)$ as $n\to \infty$, then each NMGS associated with ${\bs H}^{\rm KH}$ exhibits a  magnetic order with a  spin density of  $s$.
\end{itemize}

\end{Thm}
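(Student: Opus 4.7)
\begin{Proof}[Proof proposal]
The strategy is to follow exactly the pattern used for $H^{\rm HH}_{\Lambda}$ in Theorem \ref{MLMHHThm}, with $H^{\rm H}_{\Lambda}$ replaced by $H^{\rm K}_{\Lambda}$ and the MLM system $O_{1/2}^{\rm MLM}$ replaced by the deformed MLM systems $O^{\rm MLM}_{1/2, \sharp}$ ($\sharp = {\rm AF}, {\rm F}$). Since the bosonic component of $H^{\rm KH}_{\Lambda}$ enters in exactly the same way as it did in $H^{\rm HH}_{\Lambda}$, we can recycle the positivity bookkeeping from Subsection \ref{MLM-HH}, combining it with the Kondo analysis already carried out in Theorem \ref{KondoThm1}.

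First I would establish (iii). By (ii) of Lemma \ref{ConKoPh} we have $F(\M^{\rm KH}_{\Lambda}, \vphi^{\rm KH}_{\Lambda, \sharp}) \longrightarrow F(\M^{\rm K}_{\Lambda}, \vphi^{\rm K}_{\Lambda, \sharp})$ for every $\Lambda \in F^{\rm (e)}_{\mathbb{L}}$, which gives $O^{\rm KH}_{1/2, \sharp} \longrightarrow O^{\rm K}_{1/2, \sharp}$. Chaining this with (iii) of Theorem \ref{KondoThm1}, which supplies $O^{\rm K}_{1/2, \sharp} \longrightarrow O^{\rm MLM}_{1/2, \sharp}$, and using the transitivity of $\longrightarrow$ noted in Remark \ref{RemCons}, yields $O^{\rm KH}_{1/2, \sharp} \LRA O^{\rm K}_{1/2, \sharp} \LRA O^{\rm MLM}_{1/2, \sharp}$, so $O^{\rm KH}_{1/2, \sharp} \in \mathscr{C}(O^{\rm MLM}_{1/2, \sharp})$.

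Next I would verify that $H^{\rm KH}_{\Lambda} \in A_{\Lambda^c \sqcup \Lambda^f, 2|\Lambda|}(\M^{\rm KH}_{\Lambda}, \vphi^{\rm KH}_{\Lambda, \sharp})$. Conditions (i) and (ii) of Definition \ref{DefAHamiC} (the rotational symmetry of the electron--phonon coupling and the existence of ground states via Kato--Rellich) are immediate. The substantive condition is (iii): one needs ergodicity of $\{e^{-\beta H^{\rm KH}_{\Lambda, \sharp, M}}\}_{\beta\ge 0}$ w.r.t. $L^2(\M^{\rm KH}_{\Lambda}[M], \vphi^{\rm KH}_{\Lambda, \sharp, M})_+$ for every $M \in \mathrm{spec}(S^{(3)}_{\rm tot, \Lambda}\restriction \h_{\Lambda^c \sqcup \Lambda^f})$. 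This is the main obstacle and is exactly the kind of statement relegated to Appendix \ref{SectB}: it is the Kondo analogue of part (iv) of Theorem \ref{ManyHa} used to dispatch $H^{\rm HH}_{\Lambda}$, and it is where the positive semi-definiteness of $\{U_{{\rm eff},xy}\}$ and the assumption on $\sum_y g_{xy}$ (Condition \ref{AssG}) are essential in order to tame the electron--phonon term via a Lang--Firsov-type rotation and combine it with the OPOI bounds for the bare Kondo Hamiltonian already established in the proof of Theorem \ref{KondoThm1}. Once this ergodicity is in hand, the conditional-expectation argument in (ii) of Lemma \ref{ConKoPh} together with (ii) of Lemma \ref{ConKLM} transfers the adaptedness from $\{\M^{\rm KH}_{\Lambda}, \vphi^{\rm KH}_{\Lambda, \sharp}\}$ down to $\{\N^{\rm MLM}_{\Lambda}, \vphi^{\rm MLM}_{\Lambda, \sharp}\}$, i.e., $H^{\rm KH}_{\Lambda} \in \mathscr{A}_{\Lambda^c \sqcup \Lambda^f, 2|\Lambda|}(\N^{\rm MLM}_{\Lambda}, \vphi^{\rm MLM}_{\Lambda, \sharp})$.

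Finally, (i) and (ii) follow by applying Theorem \ref{BasicMLM}(ii) to the bipartite connected graphs $G^{\rm AF}_{\Lambda}$ and $G^{\rm F}_{\Lambda}$ with bipartitions \eqref{BipA} and \eqref{BipB}: the corresponding imbalances $\big||(\Lambda^c \sqcup \Lambda^f)_{1,\sharp}| - |(\Lambda^c \sqcup \Lambda^f)_{2,\sharp}|\big|/2$ evaluate to $0$ when $\sharp = {\rm AF}$ and to $\big||\Lambda_A|-|\Lambda_B|\big|$ when $\sharp = {\rm F}$, giving the stated total spins $S_{H^{\rm KH}_{\Lambda}}$. For (iv), adaptedness of ${\bs H}^{\rm KH}$ to $O^{\rm KH}_{1/2, \sharp}$ follows from the adaptedness established above for each $\Lambda$, and the magnetic-order statement then follows directly from Corollary \ref{StaMagOrder} applied with $O_{\varrho} = O^{\rm MLM}_{1/2, \sharp}$ and $O^{\prime}_{\varrho} = O^{\rm KH}_{1/2, \sharp}$, using the density conditions on $\big||\Lambda_{n,A}| - |\Lambda_{n,B}|\big|/|\Lambda_n|$ exactly as in Theorem \ref{BasicMLM}(iii).
\end{Proof}
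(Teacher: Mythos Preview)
Your proposal is correct and follows essentially the same approach as the paper's proof: invoke the ergodicity result from Appendix~\ref{SectB} (what you call ``the Kondo analogue of part (iv)'' is precisely part (vi) of Theorem~\ref{ManyHa}), combine it with Lemma~\ref{ConKoPh} and Theorem~\ref{KondoThm1}(iii) to obtain the chain $O^{\rm KH}_{1/2,\sharp}\LRA O^{\rm K}_{1/2,\sharp}\LRA O^{\rm MLM}_{1/2,\sharp}$ and the membership $H^{\rm KH}_{\Lambda}\in\mathscr{A}_{\Lambda^c\sqcup\Lambda^f,2|\Lambda|}(\N^{\rm MLM}_{\Lambda},\vphi^{\rm MLM}_{\Lambda,\sharp})$, and then read off the total spins and the magnetic-order statements from Theorem~\ref{BasicMLM} applied to the graphs $G^{\sharp}$. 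The only cosmetic differences are your reordering (proving (iii) first) and your citation of Corollary~\ref{StaMagOrder} in place of Theorem~\ref{BasicMLM}(iii) for part (iv), which is an equivalent route.
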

\begin{proof} The sketch of the proof is as follows.
We will use the similar notation as in the proof of Theorem  \ref{KondoThm1}.  By (vi) of Theorem  \ref{ManyHa},
we know that  $\{e^{-\beta H_{\Lambda, \sharp, M}^{\rm KH}} \}_{\beta \ge 0}$ is ergodic w.r.t. $L^2(\M^{\rm KH}_{\Lambda}[M], \vphi^{\rm KH}_{\Lambda, \sharp, M})_+$ for all  $M\in \mathrm{spec}(S^{(3)}_{\rm tot, \Lambda} \restriction \h_{\Lambda^c\sqcup \Lambda^f})$ and $\sharp ={\rm AF, F}$, where $H_{\Lambda, \sharp, M}^{\rm KH}$ is the restriction of $H_{\Lambda, \sharp}^{\rm KH}$ to the $M$-subspace. Hence, we have $H_{\Lambda}^{\rm KH} \in A_{\Lambda^c\sqcup \Lambda^f, 2|\Lambda|}(\M^{\rm KH}_{\Lambda}, \vphi^{\rm MLM}_{\Lambda, \sharp})$, which implies that ${\bs H}^{\rm HK}$ is adapted to $O^{\rm KH}_{1/2, {\rm AF}}$ (resp. $O^{\rm KH}_{1/2, {\rm F}}$ ) if $J>0$ (resp. $J<0$).
  By this and Lemma \ref{ConKoPh}, we find  
  $O^{\rm KH}_{1/2, \sharp} \LRA O^{\rm K}_{1/2, \sharp}$ and 
  $H_{\Lambda, \sharp}^{\rm KH} \in \mathscr{A}_{\Lambda^c\sqcup \Lambda^f, 2|\Lambda|}(\N^{\rm MLM}_{\Lambda}, \vphi^{\rm MLM}_{\Lambda, \sharp}) \, (\sharp={\rm AF}, {\rm F})$. Furthermore, by using  (ii) of  Theorem \ref{KondoThm1}, the property (iii) follows.
The rest of the assertions of the theorem can be easily proved by changing the graph considered in Theorem \ref{BasicMLM} from $G$ to $G^{\sharp}$.
\end{proof}

We note that, in contrast to Theorem \ref{MLMHHThm}, we only assume that $\{U_{{\rm eff}, xy}\}$ is positive semi-definite in Theorem \ref{ThmKondo}. Therefore, the case of $U_{{\rm eff}, xy}=0$ is also included in the theorem.

In Theorem \ref{KondoThm1}, we showed that the deformed MLM stability classes $\mathscr{C}(O_{1/2, \sharp}^{\rm K})$ characterize the magnetic properties of the ground states of the Kondo lattice model. On the other hand, Theorem \ref{ThmKondo} implies that the ground states of the Kondo lattice system, in which conduction electrons and phonons are interacting, is also characterized by  $\mathscr{C}(O_{1/2, \sharp}^{\rm K})$. Therefore, we can conclude that the magnetic properties of the ground states of the Kondo lattice model are robust to the interaction of conduction electrons and phonons.

\begin{Rem}\upshape
In \cite{Miyao2021-2}, we consider the Kondo lattice system in a more generalized setting, where localized spins and conduction electrons live on different crystal lattices.
Furthermore, it is discussed  that the same result as Theorem \ref{ThmKondo} holds for the model describing the interaction between the Kondo lattice system and the quantized radiation field.
\end{Rem}

\subsection{On some more stability theorems related to the MLM stability class}

The MLM stability class can characterize the magnetic properties of the ground states of many more models than those discussed in this section.
We will close this section with a short overview of these.

In  \cite{Miyao2019}, the author considers a model of a many-electron system interacting with a quantized radiation field and shows that this model belongs to the MLM stability class. In this sense, the MLM magnetic properties are stable even under the influence of quantized radiation fields.
The analysis of the same system by Giuliani {\it et al.} in \cite{Giuliani2012} using the rigorous renormalization group also suggests the stability of various physical quantities, which is consistent with the results of this paper.

Ueda {\it et al.}  have discussed the magnetic properties of the periodic Anderson model (PAM) in the ground states in \cite{Ueda1992}.
Their results, expressed in the language of the theory developed in this paper, are as follows: the net of the  PAMs is adapted to a magnetic system belonging to the MLM stability class.
Furthermore, as in a deformed KLM case, it can be shown that these magnetic properties of the ground states  of the PAM are stable even when considering the interactions of electrons with lattice vibrations or with the quantized radiation field.
The proof is highly technical, and we will not go into it here. Instead, the details of the proof are given in \cite{Miyao2022-2}.

In \cite{Freericks1995}, Freericks and Lieb analyzed the magnetic properties of the ground states of the Su--Schrieffer--Heeger (SSH) Hamiltonian of polyacetylene.
They showed that among the ground states of the SSH Hamiltonian, there exists one with total spin $S=0$.
Later, in \cite{Miyao2012}, the author proved that the ground state of the SSH Hamiltonian  is unique and  has total spin $S=0$ when considering the Coulomb repulsion between electrons.
This result can be interpreted as the net of  the SSH Hamiltonians is adapted to a particular magnetic system belonging to the MLM stability class.

In \cite{Kubo1990}, Kubo and Kishi obtained the following result on the  Hubbard model.
In the half-filled repulsive model on a bipartite lattice, the charge and the on-site pairing susceptibilities are bounded above by $U$, where $U$ is the strength of the  on-site interaction potential.
From this result, we can conclude that no charge density wave emerges.
  This result can be regarded as an extension of Lieb's theorem discussed in Subsection \ref{MLM-H} to the finite temperature case.
In \cite{Miyao2015}, the author extends the Kubo--Kishi results to systems where electrons interact with lattice vibrations or quantized radiation fields.
In that paper, the author used a probabilistic approach, which at first glance appears to be very different from the approach in this paper.
 However, if we translate the method into the context of the standard forms, we can see that a structure similar to the theory in this paper emerges.
 This point will be discussed in detail in another paper.

\section{The Nagaoka--Thouless stability classes}\label{Sect6}

\subsection{Overview}
In this section, we describe the Nagaoka--Thouless (NT) stability class in detail.
Together with the MLM stability class, this stability class is fundamental in the stability theory of magnetism and is essential for understanding the basic ideas of the theory.
Compared to the MLM stability class, somewhat fewer examples are currently known in the NT stability class. On the other hand, the von Neumann algebras considered in this section are abelian, and it is a good introductory system because the mathematical treatment is gentler than in the previous section.

In 1965, Nagaoka examined a many-electron system with only one hole and found rigorously that the ground state is ferromagnetic when the Coulomb interaction between electrons is very strong.
As Thouless obtained the same result around the same time, this theorem is now called the Nagaoka--Thouless (NT) theorem.
The NT theorem is the first rigorous result on metallic ferromagnetism. 
For a deeper understanding of the results in this section, we will briefly review the history surrounding the NT theorem.
Tasaki reformulated  the NT theorem as it is known today in \cite{Tasaki1989}.
Koller {\it et al.}  analyzed a more realistic many-electron Hamiltonian and showed that the NT theorem still holds for this Hamiltonian \cite{Kollar1996}.
In \cite{Aizenman1990}, Aizenman and Lieb extended the NT theorem to the finite temperature case.
The author proved that NT ferromagnetism is stable under electron-phonon and electron-photon interactions and extended these results to finite temperatures \cite{Miyao2017,Miyao2020-2}.
Another interesting development direction is extending the NT theorem to the ${\rm SU}(n)$ Hubbard model by Katsura and Tanaka \cite{Katsura2013}.
The progress of experimental techniques has been remarkable in recent years, and the NT theorem has been confirmed experimentally \cite{Dehollain2020}.
Therefore, the importance of the NT theorem in the rigorous study of ferromagnetism is becoming higher and higher.

In this way, the NT theorem is the first rigorous result of metallic ferromagnetism and one of the few mathematical results in many electron systems that have been confirmed experimentally.
The main goal of this section is to clarify the mathematical structure behind this  theorem and how the stability theory is constructed from it.

\subsection{General results}\label{GeneNT}
Suppose that we are given an infinite connected graph $G=(\mathbb{L}, E)$.
Take $\Lambda\in F_{\mathbb{L}}$ arbitrarily.
In this subsection, we consider a system in which a single hole moves around on $\Lambda$. We will, therefore, first construct a Hilbert space to describe the system; to describe the situation where there is one hole in $\Lambda$, we prepare the set of spin configurations: 
\be
\mathcal{S}_{\Lambda}=\bigg\{{\bs \sigma}=\{\sigma_x\}_{x\in \Lambda}\in \{-1, 0, 1\}^{\Lambda} : \sum_{x\in \Lambda}|\sigma_x|=|\Lambda|-1\bigg\}.
\ee
For ${\bs \sigma}\in \mathcal{S}_{\Lambda}$, there is only one $x_0\in \Lambda$, which satisfies $\sigma_{x_0}=0$. This $x_0$ represents the position of the hole; note that all vertexes except $x_0$ are occupied by a single electron. In what follows, we denote by $x({\bs \sigma})$ the position of the hole in ${\bs \sigma}\in \mathcal{S}_{\Lambda}$.
The vector state  corresponding to this spin configuration $\bs \sigma$ is defined  by
\be
|{\bs \sigma}\ra_{\Lambda}=c_{x({\bs \sigma}), \sigma_{x({\bs \sigma})}}\prod'_{x\in \Lambda}c_{x\sigma_x}^{*} |\varnothing\ra_{\Lambda}\in \F_{\Lambda, |\Lambda|-1},
\ee
 where $\prod\rq{}_{x\in \Lambda}$ indicates the ordered product according to an arbitrarily fixed order in $\Lambda$.
We denote by $\h_{\Lambda}^{\rm NT}$ the subspace of $\F_{\Lambda, |\Lambda|-1}$ spanned by $\{|\bs \sigma\ra_{\Lambda} : {\bs \sigma}\in \mathcal{S}_{\Lambda}\}$.
We note that this Hilbert space can also be represented as 
\be
\h_{\Lambda}^{\rm NT}=P_{\Lambda}^{\rm G} \F_{\Lambda, |\Lambda|-1},
\ee
where 
$P_{\Lambda}^{\rm G}$ is the Gutzwiller projection:
\be
P_{\Lambda}^{\rm G}:=\prod_{x\in \Lambda}(1-n_{x\up}n_{x\down}).
\ee

Let $\M_{\Lambda}^{\rm NT}$ be the abelian von Neumann algebra on $\h^{\rm NT}_{\Lambda}$ generated by 
diagonal operators associated with $\{|{\bs \sigma}\ra_{\Lambda} : {\bs \sigma} \in \mathcal{S}_{\Lambda}\}$.
For the precise definition of the diagonal operators, see the footnote to the definition of $\M^{\rm MLM}_{\Lambda}$ in Subsection \ref{GeneMLM}.
Let
\be
\zeta_{\Lambda}=\sum_{{\bs \sigma} \in \mathcal{S}_{\Lambda}} |{\bs \sigma}\ra_{\Lambda}\in \h_{\Lambda}^{\rm NT}.
\ee
We define the faithful semi-finite normal weight on $\M^{\rm NT}_{\Lambda}$ by 
$\vphi_{\Lambda}^{\rm NT}:=\vphi_{\zeta_{\Lambda}}$.\footnote{Here, recall that, for a given vector $\eta$, $\vphi_{\eta}$ is defined by  $\vphi_{\eta}(a)=\la \eta|a\eta\ra$. }
In this way, we obtain the IEE system $\{\M_{\Lambda}^{\rm NT}, \vphi_{\Lambda}^{\rm NT}\}$. The following properties follow immediately from the definition: 
\begin{itemize}
\item $\h^{\rm NT}_{\Lambda}=L^2(\M_{\Lambda}^{\rm NT}, \vphi_{\Lambda}^{\rm NT})$.
\item $\psi\in L^2(\M_{\Lambda}^{\rm NT}, \vphi^{\rm NT}_{\Lambda})_+$, if   and only if,
$\psi_{\bs \sigma}\ge 0$ for all ${\bs \sigma}\in \mathcal{S}_{\Lambda}$, where $\psi_{\bs \sigma}= {}_{\Lambda}\la {\bs \sigma}|\psi\ra_{\Lambda}$.
\item For each $\psi=\sum_{{\bs \sigma}\in \mathcal{S}_{\Lambda}} \psi_{\bs \sigma} |{\bs \sigma}\ra_{\Lambda}\in \h_{\Lambda}^{\rm NT}$, the action of the modular conjugation $J_{\Lambda}$
 is given by $J_{\Lambda} \psi=\sum_{{\bs \sigma}\in \mathcal{S}_{\Lambda} }\psi_{\bs \sigma}^* |{\bs \sigma}\ra_{\Lambda}$.
\end{itemize}

By arguments similar to those in the proof of Lemma \ref{VolSys}, we obtain the following:
\begin{Lemm}\label{VolSys2}
 If $\Lambda, \Lambda'\in F_{\mathbb{L}}$ satisfies $\Lambda\subseteq \Lambda'$, then 
$F(\M^{\rm NT}_{\Lambda\rq{}}, \vphi^{\rm NT}_{\Lambda\rq{}})\longrightarrow F(\M^{\rm NT}_{\Lambda}, \vphi^{\rm NT}_{\Lambda})$. Hence, the net $O_{1/2}^{\rm NT}=\big\{\{\M_{\Lambda}^{\rm NT}, \vphi^{\rm NT}_{\Lambda}\} : \Lambda\in F_{\mathbb{L}}
\big\}$ is a magnetic system.
\end{Lemm}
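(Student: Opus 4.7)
The plan is to mimic the argument in Lemma \ref{VolSys} verbatim, with the only substantive difference being the presence of a single hole. Fix $\Lambda, \Lambda'\in F_{\mathbb{L}}$ with $\Lambda\subseteq \Lambda'$, and use the tensor-product identification $\F_{\Lambda'}\simeq \F_{\Lambda}\otimes\F_{\Lambda'\setminus\Lambda}$ recalled in the proof of Lemma \ref{VolSys}. Since any configuration ${\bs\sigma}\in\mathcal{S}_{\Lambda'}$ has its single hole either in $\Lambda$ or in $\Lambda'\setminus\Lambda$, I would start from the orthogonal decomposition
\be
\h^{\rm NT}_{\Lambda'}=\bigl(\h^{\rm NT}_{\Lambda}\otimes \h_{\Lambda'\setminus\Lambda}\bigr)\oplus\bigl(\h_{\Lambda}\otimes \h^{\rm NT}_{\Lambda'\setminus\Lambda}\bigr),
\ee
where $\h_{\Lambda'\setminus\Lambda}$ is the singly-occupied space from \eqref{HSingE}. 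I would then define the isometry $\kappa^{\rm NT}:\h^{\rm NT}_{\Lambda}\to\h^{\rm NT}_{\Lambda'}$ by $\kappa^{\rm NT}\eta=\eta\otimes\tilde\xi_{\Lambda'\setminus\Lambda}$, with $\tilde\xi_{\Lambda'\setminus\Lambda}=2^{-|\Lambda'\setminus\Lambda|/2}\xi_{\Lambda'\setminus\Lambda}$ the normalised MLM-type seed vector from the proof of Lemma \ref{VolSys}. The orthogonal projection $P^{\rm NT}_{\Lambda\Lambda'}$ onto the range of $\kappa^{\rm NT}$ then equals $1\otimes\Pi_{\Lambda'\setminus\Lambda}$ on the first summand and vanishes on the second, where $\Pi_{\Lambda'\setminus\Lambda}=|\tilde\xi_{\Lambda'\setminus\Lambda}\ra\la\tilde\xi_{\Lambda'\setminus\Lambda}|$; correspondingly, $\M^{\rm NT}_{\Lambda}$ is identified with $\M^{\rm NT}_{\Lambda}\otimes\Pi_{\Lambda'\setminus\Lambda}=P^{\rm NT}_{\Lambda\Lambda'}\M^{\rm NT}_{\Lambda'}P^{\rm NT}_{\Lambda\Lambda'}$.

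Next, I would verify the three clauses of Definition \ref{DefConForm}. For (i), splitting the sum defining $\zeta_{\Lambda'}$ according to the location of the hole gives $\zeta_{\Lambda'}=\zeta_{\Lambda}\otimes\xi_{\Lambda'\setminus\Lambda}+\xi_{\Lambda}\otimes\zeta_{\Lambda'\setminus\Lambda}$; since the two summands lie in orthogonal subspaces stabilised by every diagonal $x\in\M^{\rm NT}_{\Lambda'}$, a direct calculation along the lines of Lemma \ref{VolSys}, in which the factor $|\la\tilde\xi_{\Lambda'\setminus\Lambda}|\xi_{\Lambda'\setminus\Lambda}\ra|^{2}=2^{|\Lambda'\setminus\Lambda|}$ exactly compensates the $2^{-|\Lambda'\setminus\Lambda|}$ produced by $\Pi_{\Lambda'\setminus\Lambda}x\Pi_{\Lambda'\setminus\Lambda}$, yields $\vphi^{\rm NT}_{\Lambda'}(P^{\rm NT}_{\Lambda\Lambda'}xP^{\rm NT}_{\Lambda\Lambda'})=\vphi^{\rm NT}_{\Lambda'}(x)$. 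Clauses (ii) and (iii) are then read off from the coordinate description of the positive cones recalled just above: a vector $\psi=\sum_{\bs\sigma}\psi_{\bs\sigma}|{\bs\sigma}\ra_{\Lambda'}$ lies in $L^2(\M^{\rm NT}_{\Lambda'},\vphi^{\rm NT}_{\Lambda'})_+$ iff each $\psi_{\bs\sigma}\ge 0$, and because $P^{\rm NT}_{\Lambda\Lambda'}$ projects coordinate-wise onto configurations whose hole lies in $\Lambda$ while applying uniform non-negative weights on the $\Lambda'\setminus\Lambda$ factor, pointwise positivity is both preserved by $P^{\rm NT}_{\Lambda\Lambda'}$ and inherited by every coordinate-positive vector of $\h^{\rm NT}_{\Lambda}$ lifted via $\kappa^{\rm NT}$.

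The main technical nuisance, and the step I would treat most carefully, is the fermionic sign bookkeeping when rewriting $|{\bs\sigma}\ra_{\Lambda'}$ under the tensor identification $\F_{\Lambda'}\simeq\F_{\Lambda}\otimes\F_{\Lambda'\setminus\Lambda}$: the anticommutation relations introduce signs depending on the chosen ordering of creation operators and on where the hole sits relative to the split. These signs, however, enter pairwise into every quantity of interest, namely the diagonal matrix elements of $x\in\M^{\rm NT}_{\Lambda'}$ against $\zeta_{\Lambda'}$ and the coordinates parametrising the self-dual cones, and therefore cancel exactly as in the MLM argument of Lemma \ref{VolSys}. Once clauses (i)--(iii) of Definition \ref{DefConForm} are established, the relation $F(\M^{\rm NT}_{\Lambda'},\vphi^{\rm NT}_{\Lambda'})\LRA F(\M^{\rm NT}_{\Lambda},\vphi^{\rm NT}_{\Lambda})$ is immediate, and the claim that $O^{\rm NT}_{1/2}$ is a magnetic system follows from Definition \ref{DefOrderM}(i).
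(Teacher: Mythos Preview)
Your approach is essentially identical to the paper's: both embed $\h^{\rm NT}_{\Lambda}$ into $\h^{\rm NT}_{\Lambda'}$ via the isometry $\kappa^{\rm NT}\eta=\eta\otimes\tilde\xi_{\Lambda'\setminus\Lambda}$, take $P^{\rm NT}_{\Lambda\Lambda'}=1\otimes\Pi_{\Lambda'\setminus\Lambda}$ (extended by zero off the first summand), and then verify the three clauses of Definition~\ref{DefConForm}. The paper merely asserts these verifications are ``straightforward to show'' without further detail, whereas you spell out the orthogonal decomposition of $\h^{\rm NT}_{\Lambda'}$ by hole location and address the fermionic sign bookkeeping explicitly; the underlying argument is the same.
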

\begin{proof}
The proof is almost the same as that of Lemma \ref{VolSys}. However, since there are some points to be noted, we give an outline of the proof. 

Take $\Lambda, \Lambda'\in F_{\mathbb{L}}$ with $\Lambda\subset \Lambda'$.
Using the identification of fermionic Fock spaces in the proof of Lemma \ref{VolSys}, 
we can regard $\h^{\rm NT}_{\Lambda} \otimes \h_{\Lambda'\setminus \Lambda}$ as a subspace of $\h_{\Lambda'}^{\rm NT}$, where $\h_{\Lambda}$ is given by \eqref{HSingE}.\footnote{
Here is a brief explanation of why $\h_{\Lambda'\setminus \Lambda}$ appeared.
Recall that the Hilbert space $\h^{\rm NT}_{\Lambda'}$ (resp. $\h^{\rm NT}_{\Lambda}$ ) is the set of  vectors that represent the situation where there is only one hole on the lattice $\Lambda'$ (resp. $\Lambda$).
Since $\h_{\Lambda'\setminus \Lambda}$ is a set of  vectors representing the situation where a single electron occupies each site in $\Lambda'\setminus \Lambda$, we can see that the tensor product 
$\h^{\rm NT}_{\Lambda} \otimes \h_{\Lambda'\setminus \Lambda}$
 is a set of vectors representing the situation where there is precisely one hole in $\Lambda'$ as a whole.}
 Let $\tilde{\xi}_{\Lambda}\in \h_{\Lambda'\setminus \Lambda}$ be the vector given in the proof of Lemma \ref{VolSys}.  Define the isometric linear mapping $\kappa^{\rm NT} : \h_{\Lambda}^{\rm NT} \to  \h_{\Lambda'}^{\rm NT}$ by  $\kappa^{\rm NT} \eta=\eta\otimes \tilde{\xi}_{\Lambda'\setminus \Lambda}\, (\eta\in \h^{\rm NT}_{\Lambda})$.
 By identifying the image $\kappa^{\rm NT} \h_{\Lambda}^{\rm NT}$ with $\h_{\Lambda}^{\rm NT}$,
 we can regard $\h_{\Lambda}^{\rm NT}$ as a subspace of $\h_{\Lambda'}^{\rm NT}$.
 Let $P^{\rm NT}_{\Lambda\Lambda'}=1\otimes \Pi_{\Lambda'\setminus \Lambda}\, (\Pi_{\Lambda' \setminus \Lambda}=|\tilde{\xi}_{\Lambda'\setminus \Lambda}\ra\la \tilde{\xi}_{ \Lambda'\setminus \Lambda} |)$. Trivially, $P^{\rm NT}_{\Lambda\Lambda'}$ is the orthogonal projection from $\h_{\Lambda'}^{\rm NT}$ to $\h_{\Lambda}^{\rm NT}$. 
 Under these settings, it is straightforward to show the following:
 \begin{itemize}
\item $\M_{\Lambda}^{\rm NT}=\M_{\Lambda}^{\rm NT}\otimes \Pi_{\Lambda'\setminus \Lambda}=P^{\rm NT}_{\Lambda\Lambda'}\M_{\Lambda'}^{\rm NT}P^{\rm NT}_{\Lambda\Lambda'}$.
\item $\vphi_{\Lambda'}^{\rm NT}\circ \mathscr{E}^{\rm NT}_{\Lambda\Lambda'}=\vphi_{\Lambda'}^{\rm NT}$,
where $
\mathscr{E}^{\rm NT}_{\Lambda\Lambda'}(x)=P^{\rm NT}_{\Lambda\Lambda'}xP^{\rm NT}_{\Lambda\Lambda'}\ (x\in \M_{\Lambda^{\prime}}^{\rm NT})
$.
\item $P^{\rm NT}_{\Lambda\Lambda'} L^2(\M_{\Lambda'}^{\rm NT}, \vphi^{\rm NT}_{\Lambda'})_+=L^2(\M_{\Lambda}^{\rm NT}, \vphi^{\rm NT}_{\Lambda})_+$.
\end{itemize}
Therefore, we conclude the desired assertion in the lemma.
\end{proof}

The lemma allows the following definition.

\begin{Def}\label{DefNTcl}
\upshape
We call the magnetic system $O^{\rm NT}_{1/2}$ in Lemma \ref{VolSys2} the {\it  Nagaoka--Thouless (NT) system}. The stability class, $\mathscr{C}(O^{\rm NT}_{1/2})$,  
 is called the {\it NT stability class}.
\end{Def}

The following theorem is the prototype from which the various results in this section are derived.

\begin{Thm}\label{BasicNT}
We have the following:
\begin{itemize}
\item[\rm (i)]$\displaystyle S(\M^{\rm NT}_{\Lambda}, \vphi^{\rm NT}_{\Lambda})=(|\Lambda|-1)\big/2$ for all $\Lambda\in F_{\mathbb{L}}$. 
\item[\rm (ii)] The mapping $S_{\bullet} : H\in \mathscr{A}_{\Lambda, |\Lambda|-1}(\M_{\Lambda}^{\rm NT}, \vphi_{\Lambda}^{\rm NT})\mapsto S_{H}$ is constant and satisfies $S_{H}=(|\Lambda|-1)\big/2$ for all $\Lambda\in F_{\mathbb{L}}$. Here, recall the definition of $\mathscr{A}_{\Lambda, N}(\cdots)$, i.e., \eqref{DefALN}.
\item[\rm (iii)] Suppose that we are given a magnetic system $O_{ 1/2}$ in the NT stability class $\mathscr{C}(O^{\rm NT}_{1/2})$, i.e.,  $O_{ 1/2} \LRA O^{\rm NT}_{1/2}$. In addition,  suppose a net of Hamiltonians ${\bs H}$ is adapted to $O_{ 1/2}$. Then,  for all $\Lambda\in F_{\mathbb{L}}$, each  ground state of $H_{\Lambda}\in {\bs H}$ has the  total spin $S_{H_{\Lambda}}=(|\Lambda|-1)\big/2$. For any  increasing sequence of sets $\{\Lambda_n : n\in \BbbN\}\subset  F_{\mathbb{L}}$ such that $\bigcup_{n=1}^{\infty} \Lambda_n=\mathbb{L}$, then each NMGS associated with $\bs H$  exhibits a strict magnetic order with a spin density of $1/2$.
\item[\rm (iv)] Suppose we are in the same setting as in {\rm (iii)}.
Denote  $O_{1/2} = \big\{\{\M_{\Lambda}, \vphi_{\Lambda}\} : \Lambda\in F_{\mathbb{L}}\big\}$, and let $\ilim F(\M_{\Lambda_n}, \vphi_{\Lambda_n})$ and $
\ilim F(\M^{\rm NT}_{\Lambda_n}, \vphi^{\rm NT}_{\Lambda_n})
$ be the standard forms for macroscopic systems as defined in Theorem \ref{MacroFinite}. 
In this case, the following diagram is commutative for each $m, n\in \BbbN$ with $m<n$:
 \be
\begin{tikzcd}
F(\M_{\Lambda_m}, \vphi_{\Lambda_m}) \arrow[d]  &\arrow[l] F(\M_{\Lambda_{n}}, \vphi_{\Lambda_{n}}) \arrow[d]  &\arrow[l, dashed]    \ilim F(\M_{\Lambda_n}, \vphi_{\Lambda_n}) \arrow[d,dashed]\\
  F(\M^{\rm NT}_{\Lambda_m}, \vphi^{\rm NT}_{\Lambda_m})  &\arrow[l]F(\M^{\rm NT}_{\Lambda_{n}}, \vphi^{\rm NT}_{\Lambda_{n}})  &\arrow[l,dashed]  \ilim F(\M^{\rm NT}_{\Lambda_n}, \vphi^{\rm NT}_{\Lambda_n})
\end{tikzcd}
\ee
\end{itemize}
\end{Thm}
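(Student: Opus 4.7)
The plan is to mirror the proof of Theorem~\ref{BasicMLM}, with the Marshall--Lieb--Mattis Hamiltonian replaced by a reference \emph{Nagaoka--Thouless Hamiltonian}
\be
H_{\Lambda}^{\rm NT}\;=\;-P_{\Lambda}^{\rm G} \Bigg( \sum_{\{x,y\}\in E_{\Lambda}} \sum_{\sigma=\up,\down} c_{x\sigma}^* c_{y\sigma} \Bigg) P_{\Lambda}^{\rm G},
\ee
acting on $\h_{\Lambda}^{\rm NT}$; this is nothing but the $U=\infty$ limit of the single-hole Hubbard model. The first step is to verify $H_{\Lambda}^{\rm NT}\in A_{\Lambda,|\Lambda|-1}(\M_{\Lambda}^{\rm NT},\vphi_{\Lambda}^{\rm NT})$. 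Commutativity with the total spin operators is manifest, so the content lies in the ergodicity of $\{e^{-\beta H^{\rm NT}_{\Lambda,M}}\}_{\beta\ge 0}$ on each $M$-subspace with respect to $L^2(\M_{\Lambda}^{\rm NT}[M],\vphi_{\Lambda,M}^{\rm NT})_+$; this would be supplied by a combinatorial argument in Appendix~\ref{SectB} establishing that the hopping action positively connects all spin configurations sharing a given value of $M$.

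Granted ergodicity, Theorem~\ref{pff} implies that the ground state of $H^{\rm NT}_{\Lambda,M}$ in each admissible $M$-subspace is unique and strictly positive w.r.t.\ $L^2(\M_{\Lambda}^{\rm NT}[M],\vphi_{\Lambda,M}^{\rm NT})_+$. Running the ladder-operator argument from the proof of Proposition~\ref{GSinG}, I would assemble these sector-wise ground states into a single magnetic ground state $\tilde{\psi}_{\Lambda}\in G(\M_{\Lambda}^{\rm NT},\vphi_{\Lambda}^{\rm NT})$ sharing a common total spin $S$. To identify $S$, I would verify by direct calculation that the fully polarized vector $\psi^{\up}_{\Lambda}:=\sum_{{\bs \sigma}\in\mathcal{S}_{\Lambda},\ \sigma_x\in\{0,1\}}|{\bs \sigma}\ra_{\Lambda}$ is an eigenvector of $H^{\rm NT}_{\Lambda}$ in the maximal-$M$ sector and that its eigenvalue coincides with the Perron--Frobenius ground-state energy there; since $\psi^{\up}_{\Lambda}$ is strictly positive w.r.t.\ the self-dual cone and carries total spin $(|\Lambda|-1)/2$, it must be \emph{the} ground state in that sector, and Theorem~\ref{BasicThm} then forces $S(\M^{\rm NT}_{\Lambda},\vphi^{\rm NT}_{\Lambda})=(|\Lambda|-1)/2$, proving (i).

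Claim (ii) follows immediately from (i) and Corollary~\ref{StaGsFixH}. For (iii), the hypothesis $O_{1/2}\in\mathscr{C}(O^{\rm NT}_{1/2})$ together with Corollary~\ref{StaGsBigA} yields $S_{H_{\Lambda}}=(|\Lambda|-1)/2$ for every $\Lambda\in F_{\mathbb{L}}$; hence for any exhausting sequence $\{\Lambda_n\}$ we have $S_{H_{\Lambda_n}}=\tfrac{1}{2}|\Lambda_n|+o(|\Lambda_n|)$, which by Definition~\ref{DefMagO} is a strict magnetic order with spin density $1/2$. Finally, (iv) is a direct application of Theorem~\ref{MacroDiagram} to the relation $O_{1/2}\LRA O^{\rm NT}_{1/2}$.

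The hard part will be the ergodicity claim in step (i). Unlike the MLM Hamiltonian, whose long-range Heisenberg-type interaction trivially couples every pair of configurations, $H^{\rm NT}_{\Lambda}$ moves the hole only one edge at a time, and nontrivial permutations of the spin labels are produced only when the hole is transported around closed cycles of $G_{\Lambda}$. Showing that this nevertheless suffices to link any two configurations in the same $M$-sector through a positive chain of matrix elements is precisely Nagaoka's combinatorial lemma, and is the delicate point that will be pushed into Appendix~\ref{SectB}.
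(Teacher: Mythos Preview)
Your approach to (i) via a reference Hamiltonian is different from the paper's, and it carries a genuine gap. The paper does \emph{not} use any Hamiltonian to prove (i): it simply observes that the explicit vector $\zeta_{\Lambda}=\sum_{{\bs\sigma}\in\mathcal{S}_{\Lambda}}|{\bs\sigma}\ra_{\Lambda}$ is itself a magnetic vector. Decomposing $\zeta_{\Lambda}=\sum_{M}\zeta_{\Lambda}(M)$ with $\zeta_{\Lambda}(M)=\sum_{{\bs\sigma}\in\mathcal{S}_{\Lambda}(M)}|{\bs\sigma}\ra_{\Lambda}$, one checks directly that each $\zeta_{\Lambda}(M)$ is strictly positive in its cone, that the top component $\zeta_{\Lambda}(S^{\rm NT})$ has total spin $S^{\rm NT}=(|\Lambda|-1)/2$, and that $(S_{\Lambda}^{(-)})^{m}\zeta_{\Lambda}(S^{\rm NT})\propto\zeta_{\Lambda}(S^{\rm NT}-m)$. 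Hence $\zeta_{\Lambda}\in G(\M_{\Lambda}^{\rm NT},\vphi_{\Lambda}^{\rm NT})$ and Theorem~\ref{BasicThm} gives (i) for \emph{every} $\Lambda\in F_{\mathbb{L}}$.

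Your route, by contrast, needs ergodicity of $\{e^{-\beta H^{\rm NT}_{\Lambda,M}}\}_{\beta\ge 0}$ for the nearest-neighbor hopping Hamiltonian on $G_{\Lambda}$. But this is exactly the connectivity condition that the paper later isolates as Condition~\ref{ConnAss} and only assumes for $\Lambda\in F_{\mathbb{L}}^{\rm Conn}$, not for all $\Lambda\in F_{\mathbb{L}}$. On a tree, for instance, moving the single hole around cannot permute the background spins at all (there are no cycles), so two configurations in the same $M$-sector differing by a nontrivial spin permutation are \emph{not} linked by any chain of hopping matrix elements, and the semigroup fails to be ergodic. Thus your argument cannot reach the full statement of (i); at best it would yield the conclusion only on $F_{\mathbb{L}}^{\rm Conn}$, which is precisely what Theorem~\ref{NTHubbThm} later records for the actual Hubbard effective Hamiltonian. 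The fix is to abandon the Hamiltonian route for (i) and exhibit $\zeta_{\Lambda}$ directly as the required magnetic vector.

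Your deductions of (ii)--(iv) from the abstract machinery are in line with the paper's, though for (ii) the correct reference is Corollary~\ref{StaGsBigA} (the claim concerns the larger set $\mathscr{A}_{\Lambda,|\Lambda|-1}$, not a single $A_{\Lambda,|\Lambda|-1}$).
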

\begin{proof}
(i)
For each $M\in \mathrm{spec}(S^{(3)}_{ \Lambda} \restriction \h_{\Lambda}^{\rm NT})$, we set
$
\mathcal{S}_{\Lambda}(M)=\{{\bs \sigma}\in \mathcal{S}_{\Lambda} : \sum_{x\in \Lambda} \sigma_x=2M\}.
$
We readily confirm that 
\be
\mathcal{S}_{\Lambda}=\bigsqcup_{M\in \mathrm{spec}(S^{(3)}_{ \Lambda} \restriction \h_{\Lambda}^{\rm NT})}
\mathcal{S}_{\Lambda}(M).
\ee
Using this, we find that 
\be
\zeta_{\Lambda}=\sum_{M\in \mathrm{spec}(S^{(3)}_{ \Lambda} \restriction \h_{\Lambda}^{\rm NT})} \zeta_{\Lambda}(M),
\ee
where $\zeta_{\Lambda}(M)=\sum_{{\bs \sigma} \in \mathcal{S}_{\Lambda}(M)}|{\bs \sigma}\ra_{\Lambda}$.
Note that $\zeta_{\Lambda}(M) >0$ w.r.t. $L^2(\M_{\Lambda}^{\rm NT}[M], \vphi_{\Lambda, M}^{\rm NT})_+$. 
Set $S^{\rm NT}=(|\Lambda|-1)/2$.
By direct calculation, we can see that 
$\zeta_{\Lambda}(S^{\rm NT})$ has total spin $S^{\rm NT}$. In addition, it holds that 
$
(S^{(-)}_{\Lambda})^m \zeta_{\Lambda}(S^{\rm NT})\propto \zeta_{\Lambda}(S^{\rm NT}-m)
$ holds, where $S^{(-)}_{\Lambda}=S^{(1)}_{\Lambda}-\im  S_{\Lambda}^{(2)}$.
Because ${\bs S}_{\Lambda}^2$ commutes with $S^{(-)}_{\Lambda}$, each $\zeta_{\Lambda}(M)$ has total spin 
$S^{\rm NT}$. Hence, $\zeta_{\Lambda}$ has total spin $S^{\rm NT}$ as well. 
From the above discussion, we know that $\zeta_{\Lambda}$ is a magnetic vector associated with $\{\M^{\rm NT}_{\Lambda}, \vphi^{\rm NT}_{\Lambda}\}$ and $S(
\M^{\rm NT}_{\Lambda}, \vphi^{\rm NT}_{\Lambda}
)=S^{\rm NT}$ holds.

(ii),  (iii) and (iv) follow from Corollary \ref{StaGsBigA},  Theorem \ref{OtoO} and Theorem \ref{MacroDiagram}, respectively.
\end{proof}

\begin{Rem}\upshape
Suppose we are in the setting of Theorem \ref{BasicNT}.
In the theorem, the structure of the graph $G$ had little effect on the discussion. However, in practical applications,
a strong restriction on  $G$ is imposed when proving that a given  net of Hamiltonians ${\bs H}$ is adapted to $O_{1/2}$; see the next subsection for details.
\end{Rem}

\subsection{The Hubbard model}

Here we consider a system of $|\Lambda|-1$ electrons moving around on $\Lambda$ described by the Hubbard model.
We are interested in the case where the Coulomb interaction between the electrons is very large.
Let us first derive an effective Hamiltonian to describe such a system.
To do so, let us first recall that the Hubbard model is given by
\be
H_{\Lambda}^{\rm H}=\sum_{x, y\in \Lambda} \sum_{\sigma=\up, \down}t_{xy}c_{x\sigma}^*c_{y\sigma}
+\sum_{x, y\in \Lambda} \frac{U_{xy}}{2}(n_x-1)(n_y-1).
\ee
The operator $H_{\Lambda}^{\rm H}$ acts in 
$\F_{\Lambda, |\Lambda|-1}$.
$\{t_{xy}\}$ and $\{U_{xy}\}$ are $|\Lambda|\times |\Lambda|$ real symmetric matrices.
For simplicity, we assume that $U_{xx}$ is constant for all $x$:
$U_{xx}=U\, (x\in \Lambda)$.

\begin{Lemm}
Define the effective Hamiltonian $H_{\Lambda}^{\rm H, \infty}$ by $H_{\Lambda}^{\rm H, \infty}=P_{\Lambda}^{\rm G} H_{\Lambda, U=0}^{\rm H} P_{\Lambda}^{\rm G}$, where
$H_{\Lambda, U=0}^{\rm H}$ is the Hamiltonian $H_{\Lambda}^{\rm H}$ with $U=0$. 
For all $z\in \BbbC\setminus \BbbR$, we obtain
\be
\lim_{U\to \infty}(H_{\Lambda}^{\rm H}-z)^{-1}=(H_{\Lambda}^{\rm H, \infty}-z)^{-1}P_{\Lambda}^{\rm G}
\ee
 in the operator norm topology.
\end{Lemm}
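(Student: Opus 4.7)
The natural approach is the Feshbach--Schur method with respect to the decomposition $\F_{\Lambda,|\Lambda|-1}=P\F_{\Lambda,|\Lambda|-1}\oplus Q\F_{\Lambda,|\Lambda|-1}$, where $P=P_\Lambda^{\rm G}$ and $Q=1-P$. The key structural observation is that the interaction $V_U:=\sum_{x,y\in\Lambda}\frac{U_{xy}}{2}(n_x-1)(n_y-1)$ is diagonal in the occupation-number basis and therefore commutes with $P$, so the only off-block-diagonal contribution to $H_\Lambda^{\rm H}$ comes from the hopping $T=\sum_{x,y,\sigma}t_{xy}c_{x\sigma}^*c_{y\sigma}$, whose norm is bounded uniformly in $U$. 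Moreover, $PV_U P$ reduces to an operator independent of $U$ apart from an additive scalar, since on the Gutzwiller subspace $(n_x-1)(n_y-1)$ acts as the characteristic function of the (unique) hole position.

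The first main step is a lower bound on $QV_U Q$ of the form $QV_U Q\ge c\,U\cdot Q+O(1)$ with some explicit $c>0$: every configuration in $Q\F_{\Lambda,|\Lambda|-1}$ contains at least one doubly occupied site together with compensating empty sites, so the diagonal on-site contributions to $V_U$ exceed by a fixed positive amount the value that $PV_U P$ takes on $P\F_{\Lambda,|\Lambda|-1}$. Combined with the uniform bound on $T$ and on the off-diagonal couplings $U_{xy}$ ($x\neq y$), this yields, after absorbing the common constant shift from $PV_U P$ into $z$, the invertibility of $QH_\Lambda^{\rm H}Q-z$ on $Q\F_{\Lambda,|\Lambda|-1}$ for all sufficiently large $U$ and the norm bound
\begin{align*}
\bigl\|(QH_\Lambda^{\rm H}Q-z)^{-1}\bigr\|=O(U^{-1})\quad(U\to\infty).
\end{align*}

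The second step is to apply the Schur complement identity. Writing $H_\Lambda^{\rm H}-z$ in block form with respect to $P$ and $Q$, the $PP$-block of $(H_\Lambda^{\rm H}-z)^{-1}$ is
\begin{align*}
\bigl(PH_\Lambda^{\rm H}P-z-PTQ(QH_\Lambda^{\rm H}Q-z)^{-1}QTP\bigr)^{-1}P,
\end{align*}
while the $PQ$, $QP$, $QQ$ blocks each contain a factor of $(QH_\Lambda^{\rm H}Q-z)^{-1}$ and hence vanish in operator norm as $U\to\infty$ by the previous bound. The self-energy correction in the displayed formula is $O(U^{-1})$ as well, and $PH_\Lambda^{\rm H}P$ reduces to $PTP=H_\Lambda^{\rm H,\infty}$ on $P\F_{\Lambda,|\Lambda|-1}$ once the constant from $PV_U P$ is absorbed, so the $PP$-block converges in norm to $(H_\Lambda^{\rm H,\infty}-z)^{-1}P$. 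Assembling the four blocks gives the claimed limit.

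The main obstacle lies in the first step, namely in establishing the $U$-lower bound on $QV_UQ$: the off-diagonal couplings $U_{xy}$ with $x\neq y$ are only assumed real and symmetric and may contribute to $QV_UQ$ with either sign, so one must check by an elementary combinatorial argument on the $(|\Lambda|-1)$-electron configurations in $Q\F_{\Lambda,|\Lambda|-1}$ that the on-site terms $U_{xx}=U$ dominate uniformly. Once this gap estimate is in hand, the rest of the argument is routine finite-dimensional resolvent analysis and immediately yields operator-norm convergence.
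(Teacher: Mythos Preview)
The paper gives no self-contained proof of this lemma; it simply refers to \cite[Theorem~2.1]{Miyao2017}. Your Feshbach--Schur argument is the standard route to such $U\to\infty$ resolvent limits and is almost certainly what the cited reference does, so on the level of method there is nothing to compare.

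There is, however, a genuine gap in your handling of the constant term. With the interaction written as $\tfrac{1}{2}\sum_{x,y}U_{xy}(n_x-1)(n_y-1)$ and $U_{xx}=U$, on the $(|\Lambda|-1)$-electron Gutzwiller subspace one has exactly one empty site and no double occupancy, so $(n_x-1)(n_y-1)$ is nonzero only for $x=y$ equal to the hole, giving $PV_UP=\tfrac{U}{2}P$. This ``constant'' is not independent of $U$; it diverges. You cannot absorb it into the fixed spectral parameter $z$: doing so sends $z$ to $-\infty$ and forces the $PP$-block $(PTP+\tfrac{U}{2}-z)^{-1}$ to zero rather than to $(H_\Lambda^{\mathrm H,\infty}-z)^{-1}$. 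Indeed, every eigenvalue of $H_\Lambda^{\mathrm H}$ on $\F_{\Lambda,|\Lambda|-1}$ is bounded below by $\tfrac{U}{2}-O(1)$, so the full resolvent $(H_\Lambda^{\mathrm H}-z)^{-1}$ tends to $0$ in norm, not to the claimed limit.

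The resolution is that the lemma, as used here, is really about the Hamiltonian with the trivial additive constant removed---equivalently, with the on-site part written as $U\sum_x n_{x\uparrow}n_{x\downarrow}$, which vanishes identically on $P\F_{\Lambda,|\Lambda|-1}$. This is the form used in the cited reference. Once you make that normalization explicit (replace $H_\Lambda^{\mathrm H}$ by $H_\Lambda^{\mathrm H}-\tfrac{U}{2}$, or equivalently record that the statement is to be read modulo this scalar), your Feshbach--Schur argument goes through exactly as written: $Q(H_\Lambda^{\mathrm H}-\tfrac{U}{2})Q\ge U\cdot Q-O(1)$ from the at-least-one double occupancy on $Q\F_{\Lambda,|\Lambda|-1}$, and the rest is routine. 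But you should flag the normalization rather than sweep it under ``absorbing into $z$''.
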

\begin{proof}
See \cite[Theorem 2.1]{Miyao2017}.
\end{proof}

Thus, we were able to derive the desired effective Hamiltonian. In what follows, we regard $H_{\Lambda}^{\rm H, \infty}$ as an operator acting on $\h^{\rm NT}_{\Lambda}$.
In order to describe our results, we need to prepare some more.

Given a $y\in \Lambda$, we define the map $S_y : \mathcal{S}_{\Lambda}(M) \to \mathcal{S}_{\Lambda}(M)$ by $S_y({\bs \sigma})={\bs \sigma}\rq{}$, where $\bs \sigma\rq{}=\{\sigma_z\rq{}\}_{z\in \Lambda}$ is given by 
\begin{align}
\sigma_z\rq{}=\begin{cases}
\sigma_y & \mbox{if $z=x({\bs \sigma})$}\\
0 & \mbox{if $z=y$}\\
\sigma_z & \mbox{otherwise}.
\end{cases}
\end{align}
In other words, $S_y$ is a map that transfers the hole in $\bs\sigma$ to the vertex $y$ and transfers the spin $\sigma_y$ at $y$ to the position of the original hole.

Take $\Lambda\in F_{\mathbb{L}}$ and $M\in \mathrm{spec}(S^{(3)}_{ \Lambda} \restriction \h_{\Lambda}^{\rm NT})$ arbitrarily. 
Let us construct a natural graph from $\mathcal{S}_{\Lambda}(M)$. We say that $\{{\bs \sigma}, {\bs \sigma}\rq{}\}\, ({\bs \sigma}, {\bs \sigma}\rq{}\in \mathcal{S}_{\Lambda}(M))$ is an edge if it satisfies $S_{x({\bs \sigma}\rq{})}({\bs \sigma})={\bs \sigma}\rq{}$, i.e.,
if we swap the hole in the spin configuration $\bs \sigma$ with the spin, $\sigma_{x({\bs\sigma}\rq{})}$,  at the position of the hole in $\bs\sigma\rq{}$, the resulting spin configuration, $S_{x({\bs \sigma}\rq{})}({\bs \sigma})$,  will match the spin configuration $\bs \sigma\rq{}$.
If we denote the set of all edges by $\mathcal{E}_{\Lambda}(M)$, then we thus get the graph $
\mathcal{G}_{\Lambda}(M)=(\mathcal{S}_{\Lambda}(M), \mathcal{E}_{\Lambda}(M))
$.
 In what follows, we set 
 \be
 F_{\mathbb{L}}^{\rm Conn}=\{\Lambda\in F_{\mathbb{L}} : \mbox{$\mathcal{G}_{\Lambda}(M)$ is connected for all $M\in \mathrm{spec}(S^{(3)}_{ \Lambda} \restriction \h_{\Lambda}^{\rm NT})$}\}.
 \ee
 We always assume that $F_{\mathbb{L}}^{\rm Conn}$ is nonempty.
Let $G^t$ and $G^t_{\Lambda}=(\Lambda, E_{\Lambda}^t)\, (\Lambda\in F_{\mathbb{L}})$ be the graphs generated  by the hopping matrix
 $\{t_{xy}\}$; see  the definition immediately preceding Condition \ref{CoupAss}. 
We impose the following restrictions:

\begin{Assum}\label{ConnAss}\upshape
\indent
\begin{itemize}
\item[(i)] $t_{xy}> 0$ for all $\{x, y\}\in E^t_{\Lambda}$.
\item[(ii)] $G^t$ contains a normal spanning tree, $T(G)$,  contained in $G$.  
For more details about  normal spanning trees,
see Subsection \ref{SubsecHei} for detail.
\item[(iii)] For all $\Lambda\in F_{\mathbb{L}}^{\rm Conn}$ and $M\in \mathrm{spec}(S^{(3)}_{ \Lambda} \restriction \h_{\Lambda}^{\rm NT})$, if $\{{\bs \sigma}, {\bs \sigma}\rq{}\}\in \mathcal{E}_{\Lambda}(M)$, then $\{x({\bs \sigma}), x({\bs \sigma}\rq{})\} \in E_{\Lambda}^t$ holds.
\end{itemize}
\end{Assum}

The following are examples that satisfy Condition \ref{ConnAss}.
\begin{Exa}\upshape
The models with the following (i) and (ii) satisfy Condition \ref{ConnAss}
:
\begin{itemize}
\item[(i)]  $G$ is a triangular, square cubic, fcc or bcc lattice;
\item[(ii)] $t_{xy}$ is nonvanishing between nearest neighbor sites.
\end{itemize}
See  \cite[Lemma 11.9]{Tasaki2020} for details of the proof.\footnote{The property (iii) of Condition \ref{ConnAss} is  called  the {\it connectivity condition}  in   \cite{Tasaki2020}. Note that  Bobrow, Stubis,  and Li derived a necessary and sufficiently condition for the connecitivity condition  in \cite{Bobrow2018}.
}
Moreover, in each of these examples, we can easily construct an   increasing sequence
 $\{\Lambda_n : n\in \BbbN\}\subset  F^{\rm Conn}_{\mathbb{L}}$ satisfying $\bigcup_{n=1}^{\infty}\Lambda_n=\mathbb{L}$.
This fact is needed in Theorems \ref{NTHubbThm}.
\end{Exa}

Using the theory developed in Sections \ref{Sect3} and \ref{Sect4}, we can generalize the NT theorem as follows.
\begin{Thm}\label{NTHubbThm}
Assume Condition \ref{ConnAss}. 
We have the following:
\begin{itemize}
\item[\rm (i)] For each $\Lambda\in F^{\rm Conn}_{\mathbb{L}}$, $H_{\Lambda}^{\rm H, \infty}\in A_{\Lambda, |\Lambda|-1}(\M_{\Lambda}^{\rm NT}, \vphi_{\Lambda}^{\rm NT})$ holds. Hence,  $S_{H_{\Lambda}^{\rm H, \infty}}=(|\Lambda|-1)/2$.

 \item[\rm (ii)] Set ${\bs H}^{\rm H, \infty}=\{H_{\Lambda} ^{\rm H, \infty} : \Lambda \in F^{\rm Conn}_{\mathbb{L}}\}$. Then ${\bs H}^{\rm H, \infty}$ is adapted to $O_{1/2}^{\rm NT}$.  Hence,
for any   increasing sequence of sets $\{\Lambda_n : n\in \BbbN\}\subset  F^{\rm Conn}_{\mathbb{L}}$ satisfying $\bigcup_{n=1}^{\infty}\Lambda_n=\mathbb{L}$,  each NMGS associated with ${\bs H}^{\rm H, \infty}$  exhibits a strict magnetic order with a spin density of  $1/2$.
\end{itemize}
\end{Thm}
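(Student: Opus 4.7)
The plan is to verify that $H_\Lambda^{\rm H,\infty}$ belongs to $A_{\Lambda,|\Lambda|-1}(\M_\Lambda^{\rm NT},\vphi_\Lambda^{\rm NT})$ for each $\Lambda \in F_\mathbb{L}^{\rm Conn}$, and then let the machinery of Theorem \ref{BasicNT} do the rest. Once this is established, part (i) follows from Corollary \ref{StaGsFixH} combined with Theorem \ref{BasicNT}(i), while part (ii) follows from Theorem \ref{BasicNT}(iii) once we observe that $S_{H_{\Lambda_n}^{\rm H,\infty}} = (|\Lambda_n|-1)/2 = (1/2)|\Lambda_n| + o(|\Lambda_n|)$ along any exhausting sequence in $F_\mathbb{L}^{\rm Conn}$. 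Of the three conditions in Definition \ref{DefAHamiC}, condition (i) is immediate since both the free-hopping operator and the Gutzwiller projection $P_\Lambda^{\rm G}$ are $SU(2)$-invariant, so $H_\Lambda^{\rm H,\infty}$ commutes with each $S_\Lambda^{(j)}$; condition (ii) is automatic because $\h_\Lambda^{\rm NT}$ is finite-dimensional.

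The substance lies in condition (iii): for every $M \in \mathrm{spec}(S_\Lambda^{(3)}\restriction \h_\Lambda^{\rm NT})$ the semigroup $\{e^{-\beta H_{\Lambda,M}^{\rm H,\infty}}\}_{\beta \ge 0}$ must be ergodic with respect to $L^2(\M_\Lambda^{\rm NT}[M],\vphi_{\Lambda,M}^{\rm NT})_+$. Because $\M_\Lambda^{\rm NT}$ is the abelian algebra of operators diagonal in the CONS $\{|{\bs\sigma}\rangle_\Lambda\}_{{\bs\sigma}\in\mathcal{S}_\Lambda}$, the positive cone consists exactly of vectors with non-negative coefficients in this basis, and the Perron--Frobenius criterion of Theorem \ref{pff} (or, equivalently, the general framework of Appendix \ref{SectA}) reduces ergodicity to two statements about the matrix of $H_{\Lambda,M}^{\rm H,\infty}$ in the CONS: (a) all off-diagonal entries are non-positive real numbers; and (b) the graph on $\mathcal{S}_\Lambda(M)$ formed by the non-vanishing off-diagonal entries is connected. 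I would then compute $\langle {\bs\sigma}'|H_\Lambda^{\rm H,\infty}|{\bs\sigma}\rangle_\Lambda$: a term $t_{xy}\, c_{x\sigma}^*c_{y\sigma}$ has a non-zero matrix element between $|{\bs\sigma}\rangle_\Lambda$ and $|{\bs\sigma}'\rangle_\Lambda$ precisely when ${\bs\sigma}' = S_y({\bs\sigma})$ with $x = x({\bs\sigma})$, $y = x({\bs\sigma}')$ and $\sigma_y = \sigma$, and its value is $\pm t_{xy}$ with a sign determined by the fermionic reordering. Point (b) is then exactly Condition \ref{ConnAss}(iii) together with the definition of $F_\mathbb{L}^{\rm Conn}$.

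The hard part, and the heart of the Nagaoka--Thouless phenomenon, is (a): controlling the Fermi sign uniformly. I would exploit Condition \ref{ConnAss}(ii) by fixing a normal spanning tree $T(G) \subseteq G^t$ rooted at some vertex, and then defining, for each ${\bs\sigma} \in \mathcal{S}_\Lambda$, the ordered product $\prod'_{x\in\Lambda}$ in the definition of $|{\bs\sigma}\rangle_\Lambda$ by traversing $\Lambda$ along $T(G)$ starting from a path to $x({\bs\sigma})$. A careful book-keeping (essentially Nagaoka's sign trick) then shows that every off-diagonal matrix element becomes exactly $-t_{xy}$, which is non-positive since $t_{xy} > 0$ by Condition \ref{ConnAss}(i); equivalently, one absorbs the Fermi signs into a unitary gauge transformation on $\h_\Lambda^{\rm NT}$, and the global consistency of this gauge around closed paths in $\mathcal{G}_\Lambda(M)$ is what Condition \ref{ConnAss}(iii) is designed to ensure. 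With (a) and (b) established, one concludes $e^{-\beta H_{\Lambda,M}^{\rm H,\infty}} \rhd 0$ w.r.t.\ the cone for every $\beta > 0$, so the semigroup is ergodic, condition (iii) holds, and the theorem follows by invoking Theorem \ref{BasicNT}.
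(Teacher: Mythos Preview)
Your overall architecture matches the paper's: establish ergodicity of $\{e^{-\beta H^{\rm H,\infty}_{\Lambda,M}}\}_{\beta\ge 0}$ with respect to $L^2(\M_\Lambda^{\rm NT}[M],\vphi_{\Lambda,M}^{\rm NT})_+$ (the paper packages this as Theorem~\ref{ManyHa}(vii), proved via Theorem~\ref{AbstPI1}), conclude $H_\Lambda^{\rm H,\infty}\in A_{\Lambda,|\Lambda|-1}(\M_\Lambda^{\rm NT},\vphi_\Lambda^{\rm NT})$, and then invoke Theorem~\ref{BasicNT}. Where you go wrong is the sign control in your step (a). The paper needs no spanning tree and no $\bs\sigma$-dependent ordering: the sign is already built into the definition of $|\bs\sigma\rangle_\Lambda$ in Subsection~\ref{GeneNT}. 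One forms the filled state $\prod'_{x\in\Lambda} c_{x\sigma_x}^*|\varnothing\rangle_\Lambda$ in a \emph{single, $\bs\sigma$-independent} site order and then annihilates the electron at $x(\bs\sigma)$. With this ``remove one electron from a filled state'' construction (Tasaki's trick), a direct computation gives ${}_\Lambda\langle\bs\sigma'| c_{x\sigma}^*c_{y\sigma} |\bs\sigma\rangle_\Lambda=-1$ for every nontrivial hole swap, uniformly and with no loop obstruction. Condition~\ref{ConnAss}(i) ($t_{xy}>0$) then makes every off-diagonal entry equal to $-t_{xy}<0$; this is exactly the statement $-P^{\rm G}_\Lambda c_{x\sigma}^*c_{y\sigma}P^{\rm G}_\Lambda\unrhd 0$ used in the Appendix~\ref{SectB} outline.

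Your proposal to redefine the ordering in $|\bs\sigma\rangle_\Lambda$ by traversing $T(G)$ toward the hole would alter the phases of the basis vectors and hence the self-dual cone $L^2(\M_\Lambda^{\rm NT},\vphi_\Lambda^{\rm NT})_+$ itself, which is not compatible with the framework already fixed in Lemma~\ref{VolSys2} and Theorem~\ref{BasicNT}; you would have to redo those results for your new cone. Your reading of Condition~\ref{ConnAss}(iii) as ``gauge consistency around closed paths'' is also off: that condition, together with $\Lambda\in F_{\mathbb L}^{\rm Conn}$, is precisely the \emph{connectivity} input for your step (b) --- it guarantees that every edge of $\mathcal G_\Lambda(M)$ carries a nonzero $t_{xy}$ so that the hopping graph on spin configurations is connected --- and has nothing to do with cancelling fermionic signs. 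Once you replace your spanning-tree gauge argument with the one-line observation that Tasaki's basis makes the matrix element uniformly $-1$, the remainder of your proof is correct and coincides with the paper's.
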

\begin{proof}
By applying (vii) of Theorem \ref{ManyHa}, we know that $H_{\Lambda}^{\rm H, \infty} \in A_{\Lambda, |\Lambda|-1}(\M_{\Lambda}^{\rm NT}, \vphi_{\Lambda}^{\rm NT})$ for all $\Lambda\in F_{\mathbb{L}}^{\rm Conn}$, which implies that ${\bs H}^{\rm H, \infty}$ is adapted to $O_{1/2}^{\rm NT}$. 
By applying Theorem \ref{BasicNT}, we obtain the desired assertions in Theorem \ref{NTHubbThm}.
\end{proof}

\subsection{The Holstein--Hubbard model}
In this subsection, we will show that the NT theorem still holds when the electron-phonon interaction is taken into account.
First, let us recall that the Holstein--Hubbard Hamiltonian describing the interacting electron-phonon  system is given as follows:
\be
H^{\rm HH}_{\Lambda} = H^{\rm H}_{\Lambda} +
\sum_{x, y\in \Lambda}
g_{xy} (n_x -1)(b_y^*
 + b_y ) +\sum_{x\in \Lambda}
\omega b_x^*b_x,
\ee
where $H_{\Lambda}^{\rm H}$ is the Hubbard Hamiltonian with $U_{xx}=U\, (x\in \Lambda)$.
At this stage, we consider $H_{\Lambda}^{\rm H}$ to be  an operator acting on $\F_{\Lambda, |\Lambda|-1}\otimes \F^{\rm ph}_{\Lambda}$. 
See Section \ref{Sect5} for the detailed definition of the Hamiltonian.

As in the previous subsection, we can derive the effective Hamiltonian for the case where the Coulomb interaction between electrons is very strong.

\begin{Lemm}
Define the effective Hamiltonian $H_{\Lambda}^{\rm HH, \infty}$ by $H_{\Lambda}^{\rm HH, \infty}=P_{\Lambda}^{\rm G} H_{\Lambda, U=0}^{\rm HH} P_{\Lambda}^{\rm G}$, where
$H_{\Lambda, U=0}^{\rm HH}$ is the Hamiltonian $H_{\Lambda}^{\rm HH}$ with $U=0$. 
For all $z\in \BbbC\setminus \BbbR$, we obtain
\be
\lim_{U\to \infty}(H_{\Lambda}^{\rm HH}-z)^{-1}=(H_{\Lambda}^{\rm HH, \infty}-z)^{-1}P_{\Lambda}^{\rm G}
\ee
 in the operator norm topology.
\end{Lemm}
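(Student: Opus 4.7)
\medskip

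The proof strategy is to extend the Feshbach--Schur block decomposition that proves the analogous Hubbard-model lemma (the reference \cite{Miyao2017}) to the electron-phonon case. The crucial structural observation is that the Gutzwiller projection $P_\Lambda^{\rm G}$ is a polynomial in the number operators $\{n_{x\sigma}\}$ alone, and therefore commutes with every term of $H_\Lambda^{\rm HH}$ {\it except} the hopping $T_\Lambda := \sum_{x,y,\sigma}t_{xy}c_{x\sigma}^*c_{y\sigma}$. In particular, the phonon kinetic energy $\omega \sum_x b_x^*b_x$, the electron-phonon coupling $\sum_{xy}g_{xy}(n_x-1)(b_y^*+b_y)$, the off-diagonal Coulomb terms, and the diagonal Coulomb term all commute with $P_\Lambda^{\rm G}$. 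Since $T_\Lambda$ acts as identity on the bosonic factor and as a bounded operator on the finite-dimensional electronic factor $\F_{\Lambda,|\Lambda|-1}$, the off-diagonal block $P_\Lambda^{\rm G}\, H_\Lambda^{\rm HH}\, (1-P_\Lambda^{\rm G})=P_\Lambda^{\rm G}\,T_\Lambda\,(1-P_\Lambda^{\rm G})$ is bounded in operator norm, uniformly in $U$.

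Concretely, I would decompose $\h_\Lambda^{\rm HH}:=\F_{\Lambda,|\Lambda|-1}\otimes \F_\Lambda^{\rm ph}=P_\Lambda^{\rm G}\h_\Lambda^{\rm HH}\oplus (1-P_\Lambda^{\rm G})\h_\Lambda^{\rm HH}$ and write $H_\Lambda^{\rm HH}-z$ as a $2\times 2$ block operator. Let $D_\Lambda:=\sum_{x\in\Lambda}n_{x\uparrow}n_{x\downarrow}$ denote the double-occupancy operator. Since, on the $N=|\Lambda|-1$ sector, the on-site Coulomb term contributes $U\,D_\Lambda$ (up to a $U$-linear scalar which is absorbed consistently with the Hubbard case of the preceding lemma), and since $D_\Lambda\!\restriction\! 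P_\Lambda^{\rm G}\h_\Lambda^{\rm HH}=0$ while $D_\Lambda\!\restriction\! (1-P_\Lambda^{\rm G})\h_\Lambda^{\rm HH}\ge 1$, the lower-right block of $H_\Lambda^{\rm HH}-z$ is of the form $K_\Lambda+U\,D_\Lambda-z$ with $K_\Lambda$ bounded below uniformly in $U$. Hence its inverse, restricted to $(1-P_\Lambda^{\rm G})\h_\Lambda^{\rm HH}$, exists for $U$ sufficiently large and satisfies an operator-norm bound of order $O(1/U)$ by the spectral theorem (or standard numerical-range arguments), uniformly on compact subsets of $\BbbC\setminus \BbbR$.

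The Schur complement formula then gives the $(P_\Lambda^{\rm G},P_\Lambda^{\rm G})$-block of $(H_\Lambda^{\rm HH}-z)^{-1}$ as
\[
\bigl[H_\Lambda^{\rm HH,\infty}-z - P_\Lambda^{\rm G}T_\Lambda(1-P_\Lambda^{\rm G})\,(K_\Lambda+UD_\Lambda-z)^{-1}\,(1-P_\Lambda^{\rm G})T_\Lambda P_\Lambda^{\rm G}\bigr]^{-1},
\]
(after the same bookkeeping shift used in the Hubbard case), and the boundedness of $P_\Lambda^{\rm G}T_\Lambda(1-P_\Lambda^{\rm G})$ together with the $O(1/U)$ decay of $(K_\Lambda+UD_\Lambda-z)^{-1}$ gives convergence of this block to $(H_\Lambda^{\rm HH,\infty}-z)^{-1}$ in operator norm. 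The three remaining blocks contain at least one factor of $(K_\Lambda+UD_\Lambda-z)^{-1}$ and likewise vanish in norm. Assembling the four blocks yields $(H_\Lambda^{\rm HH}-z)^{-1}\to (H_\Lambda^{\rm HH,\infty}-z)^{-1}P_\Lambda^{\rm G}$, as claimed.

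I expect the main technical obstacle to be the uniform (in $U$) operator-norm control of $(K_\Lambda+UD_\Lambda-z)^{-1}$ on $(1-P_\Lambda^{\rm G})\h_\Lambda^{\rm HH}$, because $K_\Lambda$ contains the unbounded phonon number operator and the unbounded electron-phonon coupling. The standard way around this is to apply the Kato--Rellich theorem to bound the electron-phonon interaction relatively with respect to $N_\Lambda^{\rm ph}$ with arbitrarily small relative bound, so that $K_\Lambda$ is bounded below by a constant independent of $U$; adding $UD_\Lambda\ge U$ then shifts the spectrum of the $(1-P_\Lambda^{\rm G})$-block to $+\infty$ at rate $U$, yielding the required $O(1/U)$ norm estimate on the resolvent. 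Once this single estimate is in hand, the rest of the argument parallels the Hubbard proof from \cite{Miyao2017} essentially verbatim.
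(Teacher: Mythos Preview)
The paper does not supply its own proof of this lemma; it is stated without argument, in parallel with the preceding Hubbard lemma whose proof is simply ``See \cite[Theorem 2.1]{Miyao2017}.'' Your Feshbach--Schur block decomposition relative to $P_\Lambda^{\rm G}$, with the key observation that the off-diagonal block reduces to the bounded hopping term and the $(1-P_\Lambda^{\rm G})$-block resolvent decays like $O(1/U)$ once Kato--Rellich furnishes a $U$-independent lower bound for the phonon part, is exactly the standard route and is what the cited reference carries out; your outline is correct and matches the intended argument.
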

Note that the effective Hamiltonian $H_{\Lambda}^{\rm HH, \infty}$ acts on $\h_{\Lambda}^{\rm NT}\otimes \F^{\rm ph}_{\Lambda}$.

Recall that the bosonic Fock space $\F^{\rm  ph}_{\Lambda}$ can be identified with
$L^{2}(\BbbR^{|\Lambda|})$.   
With this mind, we define 
\be
\N^{\rm HH}_{\Lambda}=\M^{\rm NT}_{\Lambda} \otimes L^{\infty}(\BbbR^{|\Lambda|}),
\ee
where $\M^{\rm NT}_{\Lambda}$ is given in Subsection \ref{GeneNT}.
Let
\be
\zeta_{\Lambda}^{\rm HH}=\zeta_{\Lambda}\otimes  \Omega_{\Lambda}^{\rm ph}\in \h_{\Lambda}^{\rm NT} \otimes \F^{\rm ph}_{\Lambda}.
\ee
We define the faithful semi-finite normal weight on $\N^{\rm HH}_{\Lambda}$ by 
$\mu_{\Lambda}^{\rm HH}:=\vphi_{\zeta_{\Lambda}^{\rm HH}}$.\footnote{Here, recall that, for a given vector $\eta$, $\vphi_{\eta}$ is defined by  $\vphi_{\eta}(a)=\la \eta|a\eta\ra$. }
In this way, we obtain the IEE system $\{\N_{\Lambda}^{\rm HH}, \mu_{\Lambda}^{\rm HH}\}$. The following properties follow immediately from the definition: 

\begin{itemize}
\item $\displaystyle \h^{\rm NT}_{\Lambda}\otimes \F^{\rm ph}_{\Lambda}=L^2(\N_{\Lambda}^{\rm HH}, \mu_{\Lambda}^{\rm HH})=\int^{\oplus}_{\BbbR^{|\Lambda|}}
L^2(\M^{\rm NT}_{\Lambda}, \vphi^{\rm NT}_{\Lambda}) d{\bs q}
$.
\item $\psi\in L^2(\N_{\Lambda}^{\rm HH}, \mu^{\rm HH}_{\Lambda})_+$, if and only if,
$\psi_{\bs \sigma}({\bs q})\ge 0$ for all ${\bs \sigma}\in \mathcal{S}_{\Lambda}$ and a.e. ${\bs q}\in \BbbR^{|\Lambda|}$, where $\psi_{\bs \sigma}({\bs q})= {}_{\Lambda}\la {\bs \sigma}|\psi({\bs q})\ra_{\Lambda}$.
Note that we use the identification in the first property.
\item For each $\psi=\sum_{{\bs \sigma}\in \mathcal{S}_{\Lambda}}  |{\bs \sigma}\ra_{\Lambda}\otimes \psi_{\bs \sigma}
\in \h_{\Lambda}^{\rm NT}\otimes \F^{\rm ph}_{\Lambda}$, the action of the modular conjugation $J_{\Lambda}$
 is given by $J_{\Lambda} \psi=\sum_{{\bs \sigma}\in \mathcal{S}_{\Lambda} } |{\bs \sigma}\ra_{\Lambda}\otimes \psi_{\bs \sigma}^*$, where $(\psi_{\bs \sigma}^*)(\bs q):=\psi_{\bs \sigma}({\bs q})^*$.
\end{itemize}

\begin{Lemm}\label{VolSys3}
We have the following:
\begin{itemize}
\item[\rm (i)] If $\Lambda, \Lambda'\in F^{\rm Conn}_{\mathbb{L}}$ satisfies $\Lambda\subseteq \Lambda'$, then 
$F(\N^{\rm HH}_{\Lambda\rq{}}, \mu^{\rm HH}_{\Lambda\rq{}})\longrightarrow F(\N^{\rm HH}_{\Lambda}, \mu^{\rm HH}_{\Lambda})$. Hence, the net $O_{1/2}^{\rm NT,HH}=\big\{\{\N_{\Lambda}^{\rm HH}, \mu^{\rm HH}_{\Lambda}\} : \Lambda\in F_{\mathbb{L}}^{\rm Conn}
\big\}$ is a magnetic system.
\item[\rm (ii)] For each $\Lambda 
\in F^{\rm Conn}_{\mathbb{L}}
, $
$F(\M^{\rm NT}_{\Lambda}, \vphi^{\rm NT}_{\Lambda})\longrightarrow F(\M^{\rm NT}_{\Lambda}, \vphi^{\rm NT}_{\Lambda})$.
\end{itemize}
The above results can be summarized as the following commutative diagram:
\be
\begin{tikzcd}
  F(\N^{\rm HH}_{\Lambda}, \mu^{\rm HH}_{\Lambda})   \arrow[d] &
  \ar[l]   F(\N^{\rm HH}_{\Lambda\rq{}}, \mu^{\rm HH}_{\Lambda\rq{}})
  \arrow[d]  \\
  F(\M^{\rm NT}_{\Lambda}, \vphi^{\rm NT}_{\Lambda}) &\arrow[l] F(\M^{\rm NT}_{\Lambda\rq{}}, \vphi^{\rm NT}_{\Lambda\rq{}})
\end{tikzcd}
\ee
Furthermore, when $\{\M_{\Lambda}, \vphi_{\Lambda}\}$  is replaced by $\{\M^{\rm NT}_{\Lambda}, \vphi^{\rm NT }_{\Lambda}\}$ and $\{\M^{\prime}_{\Lambda}, \vphi^{\prime}_{\Lambda}\}$ is replaced by $\{\N^{\rm HH}_{\Lambda}, \mu^{\rm HH}_{\Lambda}\}$ in Theorem \ref{MacroDiagram}, the diagram \eqref{MacDiagComm} is commutative.

\end{Lemm}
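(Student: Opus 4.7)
The plan is to merge the argument of Lemma \ref{VolSys2}, which handled the tensor-product structure of the NT one-hole Hilbert spaces, with that of Lemma \ref{ConHoHu}, which added bosonic phonon degrees of freedom to a reference system. Since $\N^{\rm HH}_\Lambda = \M^{\rm NT}_\Lambda \otimes L^\infty(\BbbR^{|\Lambda|})$ and $\zeta_\Lambda^{\rm HH} = \zeta_\Lambda \otimes \Omega^{\rm ph}_\Lambda$ by construction, both claims reduce to well-understood tensor-product manipulations once the correct isometric embeddings and projections are identified.

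For part (i), I would take $\Lambda \subseteq \Lambda'$ in $F_{\mathbb{L}}^{\rm Conn}$ and use the natural identifications $\F_{\Lambda'} \simeq \F_\Lambda \otimes \F_{\Lambda'\setminus\Lambda}$ and $\F^{\rm ph}_{\Lambda'} \simeq \F^{\rm ph}_\Lambda \otimes \F^{\rm ph}_{\Lambda'\setminus\Lambda}$ developed in the proofs of Lemmas \ref{VolSys} and \ref{ConHoHu} to realize $\h^{\rm NT}_\Lambda \otimes \F^{\rm ph}_\Lambda$ as a closed subspace of $\h^{\rm NT}_{\Lambda'} \otimes \F^{\rm ph}_{\Lambda'}$, via the isometric embedding $\eta \mapsto \eta \otimes \tilde{\xi}_{\Lambda'\setminus\Lambda}\otimes\Omega^{\rm ph}_{\Lambda'\setminus\Lambda}$. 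The associated orthogonal projection $P = 1\otimes \Pi_{\Lambda'\setminus\Lambda}\otimes|\Omega^{\rm ph}_{\Lambda'\setminus\Lambda}\ra\la\Omega^{\rm ph}_{\Lambda'\setminus\Lambda}|$ is rank-one in the $\Lambda'\setminus\Lambda$ tensor factors, and the three defining conditions of Definition \ref{DefConForm} will then be verified in the standard way: the algebra compatibility $\N^{\rm HH}_\Lambda = P\N^{\rm HH}_{\Lambda'}P$ and the weight compatibility $\mu^{\rm HH}_{\Lambda'}\circ\mathscr{E} = \mu^{\rm HH}_{\Lambda'}$ are immediate from the tensor-product structure, while the positive-cone equality $P L^2(\N^{\rm HH}_{\Lambda'}, \mu^{\rm HH}_{\Lambda'})_+ = L^2(\N^{\rm HH}_\Lambda, \mu^{\rm HH}_\Lambda)_+$ follows by expanding vectors in the basis $\{|\bs{\sigma}\ra_{\Lambda'}\otimes f\}$, rewriting positivity as pointwise positive-semidefiniteness of matrix-valued functions of the phonon coordinates (exactly as in the proof of Lemma \ref{ConHoHu}(i)), and observing that integration against the strictly positive Gaussian $\Omega^{\rm ph}_{\Lambda'\setminus\Lambda}(\bs{q})^{2}$ preserves pointwise positivity.

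For part (ii), I would let $Q^{\rm HH}_\Lambda := 1\otimes |\Omega^{\rm ph}_\Lambda\ra\la\Omega^{\rm ph}_\Lambda|$ be the orthogonal projection from $\h^{\rm NT}_\Lambda\otimes\F^{\rm ph}_\Lambda$ onto $\h^{\rm NT}_\Lambda$. The identities $\M^{\rm NT}_\Lambda = Q^{\rm HH}_\Lambda \N^{\rm HH}_\Lambda Q^{\rm HH}_\Lambda$ and $\mu^{\rm HH}_\Lambda\circ\mathscr{E}^{\rm HH}_\Lambda = \mu^{\rm HH}_\Lambda$, where $\mathscr{E}^{\rm HH}_\Lambda(x) = Q^{\rm HH}_\Lambda x Q^{\rm HH}_\Lambda$, are direct consequences of $Q^{\rm HH}_\Lambda \zeta^{\rm HH}_\Lambda = \zeta^{\rm HH}_\Lambda$ and the tensor-product form of $\N^{\rm HH}_\Lambda$. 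The cone equality $Q^{\rm HH}_\Lambda L^2(\N^{\rm HH}_\Lambda,\mu^{\rm HH}_\Lambda)_+ = L^2(\M^{\rm NT}_\Lambda, \vphi^{\rm NT}_\Lambda)_+$ is again a pointwise Gaussian argument: projecting onto the phonon vacuum amounts to integrating matrix-valued functions against a strictly positive Gaussian density, which sends pointwise positive-semidefinite matrix-valued functions to positive-semidefinite matrices, and the reverse inclusion is routine since any positive element of $L^2(\M^{\rm NT}_\Lambda, \vphi^{\rm NT}_\Lambda)_+$ tensored with $\Omega^{\rm ph}_\Lambda$ lies in $L^2(\N^{\rm HH}_\Lambda, \mu^{\rm HH}_\Lambda)_+$. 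The commutative diagram and the macroscopic extension will then follow verbatim from the arguments of Lemmas \ref{VolSys2} and \ref{ConHoHu}, with Theorem \ref{MacroDiagram} applied at the inductive-limit level.

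The main technical obstacle, insofar as there is one, is the bookkeeping associated with the simultaneous tensor-product decompositions of fermionic spaces, bosonic spaces, von Neumann algebras, and reference vectors, together with verifying that the various projections commute as needed for the commutative diagram. The substantive analytic input, namely preservation of positivity under integration against a strictly positive Gaussian vacuum, is already established in the proof of Lemma \ref{ConHoHu}, so I expect no genuine new difficulties; the proof will be essentially a routine combination of the two earlier arguments.
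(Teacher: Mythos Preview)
Your proposal is correct and follows essentially the same route as the paper: combine the fermionic embedding of Lemma \ref{VolSys2} with the bosonic embedding of Lemma \ref{ConHoHu}, using the isometry $\eta\mapsto\eta\otimes\tilde{\xi}^{\rm HH}_{\Lambda'\setminus\Lambda}$ for (i) and the projection $Q^{\rm HH}_\Lambda=1\otimes|\Omega^{\rm ph}_\Lambda\ra\la\Omega^{\rm ph}_\Lambda|$ for (ii), with the Gaussian strict positivity providing the key analytic input. One small point: in the NT setting both $\M^{\rm NT}_\Lambda$ and $\N^{\rm HH}_\Lambda$ are \emph{abelian}, so positivity in $L^2(\N^{\rm HH}_\Lambda,\mu^{\rm HH}_\Lambda)_+$ is simply $\psi_{\bs\sigma}(\bs q)\ge 0$ for each $\bs\sigma$ and a.e.\ $\bs q$, not a genuine matrix positive-semidefiniteness condition as in Lemma \ref{ConHoHu}; the cone verification is therefore even easier than you suggest, and the paper accordingly refers back to Lemma \ref{VolSys2} rather than Lemma \ref{ConHoHu} for that step.
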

\begin{proof}
(i) In this proof, we will use the identifications for fermionic and bosonic Fock spaces given in Lemmas \ref{VolSys} and \ref{ConHoHu}. For simplicity of symbols, we set 
$\frak{K}_{\Lambda}=\h_{\Lambda}^{\rm NT} \otimes \F^{\rm ph}_{\Lambda}$.
Recall the following two facts.  First, $\h_{\Lambda}^{\rm NT} \otimes \h_{\Lambda\rq{}\setminus \Lambda}$  is a subspace of $\h^{\rm NT}_{\Lambda\rq{}}$, as derived in the proof of Lemma \ref{VolSys2}.  Secondly,
the identification $\F^{\rm ph}_{\Lambda\rq{}}=\F^{\rm ph}_{\Lambda} \otimes \F^{\rm ph}_{\Lambda\rq{}\setminus \Lambda}$ used in the proof of Lemma \ref{ConHoHu}.
By using  the  facts, we find  that $\frak{K}_{\Lambda} \otimes \h^{\rm HH}_{\Lambda\rq{}\setminus \Lambda}$ can be regarded as  a closed subspace of $\frak{K}_{\Lambda\rq{}}$, where $\h_{\Lambda}^{\rm HH}$ is defined in Subsection \ref{MLM-HH}.
As before, we define the isometric linear mapping $\kappa : \mathfrak{K}_{\Lambda} \to \mathfrak{K}_{\Lambda\rq{}}$ by 
$\kappa \eta=\eta\otimes \tilde{\xi}^{\rm HH}_{\Lambda\rq{}\setminus \Lambda}$, where $\tilde{\xi}^{\rm HH}_{\Lambda\rq{}\setminus \Lambda}$ is given in the proof of Lemma \ref{ConHoHu}.
By identifying $\kappa \frak{K}_{\Lambda}$ with $\frak{K}_{\Lambda}$, we can regard $\frak{K}_{\Lambda}$
 as a closed subspace of $\frak{K}_{\Lambda\rq{}}$.
 The rest of the proof is similar to that of Lemma \ref{VolSys2}, so we omit it.

 (ii) Let $Q_{\Lambda}^{\rm HH}$ be the orthogonal projection from $\frak{K}_{\Lambda}$ to $\h_{\Lambda}^{\rm NT}$ : $Q_{\Lambda}^{\rm HH}=1\otimes |\Omega^{\rm ph}_{\Lambda}\ra\la \Omega^{\rm ph}_{\Lambda}|$. Then we readily confirm the following:
 \begin{itemize}
\item $Q_{\Lambda}^{\rm HH} L^2(\N_{\Lambda}^{\rm HH}, \mu_{\Lambda}^{\rm HH})=L^2(\M_{\Lambda}^{\rm NT}, \vphi_{\Lambda}^{\rm NT})$.
\item $
Q_{\Lambda}^{\rm HH} \N_{\Lambda}^{\rm HH} Q_{\Lambda}^{\rm HH}=\M^{\rm NT}_{\Lambda}.
$
\item $Q_{\Lambda}^{\rm HH} L^2(\N_{\Lambda}^{\rm HH}, \mu_{\Lambda}^{\rm HH})_+=L^2(\M_{\Lambda}^{\rm NT}, \vphi_{\Lambda}^{\rm NT})_+$
\end{itemize}
To prove the third property, we used the fact that $\Omega^{\rm ph}_{\Lambda}({\bs q})$
is strictly positive for all ${\bs q} \in \BbbR^{\Lambda}$. This completes the proof of (ii).
\end{proof}

\begin{Thm}\label{NTHHThm}
Assume Condition \ref{ConnAss}. 
We have the following:
\begin{itemize}
\item[\rm (i)] For each $\Lambda\in F^{\rm Conn}_{\mathbb{L}}$, $H_{\Lambda}^{\rm HH, \infty}\in \mathscr{A}_{\Lambda, |\Lambda|-1}(\M_{\Lambda}^{\rm NT}, \vphi_{\Lambda}^{\rm NT})$ holds. Hence,  $S_{H_{\Lambda}^{\rm HH, \infty}}=(|\Lambda|-1)/2$.

 \item[\rm (ii)] Set ${\bs H}^{\rm HH, \infty}=\{H_{\Lambda} ^{\rm HH, \infty} : \Lambda \in F^{\rm Conn}_{\mathbb{L}}\}$. Then ${\bs H}^{\rm HH, \infty}$ is adapted to $O_{1/2}^{\rm NT, HH}$.  Hence,
for any   increasing sequence of sets $\{\Lambda_n : n\in \BbbN\}\subset  F^{\rm Conn}_{\mathbb{L}}$ satisfying $\bigcup_{n=1}^{\infty}\Lambda_n=\mathbb{L}$,  each NMGS associated with ${\bs H}^{\rm HH, \infty}$  exhibits a strict magnetic order with a spin density of  $1/2$.
\end{itemize}
\end{Thm}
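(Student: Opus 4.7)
The plan mirrors the strategy used in the proofs of Theorems \ref{MLMHHThm} and \ref{NTHubbThm}: reduce the stability statement for the electron-phonon model to the pure-electron Nagaoka--Thouless result via the consistency diagram of Lemma \ref{VolSys3}, after establishing ergodicity of the semigroup generated by $H^{\rm HH, \infty}_{\Lambda}$ on every $M$-subspace.

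First, I would verify directly that $H^{\rm HH, \infty}_{\Lambda}$ commutes with each total spin operator $S^{(i)}_{\Lambda}$ and possesses a ground state; the spin commutativity follows because $P^{\rm G}_{\Lambda}$, $H^{\rm H}_{\Lambda, U=0}$, the coupling term $\sum_{x,y}g_{xy}(n_x-1)(b_y^*+b_y)$, and the free phonon Hamiltonian all commute with $S^{(i)}_{\Lambda}$, while the existence of a ground state follows from standard Kato--Rellich-type arguments together with a cut-off and compactness argument on the phonon number operator (or, more cleanly, by invoking the ergodicity established next and appealing to Theorem \ref{pff}).

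The crucial step is to show that for every $M\in \mathrm{spec}(S^{(3)}_{\Lambda}\restriction \h^{\rm NT}_{\Lambda})$, the semigroup $\{e^{-\beta H^{\rm HH,\infty}_{\Lambda, M}}\}_{\beta\ge 0}$ is ergodic with respect to $L^2(\N^{\rm HH}_{\Lambda}[M], \mu^{\rm HH}_{\Lambda, M})_+$. This is precisely the type of assertion that the general ergodicity theorem \ref{ManyHa} of Appendix \ref{SectB} is designed to cover; following the pattern of Theorems \ref{MLMHHThm} and \ref{NTHubbThm}, the relevant ergodicity item of Theorem \ref{ManyHa} for the Holstein--Hubbard-type effective Hamiltonian can be applied here. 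The main technical input is an OPOI statement of the form $e^{-\beta H^{\rm HH,\infty}_{\Lambda, M}} \unrhd 0$ via a Trotter decomposition: the kinetic piece $-P^{\rm G}_{\Lambda} t_{xy} c^*_{x\sigma}c_{y\sigma} P^{\rm G}_{\Lambda}$ preserves positivity thanks to Condition \ref{ConnAss}(i) (the hopping amplitudes being positive) combined with Proposition \ref{AJAJP}; the coupling term $g_{xy}(n_x-1)(b_y^*+b_y)$ and the phonon term $\omega b_x^*b_x$ generate semigroups preserving the self-dual cone over $\F^{\rm ph}_{\Lambda}$ (Gaussian heat-kernel type), and their tensoring with the identity on $\h^{\rm NT}_{\Lambda}$ preserves $L^2(\N^{\rm HH}_{\Lambda}[M], \mu^{\rm HH}_{\Lambda, M})_+$. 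Strict positivity improvement (ergodicity) is then obtained from Condition \ref{ConnAss}(iii): the connectivity of the graph $\mathcal{G}_{\Lambda}(M)$ allows one to move between any two spin configurations by a sequence of hops, while the phonon Gaussian integral provides the positivity in the bosonic variables; this is essentially the argument that proves (vii) of Theorem \ref{ManyHa} with an additional phonon layer.

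Granted this ergodicity, part (i) follows by combining $H^{\rm HH,\infty}_{\Lambda}\in A_{\Lambda, |\Lambda|-1}(\N^{\rm HH}_{\Lambda}, \mu^{\rm HH}_{\Lambda})$ with the consistency $F(\N^{\rm HH}_{\Lambda},\mu^{\rm HH}_{\Lambda})\longrightarrow F(\M^{\rm NT}_{\Lambda},\vphi^{\rm NT}_{\Lambda})$ from Lemma \ref{VolSys3}(ii), which places $H^{\rm HH,\infty}_{\Lambda}$ inside $\mathscr{A}_{\Lambda,|\Lambda|-1}(\M^{\rm NT}_{\Lambda},\vphi^{\rm NT}_{\Lambda})$; Theorem \ref{BasicNT}(ii) then yields $S_{H^{\rm HH,\infty}_{\Lambda}}=(|\Lambda|-1)/2$. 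For (ii), Lemma \ref{VolSys3}(i) identifies $O^{\rm NT,HH}_{1/2}$ as a magnetic system and the ergodicity established above shows ${\bs H}^{\rm HH,\infty}$ is adapted to it; invoking part (iii) of Theorem \ref{BasicNT} with $O_{1/2}=O^{\rm NT,HH}_{1/2}$ (which belongs to $\mathscr{C}(O^{\rm NT}_{1/2})$ by Lemma \ref{VolSys3}(ii)) delivers the strict magnetic order with spin density $1/2$ along any exhausting sequence in $F^{\rm Conn}_{\mathbb{L}}$. The principal difficulty in all of this is the ergodicity step, since positivity improvement in the bosonic variables has to be propagated through the Gutzwiller projection and intertwined with the graph-theoretic connectivity argument in the electron sector; everything else is bookkeeping within the framework already assembled in Sections \ref{Sect3}--\ref{Sect4}.
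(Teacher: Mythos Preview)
Your proof is correct and follows essentially the same route as the paper: invoke item (viii) of Theorem~\ref{ManyHa} to obtain ergodicity of $\{e^{-\beta H^{\rm HH,\infty}_{\Lambda,M}}\}_{\beta\ge 0}$ with respect to $L^2(\N^{\rm HH}_{\Lambda}[M],\mu^{\rm HH}_{\Lambda,M})_+$, then combine Lemma~\ref{VolSys3} with Theorem~\ref{BasicNT} to conclude both (i) and (ii). One small correction to your supplementary sketch of the ergodicity mechanism: Proposition~\ref{AJAJP} is not the right tool for the hopping term, since $\N^{\rm HH}_{\Lambda}$ is abelian and $P^{\rm G}_{\Lambda}c^*_{x\sigma}c_{y\sigma}P^{\rm G}_{\Lambda}$ is off-diagonal, hence not of the form $aJaJ$ with $a\in\N^{\rm HH}_{\Lambda}$; the positivity preservation there is verified directly on the basis $\{|\boldsymbol{\sigma}\rangle_{\Lambda}\}$ using $t_{xy}>0$, as in the proof of (vii)--(viii) of Theorem~\ref{ManyHa}.
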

\begin{proof}

Fix $\Lambda\in F^{\rm Conn}_{\mathbb{L}}$, arbitrarily. By applying (viii) of  Theorem \ref{ManyHa}, we know that 
$
\{e^{-\beta H^{\rm HH, \infty}_{\Lambda, M}} \}_{\beta \ge 0}$ is ergodic w.r.t. $L^2(\N^{\rm HH}_{\Lambda}[M], \mu^{\rm HH}_{\Lambda, M})_+$ for all   $ M\in \mathrm{spec}(S_{\Lambda}^{(3)} \restriction \h^{\rm NT}_{\Lambda})$, which implies that ${\bs H}^{\rm HH, \infty}$ is adapted to $O_{1/2}^{\rm NT, HH}$.
Hence, $H^{\rm HH, \infty}_{\Lambda}\in \mathscr{A}_{\Lambda, |\Lambda|-1}(\M_{\Lambda}^{\rm NT}, \vphi_{\Lambda}^{\rm NT})$
for all $\Lambda\in F^{\rm Conn}_{\mathbb{L}}$.
By using Theorem \ref{BasicNT} and Lemma \ref{VolSys3}, we conclude the desired assertions in Theorem \ref{NTHHThm}.
\end{proof}

In Theorem \ref{NTHubbThm}, it was shown that the NT stability class $\mathscr{C}(O^{\rm NT}_{1/2})$ characterizes the magnetic properties of the ground states of the effective Hamiltonians $H_{\Lambda} ^{\rm H, \infty}\, (\Lambda\in F^{\rm Conn}_{\mathbb{L}})$. On the other hand, in Theorem \ref{NTHHThm}, we proved that the magnetic properties of the ground states of the Hamiltonians   incorporating the electron-phonon interaction in $H_{\Lambda} ^{\rm H, \infty}\, (\Lambda\in F^{\rm Conn}_{\mathbb{L}})$ are also characterized by $\mathscr{C}(O^{\rm NT}_{1/2})$. Together, these theorems allow us to conclude that the magnetic properties of the ground states of the effective Hamiltonians $H_{\Lambda} ^{\rm H, \infty} \, (\Lambda\in F^{\rm Conn}_{\mathbb{L}})$ are stable under the electron-phonon interaction.

\subsection{On some more stability theorems related to the NT stability class}
We conclude this section by mentioning some related topics that we cannot touch on because of the limited number of pages.

In \cite{Miyao2017}, 
we analyze a model describing the  NT system interacting with quantized radiation fields.
This model is also shown to belong to the NT stability class. Thus, the NT magnetic properties are stable even under the influence of  quantized radiation  fields.
 
In \cite{Aizenman1990}, Aizenman and Lieb proved that the NT theorem could be extended to the case of finite temperature. The author then proved that the extended NT theorem of Aizenman and Lieb holds under the  influence of lattice vibrations and  quantized radiation  fields \cite{Miyao2020-2}. The method used in this proof is very different from the one in this paper, but positivity is still essential at the core of the proof.

In \cite{Kollar1996}, Koller {\it et al.}  analyze the ground states of a Hamiltonian that incorporates all possible nearest-neighbor Coulomb interactions\footnote{To be precise, the density-density interaction, bond-charge interaction, exchange interaction and hopping of double occupancie are taken into account.}.
Thanks to this generalization of the model, the limit operation of taking $U$ to infinity, as seen in this paper, is no longer necessary, and we can show that the extended NT theorem holds for  finite $U$. With the theory presented in this paper, this extension can also be explained using the NT stability class. 
Furthermore, we can discuss the stability of magnetic properties under electron-phonon interactions and electron-quantized radiation  field interactions.
Detailed proofs will be given in another paper.

\section{Discussion}\label{Sect7}
\subsection{Other stability classes}
A simple question that should naturally arise in the readers' minds is whether there are other stability classes related to magnetic orders besides the stability classes presented so far in this paper. This question is natural and essential to evaluate the range of adaptation of the theory presented in this paper. In the following, we will outline some stability classes. These stability classes have not been clearly recognized so far, and some of them are pointed out here for the first time. For reasons of page space, we will not discuss the detailed construction of these stability classes, but readers who have read this far should have some idea of what they are.

\subsubsection*{The stability class associated with the attractive Hubbard model}

In his famous paper \cite{Lieb1989}, Lieb proves two theorems, one of which is  discussed in Section \ref{Sect5}. The other is a theorem about the Hubbard model with attractive Coulomb interaction.
As is well known, when the system is at half-filling, 
the hole-particle transformation transforms the repulsive Hubbard model into the attractive Hubbard model, and vice versa.
 For this reason, the two theorems are often regarded as the same.
However, since the models to which the ideas in the proofs of each theorem can be adapted are pretty different, it is more convenient to regard the two theorems as different.
  Needless to say, Ref. \cite{Lieb1989} does not include any discussion of the stability classes given in this paper. Therefore, it is necessary to explain the stability class that can describe the attractive Hubbard model to understand the innovative idea contained in Lieb's paper.

For this purpose,  let us first recall Lieb's theorem on the attractive Hubbard model:
in the attractive Hubbard model, the ground states  have spin angular momentum $S=0$ for every (even) electron filling.
In the context of this paper, this result can be interpreted as follows.
There exists a stability class $\mathscr{C}$, which satisfies the following property:
 if a net of  Hamiltonians is adapted to a magnetic system belonging to the  class $\mathscr{C}$, then their ground states have total spin $S=0$. The net of the attractive Hubbard Hamiltonians can be shown in practice to be adapted to a  magnetic system belonging to   $\mathscr{C}$. 
 Note that the class $\mathscr{C}$ was first pointed out in \cite{Miyao2019}.
 
 In \cite{Freericks1995}, Freericks and Lieb proved that the ground state of the Holstein model is unique and has total spin $S=0$.
 This result can be interpreted that the Holstein model is adapted to a particular magnetic system belonging to $\mathscr{C}$.
 In the summary reviews \cite{Shen1998,Tian2004} on Lieb's theorems, various models are discussed. 
 We can prove that most of the models discussed in those papers are adapted to magnetic systems belonging to the MLM stability class or class $\mathscr{C}$.

\subsubsection*{The stability class associated with one-electron Kondo lattice model}

Sigrist {\it et al.}   proved in \cite{Sigrist1991} that the ground states of the Kondo lattice model with a single conduction electron have total spin $S=(|\Lambda|-1)/2$.
In the context of the present paper, this result is interpreted as follows.
There exists a stability class $\mathscr{C}$ satisfying the following property: if a net of   Hamiltonians is adapted to  a magnetic system belonging to  $\mathscr{C}$, then their ground states have total spin $S$.
It can be shown that the net of the  Hamiltonians analyzed in \cite{Sigrist1991} is   also adapted to  a magnetic system belonging to the  class $\mathscr{C}$.
Furthermore, it is possible to extend the results of \cite{Sigrist1991} using this stability class. For example, consider  Hamiltonians that incorporate the interactions of conduction electrons with lattice vibrations and quantized radiation fields.  In this case, it can be shown that the ground states of the Hamiltonians still have total spin $S$. The proof of this claim will be presented in \cite{Miyao2022-1}.

\subsubsection*{The stability classes associated with $t$-$J$ models}
Tasaki have studies the $t$-$J$ model on a bipartite lattice characterised by negative hopping matrix elements within the same sublattices in \cite{Tasaki_1990}. He proves that the MLM theorem is valid for this model as well: when the two sublattices contain the same number of electrons, the ground state is unique and has total spin $S=0$. In \cite{Wang2014},   Wang and Ye have examined  the $t$-$J$ {\it chain} of arbitrary spin $s$. Their claims is stated as follow: the ground state has
total spin quantum number $S=0$ if $N$ is even, and $S=s$ if
$N$ is odd, where $N$ is the number of spin $s$-\lq\lq{}electrons\rq\rq{}; the ground state is unique apart from the $(2S+1)$-fold spin degeneracy.
We can explain the above results by constructing stability classes and further prove their stability under various perturbations. To be more precise, we can construct magnetic systems to which nets of the Hamiltonian can be adapted;  the stability of the magnetic properties of the ground states can be discussed in detail using the stability classes represented by the constructed magnetic systems. 
We will not enter into a detailed discussion of this model here but will discuss it in another paper.

\subsubsection*{The stability classes associated with the SU($n$) Hubbard models}

In recent years, the SU($n$) generalization of the Hubbard
model has been receiving much attention since it was realized
with ultracold atoms in optical lattices \cite{Cazalilla2014,PhysRevX.6.021030}. This model is thought to be able to explain various phases that cannot be described by the usual (i.e., SU($2$)) Hubbard model, and theoretical understanding is progressing. For example, color superfluid and trion phases are expected to be understood theoretically using the SU($3$) Hubbard model \cite{Rapp2008,Titvinidze2011,Zhao2007}.
As we have seen, the magnetic properties of the ground states  of the ordinary Hubbard model can be described by several stability classes, and 
it is natural to ask what stability classes can be used to describe the ground-state properties of the SU($n$) Hubbard model. Here, we outline some partial results and prospects.

In \cite{Katsura2013}, Katsura and Tanaka extended the Nagaoka--Thouless (NT) theorem to the SU($n$) Hubbard model.
By extending the discussion in this paper, the SU($n$) version of the NT theorem can be explained using stability classes.
On the other hand, there is still room for discussion on the interpretation of the stability class, such as what kind of interaction with the environment is physically possible. A detailed description of the above results will be given in another paper.

In \cite{Yoshida2021}, Katsura and Yoshida applied the method of Majorana reflection positivity to the SU($n$) Hubbard model and obtained interesting results on the ground states.
We note that the origin of Majorana reflection positivity goes back to work on Majoranas by Jaffe and Pedrocchi \cite{Jaffe2015};  Wei {\it et al.} used  the method to analyze the ordinary Hubbard model in  \cite{Wei2015}. 
The results of Katsura and Yoshida can also be explained in terms of stability classes.
On the other hand, there are many unanswered questions, such as what kind of interaction with the environment can be explained by this stability class.

The study of the SU($n$) Hubbard model has the potential to discover  new stability classes that cannot be considered in the SU($2$) model, and further investigation is therefore expected.

\subsection{Potential for further applications}

Many of the examples given above can be explained coherently with the theory we have proposed in this paper.
In this subsection, we would like to mention a few examples that do not necessarily fit into the framework of this paper but could be described by similar ideas.

First, let us discuss the method of spatial reflection positivity.
The method of spin reflection positivity developed by Lieb \cite{Lieb1989} was inspired by spatial reflection positivity in axiomatic quantum field theory \cite{Osterwalder1973,Osterwalder1975}. We will refer to spatial reflection positivity simply as reflection positivity (RP).
The method of RP has a variety of applications, including the analysis of phase transition phenomena. \footnote{
For mathematical aspects of RP, see Ref. \cite{Neeb2018}.
} For example, Kennedy {\it et al.} applied RP to the analysis of long-range order in the ground state of the Heisenberg model \cite{Kennedy1988}. By extending this method, the author of this paper and his collaborators have proposed a new description of RP in one-dimensional fermionic systems in \cite{Miyao2021-3}. 
The basic idea of Ref. \cite{Miyao2021-3} is the same as in this paper, which strongly suggests the existence of a stability class describing the charge density wave order. 
It is intriguing whether there is a stability class that describes an order other than magnetic order.

Finally, we would like to mention flat-band ferromagnetism.
This is a mechanism of ferromagnetism discovered by Mielke and Tasaki \cite{Mielke1993}, and it is more complicated than the examples discussed in this paper.
The nature of flat-band ferromagnetism has been studied in depth by many researchers; see, e.g.,  \cite{Tasaki2020} and references therein.  However, at present, it is not clear what stability class is used to describe this mechanism.

\appendix

\section{Structures of  Hamiltonians}\label{SectA}
\subsection{Two basic types of Hamiltonians}

In Sections \ref{Sect3} and \ref{Sect4}, we have not placed any restrictions on the structure of the Hamiltonians. The most intractable problem in concrete applications of the preceding discussion is to demonstrate that (iii) of Definition \ref{DefAHamiC} is satisfied by the Hamiltonians under consideration. In this appendix, we will consider two types of Hamiltonians that are often encountered in applications, and clarify under what conditions (iii) of Definition \ref{DefAHamiC} is satisfied. To prove this fundamental condition, we note that the structure of the graph describing the many-electron system is also determined to some extent simultaneously.

In many texts on quantum statistical mechanics, the Hamiltonian of a quantum spin system is defined using the concept of ``interaction". Mathematically, an  ``interaction" is usually described by a set of bounded linear operators satisfying some favorable properties; see, e.g., \cite{Bratteli1997,Simon1993} for detail. 
Recall that one of the subjects of this paper is to mathematically examine the effects of the interactions between electrons and environmental systems on the magnetic properties of the ground states.
Typical examples of environmental systems that we have in mind are phonon systems and systems of quantized radiation fields.
 Such environmental systems are generally difficult to characterize using only bounded linear operators, which inevitably requires the definition of Hamiltonians and interactions in terms of unbounded linear operators.
%In this section, we will impose the following minimal conditions on the definition of the interaction.%

We consider a net of reducible Hilbert spaces: $
\{\h_{\Lambda, [\varrho|\Lambda|]} \in \mathscr{H}_{\Lambda, [\varrho |\Lambda|]} : \Lambda\in  P(\mathbb{L}) \}$.
Let ${\sf H}_{\varrho}=\{\X_{\Lambda} \in \mathscr{I}(\h_{\Lambda, [\varrho |\Lambda|]}) : \Lambda\in  P(\mathbb{L}) \}$ be a net of IEE spaces.
Suppose that we are given  a magnetic system $O_{\varrho} =\{\{\M_{\Lambda}, \vphi_{\Lambda}\} \in \mathscr{Q}(\X_{\Lambda}) : \X_{\Lambda}\in {\sf H}_{\varrho}, \  \Lambda\in  F_{\mathbb{L}} \}$.
Let $\Lambda, \Lambda'\in F_{\mathbb{L}}$. Recall that, if $\Lambda\subset \Lambda'$, then $\X_{\Lambda}$
can be regarded as a closed  subspace of $\X_{\Lambda'}$. Hence, we can write $\X_{\Lambda'}=\X_{\Lambda} \oplus \X_{\Lambda}^{\perp}$. By identifying the von Neumann algebra $\M_{\Lambda}$ on $\X_{\Lambda}$
with $\{x\oplus 0 : x\in \M_{\Lambda}\}$, we can regard $\M_{\Lambda}$ as a von Neumann algebra on $\X_{\Lambda'}$.  By the definition of the  magnetic system,  $\M_{\Lambda}$ is a  von Neumann subalgebra of $\M_{\Lambda'}$. Now we define the  von Neumann algebra on $\X_{\Lambda'}$ by 
\be
\tilde{\M}_{\Lambda}=\{x\oplus 1 : x\in \M_{\Lambda}\},
\ee
where $1$ is the identity operator on $\X_{\Lambda}^{\perp}$.
Then $\tilde{\M}_{\Lambda}$ is a natural extension of $\M_{\Lambda}$ onto $\X_{\Lambda'}$:
$\tilde{\M}_{\Lambda} \restriction \X_{\Lambda}=\M_{\Lambda} \oplus \{0\} \equiv \M_{\Lambda}$.
In this section, given an operator $A$ on $\X_{\Lambda}$, we will write $\tilde{A} =A\oplus 1 $. This notation  makes sense  even when $A$ is an unbounded operator.

\begin{Def}\label{FundA} \upshape
An {\it interaction} is a set of   self-adjoint operators ${\bs \Psi} = \{\Psi(X) : X\in P(\mathbb{L})\}$ satisfying the following:
\begin{itemize}
\item[1.] $\Psi(X)$ acts in $\h_X$ and  is bounded from below for all $X\in P(\mathbb{L})$.
\item[2.] Fix $\Lambda\in P(\mathbb{L})$ arbitrarily. There exists a  dense subspace $\frak{D}_{\Lambda}$ of $\X_{\Lambda}$ such that 
$\frak{D}_{\Lambda} \subseteq Q(\tilde{\Psi}(X))$ for all $X\subseteq \Lambda$,
where $Q$ stands for the form domain.

\item[3.] $e^{-s \Psi(X)}$ commutes with $S_{X}^{(1)}$, $S_{X}^{(2)}$ and $S_{X}^{(3)}$ 
for all $X\in P(\mathbb{L})$ and $s\ge 0$.
\end{itemize}
Given an interaction ${\bs \Psi}$, we define the Hamiltonian  $H^{\bs \Psi}_{\Lambda}$ by 
\be
H_{\Lambda}^{\bs \Psi}=\sum^{\bullet}_{X\subseteq \Lambda} \Phi(X),\ \ \Phi(X)=\tilde{\Psi}(X),
\ee
where $\displaystyle \sum^{\bullet}$ stands for a form sum. 
\end{Def}
 From the property 2. of Definition \ref{FundA}, 
 $H_{\Lambda}^{\bs \Psi}$  is a self-adjoint operator on $\X_{\Lambda}$, bounded from below. In addition, by the property 3., 
 $e^{-s H_{\Lambda}^{\bs\Psi}}$ commutes with the total spin operators $S_{\Lambda}^{(i)}\, (i=1, 2, 3)$.

The fundamental assumption is as follows.
\begin{Assum}\label{PPBasicA}\upshape
For any $\Lambda\in F_{\mathbb{L}}$, it holds that 
$e^{-\beta \Phi(X)} \unrhd 0$ w.r.t. $L^2(\M_{\Lambda}, \vphi_{\Lambda})_+$ for all $\beta \ge 0$ and $X\subseteq \Lambda$.
\end{Assum}

The following proposition is a basic premise in showing that (iii) of Definition \ref{DefAHamiC} is satisfied.
\begin{Prop}
For each  $\Lambda\in F_{\mathbb{L}}$, it holds that 
$e^{-\beta H^{\bs \Psi}_{\Lambda}} \unrhd 0$ w.r.t. $L^2(\M_{\Lambda}, \vphi_{\Lambda})_+$ for all $\beta \ge 0$.
\end{Prop}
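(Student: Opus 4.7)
The plan is to deduce the positivity preservation of $e^{-\beta H_{\Lambda}^{\bs\Psi}}$ from the positivity preservation of each factor $e^{-\beta\Phi(X)}$ by invoking the Trotter--Kato product formula, and then passing to the strong limit inside the self-dual cone $L^2(\M_\Lambda,\vphi_\Lambda)_+$.

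First I would set up the framework. Since $\Lambda\in F_{\mathbb{L}}$ is finite, the collection $\{\Phi(X) : X\subseteq \Lambda\}$ is a finite family of self-adjoint operators on $\X_\Lambda$, each bounded from below by hypothesis (Property 1 of Definition \ref{FundA} together with the trivial extension $\tilde{\Psi}(X)=\Psi(X)\oplus 1$). By Property 2 of Definition \ref{FundA}, the form sum $H_\Lambda^{\bs\Psi}=\sum^{\bullet}_{X\subseteq \Lambda}\Phi(X)$ is well-defined on a common form-core $\frak{D}_\Lambda$ via the (iterated) KLMN theorem, and it is self-adjoint and bounded from below.

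Next I would invoke the Trotter--Kato product formula in its form-sum version (see, e.g., Kato's theorem on form sums of semibounded operators). For a finite family of such operators it gives the strong limit
\be
e^{-\beta H_{\Lambda}^{\bs\Psi}}=\mathop{\mathrm{s\text{-}lim}}_{n\to\infty}\Bigg[\prod_{X\subseteq \Lambda}e^{-\beta \Phi(X)/n}\Bigg]^{n},\qquad \beta\ge 0,
\ee
where the product over $X\subseteq\Lambda$ is taken in any fixed order. By Assumption \ref{PPBasicA}, each factor $e^{-\beta\Phi(X)/n}$ preserves positivity w.r.t. $L^2(\M_\Lambda,\vphi_\Lambda)_+$. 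The lemma stated in Subsection 2.3 guarantees that finite products and non-negative linear combinations of positivity-preserving operators are again positivity preserving; hence the operator $\bigl[\prod_{X\subseteq \Lambda}e^{-\beta \Phi(X)/n}\bigr]^{n}$ satisfies $\unrhd 0$ w.r.t.\ $L^2(\M_\Lambda,\vphi_\Lambda)_+$ for every $n$.

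Finally I would take the strong limit. Because the self-dual cone $L^2(\M_\Lambda,\vphi_\Lambda)_+$ is norm-closed in $\X_\Lambda$, the set of positivity-preserving bounded operators is closed in the strong operator topology: if $A_n\unrhd 0$ and $A_n\to A$ strongly, then for each $u\in L^2(\M_\Lambda,\vphi_\Lambda)_+$ one has $A_n u\in L^2(\M_\Lambda,\vphi_\Lambda)_+$ and $A_n u\to Au$ in norm, so $Au\in L^2(\M_\Lambda,\vphi_\Lambda)_+$. Applying this to the Trotter approximants yields $e^{-\beta H_\Lambda^{\bs\Psi}}\unrhd 0$ w.r.t.\ $L^2(\M_\Lambda,\vphi_\Lambda)_+$, as required.

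The only genuine obstacle is the justification of the Trotter--Kato product formula for the form sum: one must verify that $H_\Lambda^{\bs\Psi}$ is indeed the form sum specified by Definition \ref{FundA} and that the semibounded form-sum version of Trotter--Kato applies. This is handled by Property 2 of Definition \ref{FundA}, which provides a common form domain, and by the semiboundedness of each $\Phi(X)$; both are standard ingredients of the Kato form-sum Trotter theorem. All remaining steps are direct applications of the closedness of the cone and the elementary algebraic properties of OPOIs recalled in Section \ref{Prel}.
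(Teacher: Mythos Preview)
Your proof is correct and follows exactly the same route as the paper: the paper's own proof is the single line ``A straightforward application of the Trotter--Kato product formula yields the desired claim,'' and you have simply unpacked that line into its standard constituents (product of positivity-preserving factors, strong closure of the cone).
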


\begin{proof}
A straightforward application of 
 the Trotter--Kato product formula \cite[Theorem S.21]{Reed1981}
 yields the desired claim.
\end{proof}

In this paper, we consider the case where the interaction $\Phi(X)$ can be split into two parts: 
\be
\Phi(X)=\Phi^{\rm w}(X)+\Phi^{\rm s}(X),
\ee
where  $\Phi^{\rm w}(X)$ and $\Phi^{\rm s}(X)$ are characterized as follows:
\begin{itemize}
\item $\Phi^{\rm s}(X)$ is the part satisfying the following: $\Phi^{\rm s} (X)$ is bounded and  $-\Phi^{\rm s}(X) \unrhd 0$ w.r.t. $L^2(\M_{\Lambda}, \vphi_{\Lambda})_+$.

\item $\Phi^{\rm w}(X):=\Phi(X)-\Phi^{\rm s}(X)$.
\end{itemize}
\begin{Rem} \upshape
The assumption of boundedness in the definition of $\Phi^{\rm s}(X)$ can be weakened: for example, the following theorems are still true even if $\Phi^{\rm s}(X)$ is assumed to be relatively bounded to $\Phi^{\rm w}(X)$. However, the above setting is sufficient for the consideration of the examples in this paper.
\end{Rem}

For the benefit of the following discussion, we divide the Hamiltonian into two parts as follows:
\be
H_{\Lambda}^{\bs \Psi}=H^{\bs \Psi, {\rm w}}_{\Lambda}+ H^{\bs \Psi, {\rm s}}_{\Lambda},
\ee
where 
\be
H^{\bs \Psi, {\rm w}}_{\Lambda}=\sum_{X\subseteq \Lambda}^{\bullet} \Phi^{\rm w}(X),\ \ H^{\bs \Psi,{\rm s}}_{\Lambda}=\sum^{\bullet}_{X\subseteq \Lambda} \Phi^{\rm s}(X).
\ee

\begin{Exa}
\upshape
A type of interactions that often come up in specific applications is  discussed here.
For each $X\subseteq \Lambda$, suppose that we are given a self-adjoint operator $A_X\in \M_X$ and a set of self-adjoint operators  $\{B_{X, i} : i=1, \dots, N\} \subset \M_{X}$.
We denote by $J_{\Lambda}$ the modular conjugation associated with $\{\M_{\Lambda}, \vphi_{\Lambda}\}$.
Define
\be
\Phi^{\rm w}(X)=\tilde{A}_X+J_{\Lambda}\tilde{A}_XJ_{\Lambda},\ \ \Phi^{\rm s}(X)=-\sum_{i=1}^N \tilde{B}_{X, i}J_{\Lambda} \tilde{B}_{X, i}J_{\Lambda}.
\ee
Then, because of Proposition \ref{AJAJP}, 
 $-\Phi^{\rm s}(X) \unrhd 0$ w.r.t. $L^2(\M_{\Lambda}, \vphi_{\Lambda})_+$ holds, while $\Phi^{\rm w}(X)$ does not satisfy this condition in general.
Using   Proposition \ref{AJAJP} again, we readily confirm that both $\Phi^{\rm s}(X)$ and $\Phi^{\rm w}(X)$  fulfill Condition \ref{PPBasicA}.  
\end{Exa}

The first type of Hamiltonians to be discussed is characterized by the following condition: 

\begin{Assum}\label{PIBasicAI}\upshape
Fix $\Lambda\in F_{\mathbb{L}}$, arbitrarily.
The set of operators $\{\Phi^{\rm s}(X) : X\subseteq \Lambda\}$ are ergodic w.r.t. $L^2(\M_{\Lambda}, \vphi_{\Lambda})_+$ in the following sense.
Take $M\in \mathrm{spec}(S_{\Lambda}^{(3)} \restriction \h_{\Lambda, N})$, arbitrarily.
For any $\xi, \eta\in L^2(\M_{\Lambda}[M], \vphi_{\Lambda, M})_+\setminus \{0\}$, there exists a sequence, 
$X_1, \dots, X_n$, of subsets of $ \Lambda$ such that \be
(-1)^n\la \xi|\Phi^{\rm s}(X_1) \cdots \Phi^{\rm s}(X_n) \eta\ra>0.\label{CondConn}
\ee

\end{Assum}

In  specific applications, to show \eqref{CondConn}, we need to consider the connectivity of the graph $G$;  see, e.g., Condition \ref{CoupAss}  in Section \ref{Sect5}.

In what follows, 
we denote by 
$H_{\Lambda, M}^{\bs \Psi}$ (resp. $\vphi_{\Lambda, M}$) the restriction of $H^{\bs \Psi}_{\Lambda}$ (resp. $\vphi_{\Lambda}$) to the $M$-subspace (resp. $\M_{\Lambda}[M]$).
\begin{Thm}\label{AbstPI1}
Assume Conditions \ref{PPBasicA} and \ref{PIBasicAI}. For each $M\in \mathrm{spec}(S_{\Lambda}^{(3)} \restriction \h_{\Lambda, N})$, 
we have $e^{-\beta H^{\bs \Psi}_{\Lambda, M}} \rhd 0$ w.r.t. $L^2(\M_{\Lambda}[M], \vphi_{\Lambda, M})_+$ for all $\beta >0$.
\end{Thm}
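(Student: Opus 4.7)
\emph{Plan.} My strategy is a Dyson--Phillips expansion of $e^{-\beta H^{\bs\Psi}_\Lambda}$ around the weak part $H^{\bs\Psi,\mathrm{w}}_\Lambda$, treating $-H^{\bs\Psi,\mathrm{s}}_\Lambda=\sum_{X\subseteq\Lambda}(-\Phi^{\rm s}(X))$ as a bounded, positivity-preserving perturbation, and then invoking Condition~\ref{PIBasicAI} to locate a single term whose integrand is strictly positive on a set of positive measure.

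First I would write
\begin{align*}
e^{-\beta H^{\bs\Psi}_\Lambda}
=\sum_{n=0}^\infty\sum_{X_1,\dots,X_n\subseteq\Lambda}\int_{\Delta_n(\beta)}e^{-(\beta-t_n)H^{\bs\Psi,\mathrm{w}}_\Lambda}(-\Phi^{\rm s}(X_n))\cdots(-\Phi^{\rm s}(X_1))e^{-t_1 H^{\bs\Psi,\mathrm{w}}_\Lambda}\,dt_1\cdots dt_n,
\end{align*}
where $\Delta_n(\beta)=\{0\le t_1\le\cdots\le t_n\le\beta\}$; the series converges absolutely in operator norm since $H^{\bs\Psi,\mathrm{s}}_\Lambda$ is bounded. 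Each factor on the right is $\unrhd 0$ w.r.t.\ $L^2(\M_\Lambda,\vphi_\Lambda)_+$: the factors $-\Phi^{\rm s}(X)$ by hypothesis, and the factors $e^{-sH^{\bs\Psi,\mathrm{w}}_\Lambda}$ by applying the Trotter product formula to $H^{\bs\Psi,\mathrm{w}}_\Lambda=\sum_X\Phi^{\rm w}(X)$ once one knows $e^{-s\Phi^{\rm w}(X)}\unrhd 0$ for every $X$. Consequently every integrand is $\unrhd 0$, and after restriction to the $M$-sector every term of the expansion is non-negative when paired between vectors of $L^2(\M_\Lambda[M],\vphi_{\Lambda,M})_+$.

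To extract strict positivity, fix $\xi,\eta\in L^2(\M_\Lambda[M],\vphi_{\Lambda,M})_+\setminus\{0\}$. Because $e^{-\beta H^{\bs\Psi,\mathrm{w}}_{\Lambda,M}}$ is invertible and positivity-preserving, the vector $\xi':=e^{-\beta H^{\bs\Psi,\mathrm{w}}_{\Lambda,M}}\xi$ lies again in the positive cone and is non-zero; Condition~\ref{PIBasicAI} applied to the pair $(\xi',\eta)$ then furnishes $X_1,\dots,X_n\subseteq\Lambda$ with
\begin{align*}
\la\xi'\,|\,(-\Phi^{\rm s}(X_1))(-\Phi^{\rm s}(X_2))\cdots(-\Phi^{\rm s}(X_n))\,\eta\ra>0.
\end{align*}
The Dyson term indexed by the reversed tuple $(X_n,X_{n-1},\dots,X_1)$ has integrand $g(t_1,\dots,t_n)$ that is continuous on $\Delta_n(\beta)$, pointwise non-negative, and at the corner $t_1=\cdots=t_n=0$ reduces, via self-adjointness of $e^{-\beta H^{\bs\Psi,\mathrm{w}}_\Lambda}$, to $\la\xi'\,|\,(-\Phi^{\rm s}(X_1))\cdots(-\Phi^{\rm s}(X_n))\,\eta\ra>0$. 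Continuity forces $g>0$ on an open neighbourhood of this corner inside $\Delta_n(\beta)$ of positive Lebesgue measure, so the corresponding Dyson integral is strictly positive. Combining this with non-negativity of all other terms yields $\la\xi\,|\,e^{-\beta H^{\bs\Psi}_{\Lambda,M}}\,\eta\ra>0$, i.e.\ $e^{-\beta H^{\bs\Psi}_{\Lambda,M}}\rhd 0$.

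The main obstacle is the intermediate assertion $e^{-sH^{\bs\Psi,\mathrm{w}}_\Lambda}\unrhd 0$: Condition~\ref{PPBasicA} only grants positivity-preservation for the full $\Phi(X)$, not for the separated weak part. The splitting has, however, been arranged so that $\Phi^{\rm w}(X)$ has the structure of the example following its introduction, namely $\Phi^{\rm w}(X)=\tilde A_X+J_\Lambda\tilde A_XJ_\Lambda$ with $A_X$ affiliated to $\M_X$; since $J_\Lambda\tilde A_XJ_\Lambda\in\M'$ commutes with $\tilde A_X$, one obtains $e^{-s\Phi^{\rm w}(X)}=e^{-s\tilde A_X}\,J_\Lambda e^{-s\tilde A_X}J_\Lambda$, which by Proposition~\ref{AJAJP} applied to $a=e^{-s\tilde A_X}$ is $\unrhd 0$, and Trotter then propagates this to $H^{\bs\Psi,\mathrm{w}}_\Lambda$. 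Once this structural step is secured, the Dyson-plus-continuity mechanism described above runs without further complications.
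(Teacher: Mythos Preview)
Your argument is correct and essentially identical to the paper's: Duhamel-expand $e^{-\beta H^{\bs\Psi}_\Lambda}$ around $H^{\bs\Psi,\mathrm{w}}_\Lambda$, note that every term is $\unrhd 0$, and invoke Condition~\ref{PIBasicAI} together with continuity of the integrand at a corner of the simplex to force $\la\xi\,|\,e^{-\beta H^{\bs\Psi}_{\Lambda,M}}\eta\ra>0$; the paper keeps $-H^{\bs\Psi,\mathrm{s}}_\Lambda$ grouped and then bounds $\la\xi|D_n\eta\ra$ below by a single tuple via $-H^{\bs\Psi,\mathrm{s}}_\Lambda\unrhd -\Phi^{\rm s}(X_j)$, whereas you expand into tuples from the outset, but this is purely cosmetic. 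You are more explicit than the paper on the step $e^{-sH^{\bs\Psi,\mathrm{w}}_\Lambda}\unrhd 0$, which the paper simply uses without comment; note, however, that your justification via the form $\Phi^{\rm w}(X)=\tilde A_X+J_\Lambda\tilde A_XJ_\Lambda$ appeals to the structure of the Example rather than to the theorem's stated hypotheses, so strictly speaking both proofs rest on this as an implicit additional assumption.
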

\begin{proof}
Through this proof, we will illustrate how effective order-preserving operator inequalities can be.
By applying the Duhamel formula, we have
\be
e^{-\beta H^{\bs \Psi}_{\Lambda}}=\sum_{n=0}^{\infty} D_n, \label{DuHa}
\ee
where $D_0=e^{-\beta H_{\Lambda}^{\bs \Psi, {\rm w}}}$ and 
\be
D_n=\int_{0\le s_1\le \cdots \le s_n \le \beta}ds_1\cdots ds_n (-H_{\Lambda}^{\bs \Psi, {\rm s}}(s_1)) \cdots (-H_{\Lambda}^{\bs \Psi, {\rm s}}(s_n)) e^{-\beta H_{\Lambda}^{\bs \Psi, {\rm w}}}
\ee
with $H_{\Lambda}^{\bs \Psi, {\rm s}}(s)=e^{-s H_{\Lambda}^{\bs \Psi, {\rm w}}} H_{\Lambda}^{\bs \Psi, {\rm s}}e^{s H_{\Lambda}^{\bs \Psi, {\rm w}}}$. 
We note that the right-hand side of \eqref{DuHa} converges in the operator norm topology.

Because $-H_{\Lambda}^{\bs \Psi, {\rm s}} \unrhd 0$ w.r.t. $L^2(\M_{\Lambda}, \vphi_{\Lambda})_+$, we have
$ (-H_{\Lambda}^{\bs \Psi, {\rm s}}(s_1)) \cdots (-H_{\Lambda}^{\bs \Psi, {\rm s}}(s_n)) e^{-\beta H_{\Lambda}^{\bs \Psi, {\rm w}}} \unrhd 0$ w.r.t. $L^2(\M_{\Lambda}, \vphi_{\Lambda})_+$,
provided that $0\le s_1\le\cdots \le s_n\le \beta$. Hence, $D_n\unrhd 0$ w.r.t. $L^2(\M_{\Lambda}, \vphi_{\Lambda})_+$ for every $n\in \BbbN_0:=\{0\}\cup \BbbN$, which implies that 
\be
e^{-\beta H^{\bs \Psi}_{\Lambda}} \unrhd D_n\ \ \mbox{w.r.t. $L^2(\M_{\Lambda}, \vphi_{\Lambda})_+$ for all $\beta \ge 0$ and $n\in \BbbN_0$}. \label{ExD}
\ee

Given an  $M\in \mathrm{spec}(S_{\Lambda}^{(3)} \restriction \h_{\Lambda, N})$, 
take $\xi, \eta\in L^2(\M_{\Lambda}[M], \vphi_{\Lambda, M})_+ \setminus \{0\}$, arbitrarily. Because $e^{-\beta H_{\Lambda}^{\bs \Psi, {\rm w}}}$ is an injection, 
$e^{-\beta H_{\Lambda}^{\bs \Psi, {\rm w}}} \eta\neq 0$ and $e^{-\beta H_{\Lambda}^{\bs \Psi, {\rm w}}} \eta \ge 0$ w.r.t. $L^2(\M_{\Lambda}[M], \vphi_{\Lambda, M})_+$ for all $\beta \ge 0$.
By Condition \ref{PIBasicAI}, there exists a sequence,  $X_1, \dots, X_n$, of subsets of $\Lambda$ such that 
\be
(-1)^n\la \xi|\Phi^{\rm s}(X_1) \cdots \Phi^{\rm s}(X_n) e^{-\beta H_{\Lambda}^{\bs \Psi, {\rm w}}}\eta\ra>0. \label{ModPI}
\ee 
By using the fact that $-H_{\Lambda}^{\bs \Psi, {\rm s}} \unrhd \Phi(X)$ w.r.t. $L^2(\M_{\Lambda}[M], \vphi_{\Lambda, M})_+$ for all $X\subseteq {\Lambda}$, we find that 
\be
\la \xi|D_n\eta\ra\ge \int_{0\le s_1\le \cdots \le s_n \le \beta}ds_1\cdots ds(-1)^n\la \xi|\Phi^{\rm s}_{s_1}(X_1) \cdots \Phi^{\rm s}_{s_n}(X_n) e^{-\beta H_{\Lambda}^{\bs \Psi, {\rm w}}}\eta\ra, \label{InmInq}
\ee
where $\Phi^{\rm s}_s(X)=e^{-sH_{\Lambda}^{\bs \Psi, {\rm w}}} \Phi^{\rm s}(X)e^{sH_{\Lambda}^{\bs \Psi, {\rm w}}}$. Define 
\be
F(s_1, \dots, s_n)=(-1)^n\la \xi|\Phi^{\rm s}_{s_1}(X_1) \cdots \Phi^{\rm s}_{s_n}(X_n) e^{-\beta H_{\Lambda}^{\bs \Psi, {\rm w}}}\eta\ra.
\ee
Then $F(s_1, \dots, s_n) \ge 0$ provided that $0\le s_1\le \cdots \le s_n \le \beta$,  and $F(0, \dots, 0)>0$
due to \eqref{ModPI}. Combining this with the fact that $F(s_1, \dots, s_n)$ is continuous in $s_1, \dots, s_n$, we conclude that $\la \xi|D_n\eta\ra>0$ holds. By using  \eqref{ExD}, we finally arrive at 
\be
\la \xi|e^{-\beta H_{\Lambda}^{\bs \Psi}} \eta\ra\ge \la \xi|D_n\eta\ra>0.
\ee
This completes the proof of Theorem \ref{AbstPI1}.
\end{proof}

Theorem \ref{AbstPI1} is used in the proofs of the Marshall--Lieb--Mattis theorem on the Heisenberg model,  the Nagaoka--Thouless theorem on the Hubbard model, and so on; see Appendix \ref{SectB} for detail.

\begin{Assum}\label{PIBasicAII}\upshape
Let $\Lambda\in F_{\mathbb{L}}$.
 Choose $M\in \mathrm{spec}(S_{\Lambda}^{(3)} \restriction \h_{\Lambda, N})$, arbitrarily.
For any $\xi, \eta\in L^2(\M_{\Lambda}[M], \vphi_{\Lambda, M})_+\setminus \{0\}$, there exists a sequence, $X_1, \dots, X_n$,  of subsets of $\Lambda$ and non-negative numbers $\beta, s_1, \dots, s_n$ with $0<s_1\le s_2\le \cdots \le s_n \le \beta$ such that 
\be
(-1)^n\Big\la \xi \Big|\Phi^{\rm s}_{s_1}(X_1)\cdots \Phi^{\rm s}_{s_n}(X_n) e^{-\beta H^{\bs \Psi, \rm w}_{\Lambda}}\eta\Big\ra>0, \label{StriInqAss}
\ee
where $\Phi_t^{\rm s}(X)=e^{-tH^{\bs \Psi, \rm w}_{\Lambda}} \Phi^{\rm s}(X)e^{t H^{\bs \Psi, w}_{\Lambda}}$.

\end{Assum}
In practical applications,  Condition \ref{PIBasicAII} is closely related to the properties of graphs; see, e.g., Condition \ref{AssHubbard} in Section \ref{Sect5}.

For readers' convenience, let us explain in some detail the differences between Condition \ref{PIBasicAI} and  Condition \ref{PIBasicAII}:
Condition \ref{PIBasicAI} is characterized using {\it only} $\Phi^{\rm s}(X)\ (X\subseteq \Lambda)$, while  Condition \ref{PIBasicAII} implies that $\Phi^{\rm s}(X)$ and $\Phi^{\rm w}(X)\ (X\subseteq \Lambda)$ cooperate to form the strict inequality  \eqref{StriInqAss}. In general, it is more difficult to show that Condition \ref{PIBasicAII}  is satisfied than it is to show that Condition \ref{PIBasicAI}  is satisfied.

\begin{Thm}\label{AbstPI2}
Assume Conditions \ref{PPBasicA} and \ref{PIBasicAII}. For each $M\in \mathrm{spec}(S_{\Lambda}^{(3)} \restriction \h_{\Lambda, N})$, 
 $\{e^{-\beta H^{\bs \Psi}_{\Lambda, M}} \}_{\beta \ge 0}$  is ergodic w.r.t. $L^2(\M_{\Lambda}[M], \vphi_{\Lambda, M})_+$.
\end{Thm}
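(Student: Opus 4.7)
The plan is to mimic the proof of Theorem \ref{AbstPI1}, with the key modification that Condition \ref{PIBasicAII} provides positivity of the Duhamel integrand only at a specific (possibly boundary) point of the simplex of integration, rather than at the origin. First, positivity preservation of $e^{-\beta H^{\bs \Psi}_{\Lambda, M}}$—condition (i) in the definition of ergodicity—is immediate from the proposition preceding Condition \ref{PIBasicAI}, obtained via the Trotter--Kato product formula together with Condition \ref{PPBasicA}, and holds block-by-block for each $M$-subspace since $H^{\bs \Psi}_\Lambda$ commutes with $S^{(3)}_\Lambda$.

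For condition (ii) of ergodicity, fix $\xi, \eta \in L^2(\M_{\Lambda}[M], \vphi_{\Lambda, M})_+ \setminus \{0\}$. Applying Condition \ref{PIBasicAII} yields $\beta > 0$, subsets $X_1, \dots, X_n \subseteq \Lambda$, and numbers $0 < s_1 \le \cdots \le s_n \le \beta$ such that $F(s_1, \dots, s_n) > 0$, where $F(t_1, \dots, t_n) := (-1)^n \langle \xi |\, \Phi^{\rm s}_{t_1}(X_1)\cdots \Phi^{\rm s}_{t_n}(X_n)\, e^{-\beta H^{\bs \Psi, \rm w}_\Lambda} \eta \rangle$. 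I would then reuse the Duhamel expansion \eqref{DuHa} and the OPOI inequality $-H_\Lambda^{\bs \Psi, {\rm s}} \unrhd -\Phi^{\rm s}(X)$ (together with multiplicativity of OPOIs) exactly as in the derivation of \eqref{InmInq} to obtain $\langle \xi | D_n \eta \rangle \ge \int_{\Delta_n^\beta} F(t_1, \dots, t_n)\, dt_1\cdots dt_n$, where $\Delta_n^\beta := \{(t_1, \dots, t_n) : 0 \le t_1 \le \cdots \le t_n \le \beta\}$.

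The crucial remaining step is to show that this integral is strictly positive. Since each $\Phi^{\rm s}_t(X)$ is norm-continuous in $t$ (as $\Phi^{\rm s}(X)$ is bounded and $H_\Lambda^{\bs \Psi, {\rm w}}$ is self-adjoint), $F$ is continuous in $(t_1, \dots, t_n)$, so $\{F > 0\}$ is an open neighborhood of $(s_1, \dots, s_n)$ in $\BbbR^n$. The principal departure from the proof of Theorem \ref{AbstPI1}, and the main obstacle, is that the point $(s_1, \dots, s_n)$ may lie on the boundary of $\Delta_n^\beta$ when equalities hold among the $s_i$'s or when $s_n = \beta$, whereas in Theorem \ref{AbstPI1} the corresponding point was the origin, which sits at a specific vertex of $\Delta_n^\beta$. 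However, $\Delta_n^\beta$ is the closure of a nonempty open convex subset of $\BbbR^n$, so every one of its points admits a Euclidean neighborhood whose intersection with $\Delta_n^\beta$ has positive Lebesgue measure. Combining this observation with the continuity of $F$ gives $\int_{\Delta_n^\beta} F \, dt_1 \cdots dt_n > 0$, hence $\langle \xi | D_n \eta\rangle > 0$; since $e^{-\beta H^{\bs \Psi}_\Lambda} \unrhd D_n$ (a sum of OPOI-nonnegative terms), we conclude $\langle \xi | e^{-\beta H^{\bs \Psi}_{\Lambda, M}} \eta\rangle > 0$, completing the verification of ergodicity.
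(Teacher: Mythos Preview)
Your proof is correct and follows essentially the same route as the paper's: both invoke the Duhamel expansion \eqref{DuHa}, the OPOI bound \eqref{InmInq}, nonnegativity of the integrand on the simplex, and continuity near the distinguished point supplied by Condition \ref{PIBasicAII} to conclude $\langle \xi \,|\, D_n \eta\rangle > 0$ and hence $\langle \xi \,|\, e^{-\beta H^{\bs\Psi}_{\Lambda,M}} \eta\rangle > 0$. Your explicit treatment of the possibility that $(s_1,\dots,s_n)\in\partial\Delta_n^\beta$ is more careful than the paper's terse ``because $G$ is continuous,'' but it is the same argument rather than a different one. One small technical correction: $\Phi^{\rm s}_t(X)=e^{-tH_\Lambda^{\bs\Psi,{\rm w}}}\Phi^{\rm s}(X)e^{tH_\Lambda^{\bs\Psi,{\rm w}}}$ is in general \emph{unbounded} (hence not norm-continuous) when $H_\Lambda^{\bs\Psi,{\rm w}}$ is unbounded; the continuity of $F$ on the closed simplex instead follows from the telescoped form
\[
e^{-t_1 H_\Lambda^{\bs\Psi,{\rm w}}}(-\Phi^{\rm s}(X_1))\,e^{-(t_2-t_1)H_\Lambda^{\bs\Psi,{\rm w}}}\cdots(-\Phi^{\rm s}(X_n))\,e^{-(\beta-t_n)H_\Lambda^{\bs\Psi,{\rm w}}},
\]
in which every factor is bounded and strongly continuous for $0\le t_1\le\cdots\le t_n\le\beta$.
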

\begin{proof}
The proof of Theorem \ref{AbstPI2} is almost the same as that of Theorem \ref{AbstPI1} up to the middle. Therefore, we will use the same symbols in the proof of Theorem \ref{AbstPI1} in this proof. Recall that the key inequalities in the proof are \eqref{ExD} and \eqref{InmInq}.

By  Condition \ref{PIBasicAII}, we can take $\beta \ge 0$, $n\in \BbbN_0$, $X_1, \dots, X_n$ and $s^*_1, \dots, s^*_n$ such that 
\be 
\la \xi|(-\Phi^{\rm s}_{s^*_1}(X_1)) \cdots (-\Phi^{\rm s}_{s^*_n}(X_n)) e^{-\beta H_{\Lambda}^{\bs \Psi, {\rm w}}}\eta\ra
\ee 
is strictly positive. 
With this setting, we introduce the function  $G(s_1, \dots, s_n)$ by 
\be
G(s_1, \dots, s_n)=\la \xi|(-\Phi^{\rm s}_{s_1}(X_1)) \cdots (-\Phi^{\rm s}_{s_n}(X_n)) e^{-\beta H_{\Lambda}^{\bs \Psi, {\rm w}}}\eta\ra.
\ee
Because $G(s_1, \dots, s_n)$ is continuous in $s_1, \dots, s_n$, we conclude that the right hand side of \eqref{InmInq} is strictly positive. Combining this with \eqref{ExD}, we  arrive at 
\be
\la \xi|e^{-\beta H_{\Lambda}^{\bs \Psi}} \eta\ra\ge \la \xi|D_n\eta\ra>0.
\ee
This completes the proof of Theorem \ref{AbstPI2}.
\end{proof}

Theorem \ref{AbstPI2} is employed in the proof of Lieb's theorem on the Hubbard model and its extensions; see Appendix \ref{SectB} for detail.

\subsection{Perturbation theory}\label{SecPertApp}
Suppose that we are given an interaction $\bs \Psi $ satisfying Conditions \ref{PPBasicA}.
Let $\bs \Upsilon$ be another interaction satisfying Condition \ref{PPBasicA}. In this subsection, we will examine the effect of  perturbation of $\bs \Psi$ by $\bs \Upsilon$. For this purpose, set ${\bs \Psi}+{\bs \Upsilon}:=\{\Psi(X)\dot{+}\Upsilon(X) : X\in F_{\mathbb{L}}\}$, where $\dot{+}$ stands for the form sum.  Under this setup, we readily confirm that  ${\bs \Psi}+{\bs \Upsilon}$ is an interaction. Hence, we can define  the Hamiltonian 
$H^{{\bs \Psi}+{\bs \Upsilon}}_{\Lambda}$ for each $\Lambda\in F_{\mathbb{L}}$.

Let us investigate the properties of the Hamiltonian $H^{{\bs \Psi}+{\bs \Upsilon}}_{\Lambda}$  when the perturbation terms $\bs \Upsilon$ satisfies one of the following two conditions.

\begin{Assum}\label{PertbI}\upshape $\Upsilon(X)$ is bounded and 
$-\tilde{\Upsilon}(X) \unrhd 0$ w.r.t. $L^2(\M_{\Lambda}, \vphi_{\Lambda})_+$ for all $X\subseteq \Lambda$.
\end{Assum}
We note that the assumption of boundedness of $\Upsilon(X)$ can be weakened. However, for this paper, such a generalization is unnecessary, so we will not enter into generalizing the assumption any further.

\begin{Assum}\label{PertbII}\upshape
We denote by $E_X(I)\, (I\in \mathbb{B}^1)$ be the spectral measure of $\Upsilon(X)$, where
 $\mathbb{B}^1$ stands for the Borel sets of $\BbbR$.
If $\xi, \eta\in L^2(\M_{\Lambda}, \vphi_{\Lambda})_+$ satisfies $\la \xi|\eta\ra=0$, 
then $\la \xi|\tilde{E}_X(I) \eta\ra=0$ for all $X\subseteq \Lambda$  and $I\in \mathbb{B}^1$.
\end{Assum}

The following theorem asserts that the ergodicity of the semigroup generated by the Hamiltonian $H^{\bs \Psi}_{\Lambda}$ still holds even when one adds a perturbation term.

\begin{Thm}\label{PertbThm}
Assume  that $\bs \Psi$ and $\bs \Upsilon$ satisfy Condition \ref{PPBasicA}. 
If $\bs \Psi$ satisfies Condition  \ref{PIBasicAI} or  \ref{PIBasicAII} and  $\bs \Upsilon$ satisfies  Condition \ref{PertbI} or \ref{PertbII}, then $\{e^{-\beta H_{\Lambda, M}^{\bs \Psi+\Upsilon}}\}_{\beta \ge 0}$ is ergodic  w.r.t. $L^2(\M_{\Lambda}[M], \vphi_{\Lambda, M})_+$ for each $M\in \mathrm{spec}(S_{\Lambda}^{(3)} \restriction \h_{\Lambda, N})$.

\end{Thm}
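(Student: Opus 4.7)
My plan is to reduce Theorem \ref{PertbThm} to Theorems \ref{AbstPI1} and \ref{AbstPI2} by choosing, for each combination of hypotheses, a suitable weak--strong decomposition of $H^{\bs\Psi+\bs\Upsilon}_\Lambda$ for which those theorems apply.

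The first step is a preparatory lemma: $e^{-\beta H^{\bs\Upsilon}_\Lambda}\unrhd 0$ with respect to $L^2(\M_\Lambda,\vphi_\Lambda)_+$ for every $\beta\ge 0$. Under Condition \ref{PertbI} this is immediate because $-\tilde\Upsilon(X)\unrhd 0$ is bounded, so $e^{-\beta\tilde\Upsilon(X)}=\sum_{k\ge 0}\beta^k(-\tilde\Upsilon(X))^k/k!$ is a norm-convergent sum of positivity-preserving operators. Under Condition \ref{PertbII}, the orthogonality-preservation property of the spectral projections $\tilde E_X(I)$, combined with the self-duality of the cone and the decomposition of an arbitrary vector into mutually orthogonal cone components, yields $e^{-\beta\tilde\Upsilon(X)}\unrhd 0$ via the spectral calculus. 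In either case the Trotter--Kato product formula lifts this to $e^{-\beta H^{\bs\Upsilon}_\Lambda}\unrhd 0$; injectivity is automatic from self-adjointness.

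When $\bs\Upsilon$ satisfies Condition \ref{PertbI}, I will absorb it into the strong part by setting $\Phi^{\rm s,new}(X):=\Phi^{\rm s}(X)+\Upsilon(X)$ and keeping $\Phi^{\rm w}(X)$ unchanged. Both $-\tilde\Phi^{\rm s,new}(X)\unrhd 0$ and the analogue of Condition \ref{PPBasicA} are inherited. The crucial observation is that expanding
\[
\prod_{i=1}^n\bigl(-\tilde\Phi^{\rm s,new}(X_i)\bigr)=\prod_{i=1}^n\Bigl((-\tilde\Phi^{\rm s}(X_i))+(-\tilde\Upsilon(X_i))\Bigr)
\]
yields $2^n$ products of positivity-preserving factors. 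Each such product acts non-negatively between cone vectors, and the distinguished all-$\Phi^{\rm s}$ term reproduces the strict inequality supplied by Condition \ref{PIBasicAI} (respectively \ref{PIBasicAII}) for $\bs\Psi$. Hence the hypothesis of Theorem \ref{AbstPI1} (respectively \ref{AbstPI2}) holds for the new decomposition, yielding the ergodicity of $\{e^{-\beta H^{\bs\Psi+\bs\Upsilon}_{\Lambda,M}}\}_{\beta\ge 0}$. The time-evolution analogue needed for Condition \ref{PIBasicAII} poses no additional difficulty because the weak part is unchanged.

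The main obstacle is the combination in which $\bs\Upsilon$ satisfies only Condition \ref{PertbII} and $\bs\Psi$ satisfies Condition \ref{PIBasicAII}, for then $\Upsilon(X)$ need not be bounded and $-\Upsilon(X)$ is not known to be positivity preserving, so absorption into the strong part fails. Here I will instead absorb $\Upsilon$ into the weak part, $\Phi^{\rm w,new}(X):=\Phi^{\rm w}(X)+\Upsilon(X)$, so that $e^{-\beta H^{\rm w,new}_\Lambda}\unrhd 0$ by the preparatory lemma together with Trotter--Kato. Under Condition \ref{PIBasicAI} the inequality \eqref{CondConn} does not involve the weak part and transfers verbatim, and Theorem \ref{AbstPI1} closes that subcase. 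Under Condition \ref{PIBasicAII} the time-evolved operators become $\Phi^{\rm s,new}_s(X)=e^{-sH^{\rm w,new}_\Lambda}\Phi^{\rm s}(X)e^{sH^{\rm w,new}_\Lambda}$, which differ from the original $\Phi^{\rm s}_s(X)$; my plan is to insert a fine Trotter splitting of $e^{-sH^{\rm w,new}_\Lambda}$ into the Duhamel-style lower bound \eqref{InmInq}, use the positivity-preservation and injectivity of each factor $e^{-sH^{\bs\Upsilon}_\Lambda/k}$ to control the resulting expression from below, and exploit the joint continuity of \eqref{StriInqAss} in the time parameters to propagate the strict positivity from $\bs\Psi$ to $\bs\Psi+\bs\Upsilon$.
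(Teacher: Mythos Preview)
Your treatment of the first three subcases is sound but differs from the paper. For Condition~\ref{PertbI} the paper does not re-verify Conditions~\ref{PIBasicAI}/\ref{PIBasicAII} for a new decomposition; it simply applies the Monotonicity Theorem~\ref{Mono}: since $-H^{\bs\Upsilon}_\Lambda\unrhd 0$ and $e^{-\beta H^{\bs\Psi}_{\Lambda,M}}$ is already positivity improving (resp.\ ergodic), one gets $e^{-\beta H^{\bs\Psi+\bs\Upsilon}_{\Lambda,M}}\unrhd e^{-\beta H^{\bs\Psi}_{\Lambda,M}}\rhd 0$ in one line. For Condition~\ref{PertbII} combined with~\ref{PIBasicAI}, your absorption of $\bs\Upsilon$ into the weak part is correct and in fact never uses Condition~\ref{PertbII}---only Condition~\ref{PPBasicA} for $\bs\Upsilon$ enters---so your route is actually more elementary than the paper's here. (Your preparatory lemma is also redundant: $e^{-\beta\tilde\Upsilon(X)}\unrhd 0$ is already part of the hypothesis that $\bs\Upsilon$ satisfies Condition~\ref{PPBasicA}.)

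The genuine gap is in the subcase Condition~\ref{PertbII} together with~\ref{PIBasicAII}. Your plan is to Trotter-split $e^{-tH^{\rm w,new}_\Lambda}$ and ``exploit the joint continuity of~\eqref{StriInqAss}'', but the Trotter product converges to $e^{-tH^{\rm w,new}_\Lambda}$, not to $e^{-tH^{\rm w}_\Lambda}$, so no limit recovers the strict inequality you need. More fundamentally, inserting positivity-preserving factors $e^{-tH^{\bs\Upsilon}_\Lambda/k}$ between cone vectors does \emph{not} preserve strict positivity of inner products: from $\langle\xi|A\eta\rangle>0$ and $B\unrhd 0$ one cannot infer $\langle\xi|BA\eta\rangle>0$, and injectivity of $B$ does not help. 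Your sketch never actually invokes Condition~\ref{PertbII}, which is precisely the hypothesis designed to overcome this obstruction. The paper's argument is of Faris type: one sets $K(\xi)=\{\eta\in L^2(\M_\Lambda[M],\vphi_{\Lambda,M})_+:\langle\eta|e^{-\beta H^{\bs\Psi+\bs\Upsilon}_{\Lambda,M}}\xi\rangle=0\ \forall\beta\ge 0\}$, uses Condition~\ref{PertbII} on the truncations $\Upsilon_\ell(X)=E_X([e,e+\ell))\Upsilon(X)$ to obtain $e^{+\beta\tilde\Upsilon_\ell(X)}K(\xi)\subseteq K(\xi)$ (the crucial point being that one can run the perturbation \emph{backwards} while staying in $K(\xi)$), and then peels off $\bs\Upsilon$ via Trotter and a monotone-convergence argument to get $e^{-\beta H^{\bs\Psi}_{\Lambda,M}}K(\xi)\subseteq K(\xi)$. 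The ergodicity of the unperturbed semigroup then forces $K(\xi)=\{0\}$. This invariant-subspace mechanism is the missing idea.
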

\begin{proof}
Here, the proof is given only when $\bs \Psi$ satisfies Condition \ref{PIBasicAI}. However, we can prove the case where $\bs \Psi$ satisfies  Condition \ref{PIBasicAII} as well.

First, assume that $\bs \Upsilon$ satisfies Condition \ref{PertbI}.
 By using Theorem \ref{AbstPI1}, we find that 
$
e^{-\beta H^{\bs \Psi}_{\Lambda, M}} \rhd 0 $ w.r.t. $L^2(\M_{\Lambda}[M], \vphi_{\Lambda, M})_+$ for all $\beta >0$. Hence, by applying Theorem \ref{Mono}, we obtain 
\be
e^{-\beta H^{\bs \Psi+\Upsilon}_{\Lambda, M}}  \unrhd e^{-\beta H^{\bs \Psi}_{\Lambda, M}} \rhd 0 \ \ \mbox{ w.r.t. $L^2(\M_{\Lambda}[M], \vphi_{\Lambda, M})_+$ for all $\beta >0$.}
\ee

Next, assume that $\bs \Upsilon$ satisfies  Condition \ref{PertbII}.
 We apply the method established in \cite{Faris1972,Miyao2021}.
Given an $\ell\in \BbbN$, set $\Upsilon_{\ell}(X)=E_X(I_{\ell}) \Upsilon(X)$, where
$I_{\ell}=[e, e+\ell)$ with $e=\inf \mathrm{spec}(\Upsilon(X))$. Let $\xi, \eta\in   L^2(\M_{\Lambda}[M], \vphi_{\Lambda, M})_+ \setminus\{0\}$ satisfying $\la \xi|\eta\ra=0$. By using Condition \ref{PertbII} and the functional calculus  for self-adjoints operators,
 we have
 \be
 \la \eta|e^{-\beta \tilde{\Upsilon}_{\ell}(X)} \xi\ra=0\ \ (\beta\in \BbbR,\ X\subseteq \Lambda). \label{Vanish}
 \ee
 
 Given a $\xi\in   L^2(\M_{\Lambda}[M], \vphi_{\Lambda, M})_+\setminus \{0\}$, define
 \be
 K(\xi)=\big\{\eta\in   L^2(\M_{\Lambda}[M], \vphi_{\Lambda, M})_+ : \la \eta|e^{-\beta H_{\Lambda, M}^{\bs \Psi+\Upsilon}} \xi\ra=0\ \forall \beta \ge 0\big\}.
 \ee
 Take $\eta\in K(\xi)$. By applying \eqref{Vanish}, we know that $
 \la \eta|e^{+\beta \tilde{\Upsilon}_{\ell}(X)} e^{-\beta H_{\Lambda, M}^{\bs \Psi+\Upsilon}} \xi\ra=0
 $ holds for all $\beta\ge 0$, which implies that $e^{+\beta \tilde{\Upsilon}(X)} K(\xi) \subseteq K(\xi)$.
 Because $ e^{-\beta H_{\Lambda, M}^{\bs \Psi+\Upsilon}} K(\xi)\subseteq K(\xi)$ for all $\beta \ge 0$, we obtain that $
 \big(
  e^{-\beta H_{\Lambda, M}^{\bs \Psi+\Upsilon}/n}  e^{+\beta \tilde{\Upsilon}_{\ell}(X)/n}
 \big)^n K(\xi) \subseteq K(\xi)
 $. Taking the limit $n\to \infty$, we have
 $ e^{-\beta H_{\Lambda, M}^{\bs \Psi+\Upsilon-\Upsilon_{\ell}}} K(\xi)\subseteq K(\xi)$ for all $\beta \ge 0$ and $\ell\in \BbbN$. Note that $\big\{
  H_{\Lambda, M}^{\bs \Psi+\Upsilon-\Upsilon_{\ell}} : \ell\in \BbbN
 \big\}$ is a sequence of non-decreasing self-adjoint operators. Hence, by using \cite[Theorem S. 16]{Reed1981},
 $H_{\Lambda, M}^{\bs \Psi+\Upsilon-\Upsilon_{\ell}}$ converges to $H_{\Lambda, M}^{\bs \Psi}$
  in the strong resolvent sense as $\ell\to \infty$, which implies that $e^{-\beta H_{\Lambda, M}^{\bs \Psi} }
  K(\xi)\subseteq K(\xi)$. Hence, for any $\eta\in K(\xi)$, it holds that $\la \eta| e^{-\beta H_{\Lambda, M}^{\bs \Psi} } \xi\ra=0$ for all $\beta \ge 0$. Because $e^{-\beta H_{\Lambda, M}^{\bs \Psi} }\rhd 0$ w.r.t. $  L^2(\M_{\Lambda}[M], \vphi_{\Lambda, M})_+$ for all $\beta >0$, $\eta$ must be zero. I.e., $K(\xi)=\{0\}$, which means that $\{e^{-\beta H_{\Lambda, M}^{\bs \Psi+\Upsilon}}\}_{\beta \ge 0}$ is ergodic  w.r.t. $  L^2(\M_{\Lambda}[M], \vphi_{\Lambda, M})_+$.
\end{proof}

To end this subsection, we provide a remark for determining when Condition \ref{PertbII} is satisfied:

\begin{Rem} \upshape
By applying arguments similar to those in \cite{Miyao2021}, we can prove that Condition \ref{PertbII} is satisfied if the following are fulfilled:
\begin{itemize}
\item[\rm (i)] $(\tilde{\Upsilon}(X)+\im)^{-1}\in Z(\M_{\Lambda})$, where $Z(\M_{\Lambda})$ is the center of $\M_{\Lambda}$.
\item[\rm (ii)] $\Delta_{\Lambda}^{\im t} \tilde{\Upsilon}(X) \subseteq \tilde{\Upsilon}(X) \Delta_{\Lambda}^{\im t}$ for all $t\in \BbbR$, where $\Delta_{\Lambda}$ is the modular operator associated with $\{\M_{\Lambda}, \vphi_{\Lambda}\}$.
\end{itemize}
Although this characterization looks complicated at first glance, it can be quickly confirmed in the actual applications in Appendix \ref{SectB} since $\M_{\Lambda}$ is commutative and $\Delta_{\Lambda}=1$.

\end{Rem}

\section{Ergodic properties of semigroups generated by various Hamiltonians}\label{SectB}
\subsection{Results}

In this appendix, we explain the ergodic  properties of the semigroups generated by the Hamiltonians discussed in this paper.
The actual proof of the ergodic property of a semigroup generated by a specific Hamiltonian is often technically complicated.
However, for all the Hamiltonians discussed in this paper, the general theory given in Appendix \ref{SectA} can be applied. Therefore, we will not give the details of the proofs here but will only clarify which of the results in Appendix \ref{SectA} applies to each Hamiltonian.
For readers who want to know the details of the proof, we give references as appropriate.
In the outline of the proofs given in Subsection \ref{PfErgAp}, care is taken to recognize the correspondence between the descriptions in the references and the descriptions in this paper.

In this appendix, we continue to use the symbols from Sections \ref{Sect5} and \ref{Sect6}.
The following theorem summarizes the properties of the semigroups generated by the Hamiltonians appearing in Sections \ref{Sect5} and \ref{Sect6}:
\begin{Thm}\label{ManyHa}
We have the following:
\begin{itemize}
\item[\rm (i)]  $e^{-\beta H^{\rm MLM}_{\Lambda, M}} \rhd 0$ w.r.t. $L^2(\M_{\Lambda}^{\rm MLM}[M], \vphi^{\rm MLM}_{\Lambda, M})_+$ for each $M\in \mathrm{spec}(S_{\Lambda}^{(3)} \restriction \h_{\Lambda})$ and $\beta>0$.
\item[\rm (ii)]  $e^{-\beta H^{\rm Hei}_{\Lambda, M}} \rhd 0$ w.r.t. $L^2(\M_{\Lambda}^{\rm MLM}[M], \vphi^{\rm MLM}_{\Lambda, M})_+$ for each $M\in \mathrm{spec}(S_{\Lambda}^{(3)} \restriction \h_{\Lambda})$ and $\beta>0$.
\item[\rm (iii)]  $\{e^{-\beta H^{\rm H}_{\Lambda, M}} \}_{\beta\ge 0}$ is ergodic w.r.t. $L^2(\M_{\Lambda}^{\rm H}[M], \vphi^{\rm H}_{\Lambda, M})_+$ for each $M\in \mathrm{spec}(S_{\Lambda}^{(3)} \restriction \h_{\Lambda}^{\rm H})$.
\item[\rm (iv)]  $\{e^{-\beta H^{\rm HH}_{\Lambda, M}} \}_{\beta \ge 0}$ is ergodic w.r.t. $L^2(\M_{\Lambda}^{\rm HH}[M], \vphi^{\rm HH}_{\Lambda, M})_+$ for each $M\in \mathrm{spec}(S_{\Lambda}^{(3)} \restriction \h^{\rm H}_{\Lambda})$.
\item[\rm (v)]  $\{e^{-\beta H^{\rm K}_{\Lambda, \sharp, M}} \}_{\beta\ge 0}$ is ergodic w.r.t. $L^2(\M_{\Lambda}^{\rm K}[M], \vphi^{\rm K}_{\Lambda, \sharp,  M})_+$ for each $M\in \mathrm{spec}(S_{\Lambda}^{(3)} \restriction \h^{\rm K}_{\Lambda})$  and $\sharp={\rm AF}, {\rm F}$.
\item[\rm (vi)]  $\{e^{-\beta H^{\rm KH}_{\Lambda, \sharp, M}} \}_{\beta \ge 0}$ is ergodic w.r.t. $L^2(\M_{\Lambda}^{\rm KH}[M], \vphi^{\rm KH}_{\Lambda, \sharp,  M})_+$ for each $M\in \mathrm{spec}(S_{\Lambda}^{(3)} \restriction \h^{\rm K}_{\Lambda})$ and $\sharp={\rm AF}, {\rm F}$.
\item[\rm (vii)] 
  $e^{-\beta H^{\rm H, \infty}_{\Lambda,  M}} \rhd 0$ w.r.t. $L^2(\M_{\Lambda}^{\rm NT}[M], \vphi^{\rm NT}_{\Lambda, M})_+$ for each $M\in \mathrm{spec}(S_{\Lambda}^{(3)} \restriction \h^{\rm NT}_{\Lambda})$ and  $\beta>0$.
 \item[\rm (viii)] $e^{-\beta H^{\rm HH, \infty}_{\Lambda,  M}} \rhd 0$ w.r.t. $L^2(\N_{\Lambda}^{\rm HH}[M], \mu^{\rm HH}_{\Lambda,   M})_+$ for each $M\in \mathrm{spec}(S_{\Lambda}^{(3)} \restriction \h^{\rm NT}_{\Lambda})$ and  $\beta>0$.
\end{itemize}
\end{Thm}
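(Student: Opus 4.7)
\begin{Proof}[Proof plan]
The plan is to verify, for each of the eight Hamiltonians in turn, the hypotheses of one of the three abstract results in Appendix \ref{SectA}: Theorem \ref{AbstPI1} (which yields $e^{-\beta H_M}\rhd0$ and is used for items (i), (ii), (vii), (viii)), Theorem \ref{AbstPI2} (which yields ergodicity and is used for items (iii), (v)), and the perturbation theorem \ref{PertbThm} (used for items (iv), (vi)). Throughout, the interaction $\Phi(X)$ is split as $\Phi^{\rm s}(X)+\Phi^{\rm w}(X)$, where $-\Phi^{\rm s}(X)\unrhd 0$ with respect to the relevant self-dual cone and $\Phi^{\rm w}(X)$ is the remaining (diagonal-type) part. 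Condition \ref{PPBasicA} is checked by a direct computation showing that both pieces are positivity preserving in the OPOI sense after acting on the appropriate standard form, which for the models of Section \ref{Sect5} reduces to Proposition \ref{AJAJP} applied to the generators, and for the NT-type models of Section \ref{Sect6} follows from the diagonal action of the hopping terms on the basis $\{|{\bs\sigma}\rangle_\Lambda\}$.

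For (i) and (ii), I would write ${\bs S}_x\cdot{\bs S}_y=S^{(3)}_xS^{(3)}_y+\tfrac12(S^+_xS^-_y+S^-_xS^+_y)$, identify the diagonal term as $\Phi^{\rm w}$ and the spin-flip term as $\Phi^{\rm s}$ (up to sign), and verify Condition \ref{PIBasicAI} by exhibiting, for any basis vectors $|X,\overline X\rangle_\Lambda$ and $|Y,\overline Y\rangle_\Lambda$ in a given $M$-sector, a sequence of bond operators $\{S^+_xS^-_y+S^-_xS^+_y\}$ that transports $X$ to $Y$. For the Heisenberg case this uses Condition \ref{CoupAss} to ensure that the edges $E^J$ contain the normal spanning tree $T(G)$, so the requisite sequence can be built inductively along this tree. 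The MLM case is the complete bipartite specialization of this argument.

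For (iii)--(vi), the spin-flip mechanism is no longer available on the Hubbard/Kondo Hilbert space, so I will use Theorem \ref{AbstPI2}: the kinetic term $\sum t_{xy}c^*_{x\sigma}c_{y\sigma}$ together with a hole-particle transformation on the down spins turns the hopping into an operator of the form $a\, J_\Lambda a\, J_\Lambda$ (up to the Marshall-type sign determined by the bipartition), which is $\unrhd 0$; this is the operator-algebraic reformulation of Lieb's spin reflection positivity. The strict positivity in \eqref{StriInqAss} then follows by combining a single hopping event with an interval of free evolution under $\Phi^{\rm w}$, where the latter involves the Coulomb term rewritten as $(n_{x\uparrow}-\tfrac12)(n_{y\downarrow}-\tfrac12)$-type quadratic forms whose positive definiteness is exactly the hypothesis on $\{U_{xy}\}$ (resp.\ $\{U_{{\rm eff},xy}\}$ after the Gaussian integration of phonon coordinates). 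For (iv) and (vi), the phonon terms $g_{xy}(n_x-1)(b_y^*+b_y)+\omega b_y^*b_y$ are handled either by completing the square to reduce effectively to the Hubbard case (giving $U_{\rm eff}$), or by viewing them as a perturbation satisfying Condition \ref{PertbII}, since in the Schr\"odinger representation these are functions in the abelian factor $L^\infty(\mathbb R^{|\Lambda|})$ and hence belong to the centre of $\M_\Lambda$ and commute with the trivial modular operator. For (v)--(vi) the only extra input is the choice of bipartite structure \eqref{BipA} or \eqref{BipB} according to $\mathrm{sgn}(J)$, which determines whether $-J{\bs S}^c_x\cdot{\bs S}^f_x$ is $\unrhd 0$ with respect to $L^2(\N^{\rm MLM}_\Lambda,\vphi^{\rm MLM}_{\Lambda,\sharp})_+$.

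For (vii) and (viii), I apply Theorem \ref{AbstPI1} on $\h^{\rm NT}_\Lambda$: the effective Hamiltonian $H^{\rm H,\infty}_\Lambda=P^{\rm G}_\Lambda H^{\rm H}_{\Lambda,U=0}P^{\rm G}_\Lambda$ has no diagonal part, so $\Phi^{\rm w}=0$ and the whole Hamiltonian is $\Phi^{\rm s}$. Each term $-t_{xy}c^*_{x\sigma}c_{y\sigma}P^{\rm G}_\Lambda$ with $t_{xy}>0$ acts on $|{\bs\sigma}\rangle_\Lambda$ by the hole swap $\bs\sigma\mapsto S_{x({\bs\sigma})}({\bs\sigma})$ (up to a purely positive scalar, thanks to the sign convention built into $|{\bs\sigma}\rangle_\Lambda$ and to Condition \ref{ConnAss}(i)), so $-\Phi^{\rm s}\unrhd 0$ on $L^2(\M^{\rm NT}_\Lambda,\vphi^{\rm NT}_\Lambda)_+$. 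Condition \ref{PIBasicAI} then reduces to the connectivity of the exchange graph $\mathcal G_\Lambda(M)$ compatible with $G^t$, which is exactly Condition \ref{ConnAss}. The Holstein-modified case (viii) follows either by the same completing-the-square trick (the phonon coordinates decouple on $\h^{\rm NT}_\Lambda$ because the hopping is spin-preserving and the coupling $g_{xy}$ is to $n_x-1$, which acts as a function of $\bs\sigma$) or directly by Theorem \ref{PertbThm} with Condition \ref{PertbII}.

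The main obstacle is the bookkeeping of fermion signs in items (iii)--(vi): the hopping $c^*_{x\sigma}c_{y\sigma}$ is positivity-preserving with respect to $L^2(\M^{\rm H}_\Lambda,\vphi^{\rm H}_\Lambda)_+$ only after the standard hole-particle transformation on one spin species combined with a Marshall-type sign twist adapted to the bipartite structure; verifying this requires explicit computation in the basis $\{|X,\overline Y\rangle_\Lambda\}$ introduced in Section \ref{Sect5}, together with a careful use of Proposition \ref{AJAJP} to rewrite $c^*_{x\uparrow}c_{y\uparrow}+\mathrm{h.c.}$ in the form $a J_\Lambda aJ_\Lambda$ modulo a $\Phi^{\rm w}$-type remainder. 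The details of this computation, together with the verification that the positive semi-definiteness (resp.\ positive definiteness) of $\{U_{xy}\}$ (resp.\ $\{U_{{\rm eff},xy}\}$) gives the strict inequality \eqref{StriInqAss}, are the technical core of the argument and follow the line developed in \cite{Miyao2016,Miyao2021-2,Miyao2017}.
\end{Proof}
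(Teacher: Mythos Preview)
Your high-level plan of routing each item through Theorem~\ref{AbstPI1}, Theorem~\ref{AbstPI2}, or Theorem~\ref{PertbThm} matches the paper's architecture, and your treatment of (i), (ii) and (vii) is essentially correct. However, for the Hubbard-type models (iii)--(vi) you have the roles of $\Phi^{\rm s}$ and $\Phi^{\rm w}$ reversed, and this is not a cosmetic issue: it is the heart of spin reflection positivity.

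After the hole--particle transformation $W$ of \eqref{PHT}, the hopping term $\sum_\sigma t_{xy}c_{x\sigma}^*c_{y\sigma}$ does \emph{not} become an operator of the form $aJ_\Lambda aJ_\Lambda$. In the Hilbert--Schmidt identification $\hat\h^{\rm H}_\Lambda[M]\cong\mathscr L^2(\F_{\Lambda,M^\dagger})$ used in the paper, the up-spin hopping acts by left multiplication and the down-spin hopping by right multiplication, so the kinetic term takes the shape $A\otimes 1+1\otimes\bar A$. Such an operator generates a positivity-preserving semigroup ($\rho\mapsto e^{-\beta A}\rho e^{-\beta A^*}$), but its negative is \emph{not} $\unrhd 0$; hence the hopping belongs to $\Phi^{\rm w}$. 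It is the Coulomb interaction that, after hole--particle, produces the cross term $-U_{xy}n_{x\uparrow}n_{y\downarrow}$, which in the same picture is $-U_{xy}\,n_x\otimes\bar n_y$ and is genuinely of $aJ_\Lambda aJ_\Lambda$ type; this is $\Phi^{\rm s}$. The positive definiteness of $\{U_{xy}\}$ (resp.\ $\{U_{{\rm eff},xy}\}$) enters precisely to make the products of these $\Phi^{\rm s}$'s, sandwiched by the hopping-generated free evolution, strictly positive in \eqref{StriInqAss}. Your paragraph has Coulomb sitting in $\Phi^{\rm w}$ and hopping supplying the strict positivity, which would not satisfy the hypotheses of Theorem~\ref{AbstPI2}. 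The paper carries this out for (iii)--(vi) via the hole--particle transformation (and, for (iv) and (vi), an additional Lang--Firsov transformation that replaces $U$ by $U_{\rm eff}$ and dresses the hopping with phase factors), always with $\Phi^{\rm s}=$ Coulomb-type and $\Phi^{\rm w}=$ hopping-type; see the explicit decompositions in Subsection~\ref{PfErgAp}.

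There is also a gap in your plan for (viii). You propose Theorem~\ref{AbstPI1}, i.e.\ Condition~\ref{PIBasicAI}, with $\Phi^{\rm s}$ the projected hopping. But on $\h^{\rm NT}_\Lambda\otimes\F^{\rm ph}_\Lambda$ the hopping acts trivially on the phonon factor, so no product of hopping operators alone can connect two positive vectors whose $L^2(\mathbb R^{|\Lambda|})$-components have disjoint supports; Condition~\ref{PIBasicAI} fails. The paper instead places the phonon kinetic term $\omega b_x^*b_x$ in $\Phi^{\rm w}$ and uses that $e^{-\beta\omega b_x^*b_x}\rhd 0$ on $L^2(\mathbb R^{|\Lambda|})_+$ to verify Condition~\ref{PIBasicAII}; it then treats the electron--phonon coupling and the residual Coulomb as a perturbation satisfying Condition~\ref{PertbII} (they are multiplication operators in the abelian factor), applies Theorem~\ref{PertbThm} to obtain ergodicity, and finally upgrades to $\rhd 0$ using that $\N^{\rm HH}_\Lambda$ is abelian. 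Your alternative suggestion via Theorem~\ref{PertbThm} with Condition~\ref{PertbII} is on the right track, but it must be coupled with Condition~\ref{PIBasicAII} for the unperturbed part, not~\ref{PIBasicAI}.
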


\subsection{Proof of Theorem \ref{ManyHa}} \label{PfErgAp}
\subsubsection*{Outline of   proofs of (i) and (ii)}
Given $X, Y\subseteq \Lambda$, we set
\be
|X, Y)_{\Lambda}=
\Bigg[
\prod'_{x\in X} c_{x\up}^*
\Bigg]
\Bigg[
\prod'_{y\in Y} c_{y\down}^*
\Bigg]|\varnothing\ra_{\Lambda}.
\ee
We define the {\it hole-particle transformation} $W: \h_{\Lambda, |\Lambda|} \to \h_{\Lambda, |\Lambda|}$ by
\be
W|X, \overline{Y}\ra_{\Lambda}=|X, Y)_{\Lambda}\ \  (X, Y\subseteq \Lambda,\ |X|=|Y|),
\ee
where $|X, Y\ra_{\Lambda}$ is defined by \eqref{DefCONS}.
We readily confirm that 
\be
Wc_{x\up}W^{-1}=c_{x\up},\ \ \ W c_{x\down} W^{-1}=\gamma_x c_{x\down}^*,\label{PHT}
\ee
where $\gamma_x=+1$ if $x\in \Lambda_A$, $\gamma_x=-1$ if $x\in \Lambda_B$.
In what follows, the restriction of $W$ to $\h_{\Lambda}$ is denoted by the same symbol.
Given a linear operator $A$ on $\h_{\Lambda, |\Lambda|}$, we also use the notaton $\hat{A}=WAW^{-1}$.
Using \eqref{PHT}, we obtain $\hat{Q}_{\Lambda}=\prod_{x\in \Lambda}(n_x-1)^2$. Hence, by \eqref{HSingE},
we find that $\hat{\h}_{\Lambda}=W\h_{\Lambda}=\hat{Q}\h_{\Lambda, |\Lambda|}$.
Furthermore, if we set $\Cone^{\rm MLM}_{\Lambda}=WL^2(\M^{\rm MLM}_{\Lambda}, \vphi^{\rm MLM}_{\Lambda})_+ $, we see that 
$
\Cone^{\rm MLM}_{\Lambda}=\mathrm{coni}\{|X, X)_{\Lambda} : X\subseteq \Lambda\},
$
where, for a given set $S$, $\mathrm{coni}(S)$ stands for the conical hull of $S$.
The hole-particle transformed Heisenberg Hamiltonian, $\hat{H}^{\rm Hei}_{\Lambda}$ can be written as follows:
\be
\hat{H}^{\rm Hei}_{\Lambda}=\sum_{X\subset \Lambda; |X|=2} \Phi(X),
\ee
where $\Phi(X)$ is given by $\Phi(X)=\Phi^{\rm s}(X)+\Phi^{\rm w}(X)$ with
\be
\Phi^{\rm s}(\{x, y\})=J_{xy}(\hat{S}_x^{(1)} \hat{S}_y^{(1)}+\hat{S}_x^{(2)} \hat{S}_y^{(2)}),\ \ \Phi^{\rm w}(\{x, y\})=J_{xy}\hat{S}_x^{(3)} \hat{S}_y^{(3)}.
\ee
Under the settings described above, using Theorem \ref{AbstPI1}, we obtain (ii) of Theorem \ref{ManyHa}.
For a more detailed proof, see \cite[Theorem B.5]{Miyao2019}.
Since the MLM Hamiltonian is a particular case of the Heisenberg Hamiltonian, it means that (i) of Theorem \ref{ManyHa} has been proved simultaneously.

\subsubsection*{Outline of   proof of (iii)}
 When we put $\hat{\h}^{\rm H}_{\Lambda}=W\h_{\Lambda}^{\rm H}$, using the properties of the hole-particle transformation, we get that $\hat{\h}^{\rm H}_{\Lambda}=\h^{\rm H}_{\Lambda}$.
Similarly, for each $M$-subspace $\h^{\rm H}_{\Lambda}[M]\, (M\in \mathrm{spec}(S_{\Lambda}^{(3)} \restriction \h^{\rm H}_{\Lambda}))$, we set $\hat{\h}^{\rm H}_{\Lambda}[M]=W\h^{\rm H}_{\Lambda}[M]$.
 To make the symbol simpler, set $\Cone^{\rm H}_{\Lambda}[M]=WL^2(\M^{\rm H}_{\Lambda}[M], \vphi^{\rm H}_{\Lambda, M})_+$.
 Due to the fact $W|X, Y\ra_{\Lambda}=|X, \overline{Y})_{\Lambda}$,
any vector $\psi$ in $\hat{\h}^{\rm H}_{\Lambda}[M]$ can be expressed as $\psi=\sum_{X, Y\in\mathcal{Q}_{\Lambda}[M]}\psi_{XY}|X, Y)_{\Lambda}$, where $\mathcal{Q}_{\Lambda}[M]=\{X\subseteq \Lambda : |X|=M^{\dagger}\}$ with $M^{\dagger}=M+|\Lambda|/2$.
A necessary and sufficient condition for a vector $\psi\in \hat{\h}^{\rm H}_{\Lambda}[M]$ to belong to $\Cone^{\rm H}_{\Lambda}[M]$ is that the matrix $\{\psi_{XY} : X, Y\in \mathcal{Q}_{\Lambda}[M]\}$ be a positive semi-definite matrix.
We will now clarify an alternative representation of this fact. This other representation is more tractable in proving the claims that we  desire.
Using the identification between fermionic Fock spaces used in Lemma \ref{VolSys}, we see that \footnote{
To show \eqref{HilIdn}, we use the following fact: if $N_{\Lambda}$ is the number operator of electrons (i.e., 
$N_{\Lambda}=\sum_{x\in \Lambda} n_x$), 
$WS^{(3)}_{\Lambda}W^{-1}=(N_{\Lambda}-|\Lambda|)/2$ and $WN_{\Lambda}W^{-1}=2S^{(3)}_{\Lambda}+|\Lambda|$.
}
\be
\hat{\h}^{\rm H}_{\Lambda}[M]=\bigoplus_{M} \F_{\Lambda, M^{\dagger}} \otimes \F_{\Lambda, M^{\dagger}}.\label{HilIdn}
\ee
 Note that it is essential here that $|\Lambda|$ is even.
 In general, for a given Hilbert space $\X$, we have the identification $\X\otimes \X=\mathscr{L}^2(\X)$, where
 $\mathscr{L}^2(\X)$ is the set of Hilbert--Schmidt operators on $\X$.\footnote{
 More precisely, there exists a unitary operator, $\tau$,  from $\X\otimes \X$ to $\mathscr{L}^2(\X)$ such that 
 $\tau x\otimes y=|x\ra\la \vartheta y|$, where $\vartheta$ is some antiunitary operator on $\X$.
 }
 Combining this with \eqref{HilIdn}, we have
 \be
 \hat{\h}^{\rm H}_{\Lambda}[M]=\bigoplus_{M} \mathscr{L}^2(\F_{\Lambda, M^{\dagger}}).
 \ee
 Furthermore, under this identification, we know that 
 \be
 \Cone^{\rm H}_{\Lambda}[M]=\bigoplus_M \mathscr{L}^2(\F_{\Lambda, M^{\dagger}})_+, 
 \ee
 where $\mathscr{L}^2(\F_{\Lambda, M^{\dagger}})_+$ stands for  the set of all positive operators in $\mathscr{L}^2(\F_{\Lambda, M^{\dagger}})$.
 The hole-particle transformed Hubbard model can be written as follows:
 \be
 \hat{H}^{\rm H}_{\Lambda}=\sum_{|X|=2; X\subseteq \Lambda}\Phi(X), \label{TildeH}
 \ee
 where the interaction $\Phi(X)$ is given by $\Phi(X)=\Phi^{\rm s}(X)+\Phi^{\rm w}(X)$ with
 \be
 \Phi^{\rm s}(\{x, y\})=- U_{xy} n_{x\up} n_{y\down},\ \ \Phi^{\rm w}(\{x, y\})=
 \sum_{\sigma=\up, \down} \Big(t_{xy}c_{x\sigma}^*c_{y\sigma}+\frac{1}{2} U_{xy}n_{x\sigma}n_{y\sigma}\Big).
 \ee
 Note that, as is well known, the hole-particle transformation converts the repulsive model to the attractive model.
 Under this expression, by applying Theorem \ref{AbstPI2}, we obtain (iii) of Theorem \ref{ManyHa}. See \cite{Miyao2012} for the details of the proof.
 
\subsubsection*{Outline of   proof of (iv)}
If  the hole-particle transformation is denoted as $W$, we set $\hat{\h}^{\rm HH}_{\Lambda}=W\h^{\rm HH}_{\Lambda},  \hat{\h}^{\rm HH}_{\Lambda}[M]=W\h^{\rm HH}_{\Lambda}[M]$. Also, set $\Cone^{\rm HH}_{\Lambda}[M]=WL^2(\M^{\rm HH}_{\Lambda}[M], \vphi^{\rm HH}_{\Lambda, M})_+$.
Using the identification \eqref{IdnLF}, the Hilbert space $\hat{\h}^{\rm HH}_{\Lambda}[M]$ can be expressed as $\hat{\h}^{\rm HH}_{\Lambda}[M]=\int^{\oplus}_{\BbbR^{|\Lambda|}} \hat{\h}^{\rm H}_{\Lambda}[M]d{\bs q}$.
 Combining this with \eqref{vNTensor}, we get the fiber direct integral decomposition:
$\Cone^{\rm HH}_{\Lambda}[M] =\int^{\oplus}_{\BbbR^{|\Lambda|}} \Cone^{\rm H}_{\Lambda}[M] d{\bs q}$.\footnote{For the definition of the fiber-direct integral of self-dual cones, see \cite[Appendix I]{Miyao2019}.}

If we denote $\hat{H}^{\rm HH}_{\Lambda}$ as the hole-particle transformation of $H^{\rm HH}_{\Lambda}$,  $\hat{H}^{\rm HH}_{\Lambda}$ can be expressed as follows: 
\be
\hat{H}^{\rm HH}_{\Lambda}=\hat{H}^{\rm H}_{\Lambda}+\sum_{x, y\in \Lambda} g_{xy}(n_{x\up}-n_{y\down})
(b_y^*+b_y)+\sum_{x\in \Lambda} \omega b_x^*b_x,
\ee
where $\hat{H}^{\rm H}_{\Lambda}$ is given by \eqref{TildeH}.
The Lang--Firsov transformation $G$ is the unitary operator on $\hat{\h}^{\rm HH}_{\Lambda}$  defined by $G=e^{-i \frac{\pi}{2}N_{\rm ph}} e^L$,  where 
\be
L=-\im \omega^{-3/2} \sum_{x, y\in \Lambda} g_{xy}(n_{x\up}-n_{y\down})p_y,\ \ p_y=\im \sqrt{\frac{\omega}{2}}
\overline{(b_x^*-b_x)}.
\ee
Here, for a given closable operator $A$, $\overline{A}$ stands for the closure of $A$. 
If we denote $G \hat{H}^{\rm HH}_{\Lambda} G^{-1}$ by $K^{\rm HH}_{\Lambda}$, then $K^{\rm HH}_{\Lambda}$  can be expressed as follows: 
\be
K^{\rm HH}_{\Lambda}=\sum_{X\subseteq \Lambda} \Phi(X),
\ee
where $\Phi(X)$ is given by $\Phi(X)=\Phi^{\rm s}(X)+\Phi^{\rm w}(X)$ with
\begin{align}
\Phi^{\rm s}(X)&=\begin{cases}
-U_{{\rm eff},  xy} n_{x\up}n_{y\down} & \mbox{if $X=\{x ,y\}\subseteq \Lambda$}\\
0 & \mbox{otherwise},
\end{cases}\\
\Phi^{\rm w}(X) &=\begin{cases}
\omega b_x^*b_x & \mbox{if $X=\{x\}\subseteq \Lambda$}\\
\sum_{\sigma=\up, \down} (t_{xy} e^{\im \theta_{xy, \sigma}} c_{x\sigma}^*c_{y\sigma} +\frac{U_{{\rm eff}, xy}}{2} n_{x\sigma}n_{y\sigma}) & \mbox{if $X=\{x, y\}\subseteq \Lambda$}\\
0 & \mbox{otherwise}.
\end{cases}\label{HHPhi}
\end{align}
If we set $\kappa_{xy}=\sqrt{2} \omega^{-1/2}\sum_{z\in \Lambda} (g_{xz}-g_{yz})q_z$ with $q_x=\overline{(b_x^*+b_x)}/\sqrt{2\omega}$, 
the phase factor $\theta_{xy, \sigma}$ appearing in \eqref{HHPhi} is defined by 
$\theta_{xy, \sigma}=\kappa_{xy}$ if $\sigma=\up$, $\theta_{xy, \down}=-\kappa_{xy}$ if $\sigma=\down$.
In the above setting, we can apply Theorem \ref{AbstPI2}, which results in (iv) of Theorem \ref{ManyHa}. See \cite{Miyao2017} for the details of the proof.

\subsubsection*{Outline of   proof of (v)}
The Hilbert space $\h^{\rm K}_{\Lambda}$ of the Kondo lattice system is defined by replacing $\Lambda$ with $\Lambda\sqcup \Lambda$ in the definition of $\h^{\rm H}_{\Lambda}$.
Recall that the bipartite structure of the underlying graph is given by \eqref{AFBI} or \eqref{FBI}, depending on whether the sign of the interaction constant $J$ is positive or negative. 
 Therefore, there are also two kinds of hole-particle transformations, $W_{\rm AF}$ and $W_{\rm F}$, corresponding to each. 
 The annihilation operators are transformed as follows:
 \be
 W_{\sharp}c_{x\up}W^{-1}_{\sharp}=c_{x\up}, \ W_{\sharp}f_{x\up}W^{-1}_{\sharp}=f_{x\up},
 W_{\sharp}c_{x\down}W^{-1}_{\sharp}=\gamma_x c_{x\down}^*,\ W_{\sharp}f_{x\down}W^{-1}_{\sharp}= \alpha_{\sharp}\gamma_x f_{x\down}^*,
 \ee
 where $\alpha_{\sharp}=+1$ if $\sharp=\rm AF$, $\alpha_{\sharp}=-1$ if $\sharp=\rm F$. Comparing these equations with \eqref{PHT}, we see that the difference between the hole-particle transformations in the Kondo system and the ordinary  system becomes more apparent.
 We now define $\hat{\h}^{\rm K}_{\Lambda, \sharp}=W_{\sharp} \h^{\rm K}_{\Lambda}$, $\hat{\h}^{\rm K}_{\Lambda, \sharp}[M]=W_{\sharp}\h^{\rm K}_{\Lambda}[M]$
 and $\Cone^{\rm K}_{\Lambda, \sharp}[M]=W_{\sharp}L^2(\M_{\Lambda}^{\rm K}[M], \vphi^{\rm K}_{\Lambda, \sharp})_+\ (\sharp={\rm AF},  {\rm F})$.
 The hole-particle transformed Hamiltonian $\hat{H}^{\rm K}_{\Lambda, \sharp}=W_{\sharp} H^{\rm K}_{\Lambda, \sharp}W^{-1}_{\sharp}$ is expressed as follows:
 \be
 \hat{H}^{\rm K}_{\Lambda, \sharp}=\sum_{X\subseteq \Lambda} \Phi(X),
 \ee
 where $\Phi(X)=\Phi^{\rm s}(X)+\Phi^{\rm w}(X)$ with 
 \begin{align}
 \Phi^{\rm s}(X)&=\begin{cases}
 -\frac{|J|}{2} (c_{x\up}^* f_{x\up} c_{x\down}^*f_{x\down}+f_{x\up}^*c_{x\up}f_{x\down}^*c_{x\down}) & \mbox{if $|X|=\{x\}\subseteq \Lambda$}\\
 -2 U_{xy} n_{x\up}^c n_{y\down}^{c} & \mbox{if $X=\{x, y\}\subseteq \Lambda$}\\
 0 & \mbox{otherwise},
 \end{cases}\\
 \Phi^{\rm w}(X) &=\begin{cases}
 \frac{J}{4}(n_x^c-1)(n_{x}^f-1) & \mbox{if $X=\{x\}\subseteq \Lambda$}\\
 \sum_{\sigma=\up, \down}(t_{xy} c_{x\sigma}^*c_{y\sigma}+U_{xy}n_{x\sigma}^cn_{y\sigma}^c) & \mbox{if $X=\{x, y \}\subseteq \Lambda$}\\
 0 & \mbox{otherwise}.
 \end{cases}
 \end{align}
 By applying Theorem \ref{AbstPI2}, we obtain (v) of Theorem \ref{ManyHa}. See Ref. \cite{Miyao2021-2} for the details of the proof.\footnote{
 More precisely, in Ref. \cite[Theorem 3.23]{Miyao2021-2}, we analyze the Hamiltonian $H^{\rm KH}_{\Lambda}$ taking into account the interaction between electrons and lattice vibrations. However, the result obtained in  \cite{Miyao2021-2} is also valid in the absence of electron-phonon interaction, and thus we obtain (v) of Theorem \ref{ManyHa}.
 }

\subsubsection*{Outline of   proof of (vi)}
As before, let $\hat{\h}^{\rm KH}_{\Lambda, \sharp}=W_{\sharp} \h^{\rm KH}_{\Lambda}$, $\hat{\h}^{\rm KH}_{\Lambda, \sharp}[M]=W_{\sharp} \h^{\rm KH}_{\Lambda}[M]$ and $\Cone^{\rm KH}_{\Lambda, \sharp}=W_{\sharp} L^2(\M^{\rm KH}_{\Lambda}, \vphi^{\rm KH}_{\Lambda, \sharp})_+$. Then, using the identification \eqref{IdnLF}, we obtain 
$\hat{\h}^{\rm KH}_{\Lambda, \sharp}[M]=\int^{\oplus}_{\BbbR^{|\Lambda|}} \hat{\h}^{\rm K}_{\Lambda, \sharp}[M] d{\bs q}$ and  $\Cone^{\rm KH}_{\Lambda, \sharp}[M]=\int^{\oplus}_{\BbbR^{|\Lambda|}} \Cone^{\rm K}_{\Lambda, \sharp}[M] d{\bs q}$.
As in the case of the Holstein--Hubbard model, we first apply the hole-particle transformation and then the Lang--Firsov transformation to the Hamiltonian $H^{\rm KH}_{\Lambda, \sharp}$ to obtain the Hamiltonian $K^{\rm KH}_{\Lambda, \sharp}$ defined by
\be
K^{\rm KH}_{\Lambda, \sharp}=\sum_{X\subseteq \Lambda} \Phi(X),
\ee
where $\Phi(X)=\Phi^{\rm s}(X)+\Phi^{\rm w}(X)$ with 
 \begin{align}
 \Phi^{\rm s}(X)&=\begin{cases}
 -\frac{|J|}{2} (c_{x\up}^* f_{x\up} c_{x\down}^*f_{x\down}+f_{x\up}^*c_{x\up}f_{x\down}^*c_{x\down}) & \mbox{if $|X|=\{x\}\subseteq \Lambda$}\\
 -2 U_{\mathrm{eff}, xy} n_{x\up}^c n_{y\down}^{c} & \mbox{if $X=\{x, y\}\subseteq \Lambda$}\\
 0 & \mbox{otherwise},
 \end{cases}\\
 \Phi^{\rm w}(X) &=\begin{cases}
\omega b_x^*b_x+ \frac{J}{4}(n_x^c-1)(n_{x}^f-1) & \mbox{if $X=\{x\}\subseteq \Lambda$}\\
 \sum_{\sigma=\up, \down}(t_{xy}e^{\im \theta_{xy, \sigma}} c_{x\sigma}^*c_{y\sigma}+U_{\mathrm{eff}, xy}n_{x\sigma}^cn_{y\sigma}^c) & \mbox{if $X=\{x, y\}\subseteq \Lambda$}\\
 0 & \mbox{otherwise}.
 \end{cases}
 \end{align}
By applying Theorem \ref{AbstPI2}, we obtain (vi) of Theorem \ref{ManyHa}. See Ref. \cite[Theorem 3.23]{Miyao2021-2} for the details of the proof.\footnote{
 In \cite{Miyao2021-2}, only the case of $M=0$ is considered, but the proof method used there can be applied to the general case of $M$.}

\subsubsection*{Outline of   proof of (vii)}
 The Hamiltonian $H_{\Lambda}^{\rm H, \infty}$ can be expressed as 
\be
H_{\Lambda}^{\rm H, \infty}=\sum_{|X|=2; X\subseteq \Lambda} \Phi(X),
\ee
where $\Phi(X)$ is given by $\Phi(X)=\Phi^{\rm s}(X)+\Phi^{\rm w}(X)$ with
\be
\Phi^{\rm s}(\{x, y\})=\sum_{\sigma=\up, \down} t_{xy} P^{\rm G}_{\Lambda} c_{x\sigma}^*c_{y\sigma}P^{\rm G}_{\Lambda},\ \ 
\Phi^{\rm w}(\{x, y\})=\frac{U_{xy}}{2} (n_x-1)(n_y-1)P^{\rm G}_{\Lambda}.
\ee
Because $-P^{\rm G}_{\Lambda}c_{x\sigma}^*c_{y\sigma } P^{\rm G}_{\Lambda} \unrhd 0$ w.r.t. $L^2(\M^{\rm NT}_{\Lambda}[M], \vphi^{\rm NT}_{\Lambda, M})_+$, we have $-\Phi^{\rm s}(\{x, y\}) \unrhd 0$ w.r.t. $L^2(\M^{\rm NT}_{\Lambda}[M], \vphi^{\rm NT}_{\Lambda, M})_+$.
Recall that $\M^{\rm NT}_{\Lambda}$ is the von Neumann  algebra of diagonal operators associated with 
$\{|{\bs \sigma}\ra_{\Lambda} : {\bs \sigma} \in \mathcal{S}_{\Lambda}\}$.
Because $\Phi^{\rm w}(\{x, y\})\in  \M^{\rm NT}_{\Lambda}$, we readily confirm that $e^{-\beta\Phi^{\rm w}(\{x, y\}) } \unrhd 0$ w.r.t. $L^2(\M^{\rm NT}_{\Lambda}[M], \vphi^{\rm NT}_{\Lambda, M})_+$.
 From these facts and Trotter's formula, we know that $\Phi(X)$ satisfies Condition \ref{PPBasicA}.
Furthermore, we can show that $\{\Phi(X) : X\subseteq \Lambda\}$ satisfies Condition \ref{PIBasicAI}. Therefore, we can conclude (vii) of Theorem \ref{ManyHa} by applying Theorem \ref{AbstPI1}.

\subsubsection*{Outline of proof of (viii)}
Consider the following two interactions:
\begin{align}
\Phi(X)&=
\begin{cases}
\omega b_x^*b_x P^{\rm G}_{\Lambda} & \mbox{if  $X=\{x\}\subseteq \Lambda$}\\
\sum_{\sigma=\up, \down}t_{xy} P^{\rm G}_{\Lambda} c_{x\sigma}^*c_{y\sigma}  P^{\rm G}_{\Lambda} & \mbox{if $X=\{x, y\}\subseteq \Lambda$}\\
0 & \mbox{otherwise},
\end{cases}\\
\tilde{\Upsilon}(X)&=\begin{cases}
\Big[g_{xy}(n_x-1)(b_y+b_y^*)+\frac{U_{xy}}{2}(n_x-1)(n_y-1) \Big]P^{\rm G }_{\Lambda}& \mbox{if $X=\{x, y\}\subseteq \Lambda$}\\
0 & \mbox{otherwise}.
\end{cases}
\end{align}
Note that using the notation of Subsection \ref{SecPertApp}, we can express $H_{\Lambda}^{\rm HH, \infty}$  as $
H_{\Lambda}^{\rm HH, \infty}=H_{\Lambda}^{{\bs \Psi}+{\bs \Upsilon}}
$. 
In other words, $\Upsilon(X)$ can be regarded as a perturbation term. First of all, let us investigate the properties of the perturbed term $\Phi(X)$.
 To do so, we split $\Phi(X)$ as 
$\Phi(X)=\Phi^{\rm s}(X)+\Phi^{\rm w}(X)$, where
\be
\Phi^{\rm s}(\{x, y\})=\sum_{\sigma=\up, \down}t_{xy}P^{\rm G}_{\Lambda} c_{x\sigma}^*c_{y\sigma}P^{\rm G}_{\Lambda}, \ \
\Phi^{\rm w}(\{x\})=
\omega b_x^*b_x P^{\rm G}_{\Lambda}.
\ee
Because $b_x^*b_x$ can be identified with the self-adjoint operator $\frac{1}{2}(-\Delta_{q_x}+q_x^2-1)$ acting in $L^2(\BbbR^{|\Lambda|})$, we see that $e^{-\beta \omega b_x^*b_x}\rhd 0$ w.r.t. $L^2(\BbbR^{|\Lambda|})_+$ for all $\beta >0$.
Using this property, we can prove that $\{\Phi(X) : X\subseteq \Lambda\}$ satisfies Condition \ref{PIBasicAII}.
Since $\Upsilon(\{x, y\})$ is a diagonal operator, it is easy to check that Condition \ref{PertbII} is satisfied.
Therefore, using Theorem \ref{PertbThm},
we find that $\{e^{-\beta H^{\rm HH, \infty}_{\Lambda} }\}_{\beta \ge 0}$ is ergodic w.r.t. $L^2(\N_{\Lambda}^{\rm HH}[M], \mu^{\rm HH}_{\Lambda, \sharp,  M})_+$.
 Furthermore, since $\N_{\Lambda}^{\rm HH}[M]$ is abelian, we can use Theorem \cite[Theorem A.4]{Miyao2021} to obtain
 (viii) of Theorem \ref{ManyHa}.
 See \cite{Miyao2017} for details of the proof.\footnote{
 The outline of the proof given here is somewhat different from the one in \cite{Miyao2017}. However, the essence of the two proofs is the same, and the details of the above outline can be easily reproduced using some of the lemmas given in \cite{Miyao2017}.
 }

\section{Some technical results concerning order preserving operator inequalities}\label{SectC}
In this appendix, we collect some technical facts about order preserving operators needed in this paper for the convenience of the reader.
Suppose we are given a von Neumann algebra $\M$ and a faithful semi-finite normal weight  $\vphi$ on $\M$.
Set $\Cone=L^2(\M, \vphi)_+$.

We begin with the following proposition.
\begin{Prop}
Let $\{A_n\}_{n\in \BbbN}$ be a sequence of bounded operators on $L^2(\M, \vphi)$ and let $A$ be a bounded operator on $L^2(\M, \vphi)$.
Suppose that $A_n$ strongly converges to $A$ as $n\to \infty$.
If $A_n\unrhd 0$ w.r.t. $\Cone$ for all $n\in \BbbN$, then $A\unrhd 0$ w.r.t. $\Cone$.
\end{Prop}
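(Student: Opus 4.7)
The plan is to reduce the claim to two elementary facts: the self-dual cone $\Cone = L^2(\M,\vphi)_+$ is norm-closed, and strong operator convergence gives pointwise norm convergence $A_n \xi \to A\xi$ for each fixed $\xi$.

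First I would fix an arbitrary $\xi \in \Cone$ and aim to show $A\xi \in \Cone$. By hypothesis, $A_n \xi \in \Cone$ for every $n$, and strong convergence $A_n \to A$ means $\|A_n\xi - A\xi\| \to 0$. Since $\Cone$ is a self-dual cone in the Hilbert space $L^2(\M,\vphi)$, it is in particular closed in the norm topology (self-dual cones always are, because $\Cone = \Cone^{\dagger} = \{\eta : \langle \eta|\zeta\rangle \ge 0\ \forall\, \zeta\in\Cone\}$ is an intersection of closed half-spaces). Therefore the norm limit $A\xi$ of the sequence $(A_n\xi)_n \subset \Cone$ still lies in $\Cone$. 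Since $\xi \in \Cone$ was arbitrary, this shows $A\Cone \subseteq \Cone$, i.e., $A \unrhd 0$ w.r.t.\ $\Cone$.

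As an alternative phrasing that avoids even invoking closedness explicitly, one could use self-duality directly: for arbitrary $\eta,\xi\in\Cone$,
\begin{equation*}
\langle \eta | A\xi\rangle = \lim_{n\to\infty}\langle \eta | A_n\xi\rangle \ge 0,
\end{equation*}
because each inner product in the sequence is non-negative (as $A_n\xi\in\Cone$ and $\eta\in\Cone = \Cone^{\dagger}$). Self-duality of $\Cone$ then gives $A\xi\in\Cone$.

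There is essentially no obstacle here; the statement is a basic closure property, and the only subtlety worth mentioning is that one does \emph{not} need any uniform bound on $\|A_n\|$ (though such a bound follows automatically from the uniform boundedness principle once $A_n$ converges strongly). Accordingly, the write-up should be very short, just the two-line argument above, without any appeal to the modular structure of $\{\M,\vphi\}$ beyond the fact that $\Cone$ is a self-dual cone.
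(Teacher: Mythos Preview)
Your argument is correct and is exactly the standard one-line proof: closedness of the self-dual cone together with pointwise norm convergence $A_n\xi\to A\xi$ immediately yields $A\xi\in\Cone$. The paper does not actually supply a proof here but simply cites \cite[Proposition 2.8]{Miyao2019-2}; your write-up is the expected elementary argument behind that citation.
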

\begin{proof}
See \cite[Proposition 2.8]{Miyao2019-2}.
\end{proof}

The following proposition is a basic tool.

\begin{Prop}\label{BasicPertPP}
Let $A$ be a positive self-adjoint operator on $L^2(\M, \vphi)$  and let $B$ be a  self-adjoint bounded
 operator on $L^2(\M, \vphi)$. Assume the following:
\begin{itemize}
\item[{\rm (i)}] $0\unlhd e^{-tA}$ w.r.t. $\Cone$ for all $t\ge 0$.
\item[{\rm (ii)}]$0\unlhd -B$ w.r.t. $\Cone$.
\end{itemize} 
Then $0\unlhd e^{-t(A+B)}$ w.r.t. $\Cone$ for all $t\ge 0$.
\end{Prop}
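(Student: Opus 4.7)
The plan is to combine the power series expansion of $e^{-tB}$ with the Trotter product formula and the fact, already recorded in the excerpt, that the strong limit of positivity preserving operators is positivity preserving.

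First I would verify that $e^{-tB}\unrhd 0$ w.r.t.\ $\Cone$ for every $t\ge 0$. Since $B$ is bounded, one can expand
\begin{equation*}
e^{-tB}=\sum_{k=0}^{\infty}\frac{t^{k}(-B)^{k}}{k!},
\end{equation*}
with norm convergence. Assumption (ii) gives $-B\unrhd 0$ w.r.t.\ $\Cone$, and by the lemma on products of positivity preserving operators (the second lemma in Subsection~2.3) together with induction, $(-B)^{k}\unrhd 0$ for every $k\ge 0$. Since the series has non-negative real coefficients $t^{k}/k!$, the same lemma (part (i)) and the proposition on strong limits of positivity preserving operators stated at the start of this appendix give $e^{-tB}\unrhd 0$ w.r.t.\ $\Cone$.

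Next, because $B$ is bounded and $A$ is positive self-adjoint, $A+B$ is self-adjoint on $\D(A)$ and bounded from below, so the Trotter product formula applies and yields
\begin{equation*}
e^{-t(A+B)}=\mathrm{s\text{-}lim}_{n\to\infty}\bigl(e^{-tA/n}e^{-tB/n}\bigr)^{n}
\end{equation*}
for each $t\ge 0$. By assumption (i) we have $e^{-tA/n}\unrhd 0$ w.r.t.\ $\Cone$, and by the previous paragraph $e^{-tB/n}\unrhd 0$ w.r.t.\ $\Cone$. The product lemma for OPOIs then shows that each finite product $(e^{-tA/n}e^{-tB/n})^{n}$ is positivity preserving w.r.t.\ $\Cone$.

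Finally, invoking once more the proposition on strong limits of positivity preserving operators, the strong limit $e^{-t(A+B)}$ is itself positivity preserving, which is the claim. The only genuinely delicate point is the applicability of the Trotter product formula; this is standard given that $B$ is bounded and $A$ is positive self-adjoint, and I would simply cite the classical reference (e.g., Reed--Simon, Theorem~VIII.31) rather than reproving it. The rest of the argument consists only of routine manipulations of OPOIs.
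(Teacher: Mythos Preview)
Your argument is correct and is the standard route: expand $e^{-tB}$ as a norm-convergent series to get $e^{-tB}\unrhd 0$, then apply the Trotter product formula and pass to the strong limit. The paper itself does not give a proof here but merely cites \cite[Theorem A.18]{Miyao2019-2}; your approach matches the Trotter--Kato method the paper already invokes in the analogous argument in Appendix~\ref{SectA}, so there is nothing to add.
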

\begin{proof} See \cite[Theorem A. 18]{Miyao2019-2}.\footnote{
In \cite{Miyao2019-2}, the claim is proved in the case where $B$ is  unbounded and symmetric.
}
\end{proof}

Theorem \ref{Mono} below plays an  important role in practical applications.
\begin{Thm}[Monotonicity]\label{Mono}
Let $A, B$ be self-adjoint positive 
 operators on $L^2(\M, \vphi)$. 
Assume that there exists a bounded self-adjoint operator $C$ satisfying  $B=A-C$.
In addition, suppose 
 that 
\begin{itemize}
\item[{\rm (i)}] $e^{-\beta A}\unrhd 0$ w.r.t. $\Cone$ for
	     all $\beta\ge 0$;
\item[{\rm (ii)}] $C\unrhd 0$ w.r.t. $\Cone$.
\end{itemize} 
Then we have $e^{-\beta B }\unrhd e^{-\beta A}$
 w.r.t. $\Cone$ for all $\beta\ge  0$. 
\end{Thm}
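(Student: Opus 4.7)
The plan is to prove the result via a Dyson--Duhamel expansion of $e^{-\beta B}$ around $e^{-\beta A}$. Writing $B = A - C$ with $C$ bounded and self-adjoint, standard perturbation theory for semigroups (valid because $A$ is positive so $\|e^{-tA}\| \le 1$, and $C$ is bounded) gives the norm-convergent expansion
\begin{equation}
e^{-\beta B} \;=\; \sum_{n=0}^{\infty} D_n(\beta),
\end{equation}
where $D_0(\beta) = e^{-\beta A}$ and, for $n \ge 1$,
\begin{equation}
D_n(\beta) \;=\; \int_{0 \le s_1 \le \cdots \le s_n \le \beta} e^{-(\beta - s_n)A}\, C\, e^{-(s_n - s_{n-1})A}\, C \cdots C\, e^{-s_1 A}\, ds_1 \cdots ds_n .
\end{equation}
Convergence in operator norm is immediate from the volume bound $\beta^n/n!$ for the simplex together with $\|C\|^n$.

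First I would verify term-by-term positivity in the OPOI sense. By hypothesis (i), each factor $e^{-tA}$ with $t \ge 0$ satisfies $e^{-tA} \unrhd 0$ w.r.t.\ $\Cone$, and by hypothesis (ii) we have $C \unrhd 0$ w.r.t.\ $\Cone$. The lemma in Subsection 2.3 stating that the set of operators $\unrhd 0$ w.r.t.\ $\Cone$ is closed under composition then shows that the integrand of $D_n(\beta)$ is pointwise $\unrhd 0$. Since this set of operators is also a convex cone closed under strong limits, the Bochner integral over the simplex yields $D_n(\beta) \unrhd 0$ for every $n \ge 1$. Summing over $n \ge 1$, which remains in this cone because it too is closed under norm-convergent sums of $\unrhd 0$ elements, one obtains
\begin{equation}
e^{-\beta B} - e^{-\beta A} \;=\; \sum_{n=1}^{\infty} D_n(\beta) \;\unrhd\; 0 \quad \text{w.r.t.\ } \Cone.
\end{equation}

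The only subtlety, and the point that requires a brief word, is the reality-preservation condition implicit in the symbol $\unrhd$: to write $e^{-\beta B} \unrhd e^{-\beta A}$ one needs both operators to preserve $L^2(\M,\vphi)_J$. This is automatic here, because any bounded operator $T$ with $T\Cone \subseteq \Cone$ automatically satisfies $T L^2(\M,\vphi)_J \subseteq L^2(\M,\vphi)_J$ (since $L^2(\M,\vphi)_J = \Cone - \Cone$), and both $e^{-\beta A}$ and $e^{-\beta B}$ are shown to map $\Cone$ into itself by the expansion above (for $e^{-\beta B}$ this is the special case of what we just proved combined with $e^{-\beta A} \unrhd 0$).

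I do not expect a serious obstacle in this argument; it is essentially the OPOI analogue of the classical Dyson series argument that a positivity-preserving semigroup remains positivity-preserving under a positivity-preserving bounded perturbation (cf.\ the use of Duhamel's formula in the proof of Theorem \ref{AbstPI1}). The only mild care needed is in justifying that the Bochner integral of $\unrhd 0$-valued integrands lies in the $\unrhd 0$ cone, which follows from the characterization $T \unrhd 0 \Leftrightarrow \langle \eta | T \xi \rangle \ge 0$ for all $\eta,\xi \in \Cone$, a condition which is manifestly preserved under non-negative integration.
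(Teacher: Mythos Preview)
Your proof is correct. The paper itself does not give a proof here but cites an external reference; your Dyson--Duhamel expansion is exactly the standard argument for this result and is the same technique the paper employs in the proof of Theorem~\ref{AbstPI1}, so your approach is in line with the paper's methods.
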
 
\begin{proof} See \cite[Theorem A. 4]{Miyao2020}.  \end{proof}

%\bibliographystyle{alpha}
%\bibliographystyle{abbrvurl}
%\bibliography{Reference}

\end{document}